\documentclass[12pt,oneside]{amsart}
\usepackage[latin9]{inputenc}
\usepackage[a4paper]{geometry}
\usepackage{enumitem}
\usepackage{amstext}
\usepackage{amsthm}
\usepackage{amssymb}
\usepackage{stackrel}
\usepackage{graphicx}
\usepackage{esint}
\usepackage{xargs}[2008/03/08]

\makeatletter

\numberwithin{equation}{section}
\numberwithin{figure}{section}
\theoremstyle{plain}
\newtheorem{thm}{\protect\theoremname}[section]
\theoremstyle{remark}
\newtheorem{rem}[thm]{\protect\remarkname}
\theoremstyle{plain}
\newtheorem{lem}[thm]{\protect\lemmaname}
\theoremstyle{definition}
\newtheorem{example}[thm]{\protect\examplename}
\theoremstyle{definition}
\newtheorem{defn}[thm]{\protect\definitionname}
\theoremstyle{plain}
\newtheorem{prop}[thm]{\protect\propositionname}
\theoremstyle{plain}
\newtheorem{cor}[thm]{\protect\corollaryname}

\@ifundefined{date}{}{\date{}}
\usepackage{amsthm}
\usepackage{longtable}

\DeclareSymbolFont{extraup}{U}{zavm}{m}{n}
\DeclareMathSymbol{\varheart}{\mathalpha}{extraup}{86}
\DeclareMathSymbol{\vardiamond}{\mathalpha}{extraup}{87}
\makeatother

\providecommand{\corollaryname}{Corollary}
\providecommand{\definitionname}{Definition}
\providecommand{\examplename}{Example}
\providecommand{\lemmaname}{Lemma}
\providecommand{\propositionname}{Proposition}
\providecommand{\remarkname}{Remark}
\providecommand{\theoremname}{Theorem}

\newcommand{\term}[1]{\emph{#1}}

\newcommand{\set}[1]{\left\{ #1 \right\}}

\newcommand{\isom}{\cong}
\newcommand{\mphism}{\varphi}
\newcommand{\tens}{\otimes}

\newcommand{\id}{\mathrm{id}}
\newcommand{\idOf}[1]{\id_{{#1}}}

\newcommand{\nord}[1]{#1}

\newcommand{\hI}{0}
\newcommand{\hE}{\frac{1}{2}}
\newcommand{\hS}{\frac{1}{16}}
\newcommand{\cIsing}{\frac{1}{2}}

\newcommand{\vir}{\mathfrak{vir}}
\newcommand{\Lgensym}{\mathrm{L}}
\newcommand{\Cgensym}{\mathrm{C}}
\newcommand{\Lopersym}{\mathsf{L}}
\newcommand{\Fgensym}{\Psi}%{\mathrm{F}}
\newcommand{\Fopersym}{\Psi}
\newcommand{\Lgen}[1]{\Lgensym_{{#1}}}
\newcommand{\Cgen}{\Cgensym}
\newcommand{\Lbar}[1]{\bar{\Lgensym}_{{#1}}}
\newcommand{\Loper}[1]{\Lopersym_{{#1}}}
\newcommand{\Loperbar}[1]{\bar{\Lopersym}_{{#1}}}
\newcommand{\Cbar}{\bar{\Cgensym}}
\newcommand{\Fgen}[1]{\Fgensym_{{#1}}}
\newcommand{\Fbar}[1]{\bar{\Fgensym}_{{#1}}}
\newcommand{\Foper}[1]{\Fopersym_{{#1}}}

\newcommand{\UEA}[1]{\mathcal{U}({#1})}
\newcommand{\UvirC}[1]{\mathfrak{Vir}_{{#1}}}
\newcommand{\UvirHalf}{\UvirC{{\cIsing}}}%{\UvirC{{1/2}}}
\newcommand{\UvirHalfbar}{\overline{\mathfrak{Vir}}_{{\cIsing}}}
\newcommand{\TwoVir}{\mathfrak{Vir}^{\otimes 2}_{{\cIsing}}}%{\UvirHalf \otimes \UvirHalfbar}%{{2\UvirHalf}}
\newcommand{\NeveuSchwarz}{\mathrm{NS}}
\newcommand{\Ramond}{\mathrm{R}}
\newcommand{\CliffNS}{\mathfrak{Cliff}^{\NeveuSchwarz}} % indexed by half-integers, multiplied by integer power monomials
\newcommand{\CliffR}{\mathfrak{Cliff}^{\Ramond}} % indexed by integers, multiplied by half-integer power monomials

\newcommand{\TwoCliffNS}{\mathfrak{Cliff}^{\NeveuSchwarz}}
\newcommand{\TwoCliffR}{\mathfrak{Cliff}^{\Ramond}}
\newcommand{\Igen}{\mathsf{1}}

\newcommand{\Module}{\mathcal{M}}
\newcommand{\VermaSym}{\mathcal{V}}
\newcommand{\Verma}[2]{\VermaSym_{#1,#2}}

\newcommand{\maxsubmodSym}{\mathcal{I}}
\newcommand{\maxsubmod}[2]{\maxsubmodSym_{#1,#2}}
\newcommand{\IrrepSym}{\mathfrak{L}}
\newcommand{\Irrep}[2]{\IrrepSym_{#1,#2}}
\newcommand{\Irrepbar}[2]{\overline{\IrrepSym}_{#1,#2}}
\newcommand{\IrrepI}[1]{\IrrepSym(#1)}
\newcommand{\FockSym}{\mathcal{F}}
\newcommand{\FockNS}{\FockSym^{\NeveuSchwarz}}
\newcommand{\FockNSG}[2]{\FockSym_\ind^{(#1,#2)}}
\newcommand{\FockRG}[2]{\FockSym_\sigma^{(#1,#2)}}
\newcommand{\FockNSbar}{\overline{\FockSym}^{\NeveuSchwarz}}
\newcommand{\FullFockNS}{\FockSym_{\ind}}
\newcommand{\FullFockR}{\FockSym_{\sigma}}

\newcommand{\vacNS}{\ind}

\newcommand{\LFone}{[1]}
\newcommand{\LFspin}{[\sigma]}

\newcommand{\ClInd}{\CliffNS(\LFone)}
\newcommand{\Clspin}{\CliffR(\LFspin)}

\newcommand{\deltaZ}[1]{\delta_{{#1},0}}

\newcommand{\Znn}{\Z_{\ge 0}}

\newcommand{\FockNorm}{2}
\newcommand{\SugawaraNorm}{\frac14}%{\frac12}

\newcommand{\iprod}[2]{\langle {#1} , {#2} \rangle}
\newcommand{\rhomb}[1]{\mathcal{R}(#1)}
\newcommand{\clos}[1]{\operatorname{cl}(#1)}
\newcommand{\setG}[1]{\mathcal{G}\left(#1\right)}
\newcommand{\setQ}[1]{\mathcal{Q}\left(#1\right)}
\newcommand{\ttr}{\mathfrak{t}}
\newcommand{\VertSet}{\mathcal{V}}
\newcommand{\SqLat}{\mathbb{S}}

\newcommand{\Cpsimone}{\frac{\pi}{2}}
\newcommand{\Cpsimoroot}{\left(\frac{2}{\pi}\right)^\frac12}
\newcommand{\dsbar}[1]{\bar{\eta}_{#1}}
\newcommand{\SET}{T}
\newcommand{\SETbar}{\bar{T}}
\newcommand{\SDhol}[1]{\Delta^{\mathrm{h}}_{#1}}

\newcommand{\hihi}[2]{\dsbar{#1}\dsbar{#2}\psi_{#1}\psi_{#2}}
\newcommand{\Gab}[2]{\dsbar{#2}\CK{#2}{#1}}
\newcommand{\Sheet}{\mathcal{S}}

\begin{document}

\global\long\def\C{\mathbb{C}}%
\global\long\def\Cd{\C^{\delta}}%
\global\long\def\Cprim{\C^{\delta,\circ}}%
\global\long\def\Cdual{\C^{\delta,\bullet}}%
\global\long\def\Od{\Omega^{\delta}}%
\global\long\def\Oprim{\Omega^{\delta,\circ}}%
\global\long\def\Odual{\Omega^{\delta,\bullet}}%

\global\long\def\en{\mathcal{\varepsilon}}%
\global\long\def\wone{\widehat{w}}%
\global\long\def\wtwo{w}%
\global\long\def\wo{\wone}%

\global\long\def\wt{\wtwo}%
\global\long\def\cZ{\mathcal{Z}}%

\global\long\def\tfixed{wired}%
\global\long\def\tfree{free}%
\global\long\def\tplus{plus}%
\global\long\def\tminus{minus}%

\global\long\def\fixed{\{\mathtt{\tfixed}\}}%
\global\long\def\free{\{\mathtt{\tfree}\}}%
\global\long\def\plus{\{\mathtt{\tplus}\} }%
\global\long\def\minus{\left\{  \mathtt{\tminus}\right\}  }%

\global\long\def\cvr{\mathrm{\varpi}}%
\global\long\def\Ocvr{\Od_{\cvr}}%
\global\long\def\Ccvr{\Cd_{\cvr}}%
\global\long\def\double#1{\cvr(#1)}%

\global\long\def\Cgr{\mathcal{C}}%
\global\long\def\sCgr{\mathit{c}}%

\global\long\def\Pf{\mathrm{Pf}}%

\global\long\def\vv{v_{1},\dots,v_{n}}%
\global\long\def\uu{u_{1},\dots,u_{m}}%
\global\long\def\svv{\sigma_{v_{1}}\ldots\sigma_{v_{n}}}%
\global\long\def\muu{\mu_{u_{1}}\dots\mu_{u_{m}}}%

\global\long\def\Cutz#1#2{\Cgr_{\{#2\}}(#1)}%

\global\long\def\offdiag#1{\mathcal{\mathring{C}}\vphantom{\mathcal{C}}{}^{#1}_{\{\mathrm{diag}\}}(\Ocvr)}%
\global\long\def\withdiag#1{\overline{\mathcal{C}}\vphantom{\mathcal{C}}{}^{#1}_{\{\mathrm{diag}\}}(\Ocvr)}%

\global\long\def\pesm{\psi^{[\eta]}\!,\en,\mu,\sigma}%
\global\long\def\bcdd{\mathcal{\mathcal{B}}^{\delta}_{\mathrm{mono}}}%
\global\long\def\sfix#1{\sigma^{#1}_{\mathrm{fix}}}%

\global\long\def\any{\diamond}%
\global\long\def\anyother{\triangleright}%

\global\long\def\Opunc{\Omega^{\boxcircle}}%
\global\long\def\Onopunc{\Omega^{\boxempty}}%
\global\long\def\Oother{\Omega'}%
\global\long\def\cvrother{\cvr'}%
\global\long\def\Sother{S'}%
\global\long\def\etaother{\eta'}%

\global\long\def\const{\mathrm{const}}%

\global\long\def\normLoc#1{\beta^{\delta}_{#1}}%

\global\long\def\P{\mathsf{\mathbb{P}}}%

\global\long\def\E{\mathsf{\mathbb{E}}}%

\global\long\def\sF{\mathcal{F}}%

\global\long\def\ind{\mathbb{I}}%

\global\long\def\R{\mathbb{R}}%

\global\long\def\Z{\mathbb{Z}}%

\global\long\def\N{\mathbb{N}}%

\global\long\def\Q{\mathbb{Q}}%

\global\long\def\C{\mathbb{C}}%

\global\long\def\Rsphere{\overline{\C}}%

\global\long\def\re{\Re\mathfrak{e}}%

\global\long\def\im{\Im\mathfrak{m}}%

\global\long\def\arg{\mathrm{arg}}%

\global\long\def\i{\mathfrak{i}}%
\global\long\def\eps{\varepsilon}%
\global\long\def\lamb{\lambda}%
\global\long\def\lambb{\bar{\lambda}}%

\global\long\def\D{\mathbb{D}}%

\global\long\def\H{\mathbb{H}}%

\global\long\def\dist{\mathrm{dist}}%

\global\long\def\reg{\mathrm{reg}}%

\global\long\def\half{\frac{1}{2}}%

\global\long\def\sgn{\mathrm{sgn}}%

\global\long\def\bdry{\partial}%

\global\long\def\cl#1{\overline{#1}}%

\global\long\def\diam{\mathrm{diam}}%
\global\long\def\corr#1{\overline{#1}}%
\global\long\def\Corr#1#2{\E_{#1}(#2)}%

\global\long\def\corr#1#2#3{\langle#1\rangle_{#2,#3}}%
\global\long\def\pa{\partial}%
\global\long\def\tto#1{\stackrel{#1}{\longrightarrow}}%

\global\long\def\res{\text{res}}%

\global\long\def\u{u}%
\global\long\def\v{v}%

\global\long\def\z{z}%
\global\long\def\mod{\;\mathrm{mod\;}}%
\global\long\def\wind{\mathrm{w}}%
\global\long\def\vz{z^{\bullet}}%
\global\long\def\fz{z^{\circ}}%
\global\long\def\CF{\mathfrak{C}}%
\global\long\def\RPF{\text{I}}%

\global\long\def\FFS#1#2#3{F^{#3}_{#2}(#1)}%
\global\long\def\ds#1{\eta_{#1}}%
\global\long\def\dbar{\overline{\partial}}%
\global\long\def\dual#1{\left(#1\right)^{*}}%

\global\long\def\bcond{\mathcal{B}}%
\global\long\def\bcd{\mathcal{B}^{\delta}}%
\global\long\def\Eod{\Od_{+}}%
\global\long\def\lapv{\Delta^{\circ}}%
\global\long\def\lapf{\Delta^{\bullet}}%
\global\long\def\GammaR{\Gamma_{\R}}%
\global\long\def\GammaiR{\Gamma_{i\R}}%

\global\long\def\ccor#1{\langle#1\rangle}%
\global\long\def\bar#1{\overline{#1}}%
\global\long\def\anypsi{\psi^{*}}%
\global\long\def\Op{\mathcal{O}}%
\global\long\def\crossing{c}%
\global\long\def\Fdual{F^{\mathrm{dual}}}%

\global\long\def\zz{z_{1},z_{2}}%
\newcommandx\norm[1][usedefault, addprefix=\global, 1=]{n_{#1}}%
\newcommandx\tang[1][usedefault, addprefix=\global, 1=]{\tau_{#1}}%
\global\long\def\Ocvrc{\Omega_{\cvr}}%
\global\long\def\crad{\text{crad}}%
\global\long\def\feta{f^{[\eta]}}%
\global\long\def\T{\mathbb{T}}%
\global\long\def\reg{\sharp}%
\global\long\def\regg{*}%
\global\long\def\coefA{\mathcal{A}}%
\global\long\def\primaries{\{\ind,\sigma,\mu,\psi,\psistar{},\en\}}%

\global\long\def\pbar#1{#1^{\star}}%
\global\long\def\fdag{f^{\star}}%
\global\long\def\CorrO#1{\langle#1\rangle_{\Omega}}%
\global\long\def\formL{\mathcal{L}}%
\global\long\def\appe{\approx_{\eps}}%
\global\long\def\jayhat{\hat{j}}%
\global\long\def\Csigma{C_{\sigma}}%
\global\long\def\Ceps{C_{\eps}}%
\global\long\def\Cpsi{C_\psi}%
\global\long\def\CCpsi{\left(\frac{\pi}{2}\right)^{\frac12}}%
\global\long\def\Cmu{C_{\mu}}%
\global\long\def\Cmu{C_{\mu}}%
\global\long\def\umax{u_{\text{max}}}%
\global\long\def\psistar#1{\psi^{\star}{}_{\hskip-5pt #1}}%
\global\long\def\sqr#1{(#1)^{\frac{1}{2}}}%

\global\long\def\scvr{\sigma_{\cvr}}%
\global\long\def\Ob{\mathcal{O}}%

\global\long\def\Obs#1{\mathcal{O}\left[#1\right]}%
\global\long\def\wcgr{\Cgr^{\diamondsuit}}%
\global\long\def\bcgr{\Cgr^{\vardiamond}}%
\global\long\def\dzmone#1#2{P_{#2}\left(#1\right)}%
\global\long\def\dmsqrt#1#2{Q_{#2}\left(#1\right)}%
\global\long\def\dsqrt#1#2{R_{#2}\left(#1\right)}%

\global\long\def\proj#1#2{\mathrm{Pr}_{#1}\left(#2\right)}%
\global\long\def\projj#1#2{#1\re\left(\bar{#1}#2\right)}%
\global\long\def\placket{\mathcal{S}}%
\global\long\def\sflat{s_{\flat}}%
\global\long\def\rother#1#2{\tilde{R}_{#1}(#2)}%

\global\long\def\ssharp{c^{\sharp}}%
\global\long\def\sflat{c^{\flat}}%
\global\long\def\sany{c^{\diamond}}%
\global\long\def\Ann{\mathbb{A}}%
\global\long\def\spIn{\sigma_{\mathrm{in}}}%
\global\long\def\spOut{\sigma_{\mathrm{out}}}%
\global\long\def\S{\text{\ensuremath{\mathbb{S}}}}%
\global\long\def\FK{\text{\text{FK}}}%
\global\long\def\SOd{\text{\ensuremath{\hat{\Omega}^{\delta}}}}%
\global\long\def\ArcHat{\text{\ensuremath{\hat{\gamma}^{\delta}}}}%
\global\long\def\CCHat{\text{\ensuremath{\hat{\gamma}_{\text{in}}}}}%
\global\long\def\SOmega{\text{\ensuremath{\hat{\Omega}}}}%
\global\long\def\AOC{\text{\ensuremath{\gamma_{1}}}}%
\global\long\def\ANC{\text{\ensuremath{\gamma_{2}}}}%
\global\long\def\Neigh{\text{\ensuremath{\tilde{\Omega}}}}%
\global\long\def\ArcNeig{\text{\ensuremath{\tilde{\gamma}}}}%
\global\long\def\CCNeig{\text{\ensuremath{\tilde{\gamma}_{\text{in}}}}}%
\global\long\def\nextc#1{#1_{\text{next}}}%
\global\long\def\CCHatd{\text{\ensuremath{\hat{\gamma}^{\delta}_{\text{in}}}}}%
\global\long\def\ProbeF{F^{\delta}_{\text{Ref}}}%
\global\long\def\ProbeFc{f_{\text{Ref}}}%
\global\long\def\ProbeFct{\tilde{f}_{\text{FK}}}%
\global\long\def\NormC{C}%
\global\long\def\Square{Q}%
\global\long\def\Side#1{l_{#1}}%
\global\long\def\normAny{\beta_{\delta}}%
\global\long\def\normPf#1{\beta^{(#1)}_{\delta}}%
\global\long\def\intFsquare#1#2{(\delta)\int^{#2}\im\left(#1dz\right)}%
\global\long\def\intFQ#1#2{(\delta)\int_{#2}\im\left(#1dz\right)}%
\global\long\def\sn#1#2{\mathrm{sn}\left(#1\,|\,#2\right)}%
\global\long\def\sc#1#2{\mathrm{sc}\left(#1\,|\,#2\right)}%
\global\long\def\sd#1#2{\mathrm{sd}\left(#1\,|\,#2\right)}%
\global\long\def\nd#1#2{\mathrm{nd}\left(#1\,|\,#2\right)}%
\global\long\def\cn#1#2{\mathrm{cn}\left(#1\,|\,#2\right)}%
\global\long\def\cd#1#2{\mathrm{cd}\left(#1\,|\,#2\right)}%
\global\long\def\dn#1#2{\mathrm{dn}\left(#1\,|\,#2\right)}%
\global\long\def\nc#1#2{\mathrm{nc}\left(#1\,|\,#2\right)}%
\global\long\def\dexpk#1#2#3#4{\mathrm{e}\left(#1\,|\,#2;\,#3\,,#4\right)}%
\global\long\def\dc#1#2{\mathrm{dc}\left(#1\,|\,#2\right)}%
\global\long\def\ns#1#2{\mathrm{ns}\left(#1\,|\,#2\right)}%

\global\long\def\corvz#1#2{\mathrm{G}_{[#1]}(#2)}%
\global\long\def\coruz#1#2{\mathrm{G}_{[#1]}(#2)}%
\global\long\def\corzz#1#2{\mathrm{G}_{(#1)}(#2)}%
\global\long\def\fctr{\Gamma}%
\global\long\def\Erruz{R_{\chi\sigma}}%
\global\long\def\Errvz{R_{\chi\mu}}%
\global\long\def\Errzz{R_{\chi\chi}}%
\global\long\def\ea#1{#1}%
\global\long\def\mup{\nu}%
\global\long\def\ang#1{\alpha_{#1}}%
\global\long\def\eang#1{\alpha_{#1}}%
\global\long\def\lamb{\lambda}%
\global\long\def\lambb{\bar{\lambda}}%
\global\long\def\ell#1{\check{#1}}%
\global\long\def\ella#1{\check{\alpha}_{#1}}%
\global\long\def\ang#1{\alpha_{#1}}%
\global\long\def\dexp#1#2#3{\mathrm{e}\left(#1;\,#2,\,#3\right)}%
\global\long\def\cst{\mathrm{cst}}%
\global\long\def\MCon#1{\mathrm{F}^{\delta}_{#1}}%
\global\long\def\thetaa{\hat{\theta}}%
\global\long\def\thetag{\bar{\theta}}%
\global\long\def\wired{\mathrm{w}}%
\global\long\def\BAP{\text{BAP}}%
\global\long\def\lambi{\lambda}%
\global\long\def\dd{\mathfrak{d}}%
\global\long\def\Ver{\mathcal{V}}%
\global\long\def\spanned{\mathrm{span}}%
\global\long\def\stspin{|\sigma\rangle}%
\global\long\def\ket#1{|#1\rangle}%
\global\long\def\Msm{M_{\sigma\mu}}%
\global\long\def\Mid{M_{\ind}}%
\global\long\def\Corm#1#2{\mathcal{C}_{#1,#2}}%
\global\long\def\Fun#1{\mathrm{Fun}\left(#1\right)}%
\global\long\def\LF{\mathcal{L}}%
\global\long\def\LFo{\mathcal{L}_{\sigma}}%
\global\long\def\LFe{\mathcal{L}_{\ind}}%
\global\long\def\LFP{\mathcal{L}^{\psi}}%
\global\long\def\LFPo{\mathcal{L^{\psi}_{\sigma}}}%
\global\long\def\LFPe{\mathcal{L^{\psi}_{\ind}}}%
\global\long\def\LFPoo{\mathcal{L}^{\psi,\mathrm{odd}}_{\mathrm{\sigma}}}%
\global\long\def\LFPoe{\mathcal{L}^{\psi,\mathrm{even}}_{\sigma}}%
\global\long\def\LFPeo{\mathcal{L}^{\psi,\mathrm{odd}}_{\ind}}%
\global\long\def\LFPee{\mathcal{L}^{\psi,\mathrm{even}}_{\ind}}%
\global\long\def\Eval#1#2{\mathrm{Ev}_{#2}\left(#1\right)}%
\global\long\def\Cliff#1{\mathcal{F}\left(#1\right)}%
\global\long\def\Curr#1{\mathrm{\Psi}_{#1}}%
\global\long\def\Currbar#1{\mathrm{\bar{\Psi}}_{#1}}%
\global\long\def\RCurr#1#2{\mathrm{\Psi}^{[#2]}_{#1}}%
\global\long\def\Vir{\mathrm{Vir}}%
\global\long\def\HV{\mathcal{H}}%
\global\long\def\isG{\Gamma}%
\global\long\def\MainBij{\mathcal{B}}%
\global\long\def\root{o}%
\global\long\def\rootE{e(\root)}%
\global\long\def\Mft{M(4,3)}%
\global\long\def\WeightProj#1#2{P_{#1}(#2)}%
\global\long\def\SCkx#1#2{\bar{\eta}_{#1}Q_{(#1)}}%
\global\long\def\SCknoeta#1#2{Q_{(#1)}(#2)}%
\global\long\def\Que#1{\bar{\eta}_{#1}\tilde{Q}_{(#1)}}%
\global\long\def\isGoc{\Gamma_{[\root^{\circ}]}}%
\global\long\def\isGd{\Gamma_{\cvr}}%
\global\long\def\isGob{\Gamma_{[\root^{\bullet}]}}%
\global\long\def\CK#1#2{\bar{\eta}_{#2}G_{(#2)}(#1)}%
\global\long\def\CKc#1#2{G_{(#2)}(#1)}%
\global\long\def\Gu#1#2{G_{[#2]}(#1)}%
\global\long\def\Gv#1#2{G_{[#2]}(#1)}%
\global\long\def\SCKc#1#2{Q^{[#2]}_{(#1)}}%
\global\long\def\Quec#1{Q_{(#1)}}%
\global\long\def\BallO#1{B_{(#1)}}%
\global\long\def\RadO#1{R_{#1}}%
\global\long\def\Null{\mathcal{N}}%
\global\long\def\Ins{\mathcal{\mathcal{I}^{\psi}}}%
\global\long\def\EvP{\mathfrak{E}}%
\global\long\def\EvPe{\mathfrak{E}_{\mathrm{even}}}%
\global\long\def\EvPo{\mathfrak{E}_{\mathrm{odd}}}%
\global\long\def\Config{\mathrm{Conf}_{k}(\Omega)}%
\global\long\def\ConfUC{\mathrm{\widetilde{Conf}}_{k}(\Omega)}%
\global\long\def\MainBijSpins{\mathcal{\hat{B}}}%

\global\long\def\bspin{\mathbb{\sigma}}%
\global\long\def\HVM#1{\mathcal{M}(#1)}%
\global\long\def\bmu{\mathbb{\mu}}%
\global\long\def\bpsi{\mathbb{\psi}}%
\global\long\def\bpsibar{\mathbb{\psi^{\star}}}%
\global\long\def\ben{\mathbb{\eps}}%
\global\long\def\CANS{\mathbb{\mathfrak{Cliff}}_{NS}}%
\global\long\def\CAR{\mathbb{\mathfrak{Cliff}}_{R}}%
\global\long\def\vectn{\overline{n}}%

\author{Dmitry Chelkak, Konstantin Izyurov, Kalle Kyt\"ol\"a}
\title{Critical Ising model as the $M(4,3)$ minimal model of conformal field
	theory}
\begin{abstract}
	We construct a linear map from the space of spin polynomials in the
	critical Ising model on a lattice to the state space of the $M(4,3)$
	minimal model of conformal field theory. The kernel of the map consists
	precisely of null fields, i.e., spin polynomials that vanish in all
	correlations. For a polynomial $\Op$ which is not a null field, we
	define its scaling dimension $\Delta_{\Op}\in\Z_{\geq0}\cup(\Z_{\geq0}+\frac{1}{8}),$
	to be the lowest $\Loper{0}+\Loperbar{0}$ eigenvalue in
	the decomposition of its image into homogeneous components. We show
	that as the mesh size $\delta$ goes to zero, the rescaled field $\delta^{-\Delta_{\Op}}\Op$
	has a non-trivial scaling limit, when inserted in correlations with
	other such fields. Moreover, all these limits can be expressed in
	terms of correlations of primary fields of $M(4,3)$ and the Virasoro
	action in the continuum. The highest weight vectors of $M(4,3)$,
	the joint eigenvectors of $\Loper{0}$ and $\Loperbar{0}$ with the same eigenvalues $0,1/16,1/2$ for both, are the images of the
	constant $1$, the spin at a single lattice vertex, and the energy
	at a single lattice edge, respectively. 
	
	For the proof, we construct a map from a larger lattice field space,
	accommodating also fermion and disorder insertions, to a direct sum
	of two fermionic Fock spaces. This direct sum contains contains the
	state spaces of both the $M(4,3)$ and the free fermion CFT. The main technical ingredient of the proof is a new version of the exact fermionic current action on the space of lattice local fields due to C. Hongler, K.K. and F. Viklund, combined with some new advances in discrete complex analysis. Our results are valid on arbitrary isoradial graphs.
\end{abstract}

\maketitle
\section{Introduction }

It is a truth universally acknowledged, in the physics literature,
that the scaling limit of the critical planar Ising model is the $M(4,3)$
minimal model of conformal field theory. This was pointed out in the
very paper of Belavin, Polyakov and Zamolodchikov \cite{BPZ} where
conformal field theory (CFT) was first introduced, and for forty years
since, it has been a major motivating example for application of CFT
to statistical mechanics, and for the power of the conformal invariance
ansatz. Minimal models of CFT are non-perturbative, exactly solvable
quantum field theories, and therefore relating (a scaling limit of)
a lattice model to one of those allows one, in principle, to compute
all correlations in this scaling limit \cite{BPZ, Yellow_book, mussardo2010statistical}. For a physicists, the vast amount of \emph{exact results} available for the Ising model starting from the seminal work of Onsager \cite{onsager1944crystal, kaufman1949crystal, yang1952spontaneous,mccoy2014two, palmer2007planar} and the matching of, e.g., critical exponents and correlation functions, provided decisive evidence for identifying the scaling of the critical Ising model with $M(4,3)$. In its turn, the CFT perspective allowed to predict new exact results for the Ising model \cite{di1987critical}. For the rest of the paper, $M(4,3)$ refers to the full (i.e., two-chiral) diagonal minimal CFT.

Mathematically, a rigorous proof and even a precise formulation of
the above assertion was missing until now, despite the emergence of
two very successful rigorous approaches to conformal invariance in
statistical mechanics. The first of them, Schramm--Loewner evolution
\cite{Schramm_SLE,Lawler_book} (SLE) and conformal loop ensembles \cite{sheffield2012conformal, benoist2019scaling, kemppainen2016conformal}
(CLE), strives to describe the scaling limits of models of statistical
mechanics in terms of random fractal curves. There has been a rich
interplay between CFT on one side, and SLE and CLE on the other side \cite{bauer2002slekappa, friedrich2003, bauer2004conformaltransformations, doyon2013conformal, DubedatSLEvirasorolocalization, DubedatSLEvirasorofusion, baverez2024cft, peltola2019toward, kytola_Peltola_general}. In the case of the Ising model,
however, the latter deal with \emph{non-local} observables that are
somewhat different from \emph{local} fields of CFT, although in the
case of the spin field \cite{camia2015planar}, the latter have been expressed in terms of
the former in a number of ways \cite{camia2025conformal, lopez2026ising, miller2017cle}.

The second approach, that of discrete complex analysis, pioneered by Kenyon and Smirnov, remains the
main tool to actually prove convergence of lattice models to conformally
covariant scaling limits \cite{Kenyon_conf_inv,Smirnov_Ising,Smirnov_percolation,Smirnov_Towards,ChelkakSmirnov2}.
The idea is to recast integrability properties of the models as discrete
holomorphicity (or harmonicity) of correlations of a well-chosen observable
depending on a lattice site as a parameter. Discrete holomorphicity
then allows one to compute scaling limits, which are automatically
conformally invariant or covariant. The most interesting and natural
observables, such as the spins or energy in the Ising model \cite{HonglerSmirnov,ChelkakHonglerIzyurov,chelkak2021correlations},
or height function, electric or magnetic correlations in the dimer
model \cite{DubedatDimers,Kenyon_conf_inv}, are usually not discrete
holomorphic, but the scaling limits of their correlations can be computed
by expressing them in terms of discrete holomorphic ones: respectively,
the fermion and spin-fermion correlations in the Ising case, and the inverse Kasteleyn
matrix in the dimer case. 

This approach gives some interesting correlations directly, as has
been done in the case of the Ising model for all primary fields \cite{chelkak2021correlations}, and
one might then check that they coincide with the CFT correlations
as predicted in the physics literature \cite{bayraktaroglu2023bosonization}.
Besides primary fields, however, there are many more fields in the Ising model, and the same is true for CFT, as highlited by \cite{BPZ}. The discrete holomorphicity methods have been
extended in some cases beyond primary fields \cite{ChelkakGlazmanSmirnov,izyurov2024energy}, or to general fields that have the same scaling dimension as one of the primaries \cite{gheissari2019ising}, 
but to date, each of these extensions required an \emph{ad hoc} adaptation
of discrete holomorphicity arguments. 

In the present paper, we show, for the first time, that the scaling limits
and the scaling dimensions of all possible local fields in the critical
Ising model -- i.e., polynomials in spins -- are classified by (homogeneous)
states of the $M(4,3)$ model, and their corresponding scaling dimensions.
More precisely, we construct a surjective linear map $\MainBijSpins$ that
associates to each spin polynomial $\Op$ an $M(4,3)$ state, such
that the kernel of $\MainBijSpins$ consist precisely of null fields, i.e.,
fields that vanish in correlations with all fields supported sufficiently
far away. We moreover show that each non-null field $\Op$ has \emph{a
	scaling exponent} $\Delta_{\Op}\in\Z_{\geq0}\cup(\Z_{\geq0}+\frac{1}{8})$
such that the rescaled field $\delta^{-\Delta_{\Op}}\Op$ has a non-trivial
scaling limit as the lattice mesh size $\delta$ goes to zero, when
inserted in correlations with similar fields. The scaling exponent
is given by the lowest $\Loper{0}+\Loperbar{0}$ eigenvalue
of a non-vanishing term in the decomposition of $\MainBijSpins(\Op)$ into
homogeneous components. As is well known, the $M(4,3)$ model has
three primary fields, namely, the identity $\ind$, the spin $\sigma$,
and the energy $\eps$, and we show that in our correspondence, they
are images of the constant, a spin a a single vertex, and an energy
at a single edge.

A curious feature of the critical Ising model is that depending on
the set of observables one is looking at, the model has \emph{two}
different $c=\frac{1}{2}$ CFTs associated to it, the other one being
the free fermion theory. (The free fermionic structure underlying the Ising model goes back to early work of Onsager and Kaufman \cite{kaufman1949crystal}.) Fermions are not \emph{local fields}, in
that a single fermion insertion cannot be represented as a polynomial
in spins (with a reasonably bounded number of spins involved), although
a \emph{fermion pair }does admit such a representation. Conversely,
a single spin or more generally an \emph{odd} spin polynomial does
not admit a local expression in terms of fermions, although an \emph{even}
spin polynomial does. In spite of this somewhat delicate relation
to spin observables, fermions are a central tool both in the classical
approaches to the Ising model since Onsager, and a more recent celebrated
insight by Smirnov that their discrete holomorphicity can be leveraged
in proofs of conformal invariance. On the CFT side, it has been well
known to the physicists that the \emph{stress-energy tensor} in the
Ising model CFT admits a quadratic expression in terms of fermions~\cite{Yellow_book}. This representation
is an analytic counterpart of the \emph{Sugawara construction
}of the Virasoro action on a fermionic Fock space. 

In our proof, the discrete holomorphicity of fermionic observables
also plays a central role. In fact, we construct a surjective linear
map $\MainBij$ from a larger ``lattice field'' space to a direct
sum of two fermionic Fock spaces, and we prove the corresponding convergence
result for this larger space. The fields in this larger space are
Grassmann polynomial{\small s} in fermions with, possibly, one spin
insertion; equivalently, they can be realized as spin polynomials
with, possibly, one fermion insertion. The Fock spaces in question
carry Clifford algebra actions that give rise to two commuting $c=\frac{1}{2}$
Virasoro actions via Sugawara construction; as a $\Vir_{\frac{1}{2}}\oplus\Vir_{\frac{1}{2}}$--
module, the direct sum has a submodule isomorphic to $M(4,3)$ as
well as a submodule isomorphic to the free fermionic CFT state space.
The former is the image of the space of all spin polynomials, while the
latter is the image of all Grassmann polynomials in fermions. The
\emph{even }spin or Grassmann polynomials can be realized as either
of those, and in particular both spaces contain the submodules generated
by the identity $\ind$ and the energy $\en$. 

Let us give an overview of our proof. In a sense, we construct not
the map $\MainBij$ but its inverse. The key step was envisioned in
\cite{hongler2022conformal}, where the Clifford action on the
(enlarged as above) space of Ising fields on the square lattice, modulo null fields, was
constructed via discrete contour integrals. (This part has, by now, been implemented also for the Gaussian free field \cite{hongler2022conformal,adame2024fock} and double dimers \cite{adame2025discrete,adame2025lattice}.) Purely algebraically,
this Clifford action gives rise to a $\Vir_{\frac{1}{2}}\oplus\Vir_{\frac{1}{2}}$
module structure on the space of local fields. The main missing part,
which is supplied in the present article, was the understanding of
the structure of this module, namely, the fact that it is a direct
sum of highest weight modules generated by the primary fields of weight
$0,\frac{1}{8},1$, that is, the identity, the spin, and the energy.
In particular, there was no proof that any spin polynomial decomposes
into a linear combination of descendants of these fields. In fact,
we were not able to check the required properties of the construction
in \cite{hongler2022conformal}. We provide a modified version
thereof, which is more amenable to the computations related to the
Ising model, and also works on arbitrary isoradial graphs. In the
continuum, the fermionic creation and annihilation operators are realized
as coefficients of the OPE product with the fermions. As fermions
are (anti-)holomorphic, these coefficients can be picked out by contour
integration, after multiplying by a suitable monomial; the anti-commutation
relations then follow from the fermion-fermion OPE. In the discrete,
this operation is delicate: first, one needs to construct suitable
discrete holomorphic monomials, and, more importantly, the product
of two discrete holomorphic functions is not discrete holomorphic,
so, a contour integration procedure must be chosen so that the result
does not depend on the contour. In \cite{hongler2022conformal},
the monomials were (complex-valued) discrete holomorphic functions
on the square lattice, and the fermions were indexed by the dual lattice, so
the corresponding contour integral was seen to be independent of the contour by
the discrete Stokes' theorem. In our construction, both the monomials
and fermions are \emph{s-holomoprhic }and are\emph{ }defined on (the
same) corner lattice, and the contour integration is based on Smirnov's
definition of the imaginary part $H=\im\int F^{2}\,dz$ of the square
of s-holomorphic function \cite{Smirnov_Ising,ChelkakSmirnov2}. Although
this definition was motivated by the fact that it turns a
discrete Riemann boundary value problem for $F$ to a much more
robust Dirichlet problem for $H$, curiously, it seems to be the right
tool even when there's ostensibly no boundary present. The s-holomorphic
functions form a real linear space, so we first construct a real version
of the fermionic Fock space, and then complexify. A suitable ``discrete
residue'' property of fermionic correlations plays the role of the
fermion-fermion OPE, leading to the right anti-commutation relations.
It's been well known that these commutation relations imply Virasoro
action by the Sugawara construction; since the Sugawara generators are
quadratic in fermions, this action preserves the space of actual lattice
spin polynomials, modulo null fields. 

In this setup, it turns out quite straightforward to check that the
identity and the spin at a single vertex are vacuum vectors
in the Fock space, i.e., they are annihilated by the positive modes,
and to check that an application of two suitable Fock space creation
operators to the identity leads to energy at a single edge, which
then must be a Virasoro primary by algebraic reasons. Moreover, since
Fock representations of the Clifford algebra are unitary, so is the
Virasoro representation; in particular, it decomposes into a direct
sum of irreducibles, and the (single chirality) highest weights are known to belong to
$\{0,\frac{1}{16},\frac12\}$ by the classical work of Friedan, Qiu, and
Shenker \cite{friedan1986details}. The hardest part of the proof is to show that any spin
polynomial is a descendant of one of the three primaries listed above.
We deduce this statement from some algebraic considerations together
with the fact that any Grassmann polynomial (respectively, Grassmann
polynomial with a spin insertion) is a Clifford descendant of the
identity (respectively, of a spin at a single vertex). The latter
fact is proven by induction in the degree of the Grassmann polynomial,
which uses discrete Laurent expansions for the discrete Cauchy kernel
and its spinor version. These expansions appear to be new. This approach
is substantially different from that of \cite{adame2024fock}, which proved the
corresponding results for the bosonic Fock space and discrete Gaussian
free field by an explicit characterization of null fields, together
with dimension counting.

This, in particular, shows that any Grassmann polynomial (respectively,
a Grassmann polynomial with one spin insertion), can be represented
as a linear combination of multiple contour integrals involving fermion
and spin correlations. Considering a correlation of several such spin
polynomials in the scaling limit regime, one can deform the contours
so that they stay at a macroscopic distance from each other. The correct
scaling dimension is determined simply by the fields involved, and
the convergence follows in any context when we know the convergence
of spin-fermion correlations. The latter has been established, in
particular, in finitely connected planar domains with piecewise plus/minus/free
boundary conditions \cite{chelkak2021correlations} and on a torus
\cite{izyurov2024energy,bayrakarogluIzyurov2024}.

Our results are valid on all isoradial graphs, which is important in itself as a universality result, but also useful for future work on extending the results to Riemann surfaces. More
precisely, each such (infinite, rooted) isoradial graph tiling
the plane has its own map $\MainBij$ from the space of lattice fields on that graph to the direct sum of fermionic Fock spaces. Moreover,
even for a fixed graph, the correspondence $\MainBij$ is not unique,
but depends on certain choices in the construction. Of course, these choices do not affect the leading
homogeneous component of $\MainBij(\Op),$ as the scaling dimension
and the scaling limit are independent of them. To illustrate
this, on the square lattice
$\Z^{2}$, the most natural (symmetric) choice of the correspondence
is such that the spin $\sigma_{(0,0)}$ at the origin is mapped to
$\bspin,$ the primary field of scaling dimension $\frac{1}{8},$ but one can
also set up the correspondence so that, say, its neighboring spin
$\sigma_{(0,1)}$ is mapped to $\bspin$. The difference, $\sigma_{(0,1)}-\sigma_{(0,0)},$
is then mapped to the descendent field $(\Loper{-1}+\Loperbar{-1})\bspin,$
which has scaling dimension $1\frac{1}{8}$ and thus disappears in
the limit when rescaled by $\delta^{-\frac{1}{8}}.$ 

\section{Setup and the main results}

\subsection{The fermionic Fock spaces and the Virasoro modules\textmd{. \label{subsec: intro_Fock}}}

In order to state our results, we need to briefly describe the vector
space that encodes the possible scaling limits of lattice fields,
see Section \ref{sec: background_algebra} for more details. As a vector space,
it is a direct sum of two Fock spaces $\FullFockNS$ and $\FullFockR$,
in the physics jargon called the \emph{Neveu--Schwarz }and the \emph{Ramond}
sector respectively. The former will correspond to Grassmann polynomials
in lattice fermions (equivalently, \emph{even} spin polynomials with,
possibly, one fermion insertion), while the latter will correspond
to Grassmann polynomials with one spin insertion (equivalently, \emph{odd}
spin polynomials with, possibly, one fermion insertion).

Each of $\FullFockNS$ and $\FullFockR$ carries a highest weight
representation of its own Clifford algebra, $\TwoCliffNS$ and $\TwoCliffR$, with generators $\{\Psi_{n},\bar{\Psi}_{n}\}_{n\in\Z+\frac{1}{2}}$
and $\{\Psi_{n},\bar{\Psi}_{n}\}_{n\in\Z}$ respectively, and the
anti-commutation relations 
\begin{equation}
	\Psi_{n}\Psi_{m}+\Psi_m\Psi_n=2\delta_{n+m,0}=\bar{\Psi}_{n}\bar{\Psi}_{m}+\bar{\Psi}_{m}\bar{\Psi}_{n};\quad \Psi_{n}\bar{\Psi}_{m}+\bar{\Psi}_{m}\Psi_{n}=0.\label{eq: anticomm}
\end{equation}
Formally the defining relations~\eqref{eq: anticomm} 
look exactly the same in both cases, but the algebras $\TwoCliffNS$ and $\TwoCliffR$ are
naturally viewed as different. For example all generators $\Fgen{k}$ of $\TwoCliffNS$
satisfy $\Fgen{k}^2 = 0$, whereas the generator $\Fgen{n}$ of $\TwoCliffR$
with the index $n=0$ satisfies $\Fgen{0}^2 = \Igen$.

As Clifford modules, each of $\FullFockNS$ and $\FullFockR$
is generated by a single vector, denoted respectively by $\ind$ and
$\bspin,$ with the property that $\Psi_{n}\ind=\bar{\Psi}_{n}\ind=0$
and $\Psi_{n}\sigma=\bar{\Psi}_{n}\sigma=0$ for $n>0.$ The Neveu-Schwarz
module $\FullFockNS$ is the ``freest'' possible one with these
properties; it has a basis given by
\begin{equation}
	\Psi_{n_{1}}\dots\Psi_{n_{k}}\bar{\Psi}_{m_{1}}\dots\bar{\Psi}_{m_{\bar k}}\ind:\quad k,\bar k\geq0,n_{1}<\dots<n_{k}<0,m_{1}<\dots<m_{\bar k}<0,\label{eq: def_module_NS}
\end{equation}
the indices belonging to $\Z+\frac{1}{2}.$ The case of the Ramond
module $\FullFockR$ is slightly more involved. In this case,
its basis is given by 
\begin{equation}
	\Psi_{n_{1}}\dots\Psi_{n_{k}}\bar{\Psi}_{m_{1}}\dots\bar{\Psi}_{m_{\bar k}}\sigma:\,
	k,\bar k\geq0,n_{1}<\dots<n_{k}\leq 0,m_{1}<\dots<m_{\bar k}<0,\label{eq: def_module_R}
\end{equation}
and one has the additional relation 
\begin{equation}
\mu:=\lamb\Psi_{0}\sigma=\lambb\bar{\Psi}_{0}\sigma,\quad\lamb=e^{-\frac{i\pi}{4}}.	\label{eq: mu_psi_psibar}
\end{equation}

As we will see, $\mu$ corresponds to the lattice (Kadanoff--Ceva)
\emph{disorder operator}. We draw the reader's attention to the fact that
$\Psi_{0}\sigma$ and $\bar{\Psi}_{0}\sigma$ are in fact \emph{not}
linearly independent, and thus, unlike the Neveu--Schwarz sector,
the ``double chirality'' module $\FullFockR$ is \emph{not}
a tensor product of two chiral ones. We also note that by (\ref{eq: anticomm})
we have $2\Psi^{2}_{0}=\{\Psi_{0},\Psi_{0}\}=2$ and similarly for
$\bar{\Psi}_{0},$ so that we have $\sigma=\lambb\Psi_{0}\mu=\lamb\bar{\Psi}_{0}\mu.$
Thus, $\mu$ can as well be taken as a generator of $\FullFockR,$
a symmetry expected from the Kramers--Wannier duality. 

The \emph{Sugawara construction} (see Section \ref{sec: background_algebra}) shows that each of
$\FullFockR$ and $\FullFockNS$ carry a representation of the
direct sum of two copies of the Virasoro algebra of central charge
$c=\frac{1}{2}.$ The Virasoro generators $\Loper{n}$ (respectively,
$\Loperbar{n}$) act on any given field as even quadratic polynomials in $\Psi_{n}$ (respectively, $\bar{\Psi}_{n}$); moreover, for each
monomial $\Psi_{m}\Psi_{k}$ featuring in the expression for $\Loper{n}$,
we have $m+k=n$, see Section \ref{sec: background_algebra} 
for specific expressions. Each of $\FullFockNS$ and $\FullFockR$
has two natural gradings: first, $\Z/(2\Z)$ grading by \emph{the
	parity} of $k+\bar k$ in (\ref{eq: def_module_NS}--\ref{eq: def_module_R})
and second, by the weights $(-n_{1}-\dots-n_{k},-m_{1}-\dots-m_{\bar k}),$
which belong to $\left(\frac{1}{2}\Z\right)^{2}$ in the Neveu--Schwarz
sector and to $\Z^{2}$ in the Ramond sector. It follows from the
properties of the Sugawara construction that the Virasoro action preserves
the former grading and is homogeneous with respect to the latter one.
The structure of $\FullFockNS$ and $\FullFockR$
as $\Vir\oplus\Vir$ modules is as follows: we have 
\[
\FullFockNS=\HVM{\ind}\oplus\HVM{\bpsi}\oplus\HVM{\bpsibar}\oplus\HVM{\ben},\quad\FullFockR=\HVM{\bspin}\oplus\HVM{\bmu},
\]
where each direct summand is an irreducible, highest weight module
generated by the highest weight vectors $\ind,$ $\bpsi:=\Psi_{-\frac{1}{2}}\ind,$
$\bpsibar:=\bar{\Psi}_{-\frac{1}{2}}\ind,$ $\ben:=\frac{\i}{2}\Psi_{-\frac{1}{2}}\bar{\Psi}_{-\frac{1}{2}}\ind,$
$\sigma,$ and $\mu:=\lamb\Psi_{0}\sigma$, respectively, with the
$\Loper{0}$ and $\Loperbar{0}$ eigenvalues given by the following table:
\begin{center}
	\begin{tabular}{|c|c|c|}
		\hline 
		\noalign{\vskip1pt}
		$\Op$ & $\Loper{0}$ eigenvalue  & $\Loperbar{0}$ eigenvalue\tabularnewline[1pt]
		\hline 
		\noalign{\vskip1pt}
		\hline 
		\noalign{\vskip1pt}
		$\ind$ & 0 & 0\tabularnewline[1pt]
		\hline 
		\noalign{\vskip1pt}
		$\bpsi$  & $\frac{1}{2}$ & 0\tabularnewline[1pt]
		\hline 
		\noalign{\vskip1pt}
		$\bpsibar$ & 0 & $\frac{1}{2}$\tabularnewline[1pt]
		\hline 
		\noalign{\vskip1pt}
		$\ben$ & $\frac{1}{2}$ & $\frac{1}{2}$\tabularnewline[1pt]
		\hline 
		\noalign{\vskip1pt}
		$\bspin$ & $\frac{1}{16}$ & $\frac{1}{16}$\tabularnewline[1pt]
		\hline 
		\noalign{\vskip1pt}
		$\bmu$ & $\frac{1}{16}$ & $\frac{1}{16}$\tabularnewline[1pt]
		\hline 
	\end{tabular}
	\par\end{center}

The direct sum $M(4,3):=\HVM{\ind}\oplus\HVM{\en}\oplus\HVM{\sigma}$
consists of all the vectors in $\FullFockNS\oplus\FullFockR$ that are
\emph{even} with respect to the fermion parity grading; these fields will
correspond to \emph{even} Grassmann polynomials (in a chosen isoradial graph tiling the plane), possibly with one
spin insertion. As mentioned above, those are in correspondence with
polynomials in spins. The direct sum $\HVM{\bpsi}\oplus\HVM{\bpsibar}$
will corresponds to \emph{odd} Grassmann polynomials, or equivalently,
\emph{even} polynomials in spins with one fermion insertion. Finally,
$\HVM{\bmu}$ will correspond to \emph{odd} Grassmann polynomials
with one spin insertion, or equivalently, \emph{odd} spin polynomials
with one fermion insertion. 

Given $\Delta\in\R,$ the projection onto the subspace $\left\{ \Op\in\FullFockNS\oplus\FullFockR:(\Loper{0}+\Loperbar{0})\Op=\Delta \Op\right\} $
will be denoted by $\WeightProj{\Delta}{\cdot}.$ For $\Op\in\FullFockNS\oplus\FullFockR,$
we denote $\Delta_\Op:=\min\{\Delta:\WeightProj{\Delta}{\Op}\neq0\}$;
it is called the \emph{scaling dimension} of $\Op$. Note that $\Delta_{\Op}\in\Z_{\geq0}$
for $\Op\in\FullFockNS$ and $\Delta_{\Op}\in\frac{1}{8}+\Z_{\geq0}$
for $\Op\in\FullFockR.$

\subsection{Correlations in the continuum. \label{subsec: intro:cont_cor}}

Let $\Omega$ be 
%either a torus $\C/\Lambda$, where $\Lambda$ is a lattice, or 
a finitely-connected sub-domain of $\C$, with some
boundary conditions, e.g., given by a subdivision of the boundary
into finitely many $+,-,$ and free arcs as in \cite{chelkak2021correlations}.
Our aim is to define, given $\Op^{(1)},\dots,\text{\ensuremath{\Op}}^{(k)}\in\FullFockNS\oplus\FullFockR$
and distinct points $z_{1},\dots,z_{k}\in\Omega$, the correlation
functions in the continuum, denoted by 
\[
\ccor{\Op^{(1)}_{z_{1}}\dots\Op^{(k)}_{z_{k}}}_{\Omega}.
\]
These correlations can be expressed in terms of the \emph{spin-fermion
	correlations} and their OPE coefficients, or equivalently, contour
integrals involving the spin-fermion correlations. Since the spin
and fermions are, in CFT terminology, \emph{mutually non-local fields},
these correlations are multi-valued rather than single-valued functions
of $z_{i}'$s. On planar domains, they change signs as some points
travel around each other.
%, and on a torus, they also may change values in more complicated ways as marked points trace non-contractable loops.
Let $\Config:=\{(z_{1},\dots,z_{k})\in\Omega^{k}:z_{i}\neq z_{j}\text{ for }i\neq j\},$
and denote by $\ConfUC$ its universal cover. We take it as our starting
point that the correlation functions $\ccor{\Op^{(1)}_{z_{1}}\dots\Op^{(k)}_{z_{k}}}_{\Omega}$
are defined when $\Op^{(i)}\in\{\ind,\sigma,\psi,\psi^{\star}\}$
for each $i$; see \cite[Section 5]{chelkak2021correlations}.
% for the case of planar domains and \cite{izyurov2021energy,bayrakarogluIzyurov2024} for the case of a torus. 
These correlations have been
shown to be the scaling limits of spin-fermion correlation in the
Ising model \cite[Theorem 1.2]{chelkak2021correlations}, and they can also be written explicitly in terms of special functions
using bosonization identities, see \cite{bayraktaroglu2023bosonization}.
%for finitely connected planar domains and \cite{bayrakarogluIzyurov2024}
%for the torus case.

The spin-fermion correlations can be defined \cite{chelkak2021correlations,bayrakarogluIzyurov2024}
purely analytically as solutions to certain boundary-value problems,
without reference to the underlying model and its scaling limits.
They are uniquely identified by the following properties :
\begin{enumerate}
	\item the correlations\emph{ are (anti-)holomorphic} with respect to the
	positions of fermions \cite[Section 5]{chelkak2021correlations}
	\[
	\partial_{z_{i}}\ccor{\Op^{(1)}_{z_{1}}\dots\psi_{z_{i}}\dots\Op^{(k)}_{z_{k}}}_{\Omega}=0,\quad\bar{\partial}_{z_{i}}\ccor{\Op^{(1)}_{z_{1}}\dots\psistar{z_{i}}\dots\Op^{(k)}_{z_{k}}}_{\Omega}=0.
	\]
	They are two-valued functions that pick up a $-1$ factor every type
	a fermion goes around a spin insertion. 
	\item \emph{the (singular parts of the) OPE }\cite[Section 6]{chelkak2021correlations}
	between the fermions and the fermions and spins: 
	\begin{equation}
		\psi_{z}\psi_{w}=\frac{2}{z-w}+O(1);\quad\psistar z\psistar w=\frac{2}{\bar z-\bar w}+O(1);\quad\psi_{z}\psistar w=O(1);\label{eq: ferm_OPE_continuous}
	\end{equation}
	\begin{equation}
		\psi_{z}\sigma_{w}=\lambb(z-w)^{-\frac{1}{2}}\mu_{w}+O(z-w)^{\frac{1}{2}};\label{eq: ferm_spin_OPE}
	\end{equation}
	\begin{equation}
		\psi_{z}\mu_{w}=\lamb(z-w)^{-\frac{1}{2}}\left(\sigma_{w}+4(z-w)\partial_{w}\sigma_{w}\right)+O(z-w)^{\frac{3}{2}};\label{eq: ferm_mu_OPE}
	\end{equation}
	\begin{equation}
		\sigma_{z}\sigma_{w}=|z-w|^{-\frac{1}{4}}+o(|z-w|^{-\frac{1}{4}}).\label{eq: spins_OPE}
	\end{equation}
	Here, $\mu_{w}$ is a \emph{real }field which can be \emph{defined}
	by (\ref{eq: ferm_spin_OPE}). As usual, the OPEs are understood as
	identities that become true when inserted to any correlations with
	any fields (in this case, spins and fermions) at distant points. E.g.,
	the first equality in (\ref{eq: ferm_OPE_continuous}) above is to
	be understood as 
	\[
	\ccor{\psi_{z}\psi_{w}\Op^{(1)}_{z_{1}}\dots\Op^{(k)}_{z_{k}}}_{\Omega}=\frac{2}{z-w}\ccor{\Op^{(1)}_{z_{1}}\dots\Op^{(k)}_{z_{k}}}_{\Omega}+O(1),\quad z\to w,
	\]
	for any $\Op^{(1)}\dots\Op^{(k)}\in\{\ind,\psi,\psistar{},\sigma\}$,
	where  the implicit constant in $O(1)$ is uniform over positions of marked points away from each other and the boundary.
	\item \emph{boundary conditions} that depend on the underlying domain $\Omega$
	and the choice of the boundary conditions in the discrete model.
\end{enumerate}
The boundary conditions are important to uniquely identify the correlations
in the continuum, and also they are essential for the proofs of convergence
thereof. However, as we take these results as our starting points, we
will not dwell on this any longer.
\begin{rem}
	The spin-fermion correlations are usually only defined when the total
	number of fermion insertions is \emph{even}. Otherwise, we define
	the corresponding correlations to be zero. This will imply that also
	the correlations $\ccor{\Op^{(1)}_{z_{1}}\dots\Op^{(k)}_{z_{k}}}_{\Omega}$
	are identically zero unless the total fermion parity of $\Op^{(1)}_{z_{1}},\dots,\Op^{(k)}_{z_{k}}$
	is even. 
\end{rem}

\begin{defn}
	\label{def: correlations-functions}The correlations functions $\ccor{\Op^{(1)}_{z_{1}}\dots\Op^{(k)}_{z_{k}}}_{\Omega}$
	in a domain $\Omega$ (with boundary conditions incorporated in its
	notation) are defined as a map from $\left(\FullFockNS\oplus\FullFockR\right)^{\otimes k}$
	to the space of real-analytic functions on $\ConfUC,$ 
	\[
	\Op^{(1)}\otimes\dots\otimes\Op^{(k)}\mapsto\left((z_{1},\dots,z_{k})\mapsto\ccor{\Op^{(1)}_{z_{1}}\dots\Op^{(k)}_{z_{k}}}_{\Omega}\right)
	\]
	inductively, in terms of the spin-fermions correlations, as follows: 
	\begin{enumerate}
		\item If $\Op^{(1)},\dots,\text{\ensuremath{\Op}}^{(k)}\in\{\ind,\bspin\},$
		we define $\ccor{\Op^{(1)}_{z_{1}}\dots\Op^{(k)}_{z_{k}}}_{\Omega}:=\ccor{\prod_{i:\Op^{(i)}=
			\sigma}\sigma_{z_i}}_{\Omega}$
		\item When each $\Op^{(1)},\dots,\text{\ensuremath{\Op}}^{(k)}$ is one
		of the basis elements (\ref{eq: def_module_NS}), (\ref{eq: def_module_R}) of $\FullFockNS\oplus\FullFockR,$ we define 
		\begin{equation}
			\ccor{\Op^{(1)}_{z_{1}}\dots(\Curr{n}\Op^{(i)}_{z_{i}})\dots\Op^{(k)}_{z_{k}}}_{\Omega}=\frac{1}{2\pi\i}\oint_{(z_{i})}\ccor{\Op^{(1)}_{z_{1}}\dots\psi_{w}\Op^{(i)}_{z_{i}}\dots\Op^{(k)}_{z_{k}}}_{\Omega}(w-z_{i})^{n-\frac{1}{2}}\,dw;\label{eq: defcor1}
		\end{equation}
		\begin{equation}
			\ccor{\Op^{(1)}_{z_{1}}\dots(\Currbar{n}\Op^{(i)}_{z_{i}})\dots\Op^{(k)}_{z_{k}}}_{\Omega}=-\frac{1}{2\pi\i}\oint_{(z_{i})}\ccor{\Op^{(1)}_{z_{1}}\dots\psistar w\Op^{(i)}_{z_{i}}\dots\Op^{(k)}_{z_{k}}}_{\Omega}(\bar w-\bar z_{i})^{n-\frac{1}{2}}\,d\bar w.\label{eq: defcorbar}
		\end{equation}
		Here $n\in\Z+\frac{1}{2}$ if $\Op^{(i)}\in\FullFockNS$ and $n\in\Z$
		if $\Op^{(i)}\in\FullFockR,$ and the notation $\oint_{(z_{i})}$
		means a small\footnote{i.e., not encircling other $z_{j}$; but it should enclose any small
			contours around $z_{i}$ that may have been involved in the definition
			of the correlation function in the integrand.} counterclockwise contour around $z_{i}$. 
		\item We then extend the definition to the case $\Op^{(1)},\dots,\text{\ensuremath{\Op}}^{(k)}\in\FullFockNS\oplus\FullFockR$
		by linearity.
	\end{enumerate}
\end{defn}

By Fubini, the correlation functions thus defined do not depend on
the order in which we add $\Psi$'s at \emph{different} points at
the second step in the above construction. Moreover, it is clear that the OPE (\ref{eq: ferm_OPE_continuous}, \ref{eq: ferm_spin_OPE}, and \ref{eq: ferm_mu_OPE})
continue to hold true when the ``spectator'' fields $\Op^{(1)},\dots,\text{\ensuremath{\Op}}^{(k)}$
belong to $\FullFockNS\oplus\FullFockR$ rather than $\{\ind,\psi,\psistar{},\sigma\}.$
An easy (and standard) satellite-integral computation shows that the OPE (\ref{eq: ferm_OPE_continuous})
implies the anti-commutation relations \eqref{eq: anticomm},
understood inside the correlations, for instance, we have 
\begin{multline*}
	\ccor{\Op^{(1)}_{z_{1}}\dots\Psi_{n}\Psi_{m}\Op^{(i)}_{z_{i}}\dots\Op^{(k)}_{z_{k}}}_{\Omega}+\ccor{\Op^{(1)}_{z_{1}}\dots\Psi_{m}\Psi_{n}\Op^{(i)}_{z_{i}}\dots\Op^{(k)}_{z_{k}}}_{\Omega}\\
	\equiv2\delta_{n+m,0}\ccor{\Op^{(1)}_{z_{1}}\dots\Op^{(i)}_{z_{i}}\dots\Op^{(k)}_{z_{k}}}_{\Omega}
\end{multline*}
for any choice of the fields $\Op^{(1)}_{z_{1}},\dots,\Op^{(k)}_{z_{k}}.$
Moreover, we have $(\Psi_{n}\sigma)\equiv0$ and $(\Psi_{n}\ind)\equiv0$ for $n>0;$ thanks to
the spin-fermion OPE starting from power $(z-w)^{-\frac{1}{2}}$ and the identity-fermion OPE not having any singular part. Also, inside correlations we do have \eqref{eq: mu_psi_psibar}, as ensured by the fermion-spin OPE \cite[Theorem 6.2]{chelkak2021correlations}.
These observations imply that Definition \ref{def: correlations-functions}
indeed defines the correlation functions for any fields $\Op^{(1)},\dots,\text{\ensuremath{\Op}}^{(k)}\in\FullFockNS\oplus\FullFockR,$
in that the definition respects the module structure. As a final remark,
it follows from~(1) and~(2) and (anti)holomorphicity of spin-fermion correlations with respect to
fermion positions that
$(\Psi_{-\frac{1}{2}}\ind)_{z_{i}}=\psi_{z_{i}}$ and $(\bar{\Psi}_{-\frac{1}{2}}\ind)_{z_{i}}=\psistar{z_{i}},$
so that the spin-fermion correlations themselves fit into the above
setup. 
\begin{rem}
	Planar domains (and a torus) are special among Riemann surfaces in that
	they have a distinguished coordinate chart $z$ and a distinguished
	differential $dz$, which simplifies the above discussion a lot. On
	general Riemann surfaces, the correlations of primary fields are forms
	(of fractional degree), and they cannot really be identified with
	functions. The Clifford action in Definition \ref{def: correlations-functions}
	then depends on a choice of a coordinate chart near each of the points
	$z_{1},\dots,z_{k}$. 
\end{rem}

\subsection{Lattice fields and their correlations}

\label{subsec: local_lattice_fields_intro}Let $\isG$ be a (finite
or infinte) planar graph embedded isoradially in the plane, with a
distinguished \emph{root corner }$\root$ (a corner is a midpoint of the
segment connecting a vertex $\root^{\circ}$ of $\isG$ with a vertex
$\root^{\bullet}$ of its dual graph). Recall that the \emph{mesh size} of $\isG$, which we will denote by $\delta$, is defined as the distance between
	a vertex and an incident dual vertex, which is constant by isoradiality. The reader interested in the
simplest setup may think of $\isG$ being the square lattice $\sqrt{2}\delta\Z^{2}$ 
and $o^{\circ}$ being the origin. We denote by $\isG^{\circ},\isG^{\bullet},$
and $\Cgr(\isG)$ the set of all vertices, dual vertices, and corners
of $\isG$, respectively. We will consider the following vector spaces
associated to $\isG$:
\begin{enumerate}
	\item The space of all (formal) \emph{spin polynomials} 
	\begin{multline*}
		\LF=\LF(\isG):=\C\langle\{\sigma_{v}\}_{v\in\isG^{\circ}}\rangle/\langle\sigma^{2}_{v}=1\,\forall v,
		\sigma_{v}
		\sigma_{v'}=
		\sigma_{v'}\sigma_{v}\,\forall v,v'\rangle\\
		=\left\{ \sum_{\substack{S\subset\isG^{\circ} \\ \text{finite}}}\alpha_{S}\prod_{v\in S}\sigma_{v}:\alpha_{S}=0\text{ for all but finitely many }S\subset\isG^{\circ}\right\} ;
	\end{multline*}
	hereinafter $\C\langle X\rangle$ denotes the free associative algebra
	over $X$. We denote by $\LFe$ (respectively, $\LFo$) the space
	of all even (respectively, odd) spin polynomials. 
	\item The space $\LFPe(\isG)$ of all (formal) \emph{Grassmann polynomials
		in fermions, }which is the space of all polynomials in anti-commuting
	variables $\psi_{z}$\emph{ }indexed by \emph{corners} $z\in\Cgr(\isG):$\emph{
	}
	\[
	\LFPe=\LFPe(\isG):=\C\langle\{\psi_{z}\}_{z\in\Cgr(\isG)}\rangle/\langle\psi_{z}\psi_{w}=-\psi_{w}\psi_{z}\,\forall z,w\rangle.
	\]
	This space has a direct sum decomposition $\LFPe=\LFPee\oplus\LFPeo$
	into the spaces of even and odd Grassmann polynomials.
	\item The space of all (formal) \emph{Grassmann polynomials with one spin
		insertion}, i.e., expressions of the form $\Op\sigma_{\root^{\circ}},$
	where $\Op$ is a polynomial in anti-commuting variables $\psi_{z}$
	indexed by corners of the \emph{double cover} $\Cgr(\isG_{[\root^{\circ}]})$
	ramified at $\root^{\circ}$, subject to the relations $\psi_{z}=-\psi_{z^{\star}},$
	where $z \ne z^{\star}$ are in the fiber over the same point:
	\[
	\LFPo=\LFPo(\Gamma):=\left\{ \sigma_{\root^{\circ}}\Op:\Op\in\C\langle\{\psi_{z}\}_{z\in\Cgr(\isG_{[\root^{\circ}]})}\rangle/\langle\psi_{z}\psi_{w}=-\psi_{w}\psi_{z}\,\forall z,w,\,\psi_{z}=-\psi_{z^{\star}}\forall z\rangle\right\} .
	\]
	This space also has a direct sum decomposition $\LFPo=\LFPoe\oplus\LFPoo$,
	corresponding to $\Op$ being even or odd.
\end{enumerate}
The vectors in any of these spaces or in their direct sum will be called
\emph{lattice fields}. For $\Op\in\LF(\isG),$ define its \emph{radius
	of support} 
\[
R_{\Op}:=\max\{|v-\root|:\exists S\ni v\,\text{s.t.}\,\alpha_{S}\neq0\}+2\delta;
\]
we similarly define $R_{\Op}$ for $\Op\in\LFPe(\isG)$ or $\Op\in\LFPo(\isG).$

Fix a rooted isoradial graph $(\Gamma,\root)$ of mesh size $1$ tiling the plane. Let $\Od$
be a (finite or infinite) isoradial graph of mesh size $\delta,$
and $w\in\Cgr(\Od).$ We say that \emph{the $\delta R$--ball of }$\Od$\emph{
	at $w$ agrees with $\isG$} if $R$--neighborhoods of $\root$ in
$\isG$ and in $\root+\delta^{-1}(\Od-w)$ coincide. Given $\Op\in\LF(\isG),$
we may consider the \emph{insertion} of $\Op$ at $w$ on $\Od,$
\[
\text{\ensuremath{\Op}}_{\delta,w}=\sum\alpha_{S}\prod_{v\in S}\sigma_{\delta(v-\root)+w}\in\LF(\Od),
\]
provided that the $\delta R_{\Op}$--ball of $\Od$ at $w$ agrees
with $\isG.$ Similarly, we define $\text{\ensuremath{\Op}}_{\delta,w}$
for $\Op\in\LFPe(\isG)$ or $\Op\in\LFPo(\isG).$ In the simplest
case when $\isG$ is a square lattice $\sqrt{2}\Z^{2}$ and $\isG_{\delta}=\delta\Gamma$, then the insertion
is always possible, as long as the corner $w$ has the same orientation
as $\root$.

Let now $\Od$ be a finite isoradial graph, and consider the Ising
model on $\Od$ with (Baxter's) $Z$-invariant critical weights \cite{ChelkakSmirnov2},
with some boundary conditions, e.g. as in \cite{chelkak2021correlations}.
Given infinite rooted isoradial graphs $\isG^{(1)},\dots,\isG^{(k)}$ of mesh size $1$
tiling the plane and $\Op^{(i)}\in\LF(\isG^{(i)}),$ we may
define the correlation function 
\begin{equation}
	\E_{\Od}[\Op^{(1)}_{\delta,z_{1}}\dots\Op^{(k)}_{\delta,z_{k}}],\label{eq: def_cor_intro}
\end{equation}
provided that the $\delta R_{\Op^{(i)}}$-ball of $\Od$ at $z_{i}$ agrees
with $\isG^{(i)},$ simply by substituting the spins in the model into
the formal spin variables. We can extend this definition also to the
case $\Op^{(i)}\in(\LFPe\cup\LFPo)(\Gamma^{(i)}),$ using some additional
care to fix the signs. Let $\cvr:\Od_{\cvr}\to\Od$ be the double
cover of $\Od$ ramified at those of $z^{\circ}_{i}$ for which $\Op^{(i)}\in\LFPo.$
For every such $i$, fix one of the two embeddings of the $\delta\RadO{\Op^{(i)}}$--neighborhood
of $\Cgr(\delta\Gamma^{(i)}_{[o^{\circ}_{i}]})$ into $\Cgr(\Od_{\cvr})$,
and for every $i$ such that $\Op^{(i)}\in\LFPe,$ fix a lift of the  $\delta\RadO{\Op^{(i)}}$--neighborhood
of $\Cgr(\delta\Gamma^{(i)}_{[o^{\circ}_{i}]})$ onto $\Cgr(\Od_{\cvr})$ (in both cases, the root $\root^{(i)}$ of $\isG^{(i)}$ is mapped to $z_i$).
With this data, when each $\Op^{(i)}$ is a monomial, we can understand
the correlation (\ref{eq: def_cor_intro}) as the spin-fermion correlation
as defined in \cite[Section 2]{chelkak2021correlations}, substituting
the spins and the fermionic insertions in the model into the formal
spin and fermion variables. As in the continuum case, we define the
correlation (\ref{eq: def_cor_intro}) to be zero if the total number
of fermion insertions is odd. We furthermore extend the definition
to the case $\Op^{(i)}\in(\LFPe\oplus\LFPo\oplus\LF)(\Gamma^{(i)})$
by linearity\footnote{choosing all the embeddings and lifts required to make sense of every monomial correlation appearing in the expansion}. 

Following \cite{hongler2022conformal}, we say that a field $\Op\in(\LFPe\oplus\LFPo\oplus\LF)(\isG)$
is a \emph{null field} if it vanishes in all correlations with distant
fields, i.e., if there exists an $R>0$ such that the correlation
$\E_{\Od}[\Op_{\delta,w}\Op^{(1)}_{\delta,z_{1}}\dots\Op^{(k)}_{\delta,z_{k}}]$
vanishes identically for all possible domains $\Od$ and fields $\Op^{(i)}$,
whenever the $\delta R$-ball of $\Od$ at $w$ agrees with $\isG$
and does not intersect $\delta R_{\Op^{(i)}}$ balls centered at $z_{i}$.
The space of all null fields will be denoted by $\Null$. As we explain
in Section \ref{subsec:Local-fields}, it is easy to see that we have
linear isomorphisms
\begin{equation}
	\LFPee/\Null\cong\LFe/\Null;\quad\LFPoe/\Null\cong\LFo/\Null.\label{eq: spin-ferm-isomorphism}
\end{equation}
In other words, for every $\Op\in\LF$, one can find a field $\hat{\Op}\in\LFPoe\oplus \LFPee$
that has the same correlations with all (sufficiently distant) fields,
and the other way around. 

\subsection{Main results}

We are ready to state our main result:
\begin{thm}
	\label{thm: main2} For each infinite rooted isoradial graph $(\isG,\root)$
	of mesh size $1$, there is a surjective linear map $\MainBij_{\isG}$
	from $\LFPe(\isG)\oplus\LFPo(\isG)$ to $\FullFockNS\oplus\FullFockR$
	such that:
	\begin{enumerate}
		\item The kernel of $\MainBij_{\isG}$ consists precisely of null fields;
		\item $\MainBij_{\isG}(1)=\ind$; $\MainBij_{\isG}(\sigma_{\root^{\circ}})=\Csigma\cdot\sigma,$
		where $\Csigma=2^{\frac{1}{6}}e^{\frac{3}{2}\zeta'(-1)}$;
		\item $\MainBij_{\isG}(\LFPee(\isG))=\HVM{\ind}\oplus\HVM{\en};$ $\MainBij_{\isG}(\LFPoe(\isG))=\HVM{\sigma};$
		$\MainBij_{\isG}(\LFPeo(\isG))=\HVM{\bpsi}\oplus\HVM{\bpsibar};$ $\MainBij_{\isG}(\LFPoo)=\HVM{\mu}.$
		\item Let $\Omega$ be a finitely-connected planar domain and $\Omega^{\delta}$
		its approximation by isoradial graphs of mesh size $\delta$. Let
		$k\in\N$, $\isG_{1},\dots,\isG_{k}$ be infinite, rooted isoradial
		graphs tiling the plane, and $\Op^{(i)}\in\LFPe(\isG_{i})\oplus\LFPo(\isG_{i})$.
		Assume that the approximations $\Od$ and corners $z_{i}=z^{\delta}_{i}\in\Cgr(\Od),i=1,\dots,k$
		are such that for each $\delta>0$ small enough, the $\delta R_{\Op^{(i)}}$-ball
		of $\Od$ at $z_{i}$ coincides with $\Gamma_{i}$. Then, 
		\[
		\delta^{-\Delta_{1}-\Delta_{k}}\E_{\Od}[\Op^{(1)}_{\delta,z_{1}}\dots\Op^{(k)}_{\delta,z_{k}}]=\ccor{\WeightProj{\Delta_{1}}{\MainBij_{\isG_{1}}(\Op^{(1)})}_{z_{1}}\dots\WeightProj{\Delta_{k}}{\MainBij_{\isG_{k}}(\Op^{(k)})}_{z_{k}}}_{\Omega}+o(1),
		\]
		uniformly as $\delta\to0$ in $z_{1},\dots,z_{k}$ staying away from
		each other and $\partial \Omega$. Here the scaling dimensions $\Delta_{i}=\Delta_{\MainBij_{\isG_{i}}(\Op^{(i)})}$
		and the projectors $\WeightProj{\Delta_{i}}{\cdot}$ are as defined
		at the end of Section \ref{subsec: intro_Fock}.
	\end{enumerate}
\end{thm}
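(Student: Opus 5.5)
The plan is to construct the inverse of $\MainBij_{\isG}$ first, as a representation of $\TwoCliffNS$ and $\TwoCliffR$ on the quotients $\LFPe/\Null$ and $\LFPo/\Null$. This representation will be defined by discrete contour integrals.

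\emph{Discrete monomials.} On the corner graph $\Cgr(\isG)$ I build s-holomorphic discrete monomials $z^{n-\frac12}$ for $n\in\Z+\frac12$, together with their spinor analogues on the double cover $\Cgr(\isG_{[\root^\circ]})$ for $n\in\Z$.

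\emph{Modes.} For a field $\Op$ I define $\Psi_n\Op$ by integrating the fermion correlation against the matching monomial around a discrete contour enclosing the support of $\Op$. The integral is Smirnov's $\im\int F G\,dz$ pairing of two s-holomorphic functions, so it is independent of the contour, by the discrete Stokes identity for products of s-holomorphic functions. This is done for real fields and then complexified. The result is a well-defined operator on fields modulo $\Null$, because the output is a finite linear combination of fermion insertions near the root. $\bar\Psi_n$ is defined in the same way.

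\emph{Clifford relations.} These follow from a discrete residue property of fermion correlations, $\psi_z\psi_w$ near the diagonal, by the satellite-contour argument. Relation \eqref{eq: mu_psi_psibar} and the vacuum properties $\Psi_n1=0$, $\Psi_n\sigma_{\root^\circ}=0$ for $n>0$ I check from local holomorphicity: the integrand has no discrete pole at $\root$, or has the correct spinor branching there.

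\emph{Defining $\MainBij$.} With the Clifford action in hand, the Fock spaces are the free such modules, so there are Clifford maps $\FullFockNS\to\LFPe/\Null$ sending $\ind\mapsto1$, and $\FullFockR\to\LFPo/\Null$ sending $\bspin\mapsto\Csigma^{-1}\sigma_{\root^\circ}$. These maps are injective. On the Neveu--Schwarz side this holds because the vacuum module is irreducible. On the Ramond side I use nondegeneracy of the continuum correlations, combined with item (4) below. Call their inverse $\MainBij$; then item (1) holds by construction. The constant $\Csigma$ is fixed by the known spin asymptotics \eqref{eq: spins_OPE} as the lattice normalisation of $\sigma$.

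\emph{Surjectivity, the main obstacle.} It remains to show that the maps hit every field: every Grassmann monomial $\psi_{z_1}\cdots\psi_{z_k}$ (possibly with $\sigma_{\root^\circ}$) is, modulo $\Null$, a Clifford descendant of $1$ (respectively of $\sigma_{\root^\circ}$). I argue by induction on $k$. Write $\psi_{z_1}$ applied to the degree-$(k-1)$ field $\Op'$ as a contour integral of the discrete Cauchy kernel $G(\cdot,z_1)$ (or its spinor version) against the fermion. I then need a discrete Laurent expansion of the kernel in the monomials, exact outside a ball containing $z_1$. This expresses $\psi_{z_1}\Op'$ as $\sum_n c_n\Psi_{-n}\Op'$ plus a finite-rank correction. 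The correction is supported on lower-degree fields, and the induction absorbs it. Producing these expansions is where new discrete complex analysis is required, and I expect it to be the hardest step.

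\emph{Module structure, item (3).} Granting surjectivity, the Sugawara construction gives the Virasoro structure. Parity gives the split of \eqref{eq: def_module_NS}--\eqref{eq: def_module_R} into the listed irreducibles. Identifying $\MainBij(\psi_{\root})$ and $\MainBij$ of the edge energy with $\bpsi$ and $\ben$, up to lower terms, is a direct computation.

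\emph{Scaling limit, item (4).} By surjectivity, each $\Op^{(i)}$ equals a finite linear combination of iterated discrete contour integrals of spin-fermion correlations around $z_i$, with discrete monomials as weights. I deform the contours to macroscopic size and apply the convergence of spin-fermion correlations \cite[Theorem 1.2]{chelkak2021correlations}. I also need the convergence of the rescaled monomials, $\delta^{\frac12-n}$ times the discrete monomial tending to $(w-z)^{n-\frac12}$. The prefactor $\delta^{\Delta}$ then arises homogeneously from each mode. Consequently, by Definition \ref{def: correlations-functions}, the terms of minimal weight survive and give the projected continuum correlation, while higher-weight components are $o(\delta^{\Delta_i})$.
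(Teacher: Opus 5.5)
Your outline follows the paper's architecture: a contour-integral Clifford action on $\LFPe/\Null$ and $\LFPo/\Null$, the universal property plus irreducibility for injectivity, surjectivity by induction on Grassmann degree via Cauchy-kernel expansions, and convergence by deforming contours. The gap is that the step everything rests on is only named, and your setup cannot deliver it. You build the monomials first, constrained only by their asymptotics, and then hope for exact finite Laurent expansions. Two later steps need more than that.

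\emph{Clifford relations.} Your satellite-contour computation reduces $\Psi^{[\alpha]}_n\Psi^{[\beta]}_m+\Psi^{[\beta]}_m\Psi^{[\alpha]}_n$ to $4\im\oint[\alpha w^{n-\frac12}][\beta w^{m-\frac12}]\,dw$. Suppose one exponent is nonnegative, the other negative, and their sum is $\geq0$. Then this is a fixed, contour-independent number, and pulling the contour to infinity tells you nothing about it. So the exact duality $\im\oint[\alpha z^{p}][\beta z^{q}]\,dz=\im(2\pi\i\alpha\beta)\delta_{p+q,-1}$ must be imposed by construction; the paper does this by orthogonalizing the positive monomials against the negative ones.

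\emph{Surjectivity.} Your induction needs the expansion of $\bar\eta_aG_{(a)}$, and of a spinor Cauchy kernel (which you also have to construct), to be \emph{finite}. By the discrete Cauchy formula and duality, its coefficients are $\frac{2}{\pi}\bar\eta_a[a^{n-1}]$ and $-\frac{2}{\pi}\bar\eta_a[\i a^{n-1}]$. Finiteness therefore means that every positive monomial vanishes at each fixed corner once its degree is large. This couples the positive and negative monomials, and it need not hold for monomials chosen by asymptotics alone.

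The missing idea is the paper's Section 3:
\begin{itemize}
\item Exhaust the tiling by closed sets of rhombi $E_n$.
\item Choose $[\alpha z^{-k}]$ inside $\mathrm{span}_{\mathbb{R}}\{\bar\eta_aG_{(a)}:a\in\Cgr(E_{2k-1})\}$, and the spinor analogues inside spans of the $G_{[w]}$ over vertices of $E_{2k}$.
\item Show these spans have no degree jumps by counting zeros and poles of the rational function $R(\zeta)$ in the discrete-exponential representation. Its poles sit only at transversal directions of train tracks crossing $E_n$, and there are $n+1$ of these, since taking closures adds none. Faster-than-allowed decay would force more zeros than poles: zeros at $0$ and, by the symmetry of discrete exponentials, at $\infty$.
\item The Cauchy formula then gives $[\alpha a^{n}]=0$ for $a\in\Cgr(E_{2k-1})$ and $n\geq k$.
\end{itemize}
Without this, neither your Clifford relations nor your surjectivity step is established.

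Two smaller points. On the Ramond side you do not need ``nondegeneracy of continuum correlations plus item (4)''. You leave that claim unproved, and its natural proof uses the very fact you bypass: $\FullFockR$, with the relation $\lambda\Psi_0\bspin=\bar\lambda\bar\Psi_0\bspin$ built in, is irreducible. Left-multiplying a nonzero combination of basis vectors by a suitable string of modes yields a nonzero multiple of $\bspin$. So injectivity follows exactly as in the Neveu--Schwarz sector once $[\sigma_{\root^\circ}]\neq0$.

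For item (3), parity alone does not show that the even part of $\FullFockNS$ is $\HVM{\ind}\oplus\HVM{\en}$, or that $\FullFockR=\HVM{\sigma}\oplus\HVM{\mu}$; you must rule out further joint highest-weight vectors. The paper does this using unitarity of the Sugawara representation for the Fock inner product with $\Psi_k^\dagger=\Psi_{-k}$. It combines this with the Friedan--Qiu--Shenker classification at $c=\frac12$, applied to a lowest-grade vector orthogonal to the known summands. If you mean to quote this decomposition as standard, say so explicitly.
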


\begin{rem}
	\label{rem: generality}
	The convergence part (4) of the above result will be derived by treating the
	convergence of spin-fermion correlations, proven in \cite[Theorem 1.2]{chelkak2021correlations}, \cite{chelkak2023universality}, as a black box. Therefore, (4) immediately extends to other contexts as soon as this spin-fermion correlation convergence is proven. Besides the finitely connected planar domain setup, this has been essentially done for the square lattice discretization of a torus, see \cite{izyurov2024energy,bayrakarogluIzyurov2024}. The missing bit is the convergence of pure fermionic correlations, the difficulty
		being a non-trivial kernel of the (discrete) Cauchy-Riemann operators.
		This difficulty can be overcome by the same methods as employed in
		\cite{izyurov2024energy} for the computation of energy correlations, but to our knowledge, this has not been explicitly done in the literature.
	The case of isoradial approximations to arbitrary
	Euclidean surface with a boundary and conical singularities is a work
	in progress.
\end{rem}

\begin{rem}
	\label{rem: bij_uniqueness}
	The correspondence $\MainBij_{\isG}$ is not unique or canonical.
	However, if $\MainBij'_{\isG}$ is another such correspondence, then
	$\WeightProj{\Delta_{\Op}}{\MainBij_{\isG}(\Op)}=\WeightProj{\Delta_{\Op}}{\MainBij'_{\isG}(\Op)}$
	for any~$\Op$.
\end{rem}

The following corollary, connecting lattice spin polynomials with the states of $M(4,3)$, follows easily from the above
theorem, particularly item (3), and the isomorphisms (\ref{eq: spin-ferm-isomorphism}):
\begin{cor}
	\label{thm: main1} For each infinite rooted isoradial graph $(\isG,\root)$
	of mesh size $1$, there is a surjective linear map $\MainBijSpins_{\isG}$
	from $\LF(\isG)$ to $M(4,3)$ such that:
	\begin{enumerate}
		\item The kernel of $\MainBijSpins$ consists precisely of null fields;
		\item $\MainBijSpins(1)=\ind$, $\MainBijSpins(\sigma_{\root^{\circ}})=\Csigma\cdot\sigma$,
		$\MainBijSpins(\en_{e})=\Ceps\cdot\en$ where $e=(\root^{\circ}v)$
		is an edge of $\isG$ incident to $\root^{\circ}$, $\en_{e}=(\sin \theta_{e})^{-1}\sigma_{\root^\circ}\sigma_v-\cot{\theta_{e}}$
		is the energy observable on that edge, $\theta_{e}$ being half the rhombus angle against the edge $e$, and $\Ceps=\frac{2}{\pi}$;
		\item In the setup of item (4) of Theorem \ref{thm: main1}, we have  
		\[
		\delta^{-\Delta_1-\Delta_k}\E_{\Od}[\Op^{(1)}_{\delta,z_{1}}\dots\Op^{(k)}_{\delta,z_{k}}]=\ccor{\WeightProj{\Delta_1}{\MainBijSpins(\Op^{(1)})}_{z_{1}}\dots\WeightProj{\Delta_k}{\MainBijSpins(\Op^{(k)})}_{z_{k}}}_{\Omega}+o(1),
		\]
		as $\delta\to0$ and $\Od\to\Omega$, uniformly in $z_{1},\dots,z_{k}$
		staying away from each other and $\pa\Omega$, and $\Delta_{i}=\Delta_{\MainBijSpins_{\isG_{i}}(\Op^{(i)})}$
	\end{enumerate}
	
\end{cor}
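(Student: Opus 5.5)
The corollary follows from Theorem \ref{thm: main2} by composing $\MainBij_{\isG}$ with the isomorphisms (\ref{eq: spin-ferm-isomorphism}).

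\emph{Definition of $\MainBijSpins_{\isG}$.} Write $\Op\in\LF$ uniquely as $\Op=\Op_{\ind}+\Op_{\sigma}$ with $\Op_{\ind}\in\LFe$ and $\Op_{\sigma}\in\LFo$. By (\ref{eq: spin-ferm-isomorphism}) there are fields $\hat\Op_{\ind}\in\LFPee$ and $\hat\Op_{\sigma}\in\LFPoe$ with the same correlations as $\Op_{\ind},\Op_{\sigma}$ against all sufficiently distant fields. Each is unique modulo $\Null$. Set $\MainBijSpins_{\isG}(\Op):=\MainBij_{\isG}(\hat\Op_{\ind}+\hat\Op_{\sigma})$. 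This is well defined because $\ker\MainBij_{\isG}=\Null$ by item (1) of Theorem \ref{thm: main2}, and linearity is clear.

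\emph{Image and kernel.} By item (3) of Theorem \ref{thm: main2}, the image is $\MainBij_{\isG}(\LFPee\oplus\LFPoe)=\HVM{\ind}\oplus\HVM{\en}\oplus\HVM{\sigma}=M(4,3)$, and it is attained because the maps in (\ref{eq: spin-ferm-isomorphism}) are onto.

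For the kernel, suppose $\MainBijSpins(\Op)=0$. The $\HVM{\sigma}$-component and the $\HVM{\ind}\oplus\HVM{\en}$-component vanish separately, so $\hat\Op_{\ind},\hat\Op_{\sigma}\in\Null$, and hence $\Op_{\ind},\Op_{\sigma}\in\Null$.

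Conversely, suppose $\Op\in\Null$. To conclude that $\Op_{\ind}$ and $\Op_{\sigma}$ are null separately, I would use the global spin-flip symmetry. Correlations of $\Op_{\sigma}$ with distant even fields vanish under $+/-$-symmetric boundary conditions, but this symmetry is not available for all boundary conditions. A cleaner route is to note that the parity of a spin polynomial can be read off from the branching structure: the correlation of $\Op$ with distant fields changes sign when a fermion encircles it exactly for the odd part. I would take this fact from Section \ref{subsec:Local-fields}, where (\ref{eq: spin-ferm-isomorphism}) is established and which presumably treats even and odd parts separately. Then $\hat\Op_{\ind},\hat\Op_{\sigma}\in\Null$, so $\MainBijSpins(\Op)=0$.

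\emph{Values on primaries.} $\MainBijSpins(1)=\ind$ and $\MainBijSpins(\sigma_{\root^\circ})=\Csigma\sigma$ are immediate from item (2) of Theorem \ref{thm: main2}, since $1$ and $\sigma_{\root^\circ}$ are their own representatives.

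The energy $\en_e$ is the main step. Using the Kadanoff--Ceva identity, I would express $\sigma_{\root^\circ}\sigma_v$ modulo $\Null$ as a Grassmann quadratic $a\,\psi_{z}\psi_{z'}+b$ in the fermions at two corners adjacent to $e$, with explicit $\theta_e$-dependent coefficients. Computing $\MainBij_{\isG}(\psi_z\psi_{z'})$ requires the explicit form of the discrete Clifford action constructed later in the paper. In particular, the image of a single fermion $\psi_z$ should be $\Cpsi$ times a combination of $\Psi_{-\frac12}\ind$ and $\bar\Psi_{-\frac12}\ind$ with coefficients given by the projection $\eta_z$, plus higher modes.

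With this, $\en_e$ maps to an element of $\HVM{\ind}\oplus\HVM{\en}$. Its weight-$0$ part must cancel: the constant $\cot\theta_e$ is exactly the infinite-volume expectation of $(\sin\theta_e)^{-1}\sigma_{\root^\circ}\sigma_v$. Its weight-$(\frac12,\frac12)$ part must be proportional to $\Psi_{-\frac12}\bar\Psi_{-\frac12}\ind$. Any additional components would have to be of weight at least $1$ in both chiralities, so they would lie in $\HVM{\en}$ descendants. The claim $\MainBijSpins(\en_e)=\Ceps\en$ exactly, rather than merely up to descendants, therefore needs the precise structure of the construction. I expect this exact identification, together with the constant $\Ceps=\frac{2}{\pi}$ obtained by matching with the known energy one-point asymptotics, to be the main obstacle.

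\emph{Convergence.} Item (3) follows from item (4) of Theorem \ref{thm: main2} applied to the $\hat\Op^{(i)}$. Their correlations agree with those of the $\Op^{(i)}$ once $\delta$ is small, since the points $z_i$ are at macroscopic distance from each other.
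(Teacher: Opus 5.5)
Your treatment of items (1) and (3), and of the values $\MainBijSpins(1)=\ind$ and $\MainBijSpins(\sigma_{\root^\circ})=\Csigma\sigma$, is the paper's argument: compose $\MainBij_{\isG}$ with (\ref{eq: spin-ferm-isomorphism}) and read everything off Theorem \ref{thm: main2}. The parity discussion in the converse direction of (1) is unnecessary. Since $\hat{\Op}_{\ind}+\hat{\Op}_{\sigma}-\Op\in\Null$, the hypothesis $\Op\in\Null$ gives $\hat{\Op}_{\ind}+\hat{\Op}_{\sigma}\in\Null\cap(\LFPe\oplus\LFPo)=\ker\MainBij_{\isG}$ directly by item (1) of the theorem.

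The genuine gap is the identity $\MainBijSpins(\en_e)=\Ceps\en$. You leave it open, and the route you sketch cannot prove it. Convergence statements only see the lowest-weight part $\WeightProj{\Delta}{\cdot}$. So matching with energy one-point asymptotics could at best fix the $\ind$-component and the coefficient of $\en$, and only given external input on the bulk value. It can never rule out higher components. Your claim that such components must have weight at least $1$ in both chiralities is also false: $\Lgen{-2}\ind=\frac14\Psi_{-\frac32}\Psi_{-\frac12}\ind\in\HVM{\ind}$ has weight $(2,0)$, and nothing in your argument excludes it. Since $\MainBij_{\isG}$ is not canonical (Remark \ref{rem: bij_uniqueness}), an exact identity can only hold for a suitable choice in the construction. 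Making that choice is the missing idea.

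Choose the good set of monomials with $E_1$ the rhombus whose primal diagonal is $e$. Let $c_1,c_2$ be opposite sides of that rhombus, so that $\{c_1^\circ,c_2^\circ\}=\{\root^\circ,v\}$. After the paper's normalization $\bar{\eta}_{c_1}=1$, $\bar{\eta}_{c_2}=\i$, one has $[z^{-1}]=\frac{\pi}{2}\CK{z}{c_1}$ and $[\i z^{-1}]=\frac{\pi}{2}\CK{z}{c_2}$. Everything is now an exact lattice computation.

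The discrete Cauchy formula \eqref{eq: Disc_Cauchy_rec} gives $\RCurr{-\frac12}{1}[1]=2\pi[\psi_{c_1}]$ and $\RCurr{-\frac12}{\i}[1]=2\pi\i[\psi_{c_2}]$. Applying $\RCurr{-\frac12}{1}$ once more, the discrete residue formula yields $4\pi^2\i[\psi_{c_1}\psi_{c_2}]$ from the pole of $\CK{\cdot}{c_1}$. The only other contribution comes from the pole of $\psi_z\psi_{c_2}$ at $z=c_2$. It equals $\CK{c_2}{c_1}$ times a multiple of $[1]$, and it vanishes by \eqref{eq: G vanishing for energy}. Hence
$\frac{\i}{2}\Curr{-\frac12}\Currbar{-\frac12}[1]=-\frac{1}{8\pi}\RCurr{-\frac12}{1}\RCurr{-\frac12}{\i}[1]=-\frac{\pi\i}{2}[\psi_{c_1}\psi_{c_2}]$,
with no constant term and no descendants.

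With the paper's sign conventions, $\i\psi_{c_1}\psi_{c_2}\equiv-\sigma_{c_1^\circ}\sigma_{c_2^\circ}e^{-2\beta\sigma_{c_1^\circ}\sigma_{c_2^\circ}}$. The relation $e^{-2\beta}=\tan\frac{\theta_e}{2}$ makes this product equal to $(\sin\theta_e)^{-1}\sigma_{\root^\circ}\sigma_v-\cot\theta_e=\en_e$. This gives $\frac{\i}{2}\Curr{-\frac12}\Currbar{-\frac12}[1]=\frac{\pi}{2}[\en_e]$, which is the computation at the end of Section \ref{sec: lattice_Fock_structure}. Therefore $\MainBijSpins(\en_e)=\frac{2}{\pi}\en$ exactly. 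Both the cancellation of $\cot\theta_e$ and the value $\Ceps=\frac{2}{\pi}$ come out of this computation, with no appeal to one-point asymptotics.
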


	Remarks \ref{rem: generality} and \ref{rem: bij_uniqueness} apply to the above corollary as well.

\section{Discrete complex analysis}

\subsection{Calculus of residues for s-holomorphic functions and spinors. }

\label{subsec: s-holomorphicity}In this section, we collect the definitions and some elementary properties of s-holomoprhic functions, see e.g. \cite{ChelkakSmirnov2,chelkak2023universality}. Let $\isG$ be an isoradial graph
tiling the plane; we denote by $\isG^{\circ}$ the set of its vertices,
by $\isG^{\bullet}$ the set of its dual vertices, and by $\Cgr(\isG)$
its \emph{corner graph}. The vertices of $\Cgr(\isG)$
correspond to midpoints of edges of rhombi formed by segments connecting
vertices to adjacent dual vertices, and edges of $\Cgr(\isG)$ connect
corners that belong to neighboring edges of the same rhombus, see Figure \ref{fig: isoradial}. For
$a\in\Cgr(\isG),$ we denote by $a^{\circ}$ and $a^{\bullet}$ the
incident vertices of $\isG^{\circ}$ and $\isG^{\bullet}$ respectively.
For a corner $a\in\Cgr(\isG)$, we denote 
\[
\eta_{a}=e^{\frac{i\pi}{4}}\left(\frac{a^{\bullet}-a^{\circ}}{|a^{\bullet}-a^{\circ}|}\right)^{-\frac{1}{2}}=e^{i\frac{\pi}{4}}e^{-i\frac{\alpha_{a}}{2}},\quad\alpha_{a}=\arg(a^{\bullet}-a^{\circ}).
\]
Note that $\eta_{a}$ is defined up to sign; we will uphold the convention that if it appears several times in the same equation, then the same sign is chosen. The \emph{mesh size}
of $\isG$ is defined to be $\delta=|a^{\bullet}-a^{\circ}|,$ which is independent of $a$; we
also put $\lamb=e^{-\i\pi/4}.$ 

\begin{figure}
	\centering{}\includegraphics[width=0.8\textwidth]{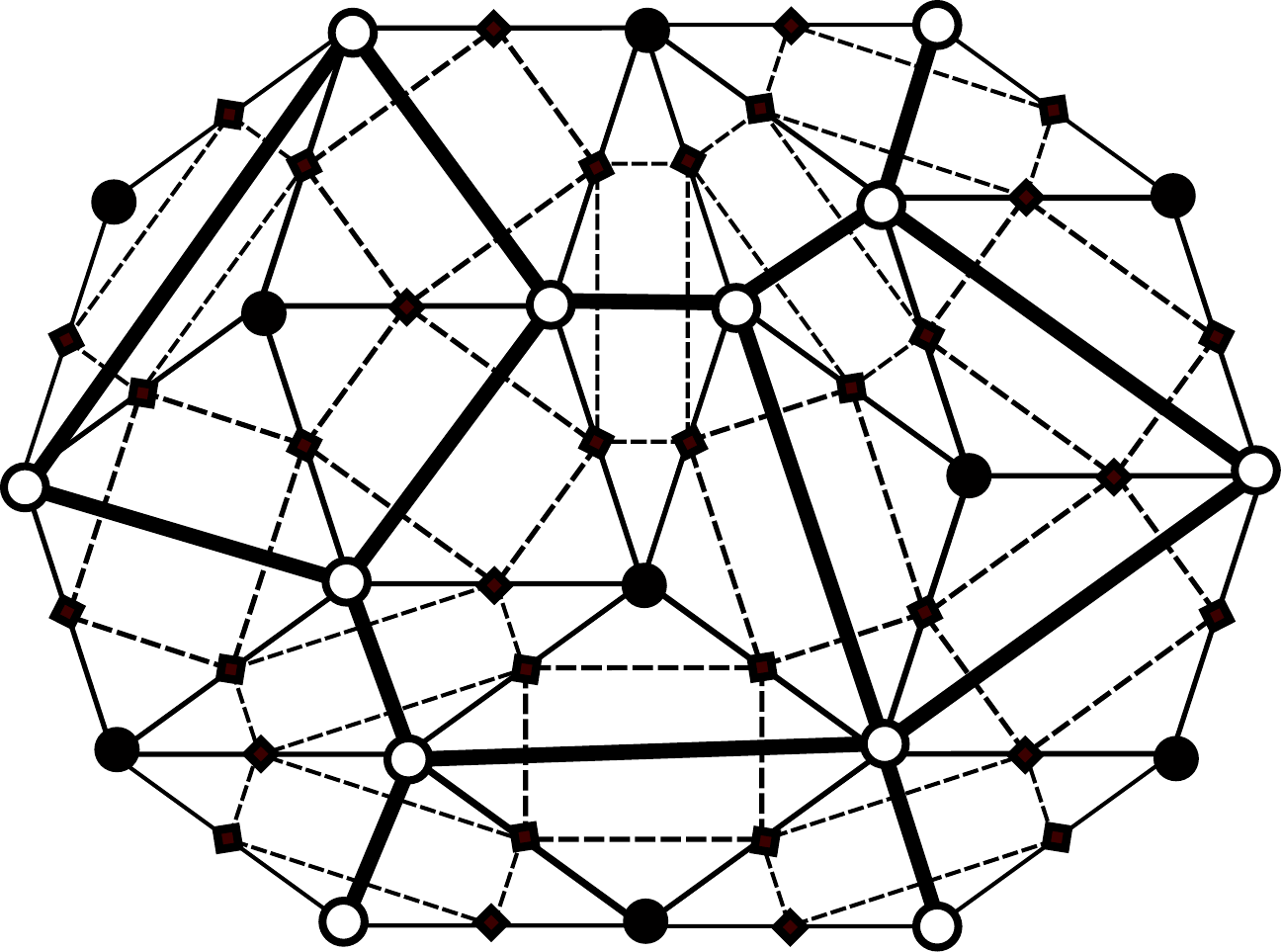}
	\label{fig: isoradial}
	\caption{A fragment of an isoradial graph $\Gamma$ tiling of the plane (vertices as $\circ$, edges as thick solid lines, dual vertices as $\bullet$), the associade rhombi tiling (sides of rhombi as thin solid lines) and the corner graph $\Cgr(\isG)$ (vertices as squares, edges as dashed lines).}
\end{figure}

We will also consider \emph{double covers} $\cvr:\Cgr(\isGd)\to\Cgr(\isG)$
of $\Cgr(\isG),$ ramified at subsets of $\isG^{\circ}\cup\isG^{\bullet},$
which are two-to one maps preserving the graph structure, see Fig. \ref{fig: isoradial} for an example. Our most
important examples will be $\Cgr(\isGoc)$ and $\Cgr(\isGob),$ ramified
at single vertices $\root^{\circ}\in\isG^{\circ}$ and $\root^{\bullet}\in\isG^{\bullet}$ (i.e., going around that vertex leads to a different sheet).
A \emph{spinor} $F$ on a double cover $\Cgr(\isG_{\cvr})$ is a function
$F:\Cgr(\isG_{\cvr})\to\C$ which changes signs between sheets. 

A spinor $F:\Cgr(\isG_{\cvr})\to\C$ is called \emph{s-holomorphic}
if, for every $z\in\Cgr(\isG_{\cvr}),$ it satisfies the \emph{phase condition}
\begin{equation}
	F(z)\in\eta_{z}\R\label{eq: phase}
\end{equation}
and the \emph{three-term relation} 
\begin{equation}
	F(z)=\proj{\eta_{z}}{F(z_{1})+F(z_{2})},\label{eq: three-term}
\end{equation}
where $z_{1},z,z_{2}$ are three consecutive edges of the same rhombus,
and $\proj{\eta}A=\frac{1}{2}(A+\eta^{2}\bar A)$ is the projection
of the complex number $A$ onto the line $\eta\R.$ 

Given \eqref{eq: phase}, the equation \eqref{eq: three-term} is equivalent to the relation  
\begin{equation}
	\bar{\eta}_zF(z)=\left(\cos\theta\right) \bar\eta_{z_1} F(z_1)+\left(\sin\theta\right) \bar\eta_{z_2} F(z_2),\label{eq: three-term-real}
\end{equation}
between the three real numbers $\bar{\eta}_zF(z),$ $ \bar\eta_{z_{1}} F(z_{1}),$  $\bar\eta_{z_2} F(z_2),$ where $\theta$ is half the angle between sides $z,z_1$ of the rhombus, and the signs of $\eta_z,$ $\eta_{z_1},$ $\eta_{z_2}$ are related by continuously rotating the segment $[w^\bullet,w^\circ]$ inside the rhombus as $w$ takes values $z,z_1,z_2$. In other words,  $\eta_{z_1}=e^{-\i \theta} \eta_z$ and $\eta_{z_2}=\i e^{-\i\theta} \eta_z$ if $z_1,z,z_2$ follow the boundary of the rhombus in the counterclockwise direction.

For two s-holomorphic functions $F,G$ on $\Cgr(\isG)$, or more generally
two s-holomorphic spinors on the same double cover $\Cgr(\isGd)$
of $\Gamma$, we have an integral 
\[
I:\isG^{\circ}\cup\isG^{\bullet}\to\R,\quad I(w)=\im\int^{w}F(z)G(z)\,dz,
\]
defined as follows. For a nearest-neighbor path $\gamma$ on $\isG^{\bullet}\cup\isG^{\circ},$
put $s_{\gamma}(c)=1$ if $c\in\Cgr(\isG)$ is oriented from $c^{\circ}$
to $c^{\bullet}$ in $\gamma$ and $s_{\gamma}(c)=-1$ else. Then,
we put 
\begin{equation}
	\im\int_{\gamma}F(z)G(z)\,dz:=\sum_{c\in\gamma}2\delta s_{\gamma}(c)\bar{\eta}^{2}_{c}F(c)G(c).\label{eq: def-Im-Int}
\end{equation}
In other words, $I(z^{\bullet})-I(z^{\circ})=-2\i F(z)G(z)(z^{\bullet}-z^{\circ})=2\delta\bar{\eta}^{2}_{z}F(z)G(z).$
For $F=G,$ we have 
\[
I(z^{\bullet})-I(z^{\circ})=-2\i\eta^{2}_{z}|F(z)|^{2}(z^{\bullet}-z^{\circ})=2\delta|F(z)|^{2}.
\]
It is elementary to check \cite{ChelkakSmirnov2} that the integral
thus defined is independent of the path.

Importantly, the above integral is a discretization of the corresponding  continuous integral. Namely, if we have a family s-holomorphic functions $F^\delta,G^\delta$ on a refining family of graphs $\Gamma^\delta$, such that as $\delta\to 0$, we have  $F^\delta(z)=\proj{\eta_{z}}{f(z)}+o(1)$
and $G^\delta(z)=\proj{\eta_{z}}{g(z)}+o(1)$ locally uniformly, where $f(z)$ and $g(z)$ are
continuous functions\footnote{in which case $f,g$ are necessarily
	holomorphic \cite{ChelkakSmirnov2}.}, then we have 
\begin{multline*}
	I(z^{\bullet})-I(z^{\circ})=\frac{1}{2}\delta\left(\bar{\eta}_{z}f(z)+\eta_{z}\bar{f(z)}\right)\left(\bar{\eta}_{z}g(z)+\eta_{z}\bar{g(z)}\right)+o(\delta)=\\
	=\delta\re\left[\bar{\eta}^{2}_{z}f(z)g(z)\right]+\delta\re\left[\bar{f(z)}g(z)\right]+o(\delta).
\end{multline*}
By summing over a path, the second term approximately telescopes,
and the first term is $\re\left[-\i(z^{\bullet}-z^{\circ})f(z)g(z)\right]=\im\left[f(z)g(z)(z^{\bullet}-z^{\circ})\right].$
Thus, we have 
\begin{equation}
	\im\int_{\gamma^{\delta}}F^\delta(z)G^\delta(z)\,dz=\im\int_{\gamma}f(z)g(z)\,dz+o(1),\label{eq: Im_int_approximation}
\end{equation}
where $\gamma^{\delta}$ is an approximation to $\gamma.$ 

Given a double cover $\cvr$ and (distinct) corners $a_{1},\dots,a_{n}\in \Cgr(\isG)$, we define a graph $\Cgr_{\{a_{1},\dots,a_{n}\}}(\isGd)$ modified at corners projecting to $a_{i}$ as follows: each vertex $a$ of $\Cgr(\isGd)$ such that  $\cvr(a)=a_{i}$ for some $i$ is cut by the rhombus edge
$[a^{\bullet},a^{\circ}]$ into two degree-two vertices $a^{+}$
and $a^{-}$, to the left (resp., to the right) as seen from $a^{\bullet}$,
see Fig. \ref{fig: modified_double}.

\begin{figure}
	\centering{}\includegraphics[width=0.8\textwidth]{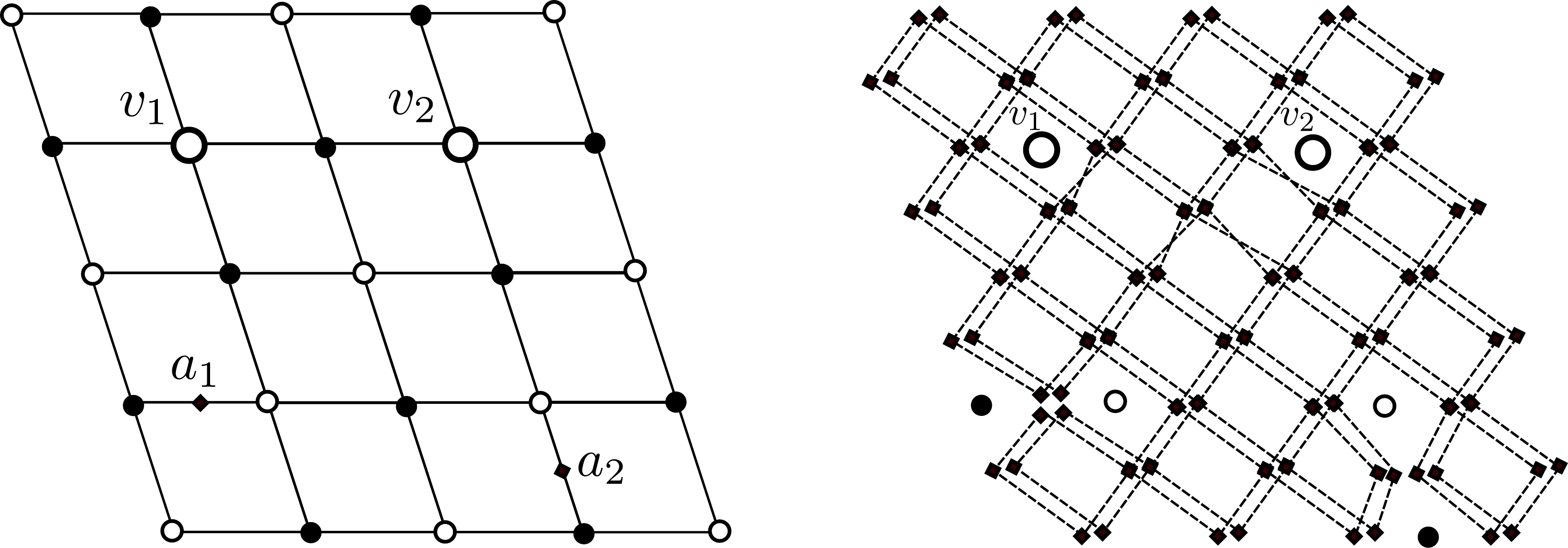}
	\label{fig: modified_double}
	\caption{Left: a fragment of a rhombi tiling of the plane, with two marked vertices $v_1,v_2\in\isG^\circ$ and two marked corners $a_1,a_2$. Right: the corresponding graph $\Cgr_{a_1,a_2}(\isGd),$ where the double cover $\cvr$ is ramified at $v_1$ and $v_2$.} 
\end{figure}

We say that a spinor $F$ is \emph{s-meromorphic with discrete poles
at $a_{1},\dots,a_{n}$} if it is defined and s-holomorphic on $\Cgr_{\{a_{1},\dots,a_{n}\}}(\isGd);$
with the three-term relation (\ref{eq: three-term}) understood as follows: if one or several of the arguments $z,z_1,z_2$ is equal to $a$ such that $\cvr(a)=a_i$ for some $i$, we replace it with $a^{+}$
or $a^{-}$, whichever lies on the side of the rhombus
in question. To shorten the notation, we will sometimes take a liberty to say that the poles are at $a_i\in \Cgr(\isG_\cvr)$ rather than at $\cvr(a_i)\in \Cgr(\isG)$, with the same meaning.

We have the following discrete analog of the residue theorem:
\begin{lem}
	If $F,G$ are two spinors on $\Cgr_{\{\mathcal{A}\}}(\isGd)$ that
	are s-meromorphic inside a simple counterclockwise contour $\gamma$ with discrete poles in
	$\mathcal{A}\subset \mathrm{inside}(\gamma)$, then we have 
	\begin{equation}
		\im\int_{\gamma}F(z)G(z)\,dz=\sum_{a\in\mathcal{A}}2\delta\bar{\eta}^{2}_{a}\left(F(a^{+})G(a^{+})-F(a^{-})G(a^{-})\right).\label{eq: Discrete residue formula}
	\end{equation}
\end{lem}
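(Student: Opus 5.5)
The plan is to reduce the statement to path-independence of the discrete integral, applied to a contour that has been shrunk onto the poles. The key input is the local identity behind path-independence for s-holomorphic $F,G$, which the paper quotes from \cite{ChelkakSmirnov2}. Write $\im\int F G\,dz$ for the discrete $1$-form on edges of $\isG^{\circ}\cup\isG^{\bullet}$ given by \eqref{eq: def-Im-Int}. For every rhombus $r$ all of whose four corners are genuine vertices of $\Cgr_{\{\mathcal{A}\}}(\isGd)$, the sum of this form around $\partial r$ vanishes. Concretely, this sum is $\sum_{c\in\partial r}\pm 2\delta\bar\eta_c^2F(c)G(c)$. After writing each $\bar\eta_c F(c)$ as a real number, the three-term relations \eqref{eq: three-term-real} show that the four values $\bar\eta_cF(c)$ are determined by two real parameters, and likewise for $G$. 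The sum is then a bilinear form in these parameters, and one checks directly that it vanishes. I will take this local closedness as given, since it is exactly the ingredient quoted from \cite{ChelkakSmirnov2}.

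The first step is a discrete Stokes argument. Let $\gamma$ be the simple counterclockwise contour on $\isG^{\circ}\cup\isG^{\bullet}$, and let $R$ be the set of rhombi inside it. Summing the oriented boundary integrals over all $r\in R$, the contributions of interior rhombus sides cancel pairwise, since each interior side is traversed once in each direction. This gives
\[
\im\int_\gamma FG\,dz=\sum_{r\in R}\im\oint_{\partial r}FG\,dz .
\]
If $r$ contains no corner projecting to some $a\in\mathcal{A}$, its term vanishes by the local identity.

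The second step handles rhombus sides that carry a pole. The subtle point is that a corner $a$ with $\cvr(a)\in\mathcal{A}$ lies on a rhombus side $[a^\circ,a^\bullet]$ shared by two rhombi. The rhombus to the left of that side, as seen from $a^\bullet$, uses the value at $a^+$, and the rhombus to the right uses the value at $a^-$. Each of these two rhombi sees an honest s-holomorphic function, so the local identity still holds in each, provided we use $a^\pm$ on that side. Consequently, the cancellation along this interior side fails. The two traversals of the side are
\[
2\delta s(a)\bar\eta_a^2F(a^{\pm})G(a^{\pm})
\]
with opposite signs $s$ and different values, so they do not cancel.

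Therefore $\im\int_\gamma FG\,dz$ equals the sum over $a\in\mathcal{A}$ of these uncancelled defects. The remaining work is to fix the signs. The left rhombus, oriented counterclockwise, traverses the side $[a^\circ,a^\bullet]$ in one direction, and the right rhombus traverses it in the other. With the convention that $a^+$ is on the left as seen from $a^\bullet$, one checks that the defect is exactly
\[
2\delta\bar\eta_a^2\bigl(F(a^+)G(a^+)-F(a^-)G(a^-)\bigr),
\]
which is the summand in \eqref{eq: Discrete residue formula}. The spinor sign ambiguity is harmless: $\bar\eta_a^2FG$ is invariant under the simultaneous sign changes of $\eta_a$, $F$ and $G$ between sheets, so the defect is well defined on the double cover. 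It is also important that each rhombus carries no branching inside it, which holds because ramification occurs only at vertices.

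I expect the main obstacle to be purely the orientation bookkeeping in this last step: checking that the left rhombus indeed traverses the side from $a^\circ$ to $a^\bullet$, so that $s=+1$ goes with $a^+$, which yields the stated sign. The local four-term identity itself is standard.
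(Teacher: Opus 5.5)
Your approach is the same as the paper's. Its one-line proof collapses $\gamma$, by path-independence of the discrete integral on $\Cgr_{\{\mathcal{A}\}}(\isGd)$, onto the counterclockwise two-edge loops $a^\circ\to a^+\to a^\bullet\to a^-\to a^\circ$. Your face-by-face Stokes sum is exactly that homotopy invariance written out.

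The orientation check you deferred, however, does not come out as you predict. Your bookkeeping contains two sign slips that happen to cancel.

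\begin{itemize}
\item The rhombus that uses $a^+$ lies to the left of the directed segment $a^\bullet\to a^\circ$. Its counterclockwise boundary therefore runs along the shared side from $a^\bullet$ to $a^\circ$, with $s=-1$, not from $a^\circ$ to $a^\bullet$. The other rhombus runs from $a^\circ$ to $a^\bullet$. So the two traversals of the pole side add up to $D_a=-2\delta\bar\eta_a^2\bigl(F(a^+)G(a^+)-F(a^-)G(a^-)\bigr)$.
\item Your first display holds only when no interior side carries a pole. In general, $0=\sum_{r\in R}\im\oint_{\partial r}FG\,dz=\im\oint_\gamma FG\,dz+\sum_{a\in\mathcal{A}}D_a$. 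The contour integral is thus \emph{minus} the sum of the uncancelled traversals, not the sum itself.
\end{itemize}

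With both signs corrected you obtain \eqref{eq: Discrete residue formula}, the same formula you wrote. Evaluating the two-edge loops directly from \eqref{eq: def-Im-Int}, as the paper does, avoids this bookkeeping altogether.
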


\begin{proof}
	The contour $\gamma$ is homotopic on $\Cgr_{\{a_{1},\dots,a_{n}\}}$
	to a union of two-edge contours running from $a^{\circ}$ to $a^{\bullet}$
	via $a^{+}$ and back via $a^{-},$ for each $a\in\mathcal{A}.$
\end{proof}

One example of an s-holomorphic function is a constant function $F(z)\equiv\proj{\eta_{z}}A$
for $A\in\C.$ Our main example is the \emph{spin-fermion correlations}, as defined in \cite[Section 2]{chelkak2021correlations}, see also section \ref{subsec:Local-fields}:
\begin{example}
	Given corners $a_{1},\dots,a_{n},$ the correlation 
	\[
	F(z)=\bar{\eta}_{a_{1}}\dots\bar{\eta}_{a_{n}}\E(\psi_{z}\psi_{a_{1}}\dots\psi_{a_{n}}\Op)=\E(\psi_{z}\sigma_{a^{\circ}_{1}}\mu_{a^{\bullet}_{1}}\dots\sigma_{a^{\circ}_{n}}\mu_{a^{\bullet}_{n}}\Op)
	\]
	where $\Op=\sigma_{v_{1}}\dots\sigma_{v_{m}}$, is an s-meromorphic
	spinor ramified at $v_{1},\dots,v_{m}$ with discrete residues at
	$a_{1},\dots,a_{n}.$ Specifically, the values are given by 
	\[
	F(a^{\pm}_{i})=\pm(-1)^{i+1}\eta_{a_{i}}\prod_{k\neq i}\bar{\eta}_{a_{k}}\E(\prod_{k\neq i}\psi_{k}\Op).
	\]
	On sign choices, see \cite[Definition 2.23]{chelkak2021correlations} and Section \ref{subsec:Local-fields} below.
	
	For a while from now on, we fix an infinite isoradial graph $\isG$
	with a distinguished corner $\root$ which we think of as located at
	the origin. Until the end of Section \ref{sec: good_monomials}, we also fix the mesh size to $\delta=1$.
\end{example}

\subsection{The discrete Cauchy kernel and the discrete $z^{-\frac{1}{2}}$.}

In this section, we recall the construction of discrete Cauchy kernel
(an s-holomorphic analog of $1/(z-a)$), first constructed in \cite{kenyon2002laplacian}
and the discrete analogs of $(z-o)^{-\frac{1}{2}},$ first constructed
in \cite{DubedatDimers}. More precisely, given $a\in\Cgr(\isG)$ together
with a choice of the sign of $\eta_{a},$ we define an s-meromorphic
function $\CK{\cdot}a$ with a discrete pole at $a$ satisfying $\CK{a^{\pm}}a=\pm\eta_{a}$
and the asymptotics 
\begin{equation}
	\CK za=\frac{2}{\pi}\proj{\eta_{z}}{\bar{\eta}_{a}(z-a)^{-1}}+O((z-a)^{-3}),\quad z\to\infty.\label{eq: G_a_asymp}
\end{equation}
Similarly, for each $\root\in\Cgr(\isG),$ we will define s-holomorphic
spinors $\Gv z{\root^{\bullet}}$ and $\Gu z{\root^{\circ}}$ on $\Cgr(\isGob)$
and $\Cgr(\isGoc)$ respectively, satisfying 
\begin{equation}
	\Gv z{\root^{\bullet}}=\proj{\eta_{z}}{\lamb(z-o^{\bullet})^{-\frac{1}{2}}}+O\left((z-o^{\bullet})^{-\frac{5}{2}}\right);
	\label{eq: asymp_zhalf_lamb}
\end{equation}
\begin{equation}
	\Gu z{\root^{\circ}}=\proj{\eta_{z}}{\lambb(z-o^{\circ})^{-\frac{1}{2}}}+O\left((z-o^{\circ})^{-\frac{5}{2}}\right).
	\label{eq: asymp_zhalf_lambb}
\end{equation}

We recall here that the s-holomorphicisty is only a real linear notion;
in particular, it is possible to show that $\CK za$ (respectively,
$\Gv z{\root^{\bullet}},$ $\Gu z{\root^{\circ}}$) are, up to a real
multiple, the only s-meromorphic function with the discrete pole only
at $a$ and vanishing at infinity (respectively, the only bounded
s-holomorphic spinor on $\Cgr(\isGob),$ on $\Cgr(\isGoc)$). Thus,
the coefficients $\bar{\eta}_{a},\lamb,\lambb$ in the above asymptotic
expansions are uniquely determined up to a \emph{real} multiple. The
fact that the choice of that multiple as in (\ref{eq: G_a_asymp})
corresponds to the value $\CK{a^{\pm}}a=\pm\eta_{a}$ can be derived
from the explicit construction below, or by soft methods as we do
in Lemma \ref{lem: kernels_symmetry}.

Let us briefly recall the aspects of the construction needed to prove Lemma \ref{lem: Cauchy_expansion}
below, see e.g. \cite[Section 5.3]{chelkak2023universality} for a
more detailed exposition. First, recall the definition of \emph{discrete
	exponentials}: for $\zeta\in \C$ and a nearest-neighbor path $x_{0}\sim x_{1}\sim\dots\sim x_{N}$
with $x_{i}$ belonging, alternatingly, to $\Cgr(\isG)$ and $\isG^{\bullet}\cup\isG^{\circ}$,
we put 
\[
\dexp{2\zeta}{x_{N}}{x_{0}}=\prod^{N}_{i=1}(1+2\zeta s_{i}\Delta_{i})^{s_{i}},
\]
where $\Delta_{i}=x_{i}-x_{i-1}$ and 
\[
s_{i}=\begin{cases}
	1, & x_{i-1}\in\Cgr(\isG)\text{ and }x_{i}\in\isG^{\bullet}\cup\isG^{\circ};\\
	-1, & x_{i-1}\in\isG^{\bullet}\cup\isG^{\circ}\text{ and }x_{i}\in\Cgr(\isG).
\end{cases}
\]
(Note that we drop the pre-factors compared to \cite[Eq. (5.2--5.3)]{chelkak2023universality}).
The product does not depend on the path $x_{1},\dots,x_{N-1}$ taken. Importantly, on $\Cgr(\isG)$, the discrete exponential $F(\cdot)=\dexp{2\zeta}{\cdot}{x_{0}}$ satisfies the three-term relation \eqref{eq: three-term-real}, albeit not s-holomorphic since the phase condition \eqref{eq: phase} fails.
Given $a\in\Cgr(\isG)$,
we define for $z\neq a$:
\begin{equation}
	\CKc za=-\frac{2}{\pi}\int_{(\bar a-\bar z)\R_{+}}\frac{\dexp{2\zeta}za}{1-\zeta^{2}e^{2i\alpha_{a}}}\,d\zeta\label{eq: Cauchy_kernel_integral}
\end{equation}
Similarly, for $\root\in\Cgr(\isG),$ we define 
\begin{equation}
\Gv z{\root^{\bullet}}=\frac{\lambb}{\sqrt{2\pi}}\int_{(\bar{\root^{\bullet}}-\bar z)\R_{+}}\frac{\dexp{2\zeta}z{\root^{\bullet}}}{\zeta^{\frac{1}{2}}}\:d\zeta=\frac{\lambb}{\sqrt{2\pi}}\int_{(\bar{\root^{\bullet}}-\bar z)\R_{+}}\frac{\dexp{2\zeta}z{\root}}{\zeta^{\frac{1}{2}}(1+\zeta e^{i\alpha_{a}})}\:d\zeta;
\label{eq: Gv_integral_dexp}
\end{equation}
\begin{equation}
\Gu z{\root^{\circ}}=\frac{\lamb}{\sqrt{2\pi}}\int_{(\bar{\root^{\circ}}-\bar z)\R_{+}}\frac{\dexp{2\zeta}z{\root^{\circ}}}{\zeta^{\frac{1}{2}}}\:d\zeta=\frac{\lamb}{\sqrt{2\pi}}\int_{(\bar{\root^{\circ}}-\bar z)\R_{+}}\frac{\dexp{2\zeta}z{\root}}{\zeta^{\frac{1}{2}}(1-\zeta e^{i\alpha_{a}})}\:d\zeta.
\label{eq: Gu_integral_dexp}
\end{equation}
With the mesh size $\delta=1$, we have $|\Delta_{i}|=\frac{1}{2}$
and $\Delta^{-1}_{i}=4\bar{\Delta}_{i}$. We record the symmetry,
for $z,a\in\Cgr(\isG):$
\begin{equation}
	\dexp{2\zeta}za=(-1)^{N/2}\Delta_{1}\Delta^{-1}_{N}\prod^{N}_{i=1}(1+2\zeta^{-1}s_{i}\bar{\Delta_{i}})^{s_{i}}=\eta^{2}_{z}\eta^{-2}_{a}\bar{\dexp{2\bar{\zeta}^{-1}}za}.\label{eq: exp_symmetry}
\end{equation}
which is obtained using that $\sum^{N}_{i=1}s_{i}=0,$ that $\Delta^{s_{k}}_{k}\Delta^{s_{k+1}}_{k+1}=1$
for all \emph{even} $k=2,\dots N-2,$ and that $N/2$ is odd iff $x_{1},x_{N-1}$
both belong to the same sublattice $\isG^{\bullet}$ or $\isG^{\circ}.$

We can apply the symmetry (\ref{eq: exp_symmetry}) to the integral
(\ref{eq: Cauchy_kernel_integral}). Let 
$l_{1}=\{\nu t\}_{0<t<1}$ and $l_{2}=\{\nu t\}_{1<t<\infty},$ where $\nu\in\C$ with $|\nu|=1$ is to be choosen later. Then, 
\begin{multline*}
	\int_{l_{2}}\frac{\dexp{2\zeta}za}{1-\zeta^{2}e^{2i\alpha_{a}}}\,d\zeta=\eta^{2}_{z}\eta^{-2}_{a}e^{-2\i\alpha_{a}}\int_{l_{2}}\frac{\bar{\dexp{2\bar{\zeta}^{-1}}za}}{e^{-2\i\alpha_{a}}-\zeta^{2}}\,d\zeta\\
	=-\eta^{2}_{z}\eta^{-2}_{a}e^{-2i\alpha_{a}}\bar{\left(\int_{l_{2}}\frac{\dexp{2\bar{\zeta}^{-1}}za}{1-\bar{\zeta}^{-2}e^{2i\alpha_{a}}}\,\frac{d\bar{\zeta}}{\bar{\zeta}^{2}}\right)}=\eta^{2}_{z}\eta^{2}_{a}\bar{\left(\int_{l_{1}}\frac{\dexp{2\zeta}za}{1-\zeta^{2}e^{2i\alpha_{a}}}\,d\zeta\right).}
\end{multline*}
provided that $\nu$ is such that $l_1$ and $l_2$ avoid the poles. Similarly, 
\[
\int_{l_{2}}\frac{\dexp{2\zeta}z{\root}}{\zeta^{\frac{1}{2}}(1\pm\zeta e^{i\alpha_{\root}})}\:d\zeta=\mp\i\eta^{2}_{z}\bar{\left(\int_{l_{1}}\frac{\dexp{2\zeta}z{\root}}{1\pm\zeta e^{i\alpha_{\root}}}\,\frac{d\zeta}{\zeta^{\frac{1}{2}}}\right)}.
\]
Denoting $I=\int_{l_{1}}\frac{\dexp{2\zeta}za}{1-\zeta^{2}e^{2i\alpha_{a}}}\,d\zeta$
and $J^{\bullet,\circ}=\int_{l_{1}}\frac{\dexp{2\zeta}z{\root}}{\zeta^{\frac{1}{2}}(1\pm\zeta e^{i\alpha_{\root}})}\:d\zeta$
with $\nu=\frac{\bar a-\bar z}{|a-z|}$ and $\nu=\frac{\bar o-\bar z}{|o-z|}$ respectively, we get
\[
\CK za=-\frac{2}{\pi}\left(\bar{\eta}_{a}I+\eta^{2}_{z}\bar{\left(\bar{\eta}_{a}I\right)}\right)=-\frac{4}{\pi}\proj{\eta_{z}}{\bar{\eta}_{a}I};
\]
\[
\Gv z{\root^{\bullet}}=\left(\frac{2}{\pi}\right)^{\frac{1}{2}}\proj{\eta_{z}}{\lamb J^{\bullet}};\quad\Gu z{\root^{\circ}}=\left(\frac{2}{\pi}\right)^{\frac{1}{2}}\proj{\eta_{z}}{\lambb J^{\circ}};
\]
in particular, the three functions $\CK za,\Gv z{\root^{\bullet}},\Gu z{\root^{\circ}}$
satisfy (\ref{eq: phase}). Because the exponentials satisfy \cite[Lemma 5.2]{chelkak2023universality} the three-term relation (\ref{eq: three-term-real}),
the spinors $\Gv z{\root^{\bullet}}$ and $\Gu z{\root^{\circ}}$
are s-holomorphic, while $\CK za$ is s-meromorphic with a poles only at
$a$. We note that $\CKc{\cdot}a$ is well defined, while $\CK{\cdot}a$
depends on the choice of sign of $\eta_{a};$ however, the former
is not s-meromorphic as its phases are off. 

For the asymptotics, we only need to analyse $I,J^{\bullet},J^{\circ},$ i.e., we need the
small and intermediate $\zeta$ behavior of the exponentials. For
small $\zeta,$ we have a bound 
\begin{equation}
	\dexp{2\zeta}{x_{N}}{x_{0}}=\exp\left(2\zeta(x_{N}-x_{0})+O\left(|\zeta|^{2}+|\zeta|^{3}|x_{N}-x_{0}|\right)\right),\label{eq: exp_close}
\end{equation}
which follows readily from the bounds $\sum^{N}_{i=1}\Delta_{i}=x_{N}-x_{0}$
and $\sum^{N}_{i=1}s_{i}\Delta^{2}_{i}=O(1),$ since $s_{i}\Delta^{2}_{i}+s_{i+1}\Delta^{2}_{i+1}=0$
whenever $s_{i}=1$. We also have a bound \cite[Proof of Theorem 2.5]{ChelkakSmirnov1},
with some $c>0:$
\begin{equation}
	\dexp{2\zeta}{x_{N}}{x_{0}}\leq\exp\left(-c|\zeta||x_{N}-x_{0}|\right),\quad|\zeta|<1.\label{eq: exp_far}
\end{equation}
The asymtotics of $I$, $J^{\bullet},J^{\circ}$ is now readily given
by Laplace's method, applying (\ref{eq: exp_close}) for $|\zeta|<|z-a|^{-\frac{1}{2}}$
and (\ref{eq: exp_far}) for $|\zeta|\geq|z-a|^{-\frac{1}{2}}$ and
changing the variable to $\mu=\zeta(a-z):$ 
\begin{multline*}
	\int_{l_{1}}\frac{\dexp{2\zeta}za}{1-\zeta^{2}e^{2i\alpha_{a}}}\,d\zeta\\
	=\frac{1}{(a-z)}\int^{|z-a|^{\frac{1}{2}}}_{0}e^{-2\mu}\left(1+O\left(\frac{\mu^{2}+\mu^{3}}{|z-a|^{2}}\right)\right)\:d\mu+O\left(\exp\left(-c|z-a|^{\frac{1}{2}}\right)\right)\\
	=\frac{1}{2(a-z)}+O(|a-z|^{-3}).
\end{multline*}
yielding (\ref{eq: G_a_asymp}). Similarly, we have 
\[
J^{\circ,\bullet}=\frac{\i}{(z-\root^{\circ,\bullet})^{\frac{1}{2}}}\int^{|z-a|^{\frac{1}{2}}}_{0}\mu^{-\frac{1}{2}}e^{-2\mu}\:d\mu+O\left(|z-\root^{\circ,\bullet}|^{-\frac{5}{2}}\right)=\frac{\sqrt{\pi}\i}{\sqrt{2}(z-\root^{\circ,\bullet})^{\frac{1}{2}}}+O\left(|z-\root^{\circ,\bullet}|^{-\frac{5}{2}}\right),
\]
yielding (\ref{eq: asymp_zhalf_lamb}--\ref{eq: asymp_zhalf_lambb}). 

We have the following key lemma:
\begin{lem}
	\label{lem: Cauchy_expansion}Let $a_{1},\dots,a_{n}\in\Cgr(\isG)$
	and $\alpha_{1},\dots,\alpha_{n}\in\R.$ Put $F(z)=\sum^{n}_{i=1}\alpha_{i}\CK z{a_{i}}.$
	Then, either $F\equiv0$, or there exists $m\in\N$ and $\alpha\neq0$
	such that 
	\[
	F(z)=\proj{\eta_{z}}{\alpha z^{-m}}+O(|z|^{-m-1}),\quad z\to\infty.
	\]
	Similarly, if $F(z)=\sum^{k}_{i=1}\alpha_{i}\Gv z{w_{i}}$ where
	$\alpha_{i}\in\R,$ and $w_{i}\in\isG^{\bullet}\cup\isG^{\circ},$
	then either $F\equiv0,$ or there exist $m\in\N-\frac{1}{2}$ and
	$\alpha\neq0$ such that 
	\[
	F(z)=\proj{\eta_{z}}{\alpha z^{-m}}+O(|z|^{-m-1}),\quad z\to\infty.
	\]
	
\end{lem}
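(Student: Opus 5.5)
A natural route is a full asymptotic expansion of each kernel at infinity, obtained from the integral representations, followed by a uniqueness argument showing that $F$ cannot vanish to all orders unless $F\equiv 0$.

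\emph{Step 1: expansion.} In the definition of $I$ (and of $J^{\bullet},J^{\circ}$), expand the integrand to arbitrary order at small $\zeta$. The discrete exponential factorizes along a path from the fixed origin $\root$. Writing $\dexp{2\zeta}{z}{a_i}=\dexp{2\zeta}{z}{\root}\,\dexp{2\zeta}{\root}{a_i}$, the second factor is a rational function of $\zeta$ with a Taylor series $\sum_k c_k(a_i)\zeta^k$. The factor $(1-\zeta^2e^{2i\alpha_{a_i}})^{-1}$ can be expanded in the same way.

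For the remaining factor $\dexp{2\zeta}{z}{\root}$, the Laplace method used above for the leading term extends to all orders: refine \eqref{eq: exp_close} to an expansion $\exp(2\zeta z)\,(1+\sum p_k(z)\zeta^k)$ with polynomial corrections, and use \eqref{eq: exp_far} for the tail. This gives, for every $N$,
\[
F(z)=\sum_{m=1}^{N}\proj{\eta_z}{\beta_m z^{-m}}+O(|z|^{-N-1}),
\]
with coefficients $\beta_m$ that are finite linear combinations of the $\alpha_i$ (with complex, $a_i$-dependent weights). The same holds for the spinor case with exponents in $\N-\frac12$.

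A cleaner alternative is to use the discrete contour integral $\im\oint F\,G$ against suitable discrete polynomials $G$, via \eqref{eq: Discrete residue formula}, to read off the coefficients. I would keep the Laplace approach as primary.

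\emph{Step 2: conclusion from the expansion.} If some $\beta_m\neq 0$, take the smallest such $m$; this is the claim with $\alpha=\beta_m$. It remains to show that all $\beta_m=0$ forces $F\equiv 0$.

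In that case $F=O(|z|^{-N})$ for every $N$, and $F$ is s-meromorphic with poles only at the $a_i$. The residue formula \eqref{eq: Discrete residue formula} with a large contour $\gamma$ gives
\[
\im\oint_\gamma F\,G\,dz=\sum_i 2\bar\eta^2_{a_i}\bigl(F(a_i^+)G(a_i^+)-F(a_i^-)G(a_i^-)\bigr)
\]
for every entire s-holomorphic $G$ of polynomial growth, for example discrete polynomials obtained as $\zeta$-derivatives at $0$ of $\proj{\eta_z}{c\,\dexp{2\zeta}{z}{\root}}$. The left-hand side tends to $0$ as $\gamma$ grows.

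Separately, $H=\im\int F^2$ is a well-defined function away from the poles. Since $F$ decays rapidly, $H$ tends to a constant at infinity. Smirnov's sub/superharmonicity of $H$ on the two sublattices then controls $F$ on a large region.

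The simplest way to finish: the discrete polynomials of degree less than $n$ span enough functionals to separate the finitely many residue data $(F(a_i^\pm))$. Hence all residues vanish, so $F$ is entire s-holomorphic and decaying, and therefore $F\equiv 0$ by the uniqueness statement recalled earlier (bounded s-holomorphic functions vanishing at infinity are zero).

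\emph{Main obstacle.} Step 2 is the hard part, because of real-linearity. The identity gives only real relations, so I need about $2n$ real test functions $G$, and it is not obvious that the resulting matrix $\bigl(G_j(a_i^{\pm})\bigr)$ is nondegenerate, since the points $a_i$ are arbitrary. I would try an induction on $n$: order the $a_i$ suitably and build test polynomials vanishing (in the residue sense) at all but one pole.

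For the spinor case, the analogous difficulty is genuine because the $w_i$ are branch points rather than poles. There one has to argue that a nonzero combination of spinors with distinct ramification points cannot decay to all orders, for instance by separating the monodromies or by repeating the induction with discrete $z^{k+\frac12}$-type test spinors.
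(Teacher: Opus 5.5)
\textbf{The gap is Step 2.} Your Step 1 already contains the object that settles the lemma, but you do not use it, and the argument you substitute for Step 2 does not work.

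Rotate the rays $(\bar a_i-\bar z)\R_+$ to a common ray, which is possible for $|z|\gg\max|a_i|$, and factor $\dexp{2\zeta}{z}{a_i}=\dexp{2\zeta}{z}{a_0}\,\dexp{2\zeta}{a_0}{a_i}$. This gives
\[
F(z)=\proj{\eta_z}{\int_{l_1}R(\zeta)\,\dexp{2\zeta}{z}{a_0}\,d\zeta},\qquad R(\zeta)=-\frac{4}{\pi}\sum_{i}\alpha_i\bar{\eta}_{a_i}\frac{\dexp{2\zeta}{a_0}{a_i}}{1-\zeta^2e^{2\i\alpha_{a_i}}},
\]
where $R$ is a rational function that does not depend on $z$ and is regular at $\zeta=0$. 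The lemma's dichotomy is then just a dichotomy for $R$:
\begin{itemize}
\item If $R\equiv0$, the integrand vanishes and $F\equiv0$.
\item Otherwise $R(\zeta)=\beta\zeta^{m-1}+O(\zeta^m)$ with $\beta\neq0$ and some finite $m\geq1$. One application of Laplace's method gives the leading term with $\alpha=(-1)^m(m-1)!\,2^{-m}\beta\neq0$ and error $O(|z|^{-m-1})$.
\end{itemize}
This is the paper's proof. ``All coefficients vanish'' just means that a rational function vanishes to infinite order at $0$, hence identically. No Liouville-type or residue argument is needed. The spinor case is identical with $\zeta^{-1/2}\hat R(\zeta)$ in place of $R(\zeta)$. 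There the leading exponent is automatically in $\N-\frac12$, and the fact that the $w_i$ are branch points plays no role.

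\textbf{The residue route you sketch fails, not just for lack of a nondegeneracy proof.} For entire s-holomorphic $G$, the residue formula only sees $\sum_i 4\alpha_i\bar{\eta}_{a_i}G(a_i)$, and these functionals need not separate the residue data. Take three consecutive corners $a_1,a_2,a_3$ of a rhombus. By \eqref{eq: CK_three_terms}, the function $\CK{\cdot}{a_2}-\cos\theta\,\CK{\cdot}{a_1}-\sin\theta\,\CK{\cdot}{a_3}$ vanishes identically away from the poles. Yet its residue at $a_2$ is $2\eta_{a_2}\neq0$, and every entire $G$ pairs to zero with it by \eqref{eq: three-term-real}. So ``hence all residues vanish'' is false in general. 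The interpolation property you would need holds only for special configurations of the $a_i$, and in the paper it is proved downstream of this very lemma (via Lemma~\ref{lem: negative_monomial_existence} and the good monomials).

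\textbf{The all-orders expansion in Step 1 is also not correct as stated.} The higher Taylor coefficients of $\log\dexp{2\zeta}{z}{a_0}$ are lattice functions of $z$, not polynomials. For instance, $\sum_i s_i\Delta_i^2$ depends on the orientation of the corner $z$, and $\sum_i\Delta_i^3$ grows linearly, being a multiple of $\overline{z-a_0}$ on the square lattice. So beyond relative order $|z|^{-2}$ the terms are not of the form $\proj{\eta_z}{\beta_m z^{-m}}$ with constant $\beta_m$. Only the leading term is needed, which is why the statement claims nothing beyond an $O(|z|^{-m-1})$ error.
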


\begin{proof}
	Let $|z|>4\max|a_{i}|.$ We rotate the contour of integration so that
	is is the same for all $G_{(a_{i})}(z),$ say $\nu\R_{+}$, ; we also
	fix an auxilliary corner $a_{0}$. We can write 
	\[
	F(z)=\proj{\eta_{z}}{\int_{l_{1}}R(\zeta)\dexp{2\zeta}z{a_{0}}\,d\zeta},
	\]
	where $R(\zeta)=-\sum^{n}_{i=1}\alpha_{i}\frac{4\bar{\eta}_{a_{i}}}{\pi}\frac{\dexp{2\zeta}{a_{0}}{a_{i}}}{1-\zeta^{2}e^{2\i\alpha_{a_{i}}}}$
	is a rational function in $\zeta$ which depends on $\alpha_{i}$
	and $a_{i}$ as parameters, but does not depend on $z$. If $R\equiv0$,
	then $F\equiv0$. Otherwise, we have $R(\zeta)=\beta\zeta^{m-1}+O(\zeta^{m})$
	as $\zeta\to0$ with some $m\in \Z_{\geq 0}$ and $\beta\neq0$. By rotating the contour,
	we may assume $\nu=\bar a_{0}-\bar z$; after the change of variable
	$\mu=\zeta(a_{0}-z),$ an application of Laplace's method as above
	gives 
	\begin{multline*}
		\int_{l_{1}}R(\zeta)\dexp{2\zeta}z{a_{0}}\,d\zeta=\frac{\beta}{(a_{0}-z)^{m}}\int^{|z-a_{0}|^{\frac{1}{2}}}_{0}\mu^{m-1}e^{-2\mu}\:d\mu+O((z-a_{0})^{-m-1})\\
		=\frac{\alpha}{z^{m}}+O((z-a_{0})^{-m-1}).
	\end{multline*}
	The proof of the second claim is similar: we will now have 
	\[
	F(z)=\proj{\eta_z}{\int_{l_{1}}\zeta^{-\frac{1}{2}}\hat{R}(\zeta)\dexp{2\zeta}z{w_1}\,d\zeta}
	\]
	for a rational funciton $\hat{R},$ and the application of Laplace's
	method gives the same result with $m\in\N-\frac{1}{2}.$
\end{proof}

\subsection{The spinor Cauchy kernel.}

In this section, we construct a spinor version of Cauchy kernel. As
with $\CKc{\cdot}a,$ the kernels will be well defined but require
multiplication by $\bar{\eta}_{a}$ to fix the phases, which in particular
entails a choice of the sign. Hereinafter we fix a ``root edge''
$\root\in\Cgr(\isG),$ which we view as the origin, and denote
\[
[\lamb z^{-\frac{1}{2}}]:=\Gv z{\root^{\bullet}};\quad[\lambb z^{-\frac{1}{2}}]:=\Gu z{\root^{\circ}};
\]

\begin{defn}
	\label{def: spinor-CK}Let  $a\in \Cgr(\isGoc)$. The spinor Cauchy kernel $\SCkx a{\lambb}(\cdot)$ is defined as the spinor on $\Cgr(\isGoc)$ satisfying
	the following properties: 
	\begin{enumerate}
		\item $\SCkx a{\alpha}$ is s-meromorphic with poles only at $a,a^\star$, and 
		\begin{equation}	
				\SCkx a{\alpha}(a^{\pm})=\pm\eta_{a};
				\label{eq: Q_a_value}
		\end{equation}
		\item We have, for every fixed $a$, $\SCkx a{\alpha}(z)=o(|z|^{\frac{1}{2}})$
		as $z\to\infty$. 
		\item For a large contour surrounding both $a$ and $\root$, 
		\begin{equation}
			\im\oint\SCkx a{\alpha}(z)[\lamb z^{-\frac{1}{2}}]\,dz=0.\label{eq: qalpha_behavior_at_infty}
		\end{equation}
	\end{enumerate}
	We now establish the existence of the spinor Cauchy kernel. For the uniqueness,
	which we will not need, see Remark \ref{rem: SCk_uniqueness};
	before it is established, we mean by $\SCkx a{\alpha}(\cdot)$ the
	object constructed in the next lemma. 
\end{defn}

\begin{lem}
	\label{lem: SCK_exists}
	The spinor Cauchy kernel $\SCkx a{\alpha}(\cdot)$ exists and can
	be written as 
	\begin{equation}
		\SCkx a{\alpha}(\cdot)=\sum^{k}_{i=1}\alpha_{i}\Gv z{w_{i}}\label{eq: Spinor_Cauchy_via_zmone}
	\end{equation}
	where $\alpha_{i}\in\R$ and $w_{i}\in\isG^{\bullet}\cup\isG^{\circ}.$
	In particular, it satisfies a stronger bound 
	\[
	\SCkx a{\alpha}(z)=O(|z|^{-\frac{1}{2}})\quad z\to\infty.
	\]
\end{lem}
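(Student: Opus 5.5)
The plan is to build $\SCkx a{\alpha}$ directly from the discrete-exponential integrals, exactly as $\CKc{\cdot}a$ and $\Gu{\cdot}{\root^{\circ}}$ were built, and then fix the remaining freedom with condition (3).

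\textbf{Construction.} Consider, on $\Cgr(\isGoc)$,
\[
\widetilde Q(z)=\int_{(\bar{\root^{\circ}}-\bar z)\R_{+}}\zeta^{-\frac12}\,\frac{\dexp{2\zeta}z{a}}{1-\zeta^{2}e^{2\i\alpha_{a}}}\,d\zeta .
\]
Write $\dexp{2\zeta}z{a}=\dexp{2\zeta}z{\root^{\circ}}\,\dexp{2\zeta}{\root^{\circ}}{a}$. The second factor is a rational function of $\zeta$ whose poles are at points $\zeta=\mp(2\Delta_i)^{-1}$ along a fixed path from $a$ to $\root^{\circ}$. Decompose $\dexp{2\zeta}{\root^{\circ}}{a}/(1-\zeta^{2}e^{2\i\alpha_{a}})$ into partial fractions. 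Each simple fraction $\zeta^{-1/2}(1\pm\zeta e^{\i\beta})^{-1}$, multiplied by an exponential, is recognized, as in the second forms of \eqref{eq: Gv_integral_dexp}--\eqref{eq: Gu_integral_dexp}, as a real multiple of some $\Gv z{w}$ with $w$ a vertex or dual vertex adjacent to the path. The polynomial part, by the same recursive rewriting $\dexp{2\zeta}z{x}\cdot(\text{factor})=\dexp{2\zeta}z{x'}$, also reduces to such terms. This gives the form \eqref{eq: Spinor_Cauchy_via_zmone}.

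\textbf{s-meromorphicity.} The three-term relation \eqref{eq: three-term-real} holds because each term is an integral of exponentials, by \cite[Lemma 5.2]{chelkak2023universality}. This fails only where the path/contour choice jumps, which happens across the rhombus containing $a$ (pole of $1-\zeta^2e^{2\i\alpha_a}$) and when going around $\root^{\circ}$ (the branch of $\zeta^{-1/2}$ together with the contour $(\bar{\root^{\circ}}-\bar z)\R_+$ turning by $2\pi$ flips the sign). So the spinor lives on the cover ramified at $\root^{\circ}$ with discrete poles only at $a,a^{\star}$.

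For the phase condition \eqref{eq: phase}, I will split the contour into $l_1\cup l_2$ and use the symmetry \eqref{eq: exp_symmetry}, mapping $\zeta\mapsto\bar\zeta^{-1}$. The factor $\zeta^{-1/2}d\zeta/(1-\zeta^2e^{2\i\alpha_a})$ transforms into $\eta^{2}_{z}$ times a fixed unimodular constant times the conjugate, so $\widetilde Q=\proj{\eta_z}{c\,J}$ up to a constant phase. After multiplying by a suitable constant we get a genuine s-meromorphic spinor.

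\textbf{Residue normalization.} The jump of the contour across the pole $\zeta^2=e^{-2\i\alpha_a}$ gives $\widetilde Q(a^{+})-\widetilde Q(a^{-})$ as an explicit residue. Comparing with $\CK{a^\pm}a=\pm\eta_a$, via the same computation for \eqref{eq: Cauchy_kernel_integral}, fixes the real normalization so that \eqref{eq: Q_a_value} holds. I expect this step, together with the phase bookkeeping, to be the main obstacle: the half-integer power makes $\widetilde Q(a^+)+\widetilde Q(a^-)$ not obviously zero, so the values at $a^\pm$ need care. If needed, I will instead subtract a multiple of a neighbouring $\Gv{\cdot}{w}$ to correct the symmetric part.

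\textbf{Decay and condition (3).} By the second part of Lemma \ref{lem: Cauchy_expansion}, any real combination of the $\Gv z{w_i}$ is $O(|z|^{-1/2})$, giving condition (2) and the stronger bound. Finally, set $Q=\widetilde Q+c\,[\lambb z^{-\frac12}]$ with $c\in\R$. By the discrete residue formula \eqref{eq: Discrete residue formula}, $\im\oint Q\,[\lamb z^{-\frac12}]\,dz$ is independent of the (large) contour. For the added term it equals, letting the contour grow and using the asymptotics \eqref{eq: asymp_zhalf_lamb}--\eqref{eq: asymp_zhalf_lambb} with \eqref{eq: Im_int_approximation}, $\im\oint \lambb\lamb z^{-1}dz=2\pi\neq0$. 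Hence a unique real $c$ enforces \eqref{eq: qalpha_behavior_at_infty}. This correction does not affect the poles or the decay.
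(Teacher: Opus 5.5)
There is a genuine gap at the first step: $\widetilde Q$ is not a spinor Cauchy kernel, and nothing later in the proposal repairs this. Since $\dexp{2\zeta}{a}{a^{\bullet}}=(1+\zeta e^{\i\alpha_a})^{-1}$ and $\dexp{2\zeta}{a}{a^{\circ}}=(1-\zeta e^{\i\alpha_a})^{-1}$, you have
\[
\frac{\dexp{2\zeta}za}{1-\zeta^{2}e^{2\i\alpha_{a}}}=\tfrac12\left(\dexp{2\zeta}z{a^{\bullet}}+\dexp{2\zeta}z{a^{\circ}}\right).
\]
So wherever your ray can be rotated to the natural rays (e.g.\ for $|z|$ large), $\widetilde Q(z)=\sqrt{\pi/2}\,\bigl(\lamb\Gv z{a^{\bullet}}+\lambb\Gv z{a^{\circ}}\bigr)$.

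\begin{itemize}
\item This is a discrete $(z-a)^{-\frac12}$, built from spinors ramified at $a^{\bullet}$ and $a^{\circ}$; the continuum heuristic $\int\zeta^{-1/2}e^{2\zeta(z-a)}\,d\zeta\propto(z-a)^{-1/2}$ says the same. But $Q_{(a)}$ needs a simple pole at $a$ and branching at $\root^{\circ}$.
\item It is a complex, not real, combination. The two terms lie on the perpendicular lines $\lamb\eta_z\R$ and $\lambb\eta_z\R$ with a $z$-dependent ratio, so no constant multiple satisfies \eqref{eq: phase}. Your inversion argument breaks because $\zeta\mapsto\bar\zeta^{-1}$ sends $\zeta^{-1/2}d\zeta/(1-\zeta^2e^{2\i\alpha_a})$ to the conjugate of the same expression with $\zeta^{+1/2}$ in place of $\zeta^{-1/2}$. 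Hence the $l_2$-part is not the conjugate of the $l_1$-part.
\item As $z$ circles $\root^{\circ}$, the single ray $(\bar{\root^{\circ}}-\bar z)\R_{+}$ turns fully and crosses the $z$-independent poles of $\dexp{2\zeta}{\root^{\circ}}{a}/(1-\zeta^{2}e^{2\i\alpha_a})$. So besides the sign flip, $\widetilde Q$ picks up residue terms that do not vanish in general, i.e.\ it jumps along lines through $\root^{\circ}$. The contour choice does not jump only at the rhombus of $a$.
\item Partial fractions do not rescue this. For example, $(1-2\zeta\Delta)^{-1}=\frac12\bigl(1+\dexp{2\zeta}{\root^{\circ}}{w}\bigr)$ for a neighbouring dual vertex $w$, which mixes $\Gv z{\root^{\circ}}$ and $\Gv zw$ with complex coefficients. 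And even a genuine real sum $\sum\alpha_i\Gv z{w_i}$ has several distinct ramification points.
\end{itemize}

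What is missing is the mechanism that turns a real combination of $\Gv z{w}$'s into an s-meromorphic spinor on $\Cgr(\isGoc)$ with a pole only at $a$. The paper does this as follows.
\begin{itemize}
\item Take a path $\root^{\circ}=w_1,\dots,w_{k+1}$ along rhombus edges whose last edge is $o_k=a$. Regard $\Gv z{w_{k+1}}$ as a $\Cgr(\isGoc)$-spinor by cutting along the path; this creates antisymmetric discrete poles at every corner $o_i$ of the path.
\item By induction, remove the poles at $o_1,\dots,o_{k-1}$ by subtracting real multiples of the already constructed $Q_{(o_i)}$.
\item The jump at $o_k$ survives and is nonzero, because $\im\oint\Gv z{o_k^{\bullet}}\Gv z{o_k^{\circ}}\,dz=2\pi$ forces $\Gv{o_k}{w_{k+1}}\neq0$.
\item Add a multiple of $[\lambb z^{-\frac12}]=\Gv z{\root^{\circ}}$ to get \eqref{eq: qalpha_behavior_at_infty}. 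Then \eqref{eq: Spinor_Cauchy_via_zmone} holds by construction.
\end{itemize}

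Your final step (fixing $c$ by pairing with $[\lamb z^{-\frac12}]$) and the $O(|z|^{-\frac12})$ bound are fine once such an object exists.

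On the symmetric part at $a^{\pm}$ that worried you: impose \eqref{eq: qalpha_behavior_at_infty} first. Then the leading coefficient at infinity is a real multiple $x\lamb$, so $\im\oint Q^{2}\,dz=\im(2\pi\i x^{2}\lamb^{2})=0$. By \eqref{eq: Discrete residue formula} this gives $Q(a^{+})^{2}=Q(a^{-})^{2}$, and together with the nonzero jump it forces $Q(a^{+})=-Q(a^{-})$. Only then rescale to obtain \eqref{eq: Q_a_value}.
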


\begin{proof}
	(cf.  \cite[Discussion after Lemma 14]{DubedatDimers}.) Let $\gamma:=(w_{1},\dots,w_{k+1})$ be a simple
	path on rhombi edges (i.e., corners of $\isGoc$) starting at $w_{1}=o^{\circ}$ and ending with
	the edge $(w_{k}w_{k+1})=a;$ denote $o_i:=(w_{i}w_{i+1})$. We prove the claim by induction on the
	number $k\geq 1$ of steps in the path. The induction hypothesis is that   $\SCkx{o_{1}}{\lambb}(z),\dots,\SCkx{o_{k-1}}{\lambb}(z)$ exist, which is vacuously true for the base case $k=1$. Let $F(z)=\Gv z{w_{k+1}}.$ By introducing
	a cut along $\gamma$ and splitting each $o_{i}$ into
	$o^{\pm}_{i},$ we may view $F$ as an s-meromorphic $\Cgr(\isGoc)$-spinor
	with poles at $o_{i}$, so that $F(o^{+}_{i})=-F(o^{-}_{i})$. We may find $\tilde{\alpha}_{1},\dots,\tilde{\alpha}_{k-1}$
	such that $G(z)=F(z)-\sum^{k-1}_{i=1}\tilde{\alpha}_{i}\SCkx{o_{i}}{\lambb}(z)$
	has no poles at $o_{1},\dots,o_{k-1}$, i.e., $G(o^{+}_{i})=G(o^{-}_{i})=0$
	for $1\leq i\leq k-1.$ Note that $G(o^{+}_{k})-G(o^{-}_{k})=F(o^{+}_{k})-F(o^{-}_{k})\neq0.$
	This is because for any $\root_i,$ 
	\[
	0\neq2\pi=\im(2\pi\i\lamb\lambb)=\im\oint G_{[\root_i^{\bullet}]}(z)G_{[\root_i^{\circ}]}(z)\,dz=\pm4\bar{\eta}^{2}_{\root}G_{[\root_i^{\bullet}]}(\root_i)G_{[\root_i^{\circ}]}(\root_i),
	\]
	by pulling the contour far away or shrinking to the one traversing the
	``two sides'' of $o_i$. Therefore, we can find a real multiple $c$
	so that $c\cdot G(o^{\pm}_{k})=\pm\eta_{o_{k}},$ and then add to
	$G$ a suitable multiple of $[\lambb z^{-\frac{1}{2}}]$
	to ensure (\ref{eq: qalpha_behavior_at_infty}), completing the induction
	step. 
\end{proof}

\iffalse
\begin{defn}
	We also define a version of the spinor Cauchy kernel 
	by 
	\begin{equation}
		\Que a(z)=\SCkx a{\lamb}(z)-\frac{2\bar{\eta}_{a}}{\pi}[\lambb a^{-\frac{1}{2}}][\lamb z^{-\frac{1}{2}}].\label{eq: def_que}
	\end{equation}
	Note that as a function of $z$, the two terms are now ramified at
	different points, $\root^{\bullet}$ and $\root^{\circ}$ respectively.
	We can, however, view $\Que a(z)$ an s-meromorphic spinor ramified
	at $\root^{\bullet}$ and with a pole at $\root$ (in addition to
	$a$). Note that we have $\Que a(a^{\pm})=\pm\eta_{a}.$
\end{defn}
\fi

\begin{lem}
	\label{lem: Cauchy_expansion_spinor}Let $a_{1},\dots,a_{n}$ be corners
	and $\alpha_{0},\dots,\alpha_{n}\in\R,$ 
	Put $F(z)=\alpha_0[\lambb z^{-\frac12}]+\sum^{n}_{i=1}\alpha_{i}\SCkx{a_{i}}{\lambb}(z).$ Then, either $F\equiv0$,
	or there exists $m\in\N-\frac{1}{2}$ and $\alpha\neq0$ such that
	\begin{equation}
		F(z)=\proj{\eta_{z}}{\alpha z^{-m}}+O(|z|^{-m-1}),\quad z\to\infty.\label{eq: posible_asymp_spinors}
	\end{equation}
	%In particular, we have $\Que a(z)=O(|z|^{-\frac{3}{2}})$ as $z\to\infty.$ 
\end{lem}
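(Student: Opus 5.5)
By Lemma \ref{lem: SCK_exists}, each spinor Cauchy kernel $\SCkx{a_i}{\lambb}$ is a finite real linear combination $\sum_j \alpha_{ij}\Gv z{w_{ij}}$ of the discrete $z^{-\frac12}$-spinors centred at points $w_{ij}\in\isG^\bullet\cup\isG^\circ$. Also, $[\lambb z^{-\frac12}]=\Gu z{\root^\circ}$ is itself such a spinor. Hence $F$ is a finite real linear combination $\sum_j \beta_j \Gv z{w_j}$ of this type. The second half of Lemma \ref{lem: Cauchy_expansion} then applies directly: either $F\equiv0$, or $F(z)=\proj{\eta_z}{\alpha z^{-m}}+O(|z|^{-m-1})$ for some $m\in\N-\frac12$ and $\alpha\neq 0$. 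So the plan is to reduce to Lemma \ref{lem: Cauchy_expansion} and check that this reduction is legitimate.

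There are two points to verify.

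First, all spinors involved must live on the same double cover $\Cgr(\isGoc)$, ramified only at $\root^\circ$. In the construction of Lemma \ref{lem: SCK_exists}, the individual terms $\Gv z{w_i}$ are ramified at various $w_i$. However, the combination obtained there is a spinor on $\Cgr(\isGoc)$ away from its poles, because the cuts along the path $\gamma$ were compensated. So I would rewrite $F$ by grouping all terms as in the proof of Lemma \ref{lem: Cauchy_expansion}. For $|z|$ large, each $\Gv z{w_j}$ is represented by the integral \eqref{eq: Gv_integral_dexp} or \eqref{eq: Gu_integral_dexp}. Rotating all integration rays to a common direction $\nu\R_+$ and fixing an auxiliary base point, one gets
\[
F(z)=\proj{\eta_z}{\int_{l_1}\zeta^{-\frac12}\hat R(\zeta)\,\dexp{2\zeta}z{\root}\,d\zeta},
\]
where $\hat R$ is a rational function independent of $z$, and the branch of $\zeta^{-\frac12}$ is chosen consistently with the sheet of $z$. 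The sign conventions for $\eta$ and the branch of $\zeta^{\frac12}$ must be matched between the terms centred at $\root^\bullet$-type and $\root^\circ$-type points. I would handle this through the identities $\dexp{2\zeta}z{w}=\dexp{2\zeta}z{\root}\,\dexp{2\zeta}{\root}{w}$, since the second factor is rational in $\zeta$.

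Second, the Laplace-method step. If $\hat R\equiv 0$, then $F\equiv 0$ for large $|z|$. Since $F$ is s-holomorphic outside a bounded set and determined by propagation of the three-term relation, this extends to $F\equiv 0$ everywhere, up to the poles. I expect some care here: one needs that an s-meromorphic spinor vanishing near infinity vanishes identically. I would argue this from the explicit integral representation, which is valid for all $z$ outside the poles after contour rotation. Otherwise, write $\hat R(\zeta)=\beta\zeta^{k}+O(\zeta^{k+1})$ with $\beta\ne0$. The substitution $\mu=\zeta(\root-z)$, together with the bounds \eqref{eq: exp_close}--\eqref{eq: exp_far}, then yields the leading term $\beta\,\Gamma(k+\frac12)2^{-k-\frac12}(\root-z)^{-k-\frac12}$. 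This gives $m=k+\frac12\in\N-\frac12$ with a nonzero $\alpha$, and an error $O(|z|^{-m-1})$.

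The main obstacle I anticipate is the bookkeeping in the first step: showing that the combined integrand really has the form $\zeta^{-\frac12}\times(\text{rational})$ with a single coherent branch. That is, mixing $[\lambb z^{-\frac12}]$, which is centred at $\root^\circ$, with kernels built from spinors centred at other points does not produce incompatible half-integer branches. If this works, the claim follows exactly as in the proof of Lemma \ref{lem: Cauchy_expansion}.
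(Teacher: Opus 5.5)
Your reduction is the paper's proof. The paper's one-line argument combines \eqref{eq: Spinor_Cauchy_via_zmone} with the observation that $[\lambb z^{-\frac12}]=\Gu z{\root^\circ}$ is itself one of the spinors $\Gv z{w}$. So near infinity $F$ coincides with a real combination $\sum_j\beta_j\Gv z{w_j}$, and the second half of Lemma \ref{lem: Cauchy_expansion} gives the dichotomy. Your recap of the Laplace-method step, and your branch bookkeeping via $\dexp{2\zeta}z{w}=\dexp{2\zeta}z{\root}\dexp{2\zeta}{\root}{w}$, are fine.

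You should drop the extra step ``$F\equiv0$ near infinity $\Rightarrow$ $F\equiv0$ at all non-pole corners''. It is false.

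Your proposed justification also fails. The common-ray integral representation is only available for $|z|$ large, because rotating the rays $(\bar w_j-\bar z)\R_+$ to one direction must avoid the poles of the discrete exponentials on $|\zeta|=1$. This is why the paper's proof of Lemma \ref{lem: Cauchy_expansion} works with $|z|>4\max|a_i|$.

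Here is a counterexample.
\begin{itemize}
\item Take adjacent rhombi $r,r'$ with common side $c$.
\item Let $K_r$ (resp.\ $K_{r'}$) be the combination $F_2-\cos\theta F_1-\sin\theta F_3$ of \eqref{eq: CK_three_terms}, built from spinor Cauchy kernels at three consecutive sides of $r$ (resp.\ $r'$) with $c$ in the middle, using the same sign of $\eta_c$ in both.
\item By \eqref{eq: Cauchy_spinor_symmetry} and the three-term relation for $\SCkx{z}{\lambb}$, each of $K_r,K_{r'}$ vanishes at every corner other than its three poles.
\item Hence $D=K_r-K_{r'}$ is a real combination of four spinor Cauchy kernels that vanishes at all corners outside $r\cup r'$ and has no pole at $c$.
\item Yet $D(c)=\pm2\eta_c\neq0$, since the jumps of $\SCkx{c}{\lambb}$ at $c$ coming from $K_r$ and from $K_{r'}$ match up.
\end{itemize}

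So the alternative ``$F\equiv0$'' in the lemma must be read as ``$F(z)=0$ for all large $|z|$''. That is all the asymptotic statement concerns, and it is how the dichotomy is used later, e.g.\ for linear independence in Lemma \ref{lem: negative_monomial_existence}. With that reading your argument is complete without the extension step.
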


\begin{proof}
Follows immediately from
	Lemma \ref{lem: Cauchy_expansion} and (\ref{eq: Spinor_Cauchy_via_zmone}).
\end{proof}

\subsection{Further properties of $\protect\CK za$ and $\protect\SCkx a{\lambda}$,
	and the Cauchy integral formulae for s-holomorphic functions and spinors. }
\begin{lem}
	\label{lem: kernels_symmetry}We have, for any $a\neq b\in\Cgr(\isG)$ 
	\begin{equation}
		\CK{a^{\pm}}a=\pm\eta_{a};\label{eq: Cauchy_value_at_a+}
	\end{equation}
	\begin{equation}
		\CKc ba=-\CKc ab;\label{eq: Cauchy_symmetry}
	\end{equation}
	\begin{equation}
		\SCknoeta ab=-\SCknoeta ba;\label{eq: Cauchy_spinor_symmetry}
	\end{equation}
\iffalse
	\begin{equation}
		\Quec a(b)=-\Quec b(a)-\frac{2}{\pi}[\lamb a^{-\frac{1}{2}}][\lambb b^{-\frac{1}{2}}]-\frac{2}{\pi}[\lambb a^{-\frac{1}{2}}][\lamb b^{-\frac{1}{2}}].\label{eq: Que_symmetry}
	\end{equation}
\fi
	If $c_1$ and $c_2$ are opposite edges of the same rhombus, then
\begin{equation}
	\CK{c_{2}}{c_{1}}=0. \label{eq: G vanishing for energy}
\end{equation}

	If $a_1,a_2,a_3$ are three corners corresponging to consecutive counterclockwise edges of a rhombus, and $F_i(\cdot)=\CK{\cdot}{a_i},$ or $F_i(\cdot)=\SCkx{a_i}{\alpha}(\cdot)$ for each $i=1,2,3$, then we have  
		\begin{equation}
		F_2(\cdot)\equiv \cos\theta F_1(\cdot)+\sin\theta F_3(\cdot),
		\label{eq: CK_three_terms}
	\end{equation}
	where $\theta$ is the half-angle between the sides of the rhombus corresponding to $a_1$ and $a_2$, and the signs are chosen so that $\eta_{a_3}=\i e^{-i
	\theta}\eta_{a_2},$ $\eta_{a_1}=e^{-\i\theta}\eta_{a_2}$
\end{lem}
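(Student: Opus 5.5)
The main tool throughout is the discrete residue formula \eqref{eq: Discrete residue formula}, applied to products of two s-meromorphic functions. A large contour integral $\im\oint F G\,dz$ is computed in two ways: by shrinking the contour to the poles, and by sending it to infinity using the known asymptotics \eqref{eq: G_a_asymp}, \eqref{eq: asymp_zhalf_lamb}--\eqref{eq: asymp_zhalf_lambb} together with the approximation \eqref{eq: Im_int_approximation}.

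\emph{Value at the pole, \eqref{eq: Cauchy_value_at_a+}.} Take $F=\CK{\cdot}a$ and $G=\proj{\eta_z}{A}$, a constant s-holomorphic function.
\begin{itemize}
\item Shrinking the contour, the residue side equals $2\bar\eta_a^2\,\CK{a^+}{a}\,\proj{\eta_a}{A}\cdot 2$. Here we use the local form of the kernel: the values at $a^\pm$ are $\pm c\,\eta_a$ for some real $c$. This antisymmetry follows from the construction, where the cut splits $a$ symmetrically.
\item At infinity, the integrand behaves like $\frac{2}{\pi}\bar\eta_a A/(z-a)$, so the integral tends to $\im\bigl(2\pi\i\cdot\frac{2}{\pi}\bar\eta_a A\bigr)=4\re(\bar\eta_a A)$. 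The $O(z^{-3})$ error vanishes in the limit.
\end{itemize}
Comparing the two sides for all $A$ gives $c=1$. The same check confirms that $\Gu{}{}$ and $\Gv{}{}$ are consistently normalized.

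\emph{Symmetries \eqref{eq: Cauchy_symmetry} and \eqref{eq: Cauchy_spinor_symmetry}.} Apply the residue formula to $F=\CK{\cdot}a$ and $G=\CK{\cdot}b$ on a contour enclosing both poles.
\begin{itemize}
\item At infinity the product is $O(z^{-2})$, so the integral vanishes.
\item The residue at $a$ gives $4\bar\eta_a^2\eta_a\,\CK{a}{b}=4\bar\eta_a\bar\eta_b\,\CKc ab$, since $G$ is continuous at $a$ and only $F$ jumps.
\item The residue at $b$ gives $4\bar\eta_a\bar\eta_b\,\CKc ba$ in the same way.
\end{itemize}
The sum of the two residues is zero, which is the claim. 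For the spinor kernels $\SCkx a{}$ and $\SCkx b{}$ the same argument applies. Now each factor is $O(|z|^{-1/2})$ by Lemma~\ref{lem: SCK_exists}, so the product is only $O(|z|^{-1})$ and the contour integral at infinity need not vanish automatically.

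This is the \textbf{main obstacle}: showing that the $z^{-1}$ coefficient of the product vanishes. I would write each kernel's leading term as $\proj{\eta_z}{c_a\lambb z^{-1/2}}+O(|z|^{-3/2})$ with $c_a$ real, using Lemma~\ref{lem: Cauchy_expansion_spinor} and the fact that the spinor lives on $\Cgr(\isGoc)$. Normalization (3) of Definition~\ref{def: spinor-CK} forces $\im\oint \SCkx a{}\,[\lamb z^{-1/2}]\,dz=0$. Its limit is $c_a\,\im(2\pi\i\lambb\lamb)=2\pi c_a$, so $c_a=0$. Hence each kernel is $O(|z|^{-3/2})$, the product of two kernels decays fast enough, and the contour integral vanishes.

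\emph{Vanishing on the opposite edge, \eqref{eq: G vanishing for energy}.} Let $c_1,c_3$ be the other two edges of the rhombus. Apply the three-term relation \eqref{eq: three-term} for $\CK{\cdot}{c_1}$ at $c_2$, using the two sides $c_1^\pm$ adjacent to it. The relevant values are $\pm\eta_{c_1}$, and these project to zero onto $\eta_{c_2}\R$ because $\eta_{c_2}=\i\eta_{c_1}$ up to sign for opposite sides. I expect the sign bookkeeping via $\eta$-rotation around the rhombus to work out. Alternatively, antisymmetry combined with evaluating both relations gives $2\CKc{c_2}{c_1}=0$.

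\emph{Three-term identity \eqref{eq: CK_three_terms}.} Set $H=F_2-\cos\theta F_1-\sin\theta F_3$.
\begin{itemize}
\item $H$ is s-meromorphic with poles only at $a_1,a_2,a_3$.
\item Its jumps there cancel, because the residue values are exactly the combination in \eqref{eq: three-term-real}. So after checking consistency of $H$ on the split vertices, $H$ is s-holomorphic everywhere.
\item $H$ decays at infinity: it is $O(z^{-1})$, and for spinors it is $O(|z|^{-1/2})$ and satisfies (3).
\end{itemize}
By the residue pairing with the constants (respectively with $[\lamb z^{-1/2}]$), every real Laurent coefficient of $H$ vanishes. Lemma~\ref{lem: Cauchy_expansion} (respectively Lemma~\ref{lem: Cauchy_expansion_spinor}) then gives $H\equiv0$.
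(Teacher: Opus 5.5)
Your proof of \eqref{eq: Cauchy_symmetry} is the paper's, and so is the pairing with constants in \eqref{eq: Cauchy_value_at_a+}. Your uniqueness route to \eqref{eq: CK_three_terms} is a legitimate alternative to the paper's one-liner, which applies the three-term relation in the pole variable via \eqref{eq: Cauchy_symmetry}, \eqref{eq: Cauchy_spinor_symmetry}. The jumps of $H$ do not cancel, but they form the pattern $(\cos\theta\,\eta_{a_1},\eta_{a_2},\sin\theta\,\eta_{a_3},0)$, which is itself s-holomorphic in the rhombus. So with outer-side values $H$ is s-holomorphic everywhere. Two other steps, however, are genuinely wrong.

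\textbf{Spinor symmetry.} Your resolution of the main obstacle rests on a false claim. Lemma~\ref{lem: Cauchy_expansion_spinor} gives $\SCkx{a}{\lambb}(z)=\proj{\eta_z}{\alpha_a z^{-1/2}}+O(|z|^{-3/2})$ with $\alpha_a\in\C$. Living on $\Cgr(\isGoc)$ does not constrain the phase of $\alpha_a$, because by \eqref{eq: Spinor_Cauchy_via_zmone} the kernel mixes $\Gv{\cdot}{w}$ over both primal and dual $w$. Condition \eqref{eq: qalpha_behavior_at_infty} gives only $\re(\alpha_a\lamb)=0$, i.e.\ $\alpha_a\in\lamb\R$, not $\alpha_a=0$. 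The kernels are in general exactly of order $|z|^{-1/2}$. For example, Lemma~\ref{lem: SCK_exists} with $k=1$ makes $\SCkx{\root}{\lambb}$ a nonzero real multiple of $[\lamb z^{-\frac12}]$ (cf.\ Section~\ref{sec: three_spins}), and Theorem~\ref{prop: discrete_Laurent} gives the coefficient $\frac{2}{\pi}\dsbar{a}[\lambb a^{-\frac12}]$ in general. The correct argument, which is the paper's, writes $\alpha_a=c_a\lamb$ and $\alpha_b=c_b\lamb$. The large-contour limit is then $c_ac_b\im(2\pi\i\lamb^2)=0$: it vanishes because $\lamb^2=-\i$, not because of extra decay.

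\textbf{Opposite edges.} Your argument for \eqref{eq: G vanishing for energy} fails. The three-term relation at $c_2$ involves the corners adjacent to $c_2$, not $c_1$, and only one of $c_1^\pm$ lies in this rhombus. More fundamentally, s-holomorphicity in a rhombus only says the four values are projections of one complex number. The value $\pm\eta_{c_1}$ on the $c_1$-side therefore says nothing about the orthogonal component, which is exactly $\CK{c_2}{c_1}$. Adding the antisymmetry relations among the four kernels centred at the rhombus corners still leaves a one-parameter family. The value is global and must come from \eqref{eq: Cauchy_kernel_integral}. Along $c_1\to c_1^\bullet\to c_3\to c_2^\circ\to c_2$ one has $\dexp{2\zeta}{c_2}{c_1}=\frac{1-t}{1+t}$ on the ray $\zeta=t(\bar c_1-\bar c_2)$, and this yields
\[
\bigl|\CKc{c_2}{c_1}\bigr|=\frac{2}{\pi}\,|\cos\omega|\int_0^\infty\frac{(1-t)^2\,dt}{\cos^2\omega\,(1-t^2)^2+\sin^2\omega\,(1+t^2)^2},
\]
where $\omega$ is the rhombus angle at $c_1^\bullet$. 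This vanishes only for a square. On the triangular lattice it equals $\frac{1}{3\sqrt3}$, matching the classical critical value $\E[\sigma_x\sigma_y]=\frac23$ via $\en_e=(\sin\theta_e)^{-1}\sigma_x\sigma_y-\cot\theta_e$. So \eqref{eq: G vanishing for energy} holds only on square rhombi, and no argument using only local s-holomorphicity and antisymmetry can prove it.

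\textbf{Smaller points.} In \eqref{eq: Cauchy_value_at_a+} you assume $\CK{a^+}{a}=-\CK{a^-}{a}$; ``the cut splits $a$ symmetrically'' is not a proof. Your pairing gives only $\CK{a^+}{a}-\CK{a^-}{a}=2\eta_a$. The paper closes the gap with the residue formula for $F=G=\CK{\cdot}{a}$: the integrand is $O(|z|^{-2})$, so $\CK{a^+}{a}^2=\CK{a^-}{a}^2$. In the last step for \eqref{eq: CK_three_terms}, pairing with constants (or with $[\lamb z^{-\frac12}]$, which is just condition (3)) controls at most the leading coefficient. To get $H\equiv0$, apply \eqref{eq: Disc_Cauchy_rec}, resp.\ \eqref{eq: Disc_Cauchy_rec_spinor}, and push the contour to infinity. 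In the spinor case this again uses that both leading coefficients lie in $\lamb\R$.
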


\begin{proof}
	Applying \eqref{eq: Discrete residue formula} to $F(z)=\proj{\eta_{z}}1$
	and $G(z)=\CK za$ and pulling the contour far away/scaling the lattice,
	we find 
	\[
	\bar{\eta}^{2}_{a}\left(1+\eta^{2}_{a}\right)\left(\CK{a^{+}}a-\CK{a^{-}}a\right)=\frac{2}{\pi}\im\oint\frac{\bar{\eta}_{a}}{z-a}\:dz=\frac{2}{\pi}\im(2\pi\i\bar{\eta}_{a})=2(\eta_{a}+\bar{\eta}_{a}),
	\]
	or $\CK{a^{+}}a-\CK{a^{-}}a=2\eta_{a}.$ On the other hand, applying
	\eqref{eq: Discrete residue formula} to $F(z)=G(z)=\CK za$ gives $\CK{a^{+}}a^{2}-\CK{a^{-}}a^{2}=0,$
	thus proving \eqref{eq: Cauchy_value_at_a+}. 
	
	For \eqref{eq: Cauchy_symmetry}, we apply \eqref{eq: Discrete residue formula}
	to $F(z)=\CK za$ and $G(z)=\CK zb$ and a counterclockwise contour
	surrounding both $a,b$. By pulling the contour far away, we see that
	the integral vanishes, so 
	\begin{multline*}
		0=\bar{\eta}^{2}_{a}\CK ab\left(\CK{a^{+}}a-\CK{a^{-}}a\right)\\
		+\bar{\eta}^{2}_{b}\CK ba\left(\CK{b^{+}}b-\CK{b^{-}}b\right)\\
		=2\bar{\eta}_{a}\CK ab+2\bar{\eta}_{b}\CK ba.
	\end{multline*}
	
	To prove \eqref{eq: Cauchy_spinor_symmetry}, we apply \eqref{eq: Discrete residue formula}
	to $F(z)=\SCkx a{\alpha}(z)$ and $G(z)=\SCkx b{\alpha}(z)$ and a counterclockwise
	contour surrounding $a,b$ and the origin. Since $\SCkx a{\alpha}(z)$
	and $\SCkx b{\alpha}(z)$ are ramified at the same point, this gives
	as above 
	\[
	\im\oint\SCkx a{\alpha}(z)\SCkx b{\alpha}(z)\,dz=2\bar{\eta}_{a}\SCkx b{\alpha}(a)+2\bar{\eta}_{b}\SCkx a{\alpha}(b).
	\]
	On the other hand, Lemma \ref{lem: Cauchy_expansion_spinor} and \eqref{eq: qalpha_behavior_at_infty} imply that $\SCkx x{\alpha}(z)=c_{x}[\lamb z^{-\frac{1}{2}}]+O(|z|^{-\frac{3}{2}})$
	for $c_{x}\in\R.$ By pulling the contour far away we can evaluate
	the integral as $c_{a}c_{b}\im\oint[\lamb z^{-\frac{1}{2}}][\lamb z^{-\frac{1}{2}}]\,dz=c_{a}c_{b}\im(2\pi\i\lamb^{2})=0,$
	which proves (\ref{eq: Cauchy_spinor_symmetry}). 
	
	The equation \eqref{eq: G vanishing for energy} follows from \eqref{eq: Cauchy_kernel_integral} by a direct computation.
	
	Finally, \eqref{eq: CK_three_terms} is a consequence of the s-holomorphicity of $\CK{\cdot}{a}, \SCkx{a}{\alpha}(\cdot),$ written in the form \eqref{eq: three-term-real}, and the symmetry relations \eqref{eq: Cauchy_symmetry}, \eqref{eq: Cauchy_spinor_symmetry}.
\end{proof}

We now record the version of Cauchy integral formulae that we will use later. 
\begin{lem}
	If $F$ is an s-holomorphic function (respectively, a $\isG_{[\root^\circ]}$-spinor), we have 
	\begin{equation}
		\label{eq: Disc_Cauchy_rec}
		4\dsbar{a}F(a)=\im \oint F(z)\CK{z}{a}\,dz,
	\end{equation}
	respectively,
	\begin{equation}
		\label{eq: Disc_Cauchy_rec_spinor}
		4\dsbar{a}F(a)=\im \oint F(z)\SCkx{a}{\lambb}(z)\,dz.
	\end{equation}
	where the contour of integration encloses $a$ (and in \eqref{eq: Disc_Cauchy_rec_spinor}, may or may not enclose $\root^\circ$). 
\end{lem}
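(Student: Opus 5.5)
The plan is to apply the discrete residue formula \eqref{eq: Discrete residue formula} to the product of $F$ with the kernel. Take $G(z)=\CK za$ in the function case and $G(z)=\SCkx a{\lambb}(z)$ in the spinor case. $F$ is s-holomorphic, hence in particular s-meromorphic with no poles. $G$ is s-meromorphic with a single discrete pole at $a$: in the spinor case the poles are at $a$ and $a^\star$, but these project to the same corner of $\isG$. In the spinor case both factors live on the same double cover $\Cgr(\isGoc)$, so the product $FG$ is a genuine function on $\Cgr(\isG)$ and $\im\oint FG\,dz$ is well defined and path independent away from $a$. For the same reason the branch point $\root^\circ$ creates no obstruction, and the contour may or may not enclose it.

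By \eqref{eq: Discrete residue formula}, for any simple counterclockwise contour enclosing $a$,
\[
\im\oint F(z)G(z)\,dz=2\bar\eta_a^2\bigl(F(a^+)G(a^+)-F(a^-)G(a^-)\bigr).
\]
Since $F$ has no pole at $a$, we have $F(a^+)=F(a^-)=F(a)$, and the value must be taken on the correct sheet matching the chosen lift of $a$. The residue values are $G(a^\pm)=\pm\eta_a$, by \eqref{eq: Cauchy_value_at_a+} in the function case and \eqref{eq: Q_a_value} in the spinor case. Substituting gives
\[
2\bar\eta_a^2 F(a)\cdot 2\eta_a=4\bar\eta_aF(a),
\]
using $|\eta_a|=1$. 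This is the claimed formula.

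The only delicate point is the bookkeeping of sign and sheet conventions in the spinor case. The sign of $\eta_a$ used in $\bar\eta_a\SCknoeta{a}{\cdot}$ must be the same one appearing on the left-hand side; this is consistent with the convention that repeated occurrences of $\eta_a$ carry the same sign. Likewise, $a\in\Cgr(\isGoc)$ must be the lift at which $F$ is evaluated. If the opposite lift $a^\star$ were used, both $F(a)$ and the kernel would change sign, so the identity is unaffected. I also need the contour to be a simple loop in the cut graph. When it encloses $\root^\circ$, it is homotopic, relative to the pole, to the small two-edge loop around $a$ together with a loop around $\root^\circ$. That second loop contributes nothing, because the product $FG$ is single-valued and has no pole there. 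No step here appears to be a genuine obstacle.
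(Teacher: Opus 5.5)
Your proposal is correct and follows the same route as the paper, which also derives the formula directly from the discrete residue formula \eqref{eq: Discrete residue formula}. You combine it with $F(a^+)=F(a^-)=F(a)$ and the normalizations \eqref{eq: Cauchy_value_at_a+}, \eqref{eq: Q_a_value} to get $2\bar\eta_a^2\cdot 2\eta_a F(a)=4\bar\eta_a F(a)$; your extra remarks on sheet choices, signs of $\eta_a$, and whether the contour encloses $o^\circ$ are consistent and make explicit what the paper's one-line proof leaves implicit.
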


\begin{proof}
	This is an immediate consequence of \eqref{eq: Discrete residue formula}, \eqref{eq: Cauchy_value_at_a+} and \eqref{eq: Q_a_value}.
\end{proof}

\subsection{A good set of monomials}
\label{sec: good_monomials}

A \emph{good set of monomials} is a collection of s-holomorphic and
s-meromorphic functions and spinors on a rooted rhombi
tiling $(\isG,\root)$ of the plane, where $\root\in\Cgr(\isG)$,
which serve as discrete analogs of the monomials $z^{n},$ $n\in\Z\cup\left(\Z+\frac{1}{2}\right)$ 
\begin{defn}
	A good set of monomials is a collection of functions $[z^{n}],[\i z^{n}]:\Cgr(\isG)\to\C$,
	$n\in\Z$, respectively, spinors $[z^{n}],[\i z^{n}]:\Cgr(\isGoc)\to\C$ ramified at $o^{\circ}$ for $n\in\Z+\frac{1}{2}$ satisfying
	the following properties: 
	\begin{enumerate}
		\item each $[z^{n}]$ and $[\i z^{n}]$ is s-holomorphic on $\Cgr$ for
		$n\geq0$ and s-meromorphic with poles in a finite set $\mathcal{A}_{n}$ for $n<0$;
		\item For every fixed $z$, $[z^{n}]=[\i z^{n}]=0$ for $n$ large enough,
		and moreover $[o^{n}]=[\i o^{n}]=0$ for all $n\in\N-\frac{1}{2}$;
		\item We have $[z^{n}]=\proj{\eta_{z}}{z^{n}}+O(z^{n-1})$ and $[\i z^{n}]=\proj{\eta_{z}}{\i z^{n}}+O(z^{n-1})$ as $z\to\infty$
		\item We have, for $n,m\in\Z$ or for $n,m\in\Z+\frac{1}{2}$, and for large enough counterclockwise contours,
		\begin{equation}
			\im\oint[\alpha z^{n}][\beta z^{m}]\,dz=\im(2\pi\i\alpha\beta)\delta_{n+m,-1},\quad\alpha,\beta\in\{1,\i\}.\label{eq: orthogonality}
		\end{equation}
	\end{enumerate}
\end{defn}

\begin{rem}
	We extend the above definition to $\alpha\in\C$ instead of $\alpha\in\{1,\i\}$
	by linearity: 
	\[
	[\alpha z^{m}]=(\re\alpha)[z^{m}]+(\im\alpha)[\i z^{m}].
	\]
	It is straightforward to check that (\ref{eq: orthogonality}) holds
	true in the same form, and so does the asymptotics 
	\[
	[\alpha z^{n}]=\proj{\eta_{z}}{\alpha z^{n}}+O(z^{n-1}).
	\]
\end{rem}

The proof of the existence of a good set of monomials will proceed by several lemmas. Let $\isG$ be an isoradial graph tiling the plane. We consider the set $\rhomb{\isG}$ of all \emph{rhombi} in the rhombi tiling associated to $\isG$. The boundary $\pa e$ of each rhombus $e\in \rhomb{\isG}$ consists of four sides, which we identify with the corresponding corners of $\isG$. If $E\subset \rhomb{\isG},$ we denote  $\Cgr(E):=\cup_{e\in E}\pa e$ and by $\VertSet(E)$ the set of all vertices of rhombi in $E$, which is a subset of $\isG^\circ \cup \isG^\bullet$. Recall that a \emph{train track} on a rhombi tiling of $\C$ is a (doubly infinite) sequence of rhombi adjacent to each other by parallel sides (modulo shifts and reversal of indexing). If a train track $\ttr$ is oriented, we orient any of these parallel sides from the left of $\ttr$ to the right of $\ttr$ and call it the \emph{transversal direction} of $\ttr$. We will need the following lemma:

\begin{figure}
	\centering{}\includegraphics[width=0.8\textwidth]{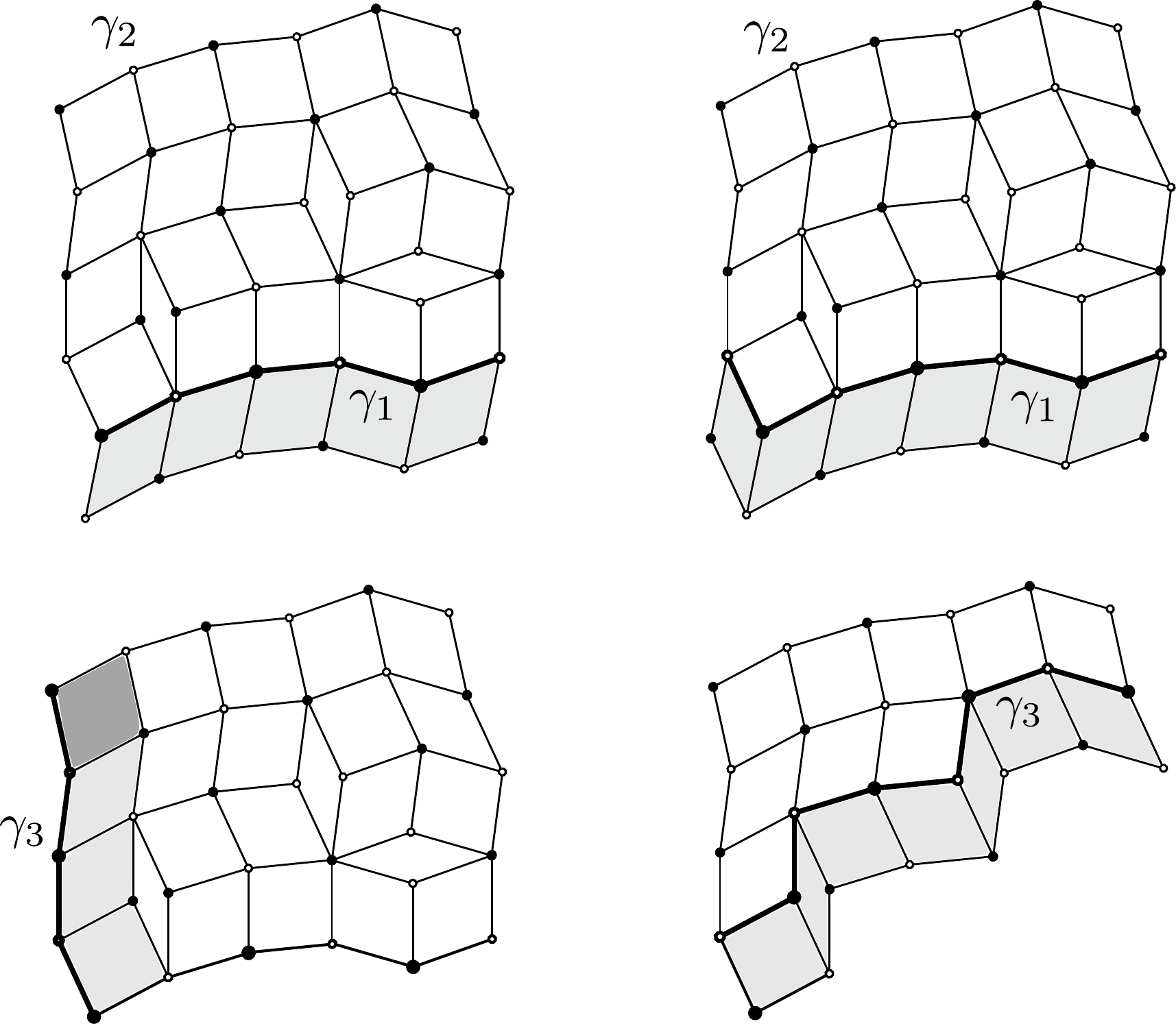}
	\caption{The two cases of the induction step in Lemma \ref{lem: clos_bdry}. Top: the domain $E$ consists of  white rhombi, the boundary arc $\gamma_1$ (bold) runs along the grey train track. Left: in the first case, $E'=\emptyset$, the train track $\ttr$ runs along $\gamma_2$ and exits $E$ at the dark grey rhombus, which has two edges on $\gamma_2$, as required. Right: in the second case, the boundary $\gamma_3$ of $\ttr$ is the new $\gamma_1$.}
	\label{fig: topol}
\end{figure}

\begin{lem}
	\label{lem: clos_bdry}
	Let $E\subset \rhomb{\isG}$ be finite and simply connected. Assume that the boundary $\pa E$ is divided into two arcs, $\gamma_1$ and $\gamma_2$ (each arc is a sequence of boundary edges of rhombi), such that $\gamma_1$ is a subset of the boundary of a train track disjoint from $E$.  Then, $\gamma_2$ contains two consecutive edges that belong to the same rhombus of $E$.
\end{lem}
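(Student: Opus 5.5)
We argue by induction on the number of rhombi $|E|$, using the structure of train tracks in rhombic tilings: two train tracks cross at most once, and no train track crosses itself. Let $\ttr_0$ be the train track disjoint from $E$ whose boundary contains $\gamma_1$. Each edge of $\gamma_2$ is a side of a unique rhombus of $E$; we need two consecutive edges of $\gamma_2$ on the same rhombus. Equivalently, it suffices to find a rhombus $e\in E$ with two adjacent sides on $\gamma_2$, i.e.\ an ``ear'' of $E$ along $\gamma_2$.

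Pick the rhombus $e_1\in E$ adjacent to the first edge of $\gamma_1$ (next to the junction of $\gamma_1$ and $\gamma_2$), or more conveniently any rhombus of $E$ sharing a side $c$ with $\gamma_1$. Let $\ttr$ be the train track through $e_1$ whose transversal sides are parallel to $c$. Since $c$ is a side of a rhombus of $\ttr_0$ transversal to $\ttr_0$'s boundary, $\ttr$ crosses $\ttr_0$ exactly there. Hence, going from $e_1$ into $E$, the track $\ttr$ cannot return to $\gamma_1$ (that would require crossing $\ttr_0$ again, or running along its boundary, which is impossible since its sides parallel to $c$ are transversal to $\ttr_0$). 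Follow $\ttr$ through $E$ until it first exits $E$; by finiteness and simple connectivity it exits through an edge of $\gamma_2$. Now two cases arise, as in Figure~\ref{fig: topol}. Let $E'$ be the part of $E$ cut off by $\ttr\cap E$ on the side away from the junction, bounded by part of $\gamma_2$ and by one boundary side $\gamma_3$ of $\ttr$.

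If $E'=\emptyset$, then the side of $\ttr\cap E$ facing away lies entirely on $\gamma_2$. The last rhombus of $\ttr\cap E$ has its exit side on $\gamma_2$ and its lateral side also on $\gamma_2$; these are adjacent sides of the same rhombus and consecutive along $\gamma_2$, as required. (If $\ttr\cap E$ is a single rhombus this is immediate as well.) If $E'\neq\emptyset$, then $E'$ is finite, simply connected, strictly smaller than $E$, and its boundary splits into $\gamma_3$, which lies on the boundary of the train track $\ttr$ disjoint from $E'$, and the remaining arc $\gamma_2'\subset\gamma_2$. The induction hypothesis gives two consecutive edges of $\gamma_2'$ on a common rhombus of $E'\subset E$; these are consecutive in $\gamma_2$ too, provided they are not at the endpoints of $\gamma_2'$, where $\gamma_2'$ meets $\gamma_3$ and consecutiveness inside $\gamma_2$ still holds, since $\gamma_2'$ is a contiguous subarc of $\gamma_2$.

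The main obstacle is the topological bookkeeping: showing rigorously that $\ttr$ exits $E$ through $\gamma_2$ rather than $\gamma_1$ (using that train tracks cross at most once and that $\ttr_0\cap E=\emptyset$), and that $E'$ is simply connected with boundary exactly $\gamma_3\cup\gamma_2'$ (using simple connectivity of $E$ and that $\ttr$ does not self-intersect). The base case $|E|=1$ is trivial: a single rhombus with $\gamma_1$ covering at most two sides, and $\gamma_1$ must lie on one side of a train track, so it cannot contain two adjacent non-parallel sides touching at an acute and obtuse corner in a way covering three sides. Hence $\gamma_2$ has at least two consecutive sides of it.
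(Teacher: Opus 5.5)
Your strategy is the paper's: induction on $|E|$, the train track $\mathfrak{t}$ entering $E$ through a side $c\in\gamma_1$, and the dichotomy $E'=\emptyset$ / $E'\neq\emptyset$. However, the step that makes the induction close is misidentified.

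\textbf{Which side to recurse on.} The edge $c$ must be the \emph{extreme} edge of $\gamma_1$, the one ending at a junction point $q$ of $\gamma_1$ and $\gamma_2$. Then $E'$ must be the part of $E$ on the side of $\mathfrak{t}$ that contains $q$, and $\gamma_3$ is the boundary of $\mathfrak{t}$ on that side, which starts at $q$. You instead take $E'$ ``on the side away from the junction'', and you also offer an arbitrary $c\in\gamma_1$ as an equally good choice. Both fail for the same reason. The region you recurse on then has a piece of $\gamma_1$ on its boundary (with the extreme $c$, the far side carries all of $\gamma_1\setminus\{c\}$), in addition to a piece of $\partial\mathfrak{t}$ and a piece of $\gamma_2$. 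Its boundary is therefore not an arc on the boundary of a single train track plus a complementary arc, so the induction hypothesis does not apply. Your own description of $E'$, ``bounded by part of $\gamma_2$ and by $\gamma_3$'', is true only on the junction side.

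\textbf{The missing justification.} What has to be proved, and what your proposal does not supply, is that $\gamma_3$ meets $\gamma_1$ only at $q$. The reason you give is that $\mathfrak{t}$ cannot ``run along'' $\partial\mathfrak{t}_0$ because its sides parallel to $c$ are transversal to $\mathfrak{t}_0$. This is false. Suppose $e$, the rhombus of $\mathfrak{t}_0$ across $c$, and the next rhombus of $\mathfrak{t}_0$ form a hexagon around the other endpoint $p$ of $c$. Then the lateral side of $e$ at $p$ lies on $\partial\mathfrak{t}_0$ and may well belong to $\gamma_1$.

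The correct argument is the paper's. $\mathfrak{t}$ crosses $\mathfrak{t}_0$ exactly once, at the rhombus across $c$. Hence the half of $\mathfrak{t}_0$ beyond $c$, whose boundary carries $\gamma_1\setminus\{c\}$, lies entirely on the side of $\mathfrak{t}$ opposite to $q$. So lateral contact with $\gamma_1$ can only occur on the far side, never along $\gamma_3$. Your argument that $\mathfrak{t}$ cannot exit $E$ through $\gamma_1$ (it would cross $\mathfrak{t}_0$ a second time) is fine. With this correction, both of your cases go through exactly as in the paper.

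\textbf{Two smaller points.}
\begin{itemize}
\item In the base case, $\gamma_1$ covers at most two sides of the rhombus $e$ for a specific reason: three sides would include two opposite ones. The train track through $e$ transversal to them would then cross $\mathfrak{t}_0$ at both rhombi across them, i.e.\ twice.
\item The paper also needs the case $\gamma_1=\emptyset$, for the no-holes claim in Lemma~\ref{lem: clos_properties}. There your construction has no starting edge; one starts instead from any $e\in E$ and any side $c$.
\end{itemize}
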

\begin{proof}
	We proceed by induction on the number of rhombi in $E,$ see Fig. \ref{fig: topol}. For the base, one rhombus, it suffices to note that $\gamma_1$ can contain at most two edges.
	
	For the induction step, consider the rhombus $e$ of $E$ adjacent to the leftmost edge $c$ of $\gamma_1$ (as seen from outside of $E$), or, if $\gamma_1=\emptyset,$ pick any $e\in E$ and $c\in \pa e$.  Let $\ttr$ be the oriented train track containing $e$ in the transverse direction $c$, where the direction of $c$ is chosen to have $e$ on its left, and let $\gamma_3$ be the left boundary of $\ttr$. Notice that $\gamma_3$ cannot intersect $\gamma_1$ other than at its leftmost point, as two train tracks cannot intersect twice. Consider the part $E'$ of $E$ to the left of $\gamma_3.$ We have two possibilities: either $E'$ is non-empty, or it is empty. In the former case, it contains less rhombi than $E$, since $e\notin E'$, and we apply the induction hypothesis to $E'$ and to $\gamma_3$ in the place of $\gamma_1.$ In the latter case, $\gamma_3$ coincides with $\gamma_2$ up to the first point that $\ttr$ leaves $E$, and then the last edge of these two coinciding arcs and the edge through which $\ttr$ leaves $E$ are as desired.
\end{proof}

\begin{defn}
	A set $E$ of rhombi is called \emph{closed} if, for any $e\in \rhomb{\Gamma}$ such that at least two of the rhombi adjacent to $e$ belong to $E$, we also have that $e\in E$.  We denote by $\clos{E}$ the smallest closed set of rhombi that contains $E$.	
\end{defn}

\begin{lem}
	For every finite $E\subset \rhomb{\isG},$ $\clos{E}$ is finite, has no more connected components than $E$, and every connected component of $\clos{E}$ is simply connected.
	\label{lem: clos_properties}
\end{lem}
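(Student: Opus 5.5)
The closure is built by iterating $E_0=E$, $E_{j+1}=E_j\cup\{e:\ \text{at least two rhombi adjacent to }e\text{ lie in }E_j\}$; then $\clos{E}=\bigcup_j E_j$. The plan is to show that this iteration stabilises after finitely many steps at a finite set, and to check the component count and simple connectivity along the way.

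\emph{Finiteness.} Every rhombus added to $E_j$ shares a side with a rhombus of $E_j$. So it is enough to find a finite closed set containing $E$; the minimal closed set then sits inside it. For this I would use train tracks. Each rhombus of $E$ lies on exactly two train tracks, so only finitely many tracks $\ttr_1,\dots,\ttr_N$ meet $E$. These tracks cut the plane into regions. Let $K$ be the set of rhombi that lie on at least two of the tracks $\ttr_i$, i.e. their pairwise crossings, together with the bounded regions they enclose. Two tracks cross at most once, so $K$ is finite.

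The key claim is that $K$, or its union with the bounded faces of the arrangement, is closed. Suppose a rhombus $e\notin K$ has two neighbours $e_1,e_2\in K$. The neighbour across a side of $e$ lies on the track through $e$ transversal to that side. Hence $e$ shares a track with $e_1$ and a track with $e_2$. If these are two different tracks among the $\ttr_i$, then $e$ is a crossing of two tracks and so $e\in K$. Otherwise $e$ lies between two crossings on a single track, or at a bounded face, and the bounded-face part of $K$ covers it. This geometric case check is the step I expect to be the main obstacle. If it becomes messy, I would fall back on a convexity argument: a rhombus with two adjacent members in $E$ lies within the ``train-track hull'' of $E$, namely the intersection of the half-planes bounded by the sides of the tracks.

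\emph{Connectivity.} Each added rhombus $e$ is adjacent to at least two rhombi already present. So adding $e$ either joins $e$ to an existing component or merges several components. It never creates a new component, and by induction on $j$ the number of components does not increase.

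\emph{Simple connectivity.} Let $C$ be a component of $\clos{E}$, and suppose it has a bounded hole. Let $H$ be a finite, simply connected set of rhombi filling this hole: the union of the hole, or of one bounded complementary component. I would argue that some rhombus of $H$ has two neighbours in $C$, which contradicts closedness. Consider the boundary of $H$ and take $\gamma_1=\emptyset$ with $\gamma_2=\pa H$. Lemma \ref{lem: clos_bdry} then gives two consecutive edges of $\pa H$ belonging to the same rhombus $e\in H$. Both of the outside neighbours of $e$ across these edges lie in $C$, since they are across $\pa H$. So $e$ has two neighbours in $C$ and must belong to $\clos{E}$, a contradiction. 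The case $\gamma_1=\emptyset$ is allowed in the lemma's induction.
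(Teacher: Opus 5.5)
Your arguments for connectivity and for simple connectivity are fine and are essentially the paper's. For the latter, to finish the contradiction, note that once the rhombus $e$ produced by Lemma \ref{lem: clos_bdry} is forced into $\clos{E}$, it is edge-adjacent to $C$ and therefore lies in $C$, not in the hole.

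The finiteness part, however, rests on a false claim. The set $K$ of pairwise crossings of the tracks meeting $E$, together with the bounded faces of their arrangement, is in general not closed and need not contain $\clos{E}$. Take the square lattice, where the rhombi are unit squares $Q_{x,y}$ and the tracks are rows and columns, and let $E=\{Q_{0,0},Q_{2,0}\}$.
\begin{itemize}
\item The tracks meeting $E$ are the row $y=0$ and the columns $x=0$ and $x=2$.
\item Their crossings are exactly the two squares of $E$, and the three strips bound no bounded region, so $K=E$.
\item Yet $Q_{1,0}$ has both $Q_{0,0}$ and $Q_{2,0}$ as neighbours, so $Q_{1,0}\in\clos{E}\setminus K$.
\end{itemize}
This is precisely your case ``$e$ lies between two crossings on a single track''. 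Such a rhombus lies on a track, hence in no face, and it is not a crossing, so nothing in $K$ covers it. The real phenomenon is that the closure can bring in \emph{new} train tracks (here the column $x=1$). No set built only from the tracks meeting $E$ can anticipate them.

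The missing idea, which is how the paper argues, is to control these new tracks.
\begin{itemize}
\item If $e$ is added because it has neighbours across two adjacent sides, both tracks through $e$ already meet the current set.
\item If the two neighbours are across opposite sides and lie in the same component, join them by a path inside the set. The second track through $e$ crosses the resulting loop at $e$ and must cross it again. It cannot do so at the two neighbours, since it would then meet the first track twice. So it passes through a rhombus of the path.
\item Hence a genuinely new track appears only when $e$ merges two components. This happens at most $M-1$ times, where $M$ is the number of components of $E$.
\end{itemize}
By induction, every rhombus of $\clos{E}$ is a crossing of two among at most $2(|E|+M-1)$ tracks. Finiteness follows because two tracks cross at most once.

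Your fallback ``train-track hull'' can be made rigorous, but only with a different family of tracks than the $\ttr_i$. Intersect, over all tracks $\ttr$ disjoint from $E$ with $E$ entirely on one side of $\ttr$, the set of rhombi lying strictly on that side.
\begin{itemize}
\item This set is closed: a rhombus on $\ttr$ or beyond it has at most one neighbour strictly on the $E$-side.
\item It is finite: each of its rhombi outside $E$ lies on two tracks, each of which meets $E$ or separates two rhombi of $E$. There are finitely many such tracks, since a separating track must pass through any fixed path of rhombi joining the two.
\end{itemize}
As written in your proposal, the fallback does not specify this family, so it is not yet an argument.
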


\begin{proof}
	It is clear that $\clos{E}$ can be constructed by adding, one by one, the rhombi that have two or more adjacent rhombi already in $E.$ Let $D$ be the set of those added rhombi $e$ that by adding $e$, we have decreased the number of connected components in the set. We see by induction that when adding rhombi, both train tracks that pass through each new rhombus also pass through a rhombus in $E\cup D$. Since every two train tracks intersect at most once, we see that $|\clos{E}||\leq 4(|E|+N-1)^2,$ where $N$ is the number of connected components in $E.$
	
	If a connected component of $\clos{E}$ contained a hole, then the inside of this hole would satisfy the conditions of Lemma \ref{lem: clos_bdry} with $\gamma_1=\emptyset,$ which is a contradiction.
\end{proof}

Given a subset $E\subset \rhomb{\isG},$ we denote:
\begin{eqnarray}
	\setG{E}&=&\spanned_\R\{\CK{\cdot}{a}:a\in \Cgr(E)\}\\
	\setQ{E}&=&\spanned_\R\{\Gv{\cdot}{a}:a\in \VertSet(E)\}.
\end{eqnarray}

The former is a subspace of the space of all functions that are s-holomorphic outside $E$, while the latter is a subset of the space of all spinors that are s-holomorphic outside $E$ and ramified at (any point of) $E$. Our motivation for introducing the notion of $\clos{E}$ is as follows:
\begin{lem}
	\label{lem: span_clos}
	For every $E$, we have $\setG{E}=\setG{\clos{E}}$ and $\setQ{E}=\setQ{\clos{E}}$.
\end{lem}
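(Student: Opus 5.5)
The plan is to reduce to a single closure step. By the proof of Lemma \ref{lem: clos_properties}, $\clos{E}$ is obtained from $E$ by adding rhombi one at a time, each added rhombus $e$ having at least two neighbours already in the current set. The inclusions $\setG{E}\subset\setG{\clos{E}}$ and $\setQ{E}\subset\setQ{\clos{E}}$ are trivial. So by induction on the number of added rhombi it suffices to show the following: if $e$ has two adjacent rhombi $e_1,e_2$ in $E$, then every $\CK{\cdot}{c}$ with $c\in\pa e$ lies in $\setG{E}$, and every $\Gv{\cdot}{v}$ with $v$ a vertex of $e$ lies in $\setQ{E}$.

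For $\setG{}$, I would use the three-term identity \eqref{eq: CK_three_terms} for the kernels. Let $c_1,c_2,c_3,c_4$ be the four sides of $e$ in counterclockwise order. The rhombi $e_1,e_2$ share sides of $e$ with $E$, so at least two of the $c_i$ already belong to $\Cgr(E)$. There are two cases.

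\emph{Case 1: the two known sides are consecutive}, say $c_1,c_2$. Then \eqref{eq: CK_three_terms}, applied to $(c_1,c_2,c_3)$ with $\sin\theta\neq0$, expresses $\CK{\cdot}{c_3}$ as a real combination of $\CK{\cdot}{c_1}$ and $\CK{\cdot}{c_2}$. Applying it next to $(c_2,c_3,c_4)$ gives $\CK{\cdot}{c_4}$ in the same way. Sign conventions for $\eta$ only change real coefficients, so this is harmless.

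\emph{Case 2: the two known sides are opposite}, $c_1$ and $c_3$. Then \eqref{eq: CK_three_terms} applied to $(c_1,c_2,c_3)$ gives $\CK{\cdot}{c_2}=\cos\theta\,\CK{\cdot}{c_1}+\sin\theta\,\CK{\cdot}{c_3}$ directly. Similarly $(c_3,c_4,c_1)$ gives $\CK{\cdot}{c_4}$.

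For $\setQ{}$, the argument runs over vertices rather than sides. The rhombus $e$ has vertices $v^\circ_1,v^\bullet_1,v^\circ_2,v^\bullet_2$, and each adjacent rhombus contributes the two endpoints of the shared side. Again there are two cases.

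\emph{Case 1: consecutive shared sides.} Three vertices of $e$ are known, and we need the fourth, $w$. I would look for a linear relation among the four spinors $\Gv{\cdot}{u}$, $u$ a vertex of $e$. Consider the integral representations \eqref{eq: Gv_integral_dexp}--\eqref{eq: Gu_integral_dexp}, rewritten with a common base point, say the corner $c$ of $e$ or the centre of $e$. Using the explicit factors $\dexp{2\zeta}{u}{c}$ for the vertices $u$ of $e$, the four integrands become $\zeta^{-1/2}$ times $\dexp{2\zeta}{z}{c}$ times rational factors $(1\pm\zeta e^{i\alpha})^{-1}$ for the corresponding directions. The goal is to find real coefficients making the sum of these rational factors vanish identically. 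If it vanishes, the spinors satisfy the same linear relation.

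The main obstacle is that four such partial fractions with distinct poles are linearly independent, so no such real relation need exist. This suggests the correct mechanism is different: s-holomorphic spinors ramified at different points are combined only via the "cut" viewpoint. As in the proof of Lemma \ref{lem: SCK_exists}, $\Gv{\cdot}{w}$ can be regarded as ramified at a neighbouring vertex $u$ with a pole on the side $(uw)$. I would therefore express $\Gv{\cdot}{w}-(\text{real multiple})\Gv{\cdot}{u}$ as a combination of s-meromorphic objects with poles on sides of $e$, and use the $\setG{}$-step together with a real-dimension count of the span.

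\emph{Case 2: opposite shared sides.} All four vertices of $e$ are already endpoints of sides in $\Cgr(E)$, so this case is trivial.

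So the only real work is Case 1 for $\setQ{}$. Before attempting the cut argument, I would first test the direct partial-fraction computation, in which the factors should collapse because of \eqref{eq: exp_symmetry}-type identities for the exponentials along the rhombus.
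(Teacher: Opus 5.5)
\textbf{The gap is the $\mathcal{Q}$-part with two consecutive sides.} Your reduction to a single closure step is correct and matches the paper. So are both cases for $\mathcal{G}$ and the opposite-sides case for $\mathcal{Q}$. What is missing is the one non-trivial step: $e$ has two consecutive sides in $\Cgr(E)$, and you must show that the spinor $\Gv{\cdot}{w}$ at the fourth vertex $w$ lies in $\setQ{E}$. The proposal leaves this open.

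\textbf{Your stated obstacle is wrong, and the direct computation works.} Take $v^\circ_1$ a primal vertex of $e$ and let $d_1,d_2$ be the unit edge vectors of $e$ at $v^\circ_1$. Then $v^\bullet_1=v^\circ_1+d_1$, $v^\bullet_2=v^\circ_1+d_2$, $v^\circ_2=v^\circ_1+d_1+d_2$. Put $f_d(\zeta)=\frac{1-\zeta d}{1+\zeta d}$. The factors $\dexp{2\zeta}{v^\circ_1}{u}$ are $1,f_{d_1},f_{d_2},f_{d_1}f_{d_2}$ for $u=v^\circ_1,v^\bullet_1,v^\bullet_2,v^\circ_2$. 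So there are only two poles, one per train track through $e$, not four. Now form $\hat R=\sum_u\alpha_u c_u\,\dexp{2\zeta}{v^\circ_1}{u}$ with $c_u=\lamb$ at primal and $c_u=\lambb$ at dual vertices, and expand in partial fractions.
\begin{itemize}
\item The constant term gives two real equations, since $\lamb$ and $\lambb$ are $\R$-independent.
\item Since $(d_1+d_2)/(d_2-d_1)=-\i\cot\frac{\phi}{2}$ with $\phi=\arg(d_2/d_1)$, the coefficient of each $(1+\zeta d_j)^{-1}$ is a real multiple of $\lambb$. Each pole therefore gives a single real equation.
\item One of these four real equations is redundant, leaving rank $3$ in $4$ real unknowns.
\end{itemize}
Solving $\hat R\equiv0$ gives the relation
\[
\Gv{\cdot}{v^\circ_2}-\Gv{\cdot}{v^\circ_1}=\cot\tfrac{\phi}{2}\,\bigl(\Gv{\cdot}{v^\bullet_1}-\Gv{\cdot}{v^\bullet_2}\bigr),
\]
valid outside $e$. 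The signs are those produced by a common branch of $\zeta^{-1/2}$ after rotating to a common ray for far $z$, as in Lemma~\ref{lem: Cauchy_expansion}. Every coefficient is non-zero, so whichever vertex is new, its spinor is a real combination of the other three. The test you postpone to the end would therefore have closed the gap. The "cut" alternative is too vague to count. In particular it cannot reuse the $\mathcal{G}$-step, because the kernels $\CK{\cdot}{a}$ are not spinors.

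\textbf{The paper's route avoids any computation.} For connected $E\ni\root$ it rewrites
$\setQ{E}=\spanned_\R\{\Gv{\cdot}{\root^\circ},\SCkx{a}{\lambb}(\cdot):a\in\Cgr(E)\}$.
The inclusion $\supset$ follows from \eqref{eq: Spinor_Cauchy_via_zmone}. The inclusion $\subset$ comes from running the induction in the proof of Lemma~\ref{lem: SCK_exists} backwards, which is exactly what your cut idea produces. Once $\setQ{E}$ is spanned by spinor kernels indexed by sides, the three-term identity \eqref{eq: CK_three_terms} takes over. That identity holds for $\SCkx{a}{\alpha}$ as well as for $\CK{\cdot}{a}$, so the spinor case becomes literally your $\mathcal{G}$ argument. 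The general case is then handled componentwise.
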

\begin{proof}
	The first assertion is immediate from \eqref{eq: CK_three_terms}.  For the second assertion, assume first that $E$ is connected and that $\root$ is one of the edges of $E$, and note that 
	\begin{equation}
		\label{eq: setQ_alternative_span}
		\setQ{E}=\spanned_\R\{\Gv{\cdot}{\root^\circ},\SCkx{a}{\lambb}(\cdot):a\in\Cgr(E)\}.
	\end{equation}
	Indeed, the $\supset$ inclusion follows readily from \eqref{eq: Spinor_Cauchy_via_zmone}, and for $\subset$, note that the roles of  $\Gv{\cdot}{w_{k+1}}$  and $\SCkx{a}{\lambb}(\cdot)$ in the proof of Lemma \ref{lem: SCK_exists} can be exchanged. The result now follows, once again, from \eqref{eq: CK_three_terms}. In the general case, apply the above reasoning to any connected component of $E$ to see that adding a rhombus with two adjacent sides in that component does not increase $\setQ{E}$, and adding a rhombus with opposite sides in $E$ does not even change $\VertSet(E)$.
\end{proof}

For the next few lemmas, we fix a sequence $\emptyset =E_0\subset E_1\subset E_2\subset \dots \subset \rhomb{\isG}$ such that $E_1=\{e_1\}$ with $\root \in \pa e_1$, and $E_{n+1}=\clos{E_n\cup e_{n+1}},$ where a rhombus $e_{n+1}$ is adjacent to $E_n$. Define $a_n$ be the sequence of corners such that $a_0=
\root$, $a_1$ is the edge of $e_1$ opposite to $a_0$, and $a_n$ is the edge of $e_n$ opposite to the one shared with $E_{n-1}$, for  $n\geq 2.$ By Lemma \ref{lem: span_clos}, each $E_n$ is finite and simply connected.

\begin{lem} \label{lem: negative_monomial_existence}
For each $n\geq 1$  
\begin{enumerate}
	\item the functions $\CK{\cdot}{a_i}$, $0\leq i\leq n,$ form a basis of $\setG{E_n}$. 
	\item the spinors $\Gv{\cdot}{\root^\circ},\Gv{\cdot}{\root^\bullet},$ and $\Gv{\cdot}{a_i^\circ},$ $1\leq i\leq n,$ form a basis of $\setQ{E_n}.$
\end{enumerate}	
\end{lem}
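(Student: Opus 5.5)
Both parts split into a spanning statement and a linear-independence statement, and I would prove them by induction on $n$ along the sequence $E_1\subset E_2\subset\dots$.

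\emph{Spanning.} For part (1), $\Cgr(E_1)$ consists of the four sides of $e_1$. By \eqref{eq: CK_three_terms}, the kernel at any side of a rhombus is a real combination of the kernels at the two sides adjacent to it. Hence $\setG{E_1}$ is spanned by the kernels at two adjacent sides, or equivalently at $a_0=\root$ and at one further side. For the induction step, Lemma \ref{lem: span_clos} gives $\setG{E_{n}}=\setG{E_{n-1}\cup e_{n}}$. The new corners are the sides of $e_n$ not shared with $E_{n-1}$. Since $e_n$ shares the side opposite to $a_n$ with $E_{n-1}$, the three-term relation \eqref{eq: CK_three_terms} applied on $e_n$ expresses every new kernel through the shared side, $a_n$, and the side adjacent to both. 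I expect to need care here: if $e_n$ shares only one side, the adjacent new sides must still be reduced. I would use that two kernels at sides of $e_n$, together with the shared one, span all four (each side is a combination of its two neighbours). So spanning requires only one additional kernel per new rhombus, and $a_n$ suffices provided the relation's coefficients are nondegenerate, which holds since $\cos\theta,\sin\theta\neq0$. For part (2), I would argue identically using \eqref{eq: setQ_alternative_span}: $\setQ{E_n}$ is spanned by $\Gv{\cdot}{\root^\circ}$ and the $\SCkx{a}{\lambb}$ with $a\in\Cgr(E_n)$, which reduce as above to $\SCkx{a_i}{\lambb}$, $0\le i\le n$. I would then translate back to vertex kernels via the triangular relation from the proof of Lemma \ref{lem: SCK_exists}: $\SCkx{a_i}{\lambb}$ differs from a multiple of $\Gv{\cdot}{w}$, where $w$ is the new endpoint, by earlier terms. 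The new endpoint on $e_i$ opposite the shared side is $a_i^\circ$ or $a_i^\bullet$; choosing $a_i^\circ$ matches the statement, and $\root^\bullet$ accounts for $a_0$.

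\emph{Independence.} This is the main obstacle. By counting it suffices to show that the $n+1$ (respectively $n+2$) listed functions are linearly independent. For part (1), suppose $\sum_{i}\alpha_i\CK{\cdot}{a_i}\equiv0$ and let $k$ be the largest index with $\alpha_k\neq0$. The function then has a discrete pole at $a_k$ with residue jump $2\alpha_k\eta_{a_k}\neq0$ by \eqref{eq: Cauchy_value_at_a+}, unless some other $a_i$ with $i<k$ coincides with $a_k$. That cannot happen, because $a_k$ is a side of $e_k$ not lying in $E_{k-1}$. The point to check is that $a_k$ is not a side of $E_{k-1}$: since $e_k\notin E_{k-1}$ and $E_{k-1}$ is closed, $e_k$ has at most one neighbour in $E_{k-1}$, so only the shared side belongs to $\Cgr(E_{k-1})$, and the opposite side $a_k$ is new.

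For part (2) I would use the same top-index argument with monodromy in place of poles. The spinor $\Gv{\cdot}{a_k^\circ}$ is ramified at $a_k^\circ$, which is a new vertex not in $\VertSet(E_{k-1})$ by the same closedness argument. A nontrivial combination with top index $k$ therefore has nontrivial monodromy around $a_k^\circ$, which contradicts its vanishing. The base elements $\Gv{\cdot}{\root^\circ}$ and $\Gv{\cdot}{\root^\bullet}$ are ramified at distinct points, so they are independent as well.
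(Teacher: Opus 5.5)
Your spanning argument is essentially the paper's: the three-term relation \eqref{eq: CK_three_terms} plus Lemma~\ref{lem: span_clos}. The independence argument, however, has a genuine gap.

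The spaces $\setG{E}$ and $\setQ{E}$ consist of functions (spinors) regarded \emph{outside} $E$, i.e.\ through their values away from $E$ and at infinity. This is the sense in which \eqref{eq: CK_three_terms} and Lemma~\ref{lem: span_clos} hold, and it is the sense you use for spanning. At the rhombus itself \eqref{eq: CK_three_terms} fails: $\CK{\cdot}{a_2}$ jumps by $2\eta_{a_2}$ at $a_2$, while $\cos\theta\,\CK{\cdot}{a_1}+\sin\theta\,\CK{\cdot}{a_3}$ does not jump there.

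In this sense the jump at $a_k\in\Cgr(E_n)$ and the monodromy around $a_k^\circ\in\VertSet(E_n)$ are invisible. Outside the simply connected set $E_n$, all $\Gv{\cdot}{w}$ with $w\in\VertSet(E_n)$ live on one and the same double cover. Likewise, a combination of Cauchy kernels can vanish outside $E_n$ while having nonzero jumps inside.

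Your criterion therefore proves too much. It would show that the kernels at three consecutive sides of a rhombus are independent, contradicting \eqref{eq: CK_three_terms}. It would also show that $\Gv{\cdot}{a_1^\bullet}$, the only spinor ramified at $a_1^\bullet$, is independent of $\Gv{\cdot}{o^\circ},\Gv{\cdot}{o^\bullet},\Gv{\cdot}{a_1^\circ}$. That contradicts $\dim\setQ{E_1}=3$, which is part of the very statement you are proving. Conversely, if you regard the kernels on the whole modified corner graph, your jump argument is valid but spanning fails: all four kernels at the sides of $e_1$ would then be independent.

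What is missing is a global argument that a nontrivial combination does not vanish near infinity. The paper proceeds as follows.
\begin{itemize}
\item It writes $F=\CK{\cdot}{a_{n+1}}-\sum_{i\le n}\alpha_i\CK{\cdot}{a_i}$ as $\proj{\eta_z}{\int_{l_1}R(\zeta)\dexp{2\zeta}{z}{a_0}\,d\zeta}$, where $R=R_{n+1}-\sum_i\alpha_iR_i$ is rational.
\item It observes that the order of the pole of $R_i$ at a point $\tau$ of the unit circle counts the train tracks with transversal direction $\tau$ separating $a_0$ from $a_i$.
\item Using closedness of $E_n$ and Lemma~\ref{lem: clos_bdry}, it proves that the train track $\ttr$ through $e_{n+1}$ with $a_{n+1}$ on its right boundary separates $E_n$ from $a_{n+1}$.
\item Hence $R_{n+1}$ has a strictly higher-order pole in the direction of $\ttr$ than every $R_i$ with $i\le n$, so $R\not\equiv 0$. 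The Laplace-method asymptotics of Lemma~\ref{lem: Cauchy_expansion} then give $F\not\equiv 0$.
\end{itemize}
The spinor case is identical, with the integrand $\zeta^{-1/2}\hat R(\zeta)$ and $\hat R$ a combination of the $\dexp{2\zeta}{o^\circ}{w}$.

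This separation statement is also what shows that $a_k^\circ$ is a new vertex at all. Closedness by itself only excludes a second shared side, not a shared vertex.
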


\begin{proof}
	For $n=1$, the first assertion follows from the fact that $\CK{\cdot}{a_0}$ and $\CK{\cdot}{a_1}$ are linearly independent by \eqref{eq: G_a_asymp}, taking into account  $\eta_{a_1}=\pm\i \eta_{a_0}$. In view of \eqref{eq: CK_three_terms} and Lemma \ref{lem: span_clos}, we have \[\setG{E_{n+1}}=\setG{E_n\cup\{e_{n+1}\}\}}=\spanned_\R\{\setG{E_{n}},\CK{\cdot}{a_{n+1}}\},\] hence, to prove the first assertion by induction in $n$, it suffices to show that $\CK{\cdot}{a_{n+1}}\notin \setG{E_n}.$ Let $\ttr$ be the oriented train track passing through $e_{n+1}$ and having $a_{n+1}$ on its right boundary. We claim that $\ttr$ separates $E_n$ from $a_{n+1}.$   Indeed, otherwise, $\ttr$ must enter $E_n$ (in one of the directions); consider the last rhombus $e\in\ttr$ before it does so. If the left boundary of $\ttr$ between $e_{n+1}$ and $e$ runs along $\pa E_n$, then $e$ has two edges adjacent to $E_n$, the one on the left boundary of $\ttr$ and the one trough which $\ttr$ enters $E_n$, which contradicts the closedness of $E_n$. Else, if the left boundary of $\ttr$ deviates from $\pa E_n$, then the set of rhombi enclosed between the two satsifies the conditions of Lemma \ref{lem: clos_bdry} with $\gamma_2\subset \pa E_n$,  which again contradicts the closedness of $E_n$.
	
	Given a linear combination $F(\cdot)=\CK{\cdot}{a_{n+1}}-\sum^n_{i=0} \alpha_i \CK{\cdot}{a_{i}},$ we can now write, as in the proof of Lemma \ref{lem: Cauchy_expansion},
	\[
	F(z)=\proj{\eta_z}{\int_{l_{1}}R(\zeta)\dexp{2\zeta}z{a_{0}}\,d\zeta}
	\] 
	with $R(\zeta)=R_{n+1}(\zeta)-\sum^{n}_{i=1}\alpha_{i}R_i(\zeta),$ and 	
	\begin{equation}
		\label{eq: def_R_i}
	R_i(\zeta)=-\frac{4\bar{\eta}_{a_{i}}}{\pi}\frac{\dexp{2\zeta}{a_{0}}{a_{i}}}{1-\zeta^{2}e^{2\i\alpha_{a_{i}}}}.
	\end{equation}
	We note that each $R_i$ is a rational function, and for any $\tau\in\C$, the pole of $R_i(\zeta)$  at $\zeta=\tau$ has order $m$, where $m$ is the number of edges in the direction of $\tau$ in a path from $a_0$ to $a_i$, minus the number of edges in the direction $-\tau$. This difference, which is independent from the path, is equal to the number of train tracks with transversal direction $\tau$ separating  $a_0$ from $a_i$.  Taking $\tau$ to be the direction of the train track $\ttr$, it follows trom the above consideration that $R_{n+1}$ has a pole at $\zeta=\tau$ of higher order than any of $R_i$ with $i\leq n$. Thus, $R\not\equiv 0$ and hence $F(\cdot)\not\equiv 0,$ for example, by computing its $z\to\infty$ asymptotics as in Lemma \ref{lem: Cauchy_expansion}.
	
	We proceed to establishing the second assertion, which is done similarly.  Recall that $\Gv{\cdot}{\root^\circ}$, $\Gv{\cdot}{\root^\bullet}$ are linearly independent  by \eqref{eq: asymp_zhalf_lamb}--\eqref{eq: asymp_zhalf_lambb}. Also, by \eqref{eq: Spinor_Cauchy_via_zmone}, we have 
	\[
	\SCkx{\hat c_n}{\lambb}(\cdot){}(\cdot)\in\spanned_{\R}\{\setQ{E_{n}},\Gv{\cdot}{a_{n+1}^\circ}, \} \quad n\geq 1
	\]
	where $\hat c_n$ is the edge of $e_{n+1}$ connecting  $a_{n+1}^\circ$ to $\VertSet(E_n)$. By \eqref{eq: setQ_alternative_span}, $\SCkx{\check c_n}{\lambb}(\cdot)\in\setQ{E_{n}}$, where $\check c_n$ is the edge of $e_{n+1}$ shared with $E_n$. By \eqref{eq: CK_three_terms} and \eqref{eq: setQ_alternative_span} and Lemma \ref{lem: span_clos}, this means that 
	\[
	\spanned_{\R}\{\setQ{E_{n}},\Gv{\cdot}{a_{n+1}^\circ}\}=\setQ{E_{n}\cup\{e_{n+1}\}}=\setQ{E_{n+1}}.
	\]
	The same holds true for $n=0,$ with $\VertSet(E_n)$ and $\setQ{E_n}$ replaced by $\{\root^\bullet,\root^\circ\}$ and $\setQ{\{\root^\bullet,\root^\circ\}}$ respectively. Therefore, it remains to establish that for each $n\geq 0$,  $\Gv{\cdot}{a_{n+1}^\circ}$ is linearly independent of $\Gv{\cdot}{\root^\circ},\Gv{\cdot}{\root^\bullet},$ and $\Gv{\cdot}{a_i^\circ},$ $1\leq i\leq n.$
	
	Given a linear combination $G(z)=\Gv{\cdot}{a_{n+1}^\circ}-\sum_{i=1}^n\alpha_i\Gv{\cdot}{a_i^\circ}-\alpha \Gv{\cdot}{\root^\circ}-\beta \Gv{\cdot}{\root^\bullet},$ write 
	\begin{equation}
		\label{eq: G_integral_formula}
		G(z)=\proj{\eta_z}{\int_{l_{1}}\zeta^{-\frac12}\hat{R}(\zeta)\dexp{2\zeta}z{\root^\circ}\,d\zeta}
	\end{equation}
	with $\hat{R}$ a linear combination of $\dexp{2\zeta}{\root^\circ}{w}$ for $w\in \{a_i^{\circ}\}_{i=0,\dots,n+1}.$ As above, $\dexp{2\zeta}{\root^\circ}{a_{n+1}}$ (which enters the linear combination with coefficient $1$) has higher order of pole at $\tau$ corresponding to the transverse direction of $\ttr$ than other terms. Therefore, $\hat{R}$ does not vanish identically and neither does $G$.
	 \end{proof}

\begin{rem}
	\label{rem: various_bases}
	It is clear from the proof that $G_{[\root^\circ]}$ and $\SCkx{a_i}{}(\cdot),$  $i=0,\dots,n,$ also forms a basis of $\setQ{E_n}$. Moreover, due to \eqref{eq: CK_three_terms}, the same results hold true if we choose $a_n$ to be \emph{any} edge of $e_n$  not shared with $\Cgr(E_{n-1})$. 
\end{rem}

\begin{lem}
	\label{lem: no_degree_jumps}
	For  each $k\in\N$ and each $\alpha\in \C,$ 
	\begin{enumerate}
		\item there exists a unique function $[\alpha z^{-k}]\in \setG{E_{2k-1}}$ such that \[[\alpha z^{-k}]=\proj{\eta_z}{\alpha z^{-k}}+O(|z|^{-k-1}).\] 
		\item there exists a unique spinor $[\alpha z^{-k-\frac12}]\in \setQ{E_{2k}}$ such that 
		\[[\alpha z^{-k-\frac12}]=\proj{\eta_z}{\alpha z^{-k-\frac12}}+O(|z|^{-k-\frac32}).\] 
	\end{enumerate}
		The functions $[ z^{-i}],[\i z^{-i}]$, $1\leq i\leq k$, form a basis of $\setG{E_{2k-1}},$ and the spinors $[ z^{-i-\frac12}],[\i z^{-i-\frac12}]$, $0\leq i\leq k$, form a basis of $\setQ{E_{2k}}.$

\end{lem}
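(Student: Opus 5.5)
The plan is a dimension count combined with the asymptotic dichotomy of Lemmas \ref{lem: Cauchy_expansion} and \ref{lem: Cauchy_expansion_spinor}. By Lemma \ref{lem: negative_monomial_existence}, $\dim_\R\setG{E_{2k-1}}=2k$ and $\dim_\R\setQ{E_{2k}}=2k+2$.

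Consider the real linear map $\Phi$ sending $F\in\setG{E_{2k-1}}$ to its vector of leading coefficients. Every nonzero $F$ has an expansion $\proj{\eta_z}{\alpha z^{-m}}+O(|z|^{-m-1})$ with $m\in\N$ and $\alpha\neq0$. For $F\in\setG{E_{2k-1}}$ I will define a filtration $V_m$, the subspace of those $F$ with $F=O(|z|^{-m})$. A nonzero element of $V_m$ then has leading order $m'\geq m$, so $V_m\supset V_{m+1}$.

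The map $F\mapsto\alpha$ (the coefficient of $z^{-m}$) is real linear from $V_m$ to $\C$, with kernel $V_{m+1}$. The coefficient is determined by the asymptotics, because $\proj{\eta_z}{\alpha z^{-m}}$ over all corner directions $\eta_z$ determines $\alpha$. Hence $\dim V_m/V_{m+1}\le 2$.

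The key step is to show $V_{k+1}=\{0\}$. Then $2k=\dim V_1\le 2k$, which forces each quotient $V_m/V_{m+1}$ to have dimension exactly $2$ for $m=1,\dots,k$. This gives existence, for every $\alpha$, of an element of $V_k$ with leading term $\alpha z^{-k}$, and uniqueness, since two such elements differ by an element of $V_{k+1}=0$. The basis claim then follows the same way: $[z^{-i}],[\i z^{-i}]$ for $i\le k$ span $V_i/V_{i+1}$ successively, giving $2k$ independent elements in a $2k$-dimensional space. The spinor case is identical, with orders $m\in\N-\frac12$, $\dim=2k+2$, and the target $V_{k+3/2}=0$.

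To prove $V_{k+1}=0$, the main obstacle, I would use the integral representation. By Lemma \ref{lem: negative_monomial_existence}, $F$ corresponds to $R(\zeta)=\sum_{i=0}^{2k-1}\alpha_iR_i(\zeta)$, with $R_i$ as in \eqref{eq: def_R_i}, and the Laplace method in Lemma \ref{lem: Cauchy_expansion} shows that the decay order of $F$ equals the vanishing order of $R$ at $\zeta=0$ plus one. It therefore suffices to show that a nonzero combination of the $2k$ rational functions $R_i$ cannot vanish to order $k$ at $0$.

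Each $R_i$ equals $\dexp{2\zeta}{a_0}{a_i}(1-\zeta^2e^{2\i\alpha_{a_i}})^{-1}$, a product over steps. I would clear the common denominator and use the symmetry \eqref{eq: exp_symmetry}, $\zeta\mapsto\bar\zeta^{-1}$. This symmetry relates vanishing at $0$ to vanishing at $\infty$: vanishing of order $k$ at $0$ would give a mirrored constraint at $\infty$, and together with the bounded degree of the numerator this forces a count exceeding the available degree. The degree bound comes from train-track counting, since $E_{2k-1}$ is crossed by few train tracks.

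An alternative route is to avoid this computation and use the residue pairing \eqref{eq: orthogonality}-type identity. If $F\in V_{k+1}$, pairing $F$ with the positive-degree s-holomorphic polynomials up to degree $k-1$ gives zero by pulling the contour to infinity. By \eqref{eq: Discrete residue formula}, these pairings equal real linear combinations of the values of those polynomials at the poles. Showing that the $2k\times 2k$ pairing matrix is nondegenerate, namely that discrete polynomials of degree below $k$ separate the $2k$ residue functionals on $E_{2k-1}$, is the real difficulty. I would try the first route first, by induction on $k$, showing that adding the two rhombi $e_{2k},e_{2k+1}$ raises the maximal achievable decay by exactly one.
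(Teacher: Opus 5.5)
Your proposal is correct and is essentially the paper's proof. The paper also reduces everything to showing that no nonzero element of $\setG{E_{2k-1}}$ (resp.\ $\setQ{E_{2k}}$) decays one order faster than $|z|^{-k}$ (resp.\ $|z|^{-k-\frac12}$), and proves this by your first route: counting zeros against poles of $R$ (resp.\ $\hat R$), using \eqref{eq: exp_symmetry} and the train-track bound. Your filtration and dimension count is only a tidier packaging of the paper's induction, which adjoins $\CK{\cdot}{a_{2k}},\CK{\cdot}{a_{2k+1}}$ and subtracts the lower monomials.

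One point to keep track of when you write out the key step: for the Cauchy kernels the symmetry is not a literal mirror. It is $R(\zeta)\,d\zeta$, not $R$ itself, that transforms covariantly under $\zeta\mapsto\bar\zeta^{-1}$. So vanishing of order $k$ at $0$ forces vanishing of order $k+2$ at $\infty$, and it is these $2k+2$ zeros that outnumber the poles controlled by the $2k$ train tracks crossing $E_{2k-1}$; order $k$ at $\infty$ alone would not give a contradiction. In the spinor case $\hat R$ is untwisted, and vanishing of order $k+1$ at both $0$ and $\infty$ beats the $2k+1$ train tracks crossing $E_{2k}$.
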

\begin{proof}
	We proceed by induction in $k$. Thanks to Lemma \ref{lem: Cauchy_expansion} and the induction hypothesis, we can inductively find
	$\alpha_{i},\beta_{i}$ and $\hat{\alpha}_{i},\hat{\beta}_{i}$ such that 
	\begin{eqnarray*}
		\CK{z}{a_{2k}}&-&\sum^{k}_{i=1}\left(\alpha_{i}[z^{-i}]+\beta_{i}[\i z^{-i}]\right)=O(z^{-k-1}),\\
		\CK{z}{a_{2k+1}}&-&\sum^{k}_{i=1}\left(\hat{\alpha}_{i}[z^{-i}]+\hat{\beta}_{i}[\i z^{-i}]\right)=O(z^{-k-1});
	\end{eqnarray*}
	and it follows from Lemma \ref{lem: negative_monomial_existence} that these functions are linearly independent. Applying Lemma \ref{lem: Cauchy_expansion} again,  we see that either these functions have asymptotics $\proj{\eta_z}{\alpha z^{-k-1}}+O(z^{-k-2})$ and $\proj{\eta_z}{\hat{\alpha} z^{-k-1}}+O(z^{-k-2})$ with some $\alpha,\hat{\alpha}\in \C$ linearly independent over $\R$, in which case the induction step is complete, or $\setG{E_{2k+1}}$ contains a function $F(\cdot)=\sum_{i=0}^{2k+1}\gamma_i \CK{\cdot}{a_i}$ with asymptotics $F(z)=\proj{\eta_z}{\gamma z^{-m}}+O(z^{-m-1})$, where $\gamma\neq 0$ and $m\geq k+2$. Laplace's method computation as in the proof of Lemma \ref{lem: Cauchy_expansion} shows that then, the rational function $R(\zeta)=\sum_{i=0}^{2k+1}\gamma_i R_i(\zeta)$ has a zero of order at least $k+1$ at the origin. The symmetry \eqref{eq: exp_symmetry} implies that $R(\zeta) d\zeta=R(\bar{\zeta}^{-1})d(\bar{\zeta}^{-1}),$ that is, $R(\zeta)$ also has a zero of order at least $k+3$ at infinity. On the other hand, recall that the poles of $R_i$ are located at the points of the unit circle corresponding to the transversal directions of train tracks passing through rhombi of $E_{2k+1}.$ There are in total $2k+2$ such train tracks: two passing through $e_1$ and one added when adjoining each $e_i$, $2\leq i\leq 2k+1,$ since as noted in the proof of Lemma \ref{lem: clos_properties}, taking a closure does not add any new train tracks. Therefore, $R(\zeta)$ has at most $2k+2$ poles and at least $2k+4$ zeros, counted with multiplicities, which is impossible.
	
	For the second item, we similarly observe that if the assertion is not true, we can find a spinor $
	G\in\setQ{E_{2k}}$ asymptotic to $\proj{\eta_z}{\gamma z^{-m-\frac12}}$ for $m>k$. This spinor can then be written in the form \eqref{eq: G_integral_formula} with $\hat{R}=\sum_{w}\gamma_{i}\hat{R}_{w_i},$ where $\hat{R}_{w}=\dexp{2\zeta}{\root^\circ}{w},$ and the sum is over (a subset of) the vertices in $\VertSet(E_{2k}).$ The asymptotics of $G$ implies that $\hat{R}$ must have a zero of order at least $k+1$ at the origin, and the symmetry \eqref{eq: exp_symmetry} ensures that it also has a zero of order  $k+1$ at infinity. On the other hand, the total number of poles of $\hat{R},$ counted with multiplicity, is at most the number of train tracks intersecting $E_{2k},$ which is $2k+1$,  leading to a contradiction.

	For the last assertion, note that the $2k$ functions $[ z^{-i}],[\i z^{-i}]$ (respectively, $2k+2$ spinors $[ z^{-i-\frac12}],[\i z^{-i-\frac12}]$) belong to $\setG{E_{2k-1}}$ (respectively, $\setQ{E_{2k}}$) and are clearly linearly independent.
\end{proof}

\iffalse
\begin{lem}
	\label{lem: approx_CK_E_n}
For each $a\in\Cgr(\isG)$ and each $k$, there exists $H\in\setG{E_{2k-1}}$ such that
\[
\CK{z}{a}=H(z)+O(|z|^{-k-1}).
\]	
\end{lem}
\begin{proof}
Find $H(z)=\sum^{k}_{i=1}\left(\alpha_{i}[z^{-i}]+\beta_{i}[\i z^{-i}]\right)$ inductively by applying Lemma \ref{lem: Cauchy_expansion}.
\end{proof}	

\fi

\iffalse

\begin{lem}
	For every $n\geq 1$ and every $\alpha \in \C$ 
	\begin{enumerate}
		\item the (real) vector space of s-holomorphic functions $F$  on $\isG$ satisfying $|F(z)|=o(|z|^n)$ as $z\to\infty$ has dimension $2n$, and contains a function satisfying \[F(z)=\proj{\eta_z}{\alpha z^{n-1}}+O(z^{n-2});\]	
		
		\item the (real) vector space of s-holomorphic spinors $F$  on $\isG_{[\root^\circ]}$ satisfying $|F(z)|=o(|z|^{n+\frac12})$ as $z\to\infty$ has dimension $2n+1$, and contains a spinor satisfying \[F(z)=\proj{\eta_z}{\alpha z^{n-\frac12}}+O(z^{n-\frac32}).\]	 
	\end{enumerate}
	
\end{lem}

\fi
	
\begin{lem}
	\label{lem: good_mono_positive}
For  every $\alpha \in \C$ and every every $n\in \Z_{\geq 0}$ (respectively, $n\in \Z_{\geq 0}+\frac12$), there exists an  s-holomorphic function $F$ on $\isG$ (respectively, an s-holomorphic spinor $F$  on $\isG_{[\root^\circ]}$) satisfying \[F(z)=\proj{\eta_z}{\alpha z^{n}}+O(z^{n-1}).\]	
	
\end{lem}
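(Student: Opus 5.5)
We construct positive monomials directly from the integral representation by discrete exponentials, in the same spirit as the Cauchy kernel. For an integer $n\ge 0$ and $\alpha\in\C$ we set
\[
F(z):=c_n\,\proj{\eta_z}{\oint_{|\zeta|=r}\frac{\dexp{2\zeta}{z}{\root}}{\zeta^{n+1}}\,d\zeta},
\qquad c_n:=\frac{\alpha\, n!}{2^{n}\cdot 2\pi\i},
\]
where the contour is a small circle around $0$. Since $\dexp{2\zeta}{z}{\root}$ is a rational function of $\zeta$ with no pole at $0$, this is simply $\alpha n!2^{-n}$ times the Taylor coefficient of $\zeta^n$, i.e. $\alpha\,\partial_\zeta^n|_{\zeta=0}\dexp{2\zeta}{z}{\root}/2^n$.

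\emph{S-holomorphicity.} The discrete exponential satisfies the real three-term relation \eqref{eq: three-term-real} in $z$ for each fixed $\zeta$ (\cite[Lemma 5.2]{chelkak2023universality}). Taking $\zeta$-derivatives preserves this, and the projection $\proj{\eta_z}{\cdot}$ fixes the phase condition. One point needs care: in the Cauchy kernel construction the phase was obtained from the symmetry \eqref{eq: exp_symmetry}, not from a bare projection. Here I would check that for a function $H$ satisfying \eqref{eq: three-term-real} with complex values, $\proj{\eta_z}{H}$ again satisfies it. Writing $\proj{\eta_z}{H}=\eta_z\re(\bar\eta_z H)$, the relation is real-linear in $\bar\eta H$ only if $\bar\eta_z H(z)$, not $H(z)$, satisfies \eqref{eq: three-term-real}. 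So I would work with the combination $\bar\eta_z\,\dexp{2\zeta}{z}{\root}$ as in the derivation of $\CK{\cdot}{a}$ (via $I$ and its conjugate), i.e. define
\[
F:=\tfrac12\bigl(\text{coefficient expression}+\eta_z^2\,\overline{\text{same}}\bigr),
\]
which is exactly how $\CK{\cdot}{a}$ and $\Gv{\cdot}{\root^\bullet}$ were shown to be s-holomorphic. Checking that this yields a genuinely s-holomorphic $F$ is the main obstacle. I expect it to reduce, via \eqref{eq: exp_symmetry}, to the fact that both $\zeta\mapsto\dexp{2\zeta}{\cdot}{\root}$ and $\zeta\mapsto\dexp{2\bar\zeta^{-1}}{\cdot}{\root}$ give three-term solutions.

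\emph{Asymptotics.} By \eqref{eq: exp_close}, $\dexp{2\zeta}{z}{\root}=\exp(2\zeta z+O(|\zeta|^2+|\zeta|^3|z|))$ with $\root$ at the origin. Expanding the exponent in powers of $\zeta$ gives polynomial corrections whose degree in $z$ is controlled. The $\zeta^n$ coefficient is $\frac{(2z)^n}{n!}+P(z)$ with $\deg P\le n-1$, because each factor $\zeta^j$ with $j\ge2$ in the correction carries at most $|z|^{j-2}$ or $|z|^{j-1}$ weight, so the total degree is below $n$. To make this precise I would show $\log\dexp{2\zeta}{z}{\root}=2\zeta z+\sum_{j\ge2}\zeta^j p_j(z)$ with $\deg p_j\le j-2$. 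This follows from \eqref{eq: def-Im-Int}-type telescoping: each step contributes $\pm\log(1\pm2\zeta\Delta_i)$, and the $\zeta^j$ terms sum to $\sum s_i\Delta_i^j$ up to constants, which pair off along even steps as in the proof of \eqref{eq: exp_close}, giving growth $O(|z|)$ for odd $j$ and $O(1)$ for even $j$. This yields $F(z)=\proj{\eta_z}{\alpha z^n}+O(|z|^{n-1})$.

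\emph{Spinor case.} For $n\in\Z_{\ge0}+\frac12$ the same construction is applied to the spinor kernel $\zeta^{-1/2}\dexp{2\zeta}{z}{\root^\circ}$ from \eqref{eq: Gu_integral_dexp}. I would define $F$ via the Hankel-type contour integral $\int_{l}\zeta^{-n-1}\dexp{2\zeta}{z}{\root^\circ}d\zeta$ around the branch cut, normalized by $\Gamma$-function constants so that the leading term is $\alpha z^n$, again symmetrized with $\eta_z^2\overline{(\cdot)}$. A Laplace-type expansion as in Lemma \ref{lem: Cauchy_expansion} gives the asymptotics. Ramification at $\root^\circ$ comes from the square root, exactly as for $\Gu{\cdot}{\root^\circ}$.
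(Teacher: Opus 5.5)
Your route is the paper's. There, $F=\proj{\eta_z}{\beta\int_\gamma\zeta^{-n-1}\dexp{2\zeta}{z}{\root^\circ}\,d\zeta}$, integrated over a Hankel contour along the ray $\overline{(z-\root^\circ)}\R_{\ge 0}$; s-holomorphicity is inherited from the discrete exponentials, and the leading term comes from small $\zeta$. For integer $n$ the two rays cancel and the contour collapses to your small circle, so your Taylor-coefficient version is the same object. Using the base point $\root$ instead of $\root^\circ$ only multiplies the exponential by a $z$-independent rational function of $\zeta$. Your exact expansion of $\log\dexp{2\zeta}{z}{\root}$ is a fine, even sharper, substitute for Laplace's method there. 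One small slip: the $\zeta^j$-coefficient is a multiple of $\sum_i\Delta_i^j$ for odd $j$ and of $\sum_i s_i\Delta_i^j$ for even $j$, which is exactly why the former grow like $|z|$ while the latter stay bounded.

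Two things in your write-up must be fixed.

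First, the constant has to sit inside the projection. $\proj{\eta_z}{\cdot}$ is only real-linear, so with $c_n\notin\R$ your displayed $F=c_n\proj{\eta_z}{\cdots}$ violates \eqref{eq: phase}. Its leading term is also $c_n\proj{\eta_z}{2\pi\i(2z)^n/n!}$, not $\proj{\eta_z}{\alpha z^n}$. Take $F(z)=\proj{\eta_z}{c_n\oint\zeta^{-n-1}\dexp{2\zeta}{z}{\root}\,d\zeta}$, as the paper does with $\beta$.

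Second, what you call the main obstacle is not one. Relation \eqref{eq: three-term-real} is, by its form, a relation with the real coefficients $\cos\theta,\sin\theta$ among the numbers $\bar\eta_wH(w)$. This is exactly what holds for the complex-valued $H=\dexp{2\zeta}{\cdot}{x_0}$ for every fixed $\zeta$. Being complex-linear in $H$, it survives the $\zeta$-integration and multiplication by $c_n$. Because the coefficients are real, it also survives passing to $\re(\bar\eta_wH(w))$. Since $\bar\eta_w\proj{\eta_w}{H(w)}=\re(\bar\eta_wH(w))$, the projection is s-holomorphic at once. Your ``symmetrization'' $\frac12(H+\eta_z^2\bar H)$ is literally $\proj{\eta_z}{H}$. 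Also, \eqref{eq: exp_symmetry} plays no role here: for $\CK{\cdot}{a}$ it was used only to show that the unprojected integral over the whole ray already equals its own projection, whereas you project by hand.
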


\begin{proof}
	(cf. \cite[Lemma 10]{DubedatDimers}.) 	In both cases, we put 
	\[
	F(z)=\proj{\eta_z}{\beta \int_\gamma \zeta^{-n-1}\dexp{2\zeta}{z}{\root^\circ}\,d\zeta}, 
	\] 
with suitable $\beta$, where the contour $\gamma$ traces the ray $\overline{(z-\root^\circ)}\R_{\geq 0}\cap\{|z|\geq \eps\},$ then follows counterclockwise the circle $\{|z|=\eps\},$ then traces the same ray in the reverse direction, where $\eps\ll 1.$ Compared to the construction of $\CK{\cdot}{a}$ and $\Gv{\cdot}{\root^\circ},$ only small values of $\zeta$ contribute to the leading term of the asymptotics, giving after the rescaling
	\begin{multline*}
	\int_\gamma \zeta^{-n-1}\dexp{2\zeta}{z}{\root^\circ}\,d\zeta=
	(\root^{\circ}-z)^{n} \int_{\hat\gamma_z} \mu^{-n-1}e^{-2\mu}\,d\mu  + O(|z-\root^\circ|^{n-1})\\
=	\alpha z^n+O(|z|^{n-1}),
\end{multline*}
	where $\hat\gamma_z$ consists of the segments of the real line  between $\eps$ and $|z-\root^\circ|^\frac12$ and a small counterclockwise circle aroung the origin, and $\alpha =\lim_{w\to -n}\Gamma(w)(1-e^{-2\pi \i w})\neq 0.$ 
	
	Since the discrete exponentials satisfy the relation \eqref{eq: three-term-real}, so does the above integral, even after multiplying it by any complex number $\beta$. Projecting it onto $\eta_z$ preserves that relation and hence produces an s-holomoprhic function (respectively, spinor) with the required asymptotics.
\end{proof}

\iffalse

\begin{proof}
	For the upper bound, it suffices to show that any such function $F$ vanishing at $a_0,
	\dots, a_{2n-1}$ vanishes identically. Applying \eqref{eq: Cauchy_integral_formula} to such $F$ and $G=\CK{z}{a}$, and using Lemmas \ref{lem: approx_CK_E_n} and \ref{lem: negative_monomial_existence}, we get 
	\begin{multline}
		4\bar{\eta}_aF(a)=\im \oint F(z)\CK{z}{a} \,dz= \im \oint F(z) (H(z)+O(|z|^{-n-1}))\,dz\\
		=\sum^{2n-1}_{i=0}\im\oint \alpha_i F(z)\CK{z}{a_i}+\im \oint F(z) O(|z|^{-n-1})\,dz	\\	
		=\sum^{2n-1}_{i=0}4\bar{\eta}_{a_i} F(a_i)+\im \oint F(z) O(|z|^{-n-1})\,dz=0,
		\end{multline}
	where the last integral is shown to vanish by pulling the contour to infinity.	
	\end{proof}

\fi

\begin{thm}
	\label{thm: good_monomials}
	On any isoradial graph tiling the plane, a good set of monomials exists. 
\end{thm}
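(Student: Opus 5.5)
The plan is to take the negative monomials from Lemma \ref{lem: no_degree_jumps} and build the non-negative ones from the existence Lemma \ref{lem: good_mono_positive}, then correct the latter by finite linear combinations of negative monomials (and lower-degree positive ones) so that properties (2) and (4) hold. For $n<0$ integer, let $[\alpha z^{n}]$ be the element of $\setG{E_{2|n|-1}}$ given by Lemma \ref{lem: no_degree_jumps}; for $n=-k-\frac12$ take the spinor in $\setQ{E_{2k}}$. These are s-meromorphic with poles in the finite set $\Cgr(E_{2|n|})$, so (1) and (3) hold. Property (2) for negative $n$ is about values at fixed $z$ for large $n$. This does not follow directly, so we will arrange it by modifying the positive monomials and reading (2) as a condition only on $n\ge0$. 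Its first clause ("$[z^n]=0$ for $n$ large at fixed $z$") is meant for the positive powers, as in the continuum: $z^n$ at small $z$.

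\emph{Orthogonality among negative monomials.} For $n,m<0$, the integrand $[\alpha z^n][\beta z^m]$ is $O(|z|^{-2})$ by (3). Pulling the contour to infinity therefore gives $0$, which is consistent with \eqref{eq: orthogonality} because $n+m\le -2$ (or $\le -1$ in the spinor case, where $n+m=-1$ is impossible for two negative half-integers). So only pairings involving a non-negative index need work.

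\emph{Positive monomials.} For $n\ge0$ take $F$ from Lemma \ref{lem: good_mono_positive} with leading term $\alpha z^n$. Pairing $F$ with a negative monomial $[\beta z^{-j}]$ by the residue formula \eqref{eq: Discrete residue formula} reduces to finitely many values of $F$ at corners of $E_{2j}$. Pulling the contour to infinity instead, the pairing is computed by expanding $F$ in its asymptotic series. The condition to achieve is
\[
\im\oint [\alpha z^n][\beta z^{-j}]\,dz=\im(2\pi\i\alpha\beta)\,\delta_{n,j-1}\qquad\text{for all }j\ge1 .
\]
Two steps achieve this.

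First, the pairing with $[\beta z^{-j}]$ for $j>n+1$: by the discrete Cauchy formula \eqref{eq: Disc_Cauchy_rec}, and since $[\beta z^{-j}]\in\setG{E_{2j-1}}$ is a combination of Cauchy kernels, the pairing equals a linear combination of values $F(a_i)$. I will normalise $F$ so that these vanish: subtract from $F$ a combination of the lower positive monomials $[z^{i}],[\i z^{i}]$, $i<n$, together with suitable solutions vanishing near the origin. Concretely, the cleanest route is to define $[\alpha z^n]$ as the unique s-holomorphic function with growth $O(|z|^n)$, leading term $\alpha z^n$, and $\im\oint[\alpha z^n]\,G=0$ for every $G\in\setG{E_{2n+1}}$ with $G=O(|z|^{-n-2})$, except in the prescribed direction. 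The uniqueness and dimension count come from Lemma \ref{lem: no_degree_jumps}: the space of s-holomorphic $F=O(|z|^n)$ has real dimension $2n+2$, which matches $\dim\setG{E_{2n+1}}$, so the pairing is a nondegenerate duality. Injectivity holds because if $F=O(|z|^n)$ is annihilated by all of $\setG{E_{2n+1}}$, then $F$ vanishes on $\Cgr(E_{2n+1})$ by \eqref{eq: Disc_Cauchy_rec}. Then $F$ vanishes identically, by propagating the three-term relation outward through closed sets and using growth. The duality then yields a biorthogonal basis to $\{[\beta z^{-j}]\}_{j\le n+1}$.

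Second, for $j>n+1$ the pairing vanishes automatically by pulling the contour to infinity, since the integrand is $O(|z|^{n-j})=O(|z|^{-2})$. Among positive monomials the integrand is holomorphic everywhere, so the pairings vanish. Spinors are handled identically, with $\setQ{E_{2k}}$ and \eqref{eq: Disc_Cauchy_rec_spinor}. Property (2), $[o^n]=0$ for half-integer $n$, and vanishing at fixed $z$ for large $n$ come from the duality. Since $[\alpha z^n]$ is dual to $\setG{E_{2n+1}}$ with zero pairing against lower negative powers, its values on $\Cgr(E_{2n-1})$, which pair via \eqref{eq: Disc_Cauchy_rec} with $\CK{\cdot}{a}\in\setG{E_{2n-1}}$, vanish. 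For a fixed $z$, once $z\in\Cgr(E_{2n-1})$ we get $[z^n](z)=0$. The spinor root case follows similarly.

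\emph{Main obstacle.} The hardest step is the injectivity claim: that an s-holomorphic $F$ (or spinor) with $F=O(|z|^{n})$ that vanishes on $\Cgr(E_{2n+1})$ is identically zero, equivalently the dimension count $2n+2$ (resp. $2n+3$). I would first try to prove it by duality. Write $\CK{z}{a}$, for arbitrary $a$, via Lemma \ref{lem: Cauchy_expansion} as an element of $\setG{E_{2n+1}}$ plus $O(|z|^{-n-2})$. Then \eqref{eq: Disc_Cauchy_rec} with the contour pulled to infinity gives $4\bar\eta_aF(a)=0$. The remainder term requires knowing that the truncated asymptotic expansion is realised inside $\setG{E_{2n+1}}$, which is exactly the basis statement of Lemma \ref{lem: no_degree_jumps}.
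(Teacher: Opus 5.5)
Your treatment of the negative monomials and of property (2) is the paper's, but you pin down the non-negative monomials by a different route.

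\textbf{Comparison with the paper.} The first option you mention is exactly the paper's proof, and it already suffices. That option takes $F$ from Lemma~\ref{lem: good_mono_positive} and subtracts $\sum_{i<n}(\alpha_i[z^i]+\beta_i[\i z^i])$, with $\alpha_i,\beta_i$ read off from the pairings of $F$ with $[z^{-i-1}]$ and $[\i z^{-i-1}]$. By induction the lower monomials are already dual to the negative ones, so this kills the unwanted pairings without touching the leading term or the $j=n+1$ pairing. Property (2) then needs only two inputs: for $a\in\Cgr(E_{2k-1})$, $\CK{\cdot}{a}$ lies in the span of $[\beta z^{-j}]$, $j\le k$ (Lemma~\ref{lem: no_degree_jumps}), and the Cauchy formula \eqref{eq: Disc_Cauchy_rec}. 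So the injectivity you call the main obstacle is not needed for existence.

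Your dual-basis route is nonetheless valid. The argument in your last paragraph correctly proves injectivity: approximate an arbitrary $\CK{\cdot}{a}$ by an element of $\setG{E_{2n+1}}$ up to $O(|z|^{-n-2})$, using Lemma~\ref{lem: Cauchy_expansion} and the basis of Lemma~\ref{lem: no_degree_jumps}, then pull the contour to infinity. Together with the lower bound from Lemma~\ref{lem: good_mono_positive}, this gives the duality. What it buys, beyond the theorem, is uniqueness of the positive monomials once the negative ones are fixed, and a description of all polynomially bounded s-holomorphic functions; the paper only records these in later remarks.

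Two smaller points:
\begin{enumerate}
\item Drop the earlier ``propagation'' justification. Because $E_{2n+1}$ is closed, each rhombus outside it has at most one side in $\Cgr(E_{2n+1})$, so vanishing there propagates nothing.
\item Your defining condition as written (zero pairing with $G\in\setG{E_{2n+1}}$, $G=O(|z|^{-n-2})$) is vacuous, since only $G=0$ qualifies. You mean vanishing pairing with $[\beta z^{-j}]$ for $j\le n$.
\end{enumerate}

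\textbf{Corrections in the spinor sector.} Both come from the self-dual degree $-\frac12$.

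First, $n+m=-1$ \emph{is} possible for two negative half-integers, namely $n=m=-\frac12$. There the integrand is only $O(|z|^{-1})$. Moreover \eqref{eq: orthogonality} demands the value $\im(2\pi\i\alpha\beta)$, which you get from the asymptotics and \eqref{eq: Im_int_approximation}, not by vanishing.

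Second, the dimension is $2n+2$, not $2n+3$. The space of s-holomorphic spinors with $O(|z|^n)$ growth contains $[\lambb z^{-\frac12}]$ and the $2n+1$ spinors of Lemma~\ref{lem: good_mono_positive}. But $[\lambb z^{-\frac12}]\in\setQ{E_{2n+1}}$ is s-holomorphic everywhere, so it pairs to zero with this whole space. Hence your injective map lands in a $(2n+2)$-dimensional annihilator, and the pairing with $\setQ{E_{2n+1}}$ is degenerate. The duality becomes nondegenerate after quotienting $\setQ{E_{2n+1}}$ by $\R\cdot[\lambb z^{-\frac12}]$. The dual-basis conditions then include vanishing of the pairing with $[\lamb z^{-\frac12}]$, which is proportional to the value at $o$; this is what yields $[o^n]=0$.

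The paper's construction needs the same care. In the half-integer case its correction sum must also include $i=-\frac12$; that term is automatically a real multiple of $[\lambb z^{-\frac12}]$. Without it, the pairing with $[\lamb z^{-\frac12}]$ is not controlled.
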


\begin{proof}
	For both $n\in\Z$ and $n\in \Z+\frac12$, we take $[\alpha z^{n}],$ $n<0$ to be as constructed in Lemma \ref{lem: negative_monomial_existence}. For $n\geq 0,$ we construct $[\alpha z^n],$ $\alpha \in \{1,\i\}$ inductively. Pick any s-holomorphic function $F$ on $\isG$ (respectively, s-holomorphic spinor $F$ on $\isG_{[\root^\circ]}$) satisfying $F(z)=\proj{\eta_z}{\alpha z^n}+O(z^{n-1});$ its existence is guaranteed by Lemma \ref{lem: good_mono_positive}. We have $\im\oint F(z)[\beta z^{m}]\,dz=\im(2\pi\i\alpha\beta)\delta_{n+m,-1}$ for $m\leq -n-1,$ by pulling the contour to infinity and using \eqref{eq: Im_int_approximation} for $m=-n-1$. Put
	\[
	[\alpha z^n]:=F(z)-\sum_{0
		\leq i\leq n-1}\left(\alpha_{i}[z^{i}]+\beta_{i}[\i z^{i}]\right)
	\] 
	where the sum is over integers if $n\in\Z$ or half-integers if $n\in\Z+\frac12 $, and $\alpha_i=(2\pi)^{-1}\im\oint F(z)[z^{-i-1}]\,dz$, $\beta_i=-(2\pi)^{-1}\im\oint F(z)[\i z^{-i-1}]\,dz$. Using the induction hypothesis, we readily get that \eqref{eq: orthogonality} holds for all $n,m$.
	
	For any $a\in \Cgr(E_{2k-1}),$ we have $\CK{\cdot}{a}\in \setG{E_{2k-1}}$ and hence, by Lemma \ref{lem: no_degree_jumps},  $\CK{\cdot}{a}=\sum^{k}_{i=1}\left(\alpha_{i}[z^{-i}]+\beta_{i}[\i z^{-i}]\right)$. Therefore,  by \eqref{eq: Disc_Cauchy_rec},
	\[
	4\bar{\eta}_a[\alpha a^n]= \im \oint[\alpha z^n]\CK{z}{a} \,dz = 0,\quad n\geq k,\,n\in\Z
	\] 
	Similarly, for any $a\in \Cgr(E_{2k}),$ we have $\SCkx{c}{\lamb}\in \setQ{E_{2k}},$ hence, Lemma \ref{lem: no_degree_jumps}, $\SCkx{c}{\lamb}=\sum^{k}_{i=0}\left(\alpha_{i}[z^{-i-\frac12}]+\beta_{i}[\i z^{-i-\frac12}]\right)$, and we arrive at  
	\[
	4\bar{\eta}_a[\alpha a^n]= \im \oint[\alpha z^n]\SCkx{c}{\lamb} \,dz = 0,\quad n\geq k+\frac12,\,n\in\Z+\frac12. 
	\]
	Choosing the sequence $E_n$ to exhaust $\rhomb{\isG}$, and taking into account that we can take $k=0$ in the above for $c=\root$, ensures the second property in the definition of a good set of monomials and thus concludes the proof.
\end{proof}

\begin{rem}
	\label{rem: general_good_sets}
The good set of monomials constructed above is uniquely defined by the sequence $E_n$, and different sequences give rise to different good sets. The above procedure gives a good set of monomials with the additional property \[\spanned\{[z^{-i}],[\i z^{-i}]:i=1,\dots,n \}=\spanned\{\CK{\cdot}{a}:a\in\Cgr(E_{2n-1})\}\] for a connected set $E_{2n-1},$ which is not true in general. However, for \emph{any} good set of monomials and any $a\in \Cgr(\isG),$ we have $\CK{\cdot}{a}\in \spanned\{[z^{-i}],[\i z^{-i}]:i=1,\dots,n \}$ (respectively, $\Que{c}(\cdot)\in \spanned\{[z^{-i}],[\i z^{-i}]:1\leq i\leq n,\,n\in\Z+\frac12 \}$) provided that $[\alpha a^m]=0$ for all $m\geq n,$ $m\in \Z$ (respectively, $m\in\Z+\frac12$.) Indeed, by Lemma \ref{lem: Cauchy_expansion} we can inductively find
$\alpha_{n},\beta_{n}$ such that 
$
	H(z)=\CK za-\sum^{n}_{i=1}\left(\alpha_{i}[z^{-i}]+\beta_{i}[\i z^{-i}]\right)
$
is either identically zero, or asymptotic to $\proj{\eta_z}{\alpha z^{-m}}$ for $m>n$. In the latter case, we have a contradiction
\[
0=4\bar{\eta}_a[\bar\alpha a^{m-1}]= \im \oint[\bar\alpha z^{m-1}]\CK{z}{a} \,dz = \im \oint[\bar\alpha z^{m-1}]H(z) \,dz = 2\pi.
\]    
A similar discussion applies to the spinor case.
\end{rem}

\begin{rem}
	The map $(F,G)\mapsto \im \oint F(z)G(z)\,dz$ defines a pairing of $\mathcal{G}^\star(E_{2n}):=\spanned_{\R}\{[z^i],[\i z^i]:0\leq i\leq n-1\}$ and $\setG{E_{2n-1}}=\spanned_{\R}\{[z^i],[\i z^i]:-n-1\leq i\leq -1\},$ which is \emph{non-degenerate}, since good monomials form dual bases. The basis $P_i(\cdot)$ of $\mathcal{G}^\star(E_{2n})$ which is dual to the basis $\CK{\cdot}{a_i}$ of $\setG{E_{2n-1}}$ is an \emph{interpolation basis}, satisfying $P_j(a_i)=\eta_{a_i}\delta_{i,j}$; in particular, this interpolation problem is always well posed. 
	
	We moreover note that the space of s-holomorphic functions $F$ satisfying $F(z)=o(|z|^{n})$ is $\mathcal{G}^\star(E_{2n})$. Indeed, for such $F$, $(F,[\alpha z^{-m}])=\im \int F(z)[\alpha z^{-m}]\,dz=0$ for all $m>n+1,$ by pulling the contour to infinity. Therefore, we can find $P\in \mathcal{G}^\star(E_{2n})$ such that $(F-P,[\alpha z^{-m}])=0$ for all $m=1,2\dots$. Then, for any $a$, we have $(F-P,\CK{\cdot}{a})=0,$ i.e., $F(a)-P(a)=0.$ A similar discussion applies to spinors.
\end{rem}

\subsection{Discrete Laurent and fractional Laurent expansion.}
The following discrete Laurent expansions will be crucial for our arguments, hence we formulate them separately.

\begin{thm}
	\label{prop: discrete_Laurent}Assume that $[z^{n}],[\i z^{n}]$ is
	a good set of monomials. Then, for every $a\in\Cgr(\isG)$, there
	exist $N$ and $\alpha_{1},\beta_{1},\dots\alpha_{N},\beta_{N}\in\R$
	such that 
	\begin{equation}
		\CK za=\sum^{N}_{n=1}\alpha_{n}[z^{-n}]+\beta_{n}[\i z^{-n}].\label{eq: abstract_Laurent}
	\end{equation}
	Similarly, for every $a\in\Cgr(\isGoc)$, there exist $N\in\N+\frac{1}{2}$
	and $\mu,\alpha_{\frac32},\beta_{\frac32},\dots\alpha_{N},\beta_{N}\in\R$ such that
	\begin{equation}
		\SCkx{a}{\lambb}(z)=\mu[\lamb z^{-\frac12}]+\sum^{N}_{n=\frac{3}{2}}\alpha_{n}[z^{-n}]+\beta_{n}[\i z^{-n}]\label{eq: abstract_Laurent_spinor}
	\end{equation}
	In both cases, the coefficients $\alpha_{i},\beta_{i}$ are
	in fact given by 
	\[
	\alpha_{n}=\frac{2}{\pi}\bar{\eta}_{a}[a^{n-1}];\quad\beta_{n}=-\frac{2}{\pi}\bar{\eta}_{a}[\i a^{n-1}],
	\]
	and in the spinor case, we have $\mu=\frac{2}{\pi}\dsbar{a}[\lambb a^{-\frac12}].$ 
\end{thm}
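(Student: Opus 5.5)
The proof has two stages. First we show that $\CK{\cdot}{a}$ (respectively $\SCkx{a}{\lambb}$) is a finite combination of negative good monomials. Then we read off the coefficients by pairing with the positive monomials, using the orthogonality relation \eqref{eq: orthogonality} and the discrete Cauchy formulae \eqref{eq: Disc_Cauchy_rec}--\eqref{eq: Disc_Cauchy_rec_spinor}.

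\emph{Finite expansion.} By property (2) of a good set of monomials, there is $N$ such that $[\alpha a^m]=0$ for all $m\geq N$ and $\alpha\in\{1,\i\}$. We then argue as in Remark \ref{rem: general_good_sets}. Lemma \ref{lem: Cauchy_expansion} gives the leading asymptotics $\proj{\eta_z}{\alpha z^{-m}}$ of $\CK{\cdot}{a}$ or of any nonzero remainder, and property (3) lets us subtract real multiples of $[z^{-m}],[\i z^{-m}]$ to cancel it. After finitely many steps we obtain
\[
H=\CK{\cdot}{a}-\sum_{n=1}^{N}\bigl(\alpha_n[z^{-n}]+\beta_n[\i z^{-n}]\bigr),
\]
which is either $\equiv0$ or satisfies $H(z)=\proj{\eta_z}{\gamma z^{-m}}+O(|z|^{-m-1})$ with $m>N$ and $\gamma\neq0$. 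Here we must know that $H$ still belongs to the class covered by Lemma \ref{lem: Cauchy_expansion}, i.e. that negative good monomials are real combinations of Cauchy kernels. This holds for the monomials built in Theorem \ref{thm: good_monomials}. For an arbitrary good set we would instead argue directly: if the expansion did not terminate, we would get the contradiction below at some step.

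Suppose $H\not\equiv0$. Pair $H$ with $[\bar\gamma z^{m-1}]$ over a large contour. The monomials $[z^{-n}],[\i z^{-n}]$ with $n\le N<m$ contribute nothing, since $n+(m-1)=-1$ would force $n=-m$, which is impossible. Hence the pairing equals $\im\oint[\bar\gamma z^{m-1}]\CK{z}{a}\,dz=4\bar\eta_a[\bar\gamma a^{m-1}]=0$ by \eqref{eq: Disc_Cauchy_rec}, because $m-1\geq N$. On the other hand, pulling the contour to infinity and using \eqref{eq: Im_int_approximation}, the pairing equals $\im(2\pi\i|\gamma|^2)\neq0$. This contradiction shows $H\equiv0$. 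The spinor case is identical: $\SCkx{a}{\lambb}$ is covered by Lemma \ref{lem: Cauchy_expansion_spinor}, and we use \eqref{eq: Disc_Cauchy_rec_spinor}, which holds whether or not the contour encloses $\root^\circ$. The term $\mu[\lamb z^{-\frac12}]$ enters as the $n=\frac12$ term, with $[\lamb z^{-\frac12}]$ expressed through $[z^{-1/2}],[\i z^{-1/2}]$.

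\emph{Coefficients.} Multiply \eqref{eq: abstract_Laurent} by $[z^{n-1}]$ and integrate over a large contour. By orthogonality \eqref{eq: orthogonality} the right side gives
\[
\alpha_n\im(2\pi\i)+\beta_n\im(2\pi\i\cdot\i)=2\pi\alpha_n .
\]
The left side equals $4\bar\eta_a[a^{n-1}]$ by \eqref{eq: Disc_Cauchy_rec}, so $\alpha_n=\frac{2}{\pi}\bar\eta_a[a^{n-1}]$. Pairing with $[\i z^{n-1}]$ instead gives $-2\pi\beta_n=4\bar\eta_a[\i a^{n-1}]$, i.e. $\beta_n=-\frac{2}{\pi}\bar\eta_a[\i a^{n-1}]$. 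For $\mu$ we pair with $[\lambb z^{-\frac12}]$ and use
\[
\im\oint[\lamb z^{-\frac12}][\lambb z^{-\frac12}]\,dz=\im(2\pi\i\lamb\lambb)=2\pi .
\]
The higher terms contribute nothing because their exponents do not sum to $-1$.

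The main obstacle is this last step. We must make sure $[\lambb z^{-\frac12}]$ is a legitimate partner for \eqref{eq: Disc_Cauchy_rec_spinor}, i.e. that the kernel $[\lambb z^{-\frac12}]=\Gu{z}{\root^\circ}$ is consistent with the monomial extension $[\alpha z^{-1/2}]$ by linearity. We must also check that the pairing of $[\lamb z^{-\frac12}]$ with the other $[\alpha z^{-\frac12}]$ terms is consistent with \eqref{eq: qalpha_behavior_at_infty}.
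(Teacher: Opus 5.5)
Your argument is essentially the paper's proof. Existence comes from the asymptotic peeling and pairing contradiction of Remark~\ref{rem: general_good_sets}, or directly from Lemmas~\ref{lem: negative_monomial_existence}--\ref{lem: no_degree_jumps} for the good set of Theorem~\ref{thm: good_monomials}. The coefficients come from \eqref{eq: Disc_Cauchy_rec}, \eqref{eq: Disc_Cauchy_rec_spinor} and \eqref{eq: orthogonality}.

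The two points you leave open close quickly.
\begin{itemize}
\item \emph{The $z^{-\frac12}$ term.} Its coefficient $\gamma$ lies in $\lamb\R$. Pairing the expansion with $[\lamb z^{-\frac12}]$ gives $\im(2\pi\i\gamma\lamb)=\im\oint\SCkx{a}{\lambb}(z)[\lamb z^{-\frac12}]\,dz=0$ by \eqref{eq: qalpha_behavior_at_infty}. This is the observation already made in the proof of Lemma~\ref{lem: kernels_symmetry}.
\item \emph{Consistency of $[\lambb z^{-\frac12}]$.} For the constructed good set, $[\lamb z^{-\frac12}]=\Gv{z}{\root^\bullet}$ and $[\lambb z^{-\frac12}]=\Gu{z}{\root^\circ}$ hold by construction. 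So the latter is s-holomorphic on $\Cgr(\isGoc)$ and is a legitimate partner in \eqref{eq: Disc_Cauchy_rec_spinor}. The paper uses this identification tacitly.
\end{itemize}

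Your fallback for an arbitrary good set (``argue directly'') does not work as stated. Both the peeling and the contradiction require the remainder to have a leading term $\proj{\eta_z}{\gamma z^{-m}}$, which is exactly what is in question.

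The missing fact, also left implicit in Remark~\ref{rem: general_good_sets}, is the following: any s-meromorphic function with finitely many poles that vanishes at infinity is a real combination of Cauchy kernels.
\begin{itemize}
\item Subtract $\frac{F(b^+)-F(b^-)}{2\eta_b}\CK{\cdot}{b}$ at each pole $b$. This leaves a globally s-holomorphic $o(1)$ function.
\item That function vanishes by \eqref{eq: Disc_Cauchy_rec} with the contour pulled to infinity.
\item The spinor case is similar, using $\SCkx{b}{\lambb}$. The remainder is then a multiple of $\Gu{\cdot}{\root^\circ}$.
\end{itemize}

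There is also a minor slip. $[z^{-n}]$ pairs nontrivially with $[\bar{\gamma}z^{m-1}]$ only if $-n+m-1=-1$, i.e.\ $n=m$, not $n=-m$. The conclusion is unaffected.
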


\begin{proof}
	The existence of the expansions follows from Lemma \ref{lem: negative_monomial_existence} (for monomials as constructed in the proof of Theorem \ref{thm: good_monomials}) and Remark \ref{rem: general_good_sets} (for general good sets of monomials). 
	
	To get the explicit formula for $\alpha_{n}$ and $\beta_{n}$, note that by
	the discrte Cauchy formula (\ref{eq: Disc_Cauchy_rec}) and by  \eqref{eq: orthogonality}, we have 
	\begin{equation*}
		4\bar{\eta}_{a}[\alpha a^{n-1}]
		=\im\int_{\gamma}[\alpha z^{n-1}]\CK za\,dz=\begin{cases}
			2\pi\alpha_{n}, & \alpha=1\\
			-2\pi \beta_{n}, & \alpha=\i.
		\end{cases}
	\end{equation*}
	Similarly, in the spinor case,  we use \eqref{eq: Disc_Cauchy_rec_spinor} with $F(z)=[\alpha z^{n-1}]$ for $n=\frac{3}{2},\frac{5}{2},\dots$ to find $\alpha_n$ and $\beta_n$, and with  $F(z)=[\lambb z^{-\frac12}]$ to recover $\mu$.
\end{proof}

We now state a corollary that we will not use, but that is of independent interest.

\begin{cor}
	A finite discrete Laurent expansion (\ref{eq: abstract_Laurent})
	holds with $\CK za$ replaced with any function $F$ which is s-holomorphic
	outside a finite set and vanishes at $\infty$. Similarly, an expansion $G(z)=\sum_{n=1}^{N}\alpha_i[z^{\frac12-n}]+\beta_n[\i z^{\frac12-n}]$ holds for any spinor $G$ on $\Cgr(\isGoc)$ which is s-holomorphic outside a finite set and satisfies the bound $G(z)=o(z^\frac12)$ at infinity.
\end{cor}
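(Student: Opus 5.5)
Let $F$ be s-holomorphic on $\Cgr(\isG)$ outside a finite set $\mathcal{A}$, and write $F\to0$ at infinity. I would first reduce to a finite combination of Cauchy kernels. Choose a large counterclockwise contour $\gamma$ enclosing $\mathcal{A}$ and $\root$. For $a$ outside $\gamma$, the discrete Cauchy formula \eqref{eq: Disc_Cauchy_rec} with a contour around $a$ can be combined with the residue formula \eqref{eq: Discrete residue formula} applied to $F$ and $\CK{\cdot}{a}$ in the annulus between $\gamma$ and a huge contour $\Gamma_R$. This gives
\[
4\dsbar a F(a)=\im\oint_{\Gamma_R}F(z)\CK za\,dz-\im\oint_\gamma F(z)\CK za\,dz .
\]
The first integral tends to $0$ as $R\to\infty$. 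Indeed, $\CK za=O(|z|^{-1})$ by \eqref{eq: G_a_asymp}, and $F\to0$, but that alone only gives $o(1)\cdot O(|z|^{-1})\cdot|z|$, so more care is needed.

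I would first show that $F=O(|z|^{-1})$. Consider $\Phi(a):=-\frac14\eta_a\im\oint_\gamma F(z)\CK za\,dz$. By the antisymmetry \eqref{eq: Cauchy_symmetry}, $\dsbar a\CK za$ equals, up to sign, $\dsbar z\CK az$ (mind the $\eta$ factors). Hence $\Phi(a)$ is a finite real combination $\sum_{c\in\gamma}\beta_c\CK{a}{c}$, with real coefficients of the form $2s_\gamma(c)\bar\eta^2_cF(c)\eta_c\cdot(\text{real})$. So $\Phi\in\setG{E}$ for a finite set $E$, and $\Phi$ is s-holomorphic outside $\gamma$ and $O(|a|^{-1})$.

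The difference $D=F-\Phi$ is then s-holomorphic outside $\gamma$ and tends to $0$. By construction, $D$ satisfies the Cauchy formula with only the outer contour: $4\dsbar aD(a)=\lim_R\im\oint_{\Gamma_R}D\,\CK{\cdot}{a}\,dz$. To show this limit vanishes, I would use the Dirichlet/harmonicity trick. The primitive $H=\im\int D^2dz$ is single-valued outside $\gamma$ (the residue formula holds with no poles). It is sub/superharmonic on the two sublattices (a standard fact from \cite{ChelkakSmirnov2}), and it is bounded increments away from the origin since $D\to0$. Alternatively, and more simply, I would argue that the limit expression is independent of $R$ by the residue formula, and is bounded by $\sup_{|z|=R}|D|\cdot O(1)$, since $\CK za=O(1/R)$ and the length is $O(R)$. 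Letting $R\to\infty$ and using $D\to0$ then gives $0$. So that bound already suffices, and the first integral above vanishes in the limit after all. Hence $F=\Phi\in\setG{E}$ outside $\gamma$.

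Next I extend from the exterior to all of $\Cgr(\isG)$. By enlarging $\gamma$ to be the boundary of some $E_{2k-1}$ from the exhausting sequence, Lemma \ref{lem: no_degree_jumps} (or Remark \ref{rem: general_good_sets} for a general good set) writes $\Phi$ as a finite sum $\sum_{n\le N}\alpha_n[z^{-n}]+\beta_n[\i z^{-n}]$. I expect the main subtlety to be interpretive. The identity $F=\Phi$ holds only outside the finite set, and inside, $F$ may have arbitrary poles; so the expansion is to be understood as an identity on the region where $F$ is s-holomorphic and connected to infinity, just as for $\CK{\cdot}{a}$. The coefficients are then recovered, as in Theorem \ref{prop: discrete_Laurent}, as $\alpha_n=\frac1{2\pi}\im\oint F[z^{n-1}]dz$ and $\beta_n=-\frac1{2\pi}\im\oint F[\i z^{n-1}]dz$ via \eqref{eq: orthogonality}.

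For the spinor case the plan is identical. I replace $\CK{\cdot}{a}$ by $\SCkx{a}{\lambb}$, and use the Cauchy formula \eqref{eq: Disc_Cauchy_rec_spinor}, the antisymmetry \eqref{eq: Cauchy_spinor_symmetry}, and Lemma \ref{lem: Cauchy_expansion_spinor}. The outer-contour term now involves $\SCkx{a}{\lambb}(z)=O(|z|^{-1/2})$ against $G=o(|z|^{1/2})$, over a length $O(R)$. That is only $o(R)$, so the crude bound fails, and this is the step I expect to be the main obstacle. I would handle it by first subtracting $\mu[\lamb z^{-1/2}]$, using the normalization \eqref{eq: qalpha_behavior_at_infty}. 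Then $\SCkx a{\lambb}-c_a[\lamb z^{-1/2}]=O(|z|^{-3/2})$, and the remaining pairing $\im\oint G\,[\lamb z^{-1/2}]$ is a contour-independent real constant. That constant is absorbed into the $[\lamb z^{-1/2}]$ coefficient of the expansion, giving the leading term in $G(z)=\sum_{n\ge1}\alpha_n[z^{\frac12-n}]+\beta_n[\i z^{\frac12-n}]$.
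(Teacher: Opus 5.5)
Your proof is essentially the paper's. You use the Cauchy formula over an outer plus an inner contour, discard the outer term, and rewrite the inner term via \eqref{eq: Cauchy_symmetry} (resp.\ \eqref{eq: Cauchy_spinor_symmetry}) as a finite real combination of kernels centred on the inner contour. Each of these kernels then has a finite expansion by Theorem~\ref{prop: discrete_Laurent}. In the spinor case, your splitting $\SCkx{a}{\lambb}(z)=c_a[\lamb z^{-\frac12}]+O(|z|^{-\frac32})$ is equivalent to the paper's passage to $\Que{a}$. In the function case, the detour through $D$ and harmonicity is unnecessary: the bound $O(R)\cdot\sup_{\Gamma_R}|F|\cdot O(R^{-1})=o(1)$ that you first dismissed already suffices.

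One step in the spinor case needs correcting. The outer term contributes $\frac14\eta_a c_a\kappa$ to $G(a)$, with $\kappa=\im\oint G(z)[\lamb z^{-\frac12}]\,dz$. To see that this is a monomial in $a$ at all, you need the explicit value $c_a=\frac{2}{\pi}\dsbar{a}[\lambb a^{-\frac12}]$ from Theorem~\ref{prop: discrete_Laurent}. With it, the term equals $\frac{\kappa}{2\pi}[\lambb a^{-\frac12}]$, so it is absorbed into the $[\lambb z^{-\frac12}]$ coefficient, not the $[\lamb z^{-\frac12}]$ one. This is consistent with $\im\oint[\lamb z^{-\frac12}]^2\,dz=0$: $\kappa$ only detects the $[\lambb z^{-\frac12}]$-component of $G$. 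Since both $n=1$ monomials are allowed in the expansion, your conclusion stands once this identification is made.
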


\begin{proof}
	Fix $w$ and represent $F(w)$ by Cauchy integral formula \eqref{eq: Disc_Cauchy_rec}, where the integration is over the union of a clockwise contour $\gamma_{\mathrm{in}}$ which has $w$ on its outside and all discrete poles inside, and a counterclockwise countour  $\gamma_{\mathrm{out}}$ which encloses $w$ and $\gamma_{\mathrm{in}}$. The contribution of $\gamma_{\mathrm{out}}$ is readily seen to vanish by pulling it to infinity, and using \eqref{eq: def-Im-Int} and the symmetry \eqref{eq: Cauchy_symmetry}, we can rewrite the sum over $\gamma_{\mathrm{in}}$ as 
		\[F(w)=\frac{1}{2}\sum_{c\in\gamma_{\mathrm{in}}}s_{\gamma}(c)\bar{\eta}_{c}F(c)\CK wc.
	\]
	Applying \eqref{eq: abstract_Laurent} to each term yields the result.
	
For the spinor case, we can similarly apply \eqref{eq: Disc_Cauchy_rec_spinor}, but we replace $\SCkx{w}{\lambb}(z)$ with $\Que{w}(z):=\SCkx{w}{\lambb}(z)-\frac{2}{\pi}\dsbar{w}[\lambb w^{-\frac12}][\lamb z^{-\frac12}],$ which we are free to do as the last term is s-holomorphic in $z$. Since we have, by the previous Lemma,  $\Que{w}(z)=O(z^{-\frac32})$ at infinity, we again see that the contribution of $\gamma_{\mathrm{out}}$ vanishes. The contribution of $\gamma_{\mathrm{in}}$ is handled as above using \eqref{eq: def-Im-Int}  and \eqref{eq: Cauchy_spinor_symmetry}, and expanding each $\SCkx{c}{\lambb}(z)$ in the sum by \eqref{eq: abstract_Laurent_spinor} yields the result.
\end{proof}

\begin{rem}
	\label{rem: SCk_uniqueness} If the spinor $G$ in the above argument is s-holomorophic everywhere, then $\gamma_{\mathrm{in}}$ can be taken to be empty, which shows that $G$ must be proportional to $[\lambb z^{-\frac12}].$ This shows the
	uniqueness of $\SCkx a{\lambb}$ as in Definition \ref{def: spinor-CK}, by taking $G$ to be the difference of two candidates. 
\end{rem}

\section{Fock spaces, Virasoro modules, and the Sugawara construction.}
\label{sec: background_algebra}
\if{false}
\begin{itemize}
 \item the Lie algebra $\vir$
 \item its universal enveloping algebra $\UEA{\vir}$
 \item the quotient $\UvirHalf$
 \item the two-chiral version $\TwoVir$
 \item Neveu-Schwarz Clifford algebra $\CliffNS$
 \item Ramond Clifford algebra $\CliffR$
 \item the two-chiral versions $\TwoCliffNS$, $\TwoCliffR$
 \item highest weight modules for $\vir$ (at $c=\cIsing$)
 \item modules for $\TwoVir$ (just tensor products of two)
 \item fermionic Fock space for $\CliffNS$ (irreducible, unitary, universal) and its two-chiral version
 \item (no need to define twisted fermionic Fock space for $\CliffR$?)
 \item twisted two-chiral fermionic Fock space for $\TwoCliffR$ (\ldots)
 \item \ldots
\end{itemize}
\fi

For analyzing the structure of local fields, we need to employ three (associative) algebras
as well as certain modules over them,
and in fact it is natural to consider both chiral and two-chiral versions in each case.
The first of the algebras is the universal enveloping algebra of the Virasoro algebra.
The other two are suitable Clifford algebras, the Neveu-Schwarz type being
relevant to spin-flip-even local fields and the Ramond type to spin-flip-odd local fields.
It is standard and clearest to describe the Virasoro algebra and its
modules in one chirality first, and then
construct the two-chiral versions from these. For the Clifford algebras, we instead describe the
two-chiral algebra and modules directly, partly to avoid the use of graded tensor products,
but also because in the Ramond sector the relevant module
for the two-chiral algebra is not simply obtained by tensoring modules in the separate
chiralities.

\subsection{Virasoro algebra and its modules}

The \term{Virasoro algebra} $\vir$ is the infinite-dimensional Lie algebra with basis consisting of
$\Cgen$ and $\Lgen{n}$ for $n \in \Z$, and with Lie brackets determined by
\begin{align}\label{eq:virasoro_bracket}
[\Lgen{n}, \Lgen{m}] \; = \; & (n-m) \, \Lgen{n+m} + \deltaZ{n+m} \frac{n^3-n}{12} \, \Cgen , &
[\Cgen, \Lgen{n}] \; = \; & 0 ,
    & \text{for $n, m \in \Z$.}
\end{align}
For our purposes, the most convenient associative algebra (over $\C$) related to the Virasoro algebra
is the quotient of the universal enveloping algebra $\UEA{\vir}$ by the relation $\Cgen = \cIsing$
(more precisely by the two-sided ideal generated by $\Cgen - \cIsing \, \Igen$ with $\Igen$ denoting
the algebra unit), which we denote by
\begin{align}\label{eq:virasoro_associative}
\UvirHalf \; := \; \UEA{\vir} \Big/ \big\langle \Cgen - \cIsing \, \Igen \big\rangle .
\end{align}
The full CFT has copies of the Virasoro algebra for both chiralities, so we
define
\begin{align}\label{eq:full_virasoro}
\TwoVir \; := \; \UvirHalf \tens \UvirHalfbar ,
\end{align}
where $\UvirHalfbar$ is a copy of $\UvirHalf$ with generators denoted by
$\Cbar$ and $\Lbar{n}$ for $n \in \Z$, and the algebra structure on~\eqref{eq:full_virasoro}
is such that the two tensor factors commute.

The building blocks of minimal models of CFT are irreducible highest weight modules
of the Virasoro algebra. These, in turn, are constructed as quotients of Verma modules.
%by their maximal proper submodules.
We recall the relevant definitions and constructions %of these
and a key classification of unitary highest weight modules.

\begin{defn}%[Singular vector]
Let $\Module$ be a $\UEA{\vir}$-module. A nonzero vector $v_0 \in \Module \setminus \set{0}$
is called a \term{singular vector} with central charge $c \in \C$ and conformal weight $h \in \C$
if it satisfies $\Cgen \, v_0 = c \, v_0$ and $\Lgen{0} \, v_0 = h \, v_0$ and
$\Lgen{n} \, v_0 = 0$ for $n > 0$.
\end{defn}

\begin{defn}%[Highest weight module]
A $\UEA{\vir}$-module $\Module$ is called a \term{highest weight module}
with central charge $c \in \C$ and highest weight $h \in \C$
if there exists a singular vector $v_0 \in \Module \setminus \set{0}$
of central charge $c \in \C$ and conformal weight $h \in \C$ such that
$\Module$ is generated by $v_0$ in the sense that $\UEA{\vir} \, v_0 = \Module$.
% A $\UEA{\vir}$-module $\Module$ is called a \term{highest weight module}
% with central charge $c \in \C$ and highest weight $h \in \C$
% if there exists a nonzero vector $v_0 \in \Module \setminus \set{0}$
% which generates $\Module$ in the sense that $\UEA{\vir} \, v_0 = \Module$ and
% which satisfies
% $\Cgen \, v_0 = c \, v_0$ and $\Lgen{0} \, v_0 = h \, v_0$ and
% $\Lgen{n} \, v_0 = 0$ for $n > 0$.
Such a vector $v_0$ is called a \term{highest weight vector} in $\Module$
and it is unique up to nonzero scalar multiples.
\end{defn}

\begin{defn}%[Verma module]
% A \term{Verma module}  with central charge $c \in \C$ and highest weight $h \in \C$
% is a universal (initial) object in the category of
% highest weight modules with these $c$ and $h$ (unique up to isomorphism).
% As an explicit construction, we may
The \term{Verma module} with central charge $c \in \C$ and highest weight $h \in \C$
is the $\UEA{\vir}$-module
\begin{align}\label{eq:verma}
\Verma{c}{h} \; := \;
    \UEA{\vir} \big/ \mathcal{J}_{c,h}
        %\big\langle \Cgen - c \, \Igen, \, \Lgen{0} - h \, \Igen, \, \Lgen{n} : n > 0 \big\rangle
\end{align}
obtained as the quotient of the universal enveloping algebra $\UEA{\vir}$ (seen as a left module
for itself) by its submodule $\mathcal{J}_{c,h}$ generated by the elements
$\Cgen - c \, \Igen$, $\Lgen{0} - h \, \Igen$, and $\Lgen{n}$ for $n > 0$.
The Verma module $\Verma{c}{h}$ is a highest weight module
with central charge $c \in \C$ and highest weight $h \in \C$, and with
$\Igen + \mathcal{J}_{c,h} \in \Verma{c}{h}$ as its highest weight vector.
\end{defn}

By construction, the Verma module has the following universal property, which can
equivalently be seen to characterize the Verma module as an induced $\UEA{\vir}$-module
from a $1$-dimensional module over the subalgebra generated by $\Cgen$ and $\Lgen{n}$
with $n \ge 0$.
\begin{lem}%[Universal property of the Verma module]
Let $\Module$ be a $\UEA{\vir}$-module with a singular vector $v_0 \in \Module$
of central charge $c \in \C$ and conformal weight $h \in \C$.
Then there exists a unique $\UEA{\vir}$-module morphism
$\mphism \colon \Verma{c}{h} \to \Module$ such that $\mphism(\Igen + \mathcal{J}_{c,h}) = v_0$.
In particular, if $\Module$ is a highest weight module with
central charge $c \in \C$ and highest weight $h \in \C$ and $v_0$ is
its highest weight vector, then this morphism $\mphism$
is surjective and $\Module \isom \Verma{c}{h} / \ker(\mphism)$.
\end{lem}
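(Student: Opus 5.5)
The statement to prove is the universal property of the Verma module $\Verma{c}{h}=\UEA{\vir}/\mathcal{J}_{c,h}$. I would obtain the morphism by factoring the orbit map through the quotient, and then read off uniqueness and surjectivity directly.

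\emph{Construction.} First consider the map $\tilde\mphism\colon \UEA{\vir}\to\Module$, $\tilde\mphism(X)=X\,v_0$. It is a morphism of left $\UEA{\vir}$-modules, because $\tilde\mphism(YX)=YX\,v_0=Y\tilde\mphism(X)$. Its kernel is the annihilator of $v_0$, which is a left submodule (left ideal) of $\UEA{\vir}$. Since $v_0$ is singular, the generators $\Cgen-c\,\Igen$, $\Lgen{0}-h\,\Igen$ and $\Lgen{n}$ ($n>0$) all annihilate $v_0$. The submodule $\mathcal{J}_{c,h}$ is, by definition, the smallest left submodule containing these generators, so $\mathcal{J}_{c,h}\subset\ker\tilde\mphism$. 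Hence $\tilde\mphism$ descends to a module morphism $\mphism\colon\Verma{c}{h}\to\Module$ with $\mphism(X+\mathcal{J}_{c,h})=X\,v_0$, and in particular $\mphism(\Igen+\mathcal{J}_{c,h})=v_0$.

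\emph{Uniqueness.} Every element of $\Verma{c}{h}$ has the form $X+\mathcal{J}_{c,h}=X\cdot(\Igen+\mathcal{J}_{c,h})$. Any module morphism $\mphism'$ with the same value on $\Igen+\mathcal{J}_{c,h}$ must therefore satisfy
\[
\mphism'(X+\mathcal{J}_{c,h})=X\,\mphism'(\Igen+\mathcal{J}_{c,h})=X\,v_0 .
\]
So $\mphism'=\mphism$; the class of $\Igen$ generates $\Verma{c}{h}$.

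\emph{Highest weight case.} Suppose $\Module$ is a highest weight module with highest weight vector $v_0$. Then the image of $\mphism$ is $\UEA{\vir}\,v_0=\Module$, so $\mphism$ is surjective. The first isomorphism theorem for modules then gives $\Module\isom\Verma{c}{h}/\ker(\mphism)$.

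There is no real obstacle in this argument. The only point needing care is that $\mathcal{J}_{c,h}$ is a \emph{left} submodule, not a two-sided ideal. This is exactly compatible with the orbit map: the annihilator of a vector is only a left ideal, and only left multiplication is needed for the factorization.
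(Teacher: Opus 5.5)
Your proof is correct and is exactly the standard argument the paper has in mind when it says the property holds ``by construction'' (it gives no separate proof): the orbit map $X\mapsto X\,v_0$ annihilates the generators of the left submodule $\mathcal{J}_{c,h}$ and so factors through $\Verma{c}{h}$. Uniqueness then follows because $\Igen+\mathcal{J}_{c,h}$ generates $\Verma{c}{h}$, and surjectivity follows from $\UEA{\vir}\,v_0=\Module$.
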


By the last observation, any highest weight module is a quotient of a Verma module.
One can show that among submodules of a Verma module, there is a unique maximal proper one.
The quotient by this is the unique irreducible highest weight module with the given
central charge and highest weight.
% then the unique irreducible highest weight module
% with the given central charge and highest weight.
% %The irreducible highest weight modules are therefore obtained as follows.
\begin{defn}%[Irreducible highest weight module]
Let $\maxsubmod{c}{h} \subsetneq \Verma{c}{h}$ be the maximal proper submodule
of the Verma module $\Verma{c}{h}$. The quotient
\begin{align}\label{eq:irreducible_hwmod}
\Irrep{c}{h} \; := \; \Verma{c}{h} \big/ \maxsubmod{c}{h}
\end{align}
is the \term{irreducible highest weight module} with central charge $c \in \C$ and highest
weight $h \in \C$.
\end{defn}

The CFT for the Ising model is one of a discrete series of \emph{unitary} CFTs, and the
unitarity of the corresponding modules simplifies the analysis of the Virasoro modules of
local fields substantially. We note that unitarity could in principle be avoided in the analysis,
and such alternative arguments will in any case be needed for some nonunitary
CFTs such as symplectic fermions \cite{adame2025discrete}.
But as the resulting simplifications are substantial, we embrace unitarity as a part of our
proof strategy.

We take the convention that (complex) inner products $\iprod{\cdot}{\cdot}$ are linear in their second
argument and conjugate-linear in their first argument.

\begin{defn}%[Unitary Virasoro module]
A $\UEA{\vir}$-module $\Module$ is said to be \term{unitary} if
there exists an inner product $\iprod{\cdot}{\cdot}$ on $\Module$ such that for all $n \in \Z$
the adjoint of $\Lgen{n}$ is $\Lgen{n}^{\dagger} = \Lgen{-n}$ and
the adjoint of $\Cgen$ is $\Cgen^{\dagger} = \Cgen$.
% , and if in addition
% the $\Lgen{0}$-eigenspaces are finite-dimensional and the
% $\Lgen{0}$-eigenvalues are bounded from below.
\end{defn}

In a unitary $\UEA{\vir}$-module $\Module$, the elements $\Lgen{0}$ and $\Cgen$
are self-adjoint, %(with respect to the given inner product)
so their eigenvalues are necessarily real. In particular
in a unitary highest weight representation, the central charge $c$ and the
highest weight $h$ must be real.
Another basic observation is that in a unitary module, any submodule has
a complementary submodule (the orthogonal complement to a submodule is a submodule).
Combined with the standard fact that submodules respect grading by $\Loper0$-eigenvalues and
the fact that the highest weight vector generates a highest weight module and
spans the one-dimensional eigenspace of the minimal $\Loper0$-eigenvalue, it
follows that a unitary highest weight representation is necessarily irreducible.

One can fully classify which (irreducible) highest weight modules are
unitary \cite{friedan1986details, langlands1988unitary, gabriel2024lattice}.
This will drastically simplify our analysis of the structure of the
space of local fields for the Ising model. We state the classification result
in the special case that we will use.\footnote{We moreover point out that we only use the
nonunitarity implication of the classification; the unitarity of the
irreducible highest weight modules at the three particular highest weight values is
obtained from our analysis of the Clifford modules via the Sugawara construction.
In particular, our proof does not rely on the \cite{goddard1986unitary} coset construction, which in the
general case is needed to supplement the \cite{friedan1986details} result.}
%The classification result that simplifies our analysis drastically is the following.
\begin{thm}\cite{friedan1986details, langlands1988unitary, gabriel2024lattice} %\cite{FQS}
\label{thm:fqs_nonunitarity}
At central charge $c = \cIsing$, the only unitary highest weight Virasoro modules
(up to isomorphism) are $\IrrepI{h} := \Irrep{\cIsing}{h}$ with $h \in \set{\hI, \hS, \hE}$.
\end{thm}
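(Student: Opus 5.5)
We will not reprove the classification from scratch here; we only sketch the route of \cite{friedan1986details, langlands1988unitary} specialised to $c=\cIsing$, and only for the nonunitarity half that we actually use. The plan is to reduce unitarity to positivity of Gram matrices and then use the Kac determinant formula.

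\emph{Reduction to Gram matrices.} Let $\Module$ be a unitary highest weight module with highest weight vector $v_0$ and highest weight $h$. First, $h$ is real. By the universal property there is a surjection $\mphism\colon\Verma{\cIsing}{h}\to\Module$. Pulling back the inner product along $\mphism$ gives the Shapovalov form on $\Verma{\cIsing}{h}$. This form is determined by the relations $\Lgen{n}^\dagger=\Lgen{-n}$ alone. Hence it must be positive semidefinite on every $\Lgen{0}$-eigenspace (level $N$) of $\Verma{\cIsing}{h}$. Its radical is the maximal proper submodule $\maxsubmod{\cIsing}{h}$, so $\Module\isom\IrrepI{h}$. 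Already at level $1$ we get $\|\Lgen{-1}v_0\|^2=2h\ge0$. So it suffices to show the following: for every real $h\ge 0$ with $h\notin\{\hI,\hS,\hE\}$, some level-$N$ Gram matrix $M_N(h)$ has a negative eigenvalue.

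\emph{Kac determinant.} We use the Kac formula
\[
\det M_N(h)=K_N\prod_{\substack{r,s\ge1\\ rs\le N}}\bigl(h-h_{r,s}\bigr)^{p(N-rs)},\qquad K_N>0,
\]
where at $c=\cIsing$ we have $h_{r,s}=\frac{(4r-3s)^2-1}{48}$. For $h$ large, $M_N(h)$ is positive definite, because the diagonal terms dominate. As $h$ decreases through a zero $h_{r,s}$ with odd total multiplicity, the sign of the determinant flips and a negative eigenvalue appears.

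\emph{Generic $h$.} For $h\ge0$ not in the Kac table, we choose a level $N$ such that an odd number of zeros, counted with multiplicity, lies strictly above $h$. Then $\det M_N(h)<0$. To find such an $N$, note that the values $h_{r,s}$ with $rs\le N$ become dense enough and large enough as $N$ grows. We would take the smallest $N$ for which the count changes parity above $h$, and track the multiplicities $p(N-rs)$ at the first level $N=rs$ at which a new zero $h_{r,s}>h$ enters. At that level this zero has multiplicity $p(0)=1$, which is odd. So among the consecutive levels a parity flip occurs.

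\emph{Main obstacle: $h$ in the Kac table.} Here we need $h=h_{r,s}\ge0$ outside the minimal-model block $1\le r\le2$, $1\le s\le3$, for example $h_{1,4}=\frac{5}{2}$. At such points the determinant vanishes, so its sign gives no information. We would handle these as FQS do. The first step is to pass to the quotient of $\Verma{\cIsing}{h}$ by the submodule generated by the first singular vector at level $rs$. The second step is to compute the Gram determinant of this quotient at a higher level $N$. The Feigin--Fuchs embedding structure at $c=\cIsing$ expresses this determinant as the Kac determinant with the factors belonging to that submodule divided out. The third step is to check that, for $(r,s)$ outside the block, one of the remaining factors $h-h_{r',s'}$ with $h_{r',s'}>h$ has odd multiplicity, which forces a negative eigenvalue in the quotient. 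The step needing most care is the bookkeeping of the embedding chain at the rational point $c=\cIsing$. If that proves cumbersome, the fallback is to verify directly the finitely many low-level cases and to invoke \cite{langlands1988unitary, gabriel2024lattice} for the general statement.
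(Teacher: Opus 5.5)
The paper does not prove this theorem. It is quoted from \cite{friedan1986details, langlands1988unitary, gabriel2024lattice}, and only the nonunitarity half is used, as you correctly note. Your reduction to positive semidefiniteness of the Verma Gram matrices $M_N(h)$ is fine. As an argument, though, your sketch has two concrete gaps, and its fallback is the same citation the paper uses.

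\textbf{The parity argument for generic $h$ is wrong as stated.} At the first level $N$ where a zero above $h$ appears, several pairs $(r,s)$ with $rs=N$ can enter at once, so the count need not be odd. For $0<h<\frac{1}{16}$, both $h_{1,2}=\frac{1}{16}$ and $h_{2,1}=\frac12$ enter at level $2$, and $\det M_2(h)=32h(h-\frac{1}{16})(h-\frac12)>0$. Level $3$ is also positive. Only level $4$ works, where the weighted count of zeros above $h$ is $9$. The sentence ``so among the consecutive levels a parity flip occurs'' has no argument behind it. A correct fixed-$c$ version runs as follows. For $N\ge 3$ the largest level-$N$ zero is $h_{N,1}=\frac{(N-1)(2N-1)}{6}$, and it is the only one exceeding $h_{N-1,1}$, because $|4r-3s|\le 4N-7$ for every other $(r,s)$ with $rs\le N$. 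Hence $\det M_N<0$ on $(h_{N-1,1},h_{N,1})$. Adding level $2$ on $(\frac1{16},\frac12)$ and level $4$ on $(0,\frac1{16})$ covers every $h>0$ outside $\{\frac1{16},\frac12\}\cup\{h_{N,1}:N\ge3\}$.

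\textbf{The hard case is misidentified and then not carried out.} Your example is miscomputed: $h_{1,4}=\frac{21}{16}$, while $\frac52=h_{1,5}$. Neither value is hard, since $\det M_3$ (respectively $\det M_4$) is nonzero and negative there. The genuinely hard points are $h_0=h_{N,1}$ with $N\ge 3$, i.e.\ $\frac53,\frac72,6,\dots$. There $\det M_K(h_0)=0$ for all $K\ge N$, so no determinant-sign argument can work, and one must count negative eigenvalues instead.

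Your Feigin--Fuchs step does not do this. At fixed $c=\frac12$ the quotient by the singular submodule exists only at $h=h_0$, so ``the Kac determinant with factors divided out'' is not a well-defined quantity there. Its sign would in any case depend on how the norms of the singular descendants vanish, which you never control. In the end you defer to the citations, so this case is not proved.

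\textbf{How to close the hard case.} Let $Pv_0$ be the level-$N$ singular vector at $h_0$, and track $M_{N+1}(h)$.
\begin{enumerate}
\item $M_{N+1}(h)$ is positive definite on $(h_{N+1,1},\infty)$ and has exactly one negative eigenvalue on $(h_0,h_{N+1,1})$.
\item At $h_0$ the zero of $\det M_{N+1}$ is simple, with order $p(1)=1$, so the kernel is spanned by $\Lgen{-1}Pv_0$.
\item With $P$ held fixed, we have
\[
\|\Lgen{-1}Pv_0\|^2=\|\Lgen{1}Pv_0\|^2+2(h+N)\|Pv_0\|^2 .
\]
Here $\|\Lgen{1}Pv_0\|^2=O((h-h_0)^2)$. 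Also $\|Pv_0\|^2$ vanishes to first order with positive slope, because the zero of $\det M_N$ at $h_0$ is simple and $M_N$ is positive definite above $h_0$.
\item Hence the eigenvalue of $M_{N+1}$ passing through $0$ is positive just above $h_0$. The single negative eigenvalue there is one of the others, which are nonzero at $h_0$, so it persists at $h_0$.
\end{enumerate}
Therefore $\IrrepI{h_0}$ is not unitary.
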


\subsection{The Clifford algebras and the Sugawara construction}

In this subsection, we recall how the commutation relations \eqref{eq: anticomm} give rise to representations of the Virasoro algebra with central charge $c=\frac12$. It suffices to do it in one chirality only. Moreover, we keep our treatment elementary  and avoid the discussion of Lie superalgebras. The construction is the same as in e.g. \cite{hongler2022conformal}, however, we re-state the lemma again here to explicitly include the unitarity part.
%In what follows, we omit the symbol $\circ$ when writing the composition of linear operators.

\iffalse
In both Sugawara constructions, we use the
following notation for normal ordered pairs of operators: given operators
$\Foper{a} \colon V \to V$ on a vector space $V$, indexed by either
half-integers $a \in \Z + \half$ (in the Neveu-Schwarz case) or integers $a \in \Z$
(in the Ramond case),
%indices $a,b$ are either
we denote
\begin{align}\label{eq:normal_ordering}
\nord{\Foper{a} \Foper{b}} \; = \begin{cases}
    \phantom{-} \Foper{a} \circ \Foper{b} & \text{ if $a \le b$} \\
            -   \Foper{b} \circ \Foper{a} & \text{ if $b < a$} .
    \end{cases}
\end{align}

\fi

\begin{lem}%[Fermionic Sugawara construction in the Neveu-Schwarz sector]
\label{lem:fermionic_sugawara_ns}
Let $V$ be a (complex) vector space, and assume that $\Foper{n} \colon V \to V$ is a familiy of linear
operators indexed by $n \in \Z + \half$ (Neveu-Schwarz case), respectively $n\in \Z$ (Ramond case), satisfying the following properties: 
\begin{itemize}
\item for any $v \in V$ there exists a constant $N_v$ such that $\Foper{n} \, v = 0$ for all $n > N_v$;
\item for any $n,m$ we have
$\Foper{n}  \Foper{m} + \Foper{m} \Foper{n} \, = \, \FockNorm \deltaZ{m+n} \, \idOf{V}$.
\end{itemize}
Then, in the Neveu-Schwarz case, the formulae
\begin{equation}
\label{eq:fermionic_sugawara_ns}
\Loper{n} v  := 
    \begin{cases}\SugawaraNorm \sum_{k \in \Z + \half} k  \nord{\Foper{n-k} \Foper{k}} v,&n\neq 0, \\
    \half\sum_{k \in \Z_
    	{\geq 0} + \half} k  \nord{\Foper{-k} \Foper{k}} v, & n=0,
    \end{cases}
\end{equation}
respectively, in the Ramond case, the formulae
\begin{equation}
	\label{eq:fermionic_sugawara_r}
	\Loper{n} v :=  
	\begin{cases}\frac14\sum_{k\in \Z} k\Curr{n-k}\Curr{k}v,& n\neq 0,\\ 
		\frac{1}{16} v+ \frac12 \sum_{k
			\in \Z_{\geq 0}}k\Curr{-k}\Curr{k}v,&n=0,\end{cases}
\end{equation}
give rise to well-defined linear operators $\Loper{n} \colon V \to V$, $n \in \Z$, which
satisfy
\begin{align*}
\Loper{n} \Loper{m} - \Loper{m} \Loper{n}
    \; = \; (n-m) \, \Loper{n+m} + \deltaZ{n+m} \frac{n^3 - n}{24} \, \idOf{V} .
\end{align*}
In other words, these fermionic Sugawara operators equip $V$ with a representation of
the Virasoro algebra with central charge $c = \half$ (i.e., $\Cgen$ acting by $\half \, \idOf{V}$).
If, moreover, $V$ has an inner product $\iprod{\cdot}{\cdot}$ such that
$\Foper{k}^\dagger = \Foper{-k}$ for all $k$, then for the operators
defined by~\eqref{eq:fermionic_sugawara_ns} we have
$\Loper{n}^\dagger = \Loper{-n}$, i.e., the representation of the Virasoro algebra is
unitary.
\end{lem}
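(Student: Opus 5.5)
The plan is the standard Sugawara computation, organised so that every step is a finite algebraic manipulation. \textbf{Well-definedness.} Fix $v\in V$. In the normal-ordered expression for $\Loper{n}$, $n\neq0$, the operator $\Foper{n-k}\Foper{k}$ is written with the larger index acting first; I will normal-order all terms this way (for $n\neq0$ the two indices never coincide up to the anticommutator correction, since $(n-k)+k=n\neq0$, so $\Foper{n-k}\Foper{k}=-\Foper{k}\Foper{n-k}$ and ordering changes nothing). Each term then has the form $\pm\Foper{a}\Foper{b}v$ with $b=\max(k,n-k)\geq n/2$. Once $|k|$ is large, $b>N_v$, so $\Foper{b}v=0$ and only finitely many terms survive. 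For $n=0$ the sum runs over $k>0$ with $\Foper{k}$ acting first, which is again finite. In the Ramond case the extra term $\frac1{16}v$ is harmless, and the $k=0$ term is killed by the factor $k$.

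\textbf{Commutation relations.} First I will compute $[\Loper{n},\Foper{m}]$. Using $[AB,C]=A\{B,C\}-\{A,C\}B$ and the anticommutator $\{\Foper{a},\Foper{b}\}=2\delta_{a+b,0}$, the sum in $\Loper{n}$ collapses to two terms. Combining them gives the clean relation
\[
[\Loper{n},\Foper{m}]=-\left(m+\tfrac n2\right)\Foper{n+m},
\]
valid for all $n$, since the constant $\frac1{16}$ and the normal-ordering constants commute with everything. For $n=0$ I will check this separately from the $n=0$ formula; the computation is the same, only the range of $k$ is one-sided.

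Next, I will compute $[\Loper{n},\Loper{m}]$ by applying this relation termwise to $\Loper{m}=\frac14\sum_k k\,\Foper{m-k}\Foper{k}$, acting on a fixed $v$. The result is
\[
\frac14\sum_k k\left(-\left(m-k+\tfrac n2\right)\Foper{n+m-k}\Foper{k}-\left(k+\tfrac n2\right)\Foper{m-k}\Foper{n+k}\right).
\]
I will then shift the summation index in the second sum. The point needing care is that the two sums must be re-normal-ordered before the shift, so that they become finite on $v$. Re-ordering produces the term $(n-m)\Loper{n+m}$ plus a finite sum of constants arising from the anticommutators $2\delta$. For $n+m=0$ this constant equals $\frac{n^3-n}{24}$ in the Neveu--Schwarz case, which I will evaluate as $\frac14\sum_{0<k<n,\;k\in\Z+\frac12}\ldots$. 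In the Ramond case the shifted $\Loper{0}$ with its $\frac1{16}$ yields the same anomaly.

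\textbf{The main obstacle.} The delicate part is this bookkeeping: the shift is illegitimate on non-normal-ordered infinite sums, so I would work with finite truncations $|k|\le K$ applied to $v$, and let $K\to\infty$ only after isolating the boundary terms. A shortcut I would try first is to verify the bracket on vectors where both sides reduce to $[\Loper{n},\cdot]$ acting on $\Foper{}$'s, i.e.\ check that $[\Loper{n},\Loper{m}]-(n-m)\Loper{n+m}$ commutes with every $\Foper{j}$ (which follows from the first relation). Such an element is then a scalar on the span of the cyclic vectors, but this requires irreducibility, which we do not have in general. So I expect to fall back on the direct truncated computation, and to compute the anomaly constant explicitly.

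\textbf{Unitarity.} Taking adjoints, and using that the coefficients $k$ are real, gives $(\Foper{n-k}\Foper{k})^\dagger=\Foper{-k}\Foper{k-n}$. Substituting $k\mapsto n-k$ maps the $\Loper{n}$ sum onto the $\Loper{-n}$ sum, up to the ordering and sign bookkeeping, which works out because $k\mapsto -k$ reverses the normal order. For $n=0$, each term $k\,\Foper{-k}\Foper{k}$ is self-adjoint.
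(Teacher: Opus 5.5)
Your proposal is correct and follows the same route as the paper. Well-definedness comes from finiteness of the sums on each $v$ via anticommutation. The brackets come from the standard Sugawara computation, which the paper defers to \cite{hongler2022conformal}; your intermediate relation $[\Loper{n},\Foper{m}]=-(m+\frac n2)\Foper{n+m}$, the need to normal-order before shifting when $n+m=0$, and the anomaly values all check out, including the Ramond case where the $\frac1{16}$ in $\Loper{0}$ absorbs the extra constant. Unitarity comes from adjoint reindexing. One small point in that last step: reindexing by $k\mapsto n-k$ leaves coefficients $n-k$ rather than $k$, so you need $\sum_k\Foper{n-k}\Foper{k}=0$ for $n\neq0$ as in \eqref{eq: Sugawara_shift}, whereas $k\mapsto -k$ followed by one anticommutation gives $\Loper{-n}$ directly.
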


\begin{proof}
For each $v$, the right-hand sides of \eqref{eq:fermionic_sugawara_ns}, \eqref{eq:fermionic_sugawara_r} are well defined since the sums only contain finitely many non-zero terms. Indeed, the terms with $k>N_v$ vanish and so do those with $n-k>N_v$ for $n\neq 0$, since $\Curr{n-k}\Curr{k}=-\Curr{k}\Curr{n-k}$ in this case. The commutation relations for $\Loper{n}$ follow by a straightforward and standard computation, see e.g. \cite{hongler2022conformal}. To see that \eqref{eq:fermionic_sugawara_ns}, \eqref{eq:fermionic_sugawara_r} are equivalent to expressions given in \cite{hongler2022conformal}, note that for $n\in \Z$, $n\neq 0$, we have 
\begin{equation}
	\label{eq: Sugawara_shift}
	\sum_k \Curr{n-k}\Curr{k}\;\stackrel{k':=n-k}{=}\;\sum_{k'} \Curr{k'}\Curr{n-k'}=-\sum_{k'} \Curr{n-k'}\Curr{k'}=0;
\end{equation}
hereinafter the sums are over $\Z+\frac12$ in the Neveu-Schwarz case and over $\Z$ in the Ramond case, and are well defined since they are finite when acting on any $v\in V$. In particular, in the $n\neq 0$ case, the prefactor of $k$ can be replaced by  $k+\frac12$, recovering the formulae in  \cite[Remark 4.19]{hongler2022conformal}. 
To prove the claim about the adjoints, we note that $(\Curr{n-k}\Curr{k})^\dagger=\Curr{k}^\dagger\Curr{n-k}^\dagger=\Curr{-k}\Curr{-n+k}.$ This immediately gives $\Loper{0}^\dagger=\Loper{0}$, while for $n\neq 0$, we have  
\[
\Loper{n}^\dagger=\sum_k k\Curr{-k}\Curr{-n+k}\;\stackrel{k':=-n+k}{=}\;\sum_{k'} (k'+n)\Curr{-n-k'}\Curr{k'}=\sum_{k'} k'\Curr{-n-k'}\Curr{k'}=\Loper{-n}.
\]
\end{proof}

\subsection{Fermionic Fock spaces and their structure}
Recall the definition of Clifford algebras $\TwoCliffNS$ and $\TwoCliffR$ given in the introduction.

\begin{defn}%[Fermionic Fock space]
% A \term{Verma module}  with central charge $c \in \C$ and highest weight $h \in \C$
% is a universal (initial) object in the category of
% highest weight modules with these $c$ and $h$ (unique up to isomorphism).
% As an explicit construction, we may
The \term{Neveu-Schwartz}, respectively, \term{Ramond} (two-chiral) \term{fermionic Fock spaces} are defined as the left $\TwoCliffNS$-module (respectively left $\TwoCliffR$-module)
\begin{align}\label{eq:chiral_fock}
\FullFockNS \; := \;
    \TwoCliffNS \big/ \mathcal{J}^{\NeveuSchwarz} \\
 \FullFockR \; := \;
 \TwoCliffR \big/ \mathcal{J}^{\Ramond}
\end{align}
where $\TwoCliffNS$ (respectively, $\TwoCliffR$) is seen as a left module
over itself, and
\begin{itemize}  \item $\mathcal{J}^{\NeveuSchwarz}$ is the submodule of $\TwoCliffNS$ generated by the elements $\Fgen{k}, \Fbar{k}$ for $k \in \set{\frac{1}{2}, \frac{3}{2}, \frac{5}{2}, \ldots}$,
	\item $\mathcal{J}^{\Ramond}$ is the submodule of $\TwoCliffR$ generated by the elements $\Fgen{k}, \Fbar{k}$ for $k=1,2\dots,$, and  the element $\lamb \Fgen{0} - \lambb \Fbar{0}$.
\end{itemize}
%called the \term{vacuum vector} of $\FockNS$.
We introduce the notation $\vacNS := \Igen + \mathcal{J}^{\NeveuSchwarz}$ for the \emph{vacuum vector} of $\FullFockNS$, and we denote by $\bspin=\Igen+\mathcal{J}^{\Ramond}$ and $\bmu=\lamb\Fgen{0}\bspin=\lambb\Fbar{0}\bspin$  for what may be called the \emph{vacuum vectors} of $\FullFockR$.
\end{defn}

\iffalse
By construction, the Fock space has a universal property.
We can state it in a more useful form after quick remarks on unitarity and
irreducibility.
We start, however, by the fermionic Sugawara construction which equips suitable
$\CliffNS$-modules (including the Fock space and some of our spaces of local fields)
with the additional structure of $\UvirHalf$-modules.

\fi

\iffalse
\emph{[TODO: Fermionic Sugawara construction in the NS sector.]}
\fi

\begin{lem}
	\label{lem: Fock_basis}
The modules $\FullFockNS$ and $\FullFockR$ are irreducible. The vectors
\begin{equation}
	\Psi_{n_{1}}\dots\Psi_{n_{k}}\bar{\Psi}_{m_{1}}\dots\bar{\Psi}_{m_{\bar k}}\ind:\quad k,\bar k\geq0,n_{1}<\dots<n_{k}<0,m_{1}<\dots<m_{\bar k}<0,\label{eq: basis_module_NS}
\end{equation}
the indices belonging to $\Z+\frac{1}{2},$ form a basis of $\FullFockNS$. The vectors
\begin{equation}
	\Psi_{n_{1}}\dots\Psi_{n_{k}}\bar{\Psi}_{m_{1}}\dots\bar{\Psi}_{m_{\bar k}}\sigma,
	k,\bar k\geq0,n_{1}<\dots<n_{k}\leq 0,m_{1}<\dots<m_{\bar k}<0,\label{eq: basis_module_R}
\end{equation}
indices belonging to $\Z$, form a basis of $\FullFockR$. There exists a unique inner product $\iprod{\cdot}{\cdot}$ on $\FullFockNS$ (respectively, on $\FullFockR$), such that $\iprod{
\ind}{\ind}=1$ (repsectivley, $\iprod{\bspin}{\bspin}=\iprod{\bmu}{\bmu}=1$ and $\iprod{\bspin}{\bmu}=0$) and such that for all $k \in \Z + \half$ (respectively, $k
\in\Z$) the adjoints of $\Fgen{k}$ and $\Fbar{k}$ are given by $\Fgen{k}^{\dagger} = \Fgen{-k}$ and $\Fbar{k}^{\dagger}=\Fbar{-k}$
\end{lem}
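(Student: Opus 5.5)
The plan is to realize both Fock spaces concretely, as exterior algebras (with a twist in the Ramond case), and to read off the basis, the irreducibility and the inner product from that model.

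\textbf{Spanning.} Start from the quotient definition and show that the listed vectors span. Any element of $\TwoCliffNS$ is a linear combination of ordered monomials in the generators. Use the anticommutation relations \eqref{eq: anticomm} to move every $\Fgen{k}$ and $\Fbar{k}$ with $k>0$ to the right, where it kills $\ind$. Each reordering step either drops two generators or keeps the length fixed, so an induction on the length of the monomial terminates. Since $\Fgen{k}^2=0$ for $k\neq 0$, repeated indices disappear, and the remaining negative modes can be sorted. This gives spanning by \eqref{eq: basis_module_NS}. In the Ramond case we additionally have $\Fgen{0}^2=\Igen$, and the relation $\lamb\Fgen{0}\bspin=\lambb\Fbar{0}\bspin$ lets us trade $\Fbar{0}$ for a multiple of $\Fgen{0}$. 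Here one anticommutes $\Fbar{0}$ past the other generators, which only changes signs. Hence at most one $\Fgen{0}$ and no $\Fbar{0}$ survive, and \eqref{eq: basis_module_R} spans.

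\textbf{Linear independence.} I would construct an explicit module. For NS, take the exterior algebra $\Lambda$ on formal symbols $\{e_n,\bar e_n\}_{n<0}$. Let $\Fgen{n}$ for $n<0$ act by wedging with $e_n$, and $\Fgen{n}$ for $n>0$ act by $2\times$(contraction with $e_{-n}$), and similarly for the barred generators. The relations \eqref{eq: anticomm} are then checked directly. The image of $1$ is annihilated by the positive modes, so by the universal property of the quotient we get a surjection $\FullFockNS\to\Lambda$ sending the listed vectors to a basis; hence they are independent.

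For the Ramond case, add one more odd generator $e_0$ and let $\Fgen{0}$ act by $e_0\wedge+\iota_{e_0}$, which squares to $1$. Let $\Fbar{0}$ act by $\lamb^{2}$-rescaled $\pm\Fgen{0}\cdot(-1)^{F}$, where $F$ is the fermion parity of the remaining modes. The normalization must be chosen so that $\Fbar{0}^2=1$, $\Fbar{0}$ anticommutes with all $\Fgen{n}$ and $\Fbar{m}$, and $\lamb\Fgen{0}1=\lambb\Fbar{0}1$. A concrete choice to check is $\Fbar{0}:=\i\,\Fgen{0}(-1)^{N}$, where $N$ counts the $e_0$ factors together with the other parity. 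Getting these signs consistent is the main obstacle I expect: one needs an operator that anticommutes with $\Fgen{0}$, squares to one, and matches $\lamb/\lambb=\i$ on the vacuum.

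\textbf{Irreducibility.} Given a nonzero $v$ in the module, apply positive modes to its component of maximal total weight, and then to its top monomial. This reaches a nonzero multiple of the vacuum, because contraction removes one factor at a time. The vacuum generates everything, so the module is irreducible. In the Ramond case one also uses $\Fgen{0}$ to pass from $\bmu$ to $\bspin$.

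\textbf{Inner product.} Declare the basis vectors \eqref{eq: basis_module_NS}, \eqref{eq: basis_module_R} orthogonal, with $\langle v,v\rangle=2^{\#\text{modes}}$ (no factor from $\Fgen{0}$, consistent with $\iprod{\bmu}{\bmu}=1$). Then verify $\Fgen{k}^\dagger=\Fgen{-k}$ on basis elements: wedging by $e_{-k}$ and $2\iota_{e_{-k}}$ are adjoint for this weighting. For $\Fgen{0}$, $\Fbar{0}$, check self-adjointness on the two-dimensional blocks spanned by $w$ and $\Fgen{0}w$. Uniqueness follows because any such inner product is determined by moving operators across: $\langle\Psi\cdots\ind,\Psi'\cdots\ind\rangle$ reduces via adjoints and \eqref{eq: anticomm} to multiples of $\iprod{\ind}{\ind}$, respectively to the prescribed values on $\bspin,\bmu$.
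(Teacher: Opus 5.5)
Your plan works in the Neveu--Schwarz sector: spanning by reordering, independence from the exterior-algebra model via the universal property of the quotient, irreducibility by contracting away a monomial with a maximal set of modes, and the weighted orthogonal inner product. Your weight $2^{\#\text{modes}}$, not counting $\Fgen{0}$, is the right one. The gap is exactly the step you flagged, and it is more than a sign problem: neither of your candidates for $\Fbar{0}$ gives a $\TwoCliffR$-action.

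\begin{itemize}
\item If $(-1)^{N}$ is the total parity, it anticommutes with every mode, and $\Fgen{0}$ anticommutes with every nonzero mode. Hence $\Fgen{0}(-1)^{N}$ \emph{commutes} with all nonzero modes; for instance $(\Fgen{0}(-1)^{N}\Fgen{-1}+\Fgen{-1}\Fgen{0}(-1)^{N})1=-2\,e_0\wedge e_{-1}\neq 0$.
\item If $(-1)^{F}$ is the parity of the nonzero modes only, $\Fgen{0}(-1)^{F}$ still commutes with the nonzero modes, and in addition commutes with $\Fgen{0}$.
\item The prefactor $\i$ is also wrong. It gives $\Fbar{0}1=\i e_0$, i.e.\ $\lambb\Fgen{0}1=\lamb\Fbar{0}1$. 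Then $\lamb\Fgen{0}-\lambb\Fbar{0}$ does not annihilate $1$, and the universal property yields no map out of $\FullFockR$.
\end{itemize}
Since Ramond-sector independence rests entirely on this model, that part of the argument fails as written.

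What you need is a grading that anticommutes with $\Fgen{0}$ but commutes with all nonzero modes. That is $(-1)^{N_0}$, where $N_0\in\{0,1\}$ counts the $e_0$ factor alone. Set $\Fbar{0}:=\lamb^{2}\,\Fgen{0}(-1)^{N_0}=-\i\,(e_0\wedge{}-\iota_{e_0})$. Then:
\begin{itemize}
\item $\Fbar{0}^2=-(e_0\wedge{}-\iota_{e_0})^2=1$;
\item $\Fbar{0}$ anticommutes with $\Fgen{0}=e_0\wedge{}+\iota_{e_0}$ and with all $\Fgen{n},\Fbar{m}$ for $n,m\neq 0$;
\item $\lambb\Fbar{0}1=-\i\lambb\,e_0=\lamb\Fgen{0}1$;
\item $\Fbar{0}$ is self-adjoint for your inner product.
\end{itemize}
With this choice the rest of your proof goes through.

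Once repaired, your route differs from the paper's in how independence is obtained. The paper constructs no model. It uses the identities \eqref{eq: anticomm_all}--\eqref{eq: anticomm_all_R}, i.e.\ strings of annihilators applied to basis vectors, both for irreducibility and to reduce independence to $\ind\neq 0$ and $\bspin\neq 0$. It treats that nonvanishing as evident, or reads it off from the lattice realization, where $1$ and $\sigma_{\root^{\circ}}$ are not null fields. Your model proves this nonvanishing directly, so your argument is more self-contained. The cost is that you must build the zero-mode pair $\Fgen{0},\Fbar{0}$ by hand, which is exactly where your draft went wrong.
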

It is natural to call a Clifford module supporting such an inner product \emph{unitary}. 
\begin{proof}
%The fact that \eqref{eq: basis_module_NS} and \eqref{eq: basis_module_R} are bases of $\FullFockNS$ and $\FullFockR$ respectively can be established purely algebraically by a PBW - type argument, but we prefer the following evelentary one. 
It is clear that any element $\Psi_{n_{1}}\dots\Psi_{n_{k}}\bar{\Psi}_{m_{1}}\dots\bar{\Psi}_{m_{\bar k}}\ind$ with $n_i,m_i
\in\Z+\half$, (respectively, $\Psi_{n_{1}}\dots\Psi_{n_{k}}\bar{\Psi}_{m_{1}}\dots\bar{\Psi}_{m_{\bar k}}\sigma$ with $n_i,m_i
\in\Z$) can be written as a linear combination of  \eqref{eq: basis_module_NS} (respectively, \eqref{eq: basis_module_R}), by moving all the $\Fgen{n_i}$ and $\Fbar{m_i}$ with $n_i,m_i>0$ to the right using the anti-commutation relations \eqref{eq: anticomm}.
Moreover, a straightforward computation shows that, for $n_i,m_i$ and $n_i',m_i'$ % \in \Z+\frac{1}{2}
as in \eqref{eq: basis_module_NS}, we have
\begin{multline}
	\label{eq: anticomm_all}
\Fbar{-m_{\bar{k}'}}\dots\Fbar{-m'_1}\Fgen{-n'_{k'}}\dots \Fgen{-n'_1}\Fgen{n_{1}}\dots\Fgen{n_{k}}\Fbar{m_{1}}\dots\Fbar{m_{\bar k}}\ind \\ = \begin{cases} 2^{k+\bar{k}}\ind,& k=k', \bar{k}=\bar{k}', n_i=n_i',m_i=m_i'\text{ for all } i\\
	0, & \text{otherwise}.
	\end{cases}
\end{multline}
Similarly, for $n_i,m_i$ and $n_i',m_i'$ as in \eqref{eq: basis_module_R}, further assuming $n_k, n'_{k'} \ne 0$
to isolate the zero modes $\Fgen{0}$, we have
\begin{multline}
	\label{eq: anticomm_all_R}
\Fbar{-m_{\bar{k}'}}\dots\Fbar{-m'_1}\Fgen{-n'_{k'}}\dots \Fgen{-n'_1}\Fgen{0}^{b'}\Fgen{0}^{b}\Fgen{n_{1}}\dots\Fgen{n_{k}}\Fbar{m_{1}}\dots\Fbar{m_{\bar k}}\sigma \\ = \begin{cases}
	2^{k+\bar{k}}\sigma,& b+b' \text{ even}, k=k', \bar{k}=\bar{k}', n_i=n_i',m_i=m_i'\text{ for all } i\\
	\bar{\lambda} (-1)^{k+\bar{k}} 2^{k+\bar{k}}\mu,& b+b' \text{ odd}, k=k', \bar{k}=\bar{k}', n_i=n_i',m_i=m_i'\text{ for all } i\\
	0, & \text{otherwise}.
	\end{cases}
\end{multline}
% Moreover, a straightforward computation shows that, for $n_i,m_i$ and $n_i',m_i'$ as in \eqref{eq: basis_module_NS} and $\Op=\ind$  (respectively, \eqref{eq: basis_module_R}, and $\Op=\sigma$), we have
% \begin{multline}
% 	\label{eq: anticomm_all}
% \Fbar{-m_{\bar{k}'}}\dots\Fbar{-m'_1}\Fgen{-n'_{k'}}\dots \Fgen{-n'_1}\Fgen{n_{1}}\dots\Fgen{n_{k}}\Fbar{m_{1}}\dots\Fbar{m_{\bar k}}\Op \\ = \begin{cases} 2^{k+\bar{k}}\Op,& k=k', \bar{k}=\bar{k}', n_i=n_i',m_i=m_i'\text{ for all } i\\
% 	0, & \text{else if} \sum |n'_i|\geq \sum |n_i|\text{ or } \sum |m'_i|\geq \sum |m_i|.
% 	\end{cases}
% \end{multline}
Therefore, given a linear combination of elements \eqref{eq: basis_module_NS} (respectively \eqref{eq: basis_module_R}) with not all coefficients zero, one can multiply it on the left by a suitable string $$\Fbar{-m_{\bar{k}'}}\dots\Fbar{-m'_1}\Fgen{-n'_{k'}}\dots \Fgen{-n'_1}$$ to obtain a non-zero multiple of $\ind$ (respectively, $\bspin$). This easily shows that any nonzero submodule must contain $\ind$ (resp. $\sigma$), and irreducibility of $\FullFockNS$ (resp. $\FullFockR$) follows. The assertion that the collections  \eqref{eq: basis_module_NS} and  \eqref{eq: basis_module_R} are linearly independent now follows once we know that $\ind\neq 0$ and $\sigma \neq 0$, which is easy to see\footnote{it is also immediate from Corollary \ref{cor: Virasoro_structure} below, since $1$ and $\sigma_{\root^\circ}$ are obviously not null fields.}.

The same computations \eqref{eq: anticomm_all} and \eqref{eq: anticomm_all_R} shows that the inner product $\iprod{\cdot}{\cdot}$ under which the basis vectors \eqref{eq: basis_module_NS} (respectively \eqref{eq: basis_module_R}) are orthogonal with norms $2^{\frac{k+\bar{k}}{2}}$ is as required, and is uniquely defined by these properties.
\end{proof}

If follows from Lemma \ref{lem:fermionic_sugawara_ns} that $\FullFockNS$ and $\FullFockR$ are also \emph{unitary} $\TwoVir$ modules. The unitarity allows us to deduce their structure. We make a simple preliminary observation. Denote, for $\Op\in\{\ind,\sigma\}$, 
\[
\FockSym_\Op^{(n,m)}=\spanned\{\Fgen{n_{1}}\dots\Fgen{n_{k}}\Fbar{m_{1}}\dots\Fbar{m_{\bar k}}\Op:n_1+\dots+n_k=-n,m_1+\dots+m_{\bar{k}}=-m\}.\]
\begin{lem}
	\label{lem: grading}
	The direct sum decompositions 
	\[
	\FullFockNS=\bigoplus_{(n,m)\in\left(\frac12\Z_{\geq 0}\right)^2} \FockNSG{n}{m},\quad \FullFockR=\bigoplus_{(n,m)\in\left(\Z_{\geq 0}\right)^2} \FockRG{n}{m},
	\]	
give gradings of $\FullFockNS$ and $\FullFockR$ both as Cliffford and 	$\TwoVir$ modules, i.e., acting by $\Fgen{n}$ or $\Loper{n}$ (respectively, $\Fbar{n}$ or $\Loperbar{n}$) decreases the first (respectively, the second) grading index by $n$. The vectors in the component with grade $(n,m)$ of $\FullFockNS$ are eigenvectors of $\Loper{0}$ and $\Loperbar{0}$, with eigenvalues $n$ and $m$ respectively, while the vectors in the component of $\FullFockR$ with grades $(n,m)$ are eigenvectors of $\Loper{0}$ and $\Loperbar{0}$ with eigenvalues $\frac{1}{16}+n$ and $\frac{1}{16}+m$, respectively.
\end{lem}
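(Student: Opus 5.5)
The plan is to work directly with the basis vectors of Lemma \ref{lem: Fock_basis}, since everything reduces to a computation on those. First I check that the spaces $\FockNSG{n}{m}$ (respectively $\FockRG{n}{m}$) give a direct sum decomposition. Each basis vector \eqref{eq: basis_module_NS} (respectively \eqref{eq: basis_module_R}) lies in exactly one of them, namely the one indexed by $(-\sum n_i,-\sum m_j)$. For this to be consistent, the definition of $\FockSym_\Op^{(n,m)}$ as a span over \emph{all} index strings must agree with the span of the basis vectors of that bidegree. So I verify that the reduction procedure from the proof of Lemma \ref{lem: Fock_basis} preserves the bidegree. That procedure moves positive modes to the right and uses \eqref{eq: anticomm} and $\Fgen{k}\Op=\Fbar{k}\Op=0$ for $k>0$. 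Each anticommutator $2\delta_{n+m,0}$ removes a pair of modes whose indices sum to zero. The relation $\lamb\Fgen0\sigma=\lambb\Fbar0\sigma$ exchanges zero modes and does not change the sums either. Hence the reduction never changes $(\sum n_i,\sum m_j)$, and the decomposition is direct because the basis is linearly independent.

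Next comes the Clifford grading. Applying $\Fgen{k}$ to a monomial in $\FockSym_\Op^{(n,m)}$ prepends $\Fgen{k}$, which yields a spanning vector of $\FockSym_\Op^{(n-k,m)}$; the same holds for $\Fbar{k}$ with $(n,m-k)$. For the Virasoro grading, recall that $\Loper{n}$ in \eqref{eq:fermionic_sugawara_ns}, \eqref{eq:fermionic_sugawara_r} is a finite (on each vector) sum of products $\Fgen{n-k}\Fgen{k}$. Each such product lowers the first index by $n$, so $\Loper{n}$ does as well, and likewise $\Loperbar{n}$ lowers the second index. A negative grade means the component is $\{0\}$, which is consistent.

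The remaining step, and the only real computation, is the eigenvalue claim. I apply $\Loper0$ to a basis monomial $v$. By the grading, $\Loper0 v$ stays in the same component, but I need the exact scalar. For $k>0$ the operator $\Fgen{-k}\Fgen{k}$ is half the number operator: since $\Fgen{-k}\Fgen{k}\Fgen{j}=-\Fgen{-k}\Fgen{j}\Fgen{k}+2\delta_{j+k,0}\Fgen{-k}$, commuting $\Fgen{-k}\Fgen{k}$ through the monomial and using that $\Fgen{k}$ kills $\Op$ shows $\Fgen{-k}\Fgen{k}v=2v$ if $-k$ occurs among the $n_i$, and $0$ otherwise. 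Therefore $\half\sum_{k>0}k\Fgen{-k}\Fgen{k}v=(-\sum_i n_i)v=nv$.

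The $\Fbar{}$ modes anticommute with $\Fgen{k}$ and hence contribute nothing. In the Ramond case the $k=0$ term carries weight $0$, so a possible $\Fgen0$ in the monomial is harmless. The constant $\frac1{16}$ in \eqref{eq:fermionic_sugawara_r} then gives the eigenvalue $\frac1{16}+n$. The antiholomorphic side is identical, which gives the eigenvalue $m$ (respectively $\frac1{16}+m$) for $\Loperbar0$.
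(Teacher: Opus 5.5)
Your proposal is correct and follows essentially the same route as the paper. The paper's proof simply says the grading is clear from the anticommutation relations \eqref{eq: anticomm} and the fact that every Sugawara monomial in $\Loper{n}$ has the form $\Fgen{n-k}\Fgen{k}$, and calls the eigenvalue claim a straightforward computation; you supply exactly those omitted details, namely the bidegree-preserving reduction to the basis of Lemma \ref{lem: Fock_basis} that makes the decomposition direct, and the number-operator computation $\frac12\Fgen{-k}\Fgen{k}v\in\{v,0\}$ on basis monomials that gives the eigenvalues.
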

\begin{proof}
The first assertion of the lemma is clear from the anti-commutation relations \eqref{eq: anticomm} and the fact that the expansion of $\Loper{n}$ in the Sugawara construction contains only monomials of the form $\Fgen{n-k}\Fgen{k}$. The second assertion is a straightforward computation.
\end{proof}

Denote $\psi:=\Fgen{-\half}\ind$, $\psi^\star:=\Fbar{-\half}\ind$ and $\ben:=\i\Fgen{-\half}\Fbar{-\half}\ind.$ Given $\Op\in\FullFockNS$ or $\Op\in\FullFockR$, we denote by $\HVM{\Op}$ the $\TwoVir$-module generated by $\Op$.

\begin{prop}
	\label{prop: Virasoro_structure}
	As a $\TwoVir$ modules, we have the following
	direct sum decompositions: 
	\[
	\FullFockNS=\HVM{\ind}\oplus\HVM{\bpsi}\oplus\HVM{\bpsibar}\oplus\HVM{\ben},
	\]
	\[
	\FullFockR=\HVM{\bspin}\oplus\HVM{\bmu}.
	\]
	where the direct summands are irreducible highest weight modules described as follows:
	\[
	\HVM{\ind} \isom \Irrep{\half}{0}\otimes\Irrepbar{\half}{0},\quad \HVM{\bpsi} \isom\Irrep{\half}{\half}\otimes\Irrepbar{\half}{0},\quad \HVM{\bpsibar} \isom\Irrep{\half}{0}\otimes\Irrepbar{\half}{\half}\quad
	\HVM{\ben} \isom\Irrep{\half}{\half}\otimes\Irrepbar{\half}{\half}.
	\]
	\[
	\HVM{\bspin} \isom \HVM{\bmu}\isom \Irrep{\half}{\frac{1}{16}}\otimes\Irrepbar{\half}{\frac{1}{16}}
	\]
\end{prop}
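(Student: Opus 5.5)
We combine unitarity, the grading of Lemma \ref{lem: grading}, the classification Theorem \ref{thm:fqs_nonunitarity}, and a character count. By Lemma \ref{lem: Fock_basis} and Lemma \ref{lem:fermionic_sugawara_ns}, both $\FullFockNS$ and $\FullFockR$ are unitary $\TwoVir$-modules. Each graded piece $\FockNSG{n}{m}$, $\FockRG{n}{m}$ is finite-dimensional, and the gradings are bounded below. The first step is to check directly that $\ind,\bpsi,\bpsibar,\ben,\bspin,\bmu$ are joint singular vectors with the weights in the table. Positive modes $\Loper{n}$, $n>0$, lower the grading, and the relevant graded pieces below these vectors are zero or spanned by lower vacua. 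For instance, $\Loper{1}\bpsi$ would lie in grade $(-\frac12,0)$, which is empty. For $\bspin,\bmu$ we use $\Loper{0}=\frac1{16}$ on grade $(0,0)$. The eigenvalues are read off Lemma \ref{lem: grading}.

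Next, unitarity gives complete reducibility in the following sense. Take $v$ a joint singular vector in a graded piece. Then $\HVM{v}$ is a unitary highest-weight module for $\UvirHalf\otimes\UvirHalfbar$. By the standard argument recalled before Theorem \ref{thm:fqs_nonunitarity}, applied to each chirality, it is irreducible and isomorphic to $\Irrep{\half}{h}\otimes\Irrepbar{\half}{\bar h}$. The orthogonal complement of $\HVM{v}$ is again a graded unitary submodule. Iterating on the lowest graded piece of the complement, which must contain a joint singular vector, every graded unitary module with finite-dimensional pieces and grading bounded below splits as an orthogonal direct sum of such irreducibles. By Theorem \ref{thm:fqs_nonunitarity}, each weight $h,\bar h$ lies in $\{0,\frac1{16},\frac12\}$.

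It remains to identify the multiplicities. This is the main obstacle, and I would handle it with a character argument. In the NS sector, fermion parity in each chirality is preserved, since the Sugawara generators are quadratic. So $\FullFockNS$ splits into four sectors $(\text{even}/\text{odd})\otimes(\text{even}/\text{odd})$ as $\TwoVir$-modules, and each chiral sector has integer (resp. half-integer) conformal weights. The generating function of the chiral even part is $\frac12\bigl(\prod(1+q^{k})+\prod(1-q^{k})\bigr)$ over $k\in\N-\frac12$. I would compare this with the known characters of $\IrrepI{0}$ and $\IrrepI{\frac12}$, which are exactly these even and odd fermionic characters (Rocha-Caridi formulas). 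The comparison can be organized as follows. The sector is a sum of irreducibles with weights in $\{0,\frac12,\frac1{16}\}$. Integrality forces weight $0$ in the even sector, and the grade-$0$ piece is one-dimensional. Hence the multiplicity of $\IrrepI{0}$ is one, and comparing full characters shows nothing remains. The same reasoning applies to the odd sector with $\IrrepI{\frac12}$, whose lowest piece is spanned by $\Fgen{-\frac12}\ind$. Tensoring the two chiralities yields the four summands generated by $\ind,\bpsi,\bpsibar,\ben$; note that $\ben$ is proportional to $\Fgen{-\half}\Fbar{-\half}\ind$.

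In the Ramond sector all weights are in $\frac1{16}+\Z_{\geq0}$, so only $\Irrep{\half}{\frac1{16}}\otimes\Irrepbar{\half}{\frac1{16}}$ can occur. The grade-$(0,0)$ piece is two-dimensional, spanned by $\bspin$ and $\bmu$ (orthonormal). The multiplicity is therefore at least two, with $\HVM{\bspin}$ and $\HVM{\bmu}$ orthogonal. To see that nothing else appears, I would count characters. The character of $\FullFockR$ is $2q^{\frac1{16}}\bar q^{\frac1{16}}\prod_{k\ge1}(1+q^k)(1+\bar q^k)$, using the basis \eqref{eq: basis_module_R} with $\Fgen{0}$ contributing the factor $2$. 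This equals twice the product of the known characters $\chi_{\frac1{16}}=q^{\frac1{16}}\prod_{k\geq1}(1+q^k)$. If I wish to avoid the Rocha-Caridi formulas, the fallback is a direct argument: the irreducible module $\Irrep{\half}{h}$ is a quotient of the Verma module, and the fermionic character is an upper bound for it, once the null vector at level $1$ or $2$ is exhibited in the Fock space.
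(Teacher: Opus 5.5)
Your proof is correct, and its skeleton is the paper's: unitarity of the Sugawara action, graded orthogonal complements, the fact that a vector of minimal total grade in a complement is a joint singular vector, and Theorem \ref{thm:fqs_nonunitarity} restricting its weights to $\{0,\frac1{16},\frac12\}$. The difference is in the last step. There the character comparison you call the main obstacle is unnecessary.

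The paper argues as follows. Grades in $\FullFockNS$ lie in $(\frac12\Z_{\ge0})^2$, so a minimal vector $\Op$ in the complement of $\HVM{\ind}\oplus\HVM{\bpsi}\oplus\HVM{\bpsibar}\oplus\HVM{\ben}$ must have weights in $\{0,\frac12\}^2$. Each of these four joint eigenspaces is one-dimensional, spanned by $\ind$, $\bpsi$, $\bpsibar$ or $\ben$. Hence $\Op$ lies both in the module and in its complement, a contradiction. In $\FullFockR$ only the pair $(\frac1{16},\frac1{16})$ is possible, and that eigenspace is the span of $\bspin,\bmu$, so the same conclusion holds.

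You already use exactly this count in the NS even sector (``the grade-$0$ piece is one-dimensional, hence multiplicity one''). So the character comparison that follows it adds nothing. In the Ramond sector the same count gives multiplicity exactly two, not ``at least two'', so no characters are needed there either. The chiral parity splitting and re-tensoring are likewise superfluous.

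What the paper's route gains is self-containedness. It uses only the nonunitarity half of the FQS classification and no Rocha-Caridi formulas. The fermionic character identities for $\IrrepI{0}$, $\IrrepI{\frac12}$, $\IrrepI{\frac1{16}}$ then come out as a corollary rather than being an input.

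Your stated fallback would not do the job as written. An upper bound on the irreducible character, whether from its embedding in the Fock space or from exhibited null vectors, does not determine multiplicities. The lowest-grade count above is what actually closes the argument.
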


\begin{proof}
	It follows from Lemma \ref{lem: grading} that each of $
	\ind,\bpsi,\bpsibar,\ben,\bspin,\bmu$ is an eigenvector of both $\Loper{0}$ and $\Loperbar{0},$ with eigenvalues given as in the table in Section \ref{subsec: intro_Fock}. Moreover, those are the only joint eigenvectors with these eigenvalue pairs, by explicit count of dimensions in the corresponding gradings using bases \eqref{eq: basis_module_NS}, \eqref{eq: basis_module_R}. Applying $\Loper{n}$ or $\Loperbar{n}$ with $n\geq 1$ to any of them makes one of the grading components negative, and the only vectors in either $\FullFockNS$ or $\FullFockR$ with a negative grade component is $0$. Hence, each of the vectors in the above list is highest weight for both of the commuting actions of $\UvirHalf $. By unitarity, the modules generated by them are irreducible, and in particular, their sum is a direct sum.
	
	A standard Vandermonde determinant argument implies that the modules $\mathcal{M}=\HVM{\ind}\oplus\HVM{\bpsi}\oplus\HVM{\bpsibar}\oplus\HVM{\ben}$ and $\mathcal{M}'=\HVM{\bspin}\oplus\HVM{\bmu}$ respect the grading of Lemma \ref{lem: grading}, i.e., they are vector space direct sums of their intersections with the subspaces $\FockNSG{n}{m}$ and $\FockRG{n}{m}.$ By unitarity again, $\mathcal M$ and $\mathcal M'$ have complementary submodules in $\FullFockNS$ and $\FullFockR$ respectively, which are simply the vector space direct sums of $\FockNSG{n}{m}\cap \mathcal M^\bot$ and $\FockRG{n}{m}\cap \mathcal M'^\bot$. Let $\Op$ be a vector in one of these orthogonal complements with the smallest value of $(n+m);$ then the $\TwoVir$ submodule generated by $\Op$ is highest weight for both actions of $\UvirHalf$ and also unitary, hence the classification of Theorem \ref{thm:fqs_nonunitarity} shows that the $\Loper{0}$ eigenvalue of $\Op$ must belong to $\{0,\frac12,\frac{1}{16}\},$ and similarly for $\Loperbar{0}$. In the $\FullFockNS$ case, this forces the eigenvalue pair of $\Op$ to be one of $(0,0),$ $(0,\half)$, $(\half,0)$ and $(\half,\half)$, and hence by the above discussion $\Op$ is proportional to one of $\ind,\bpsi,\bpsibar,\ben,$  which is absurd. Similarly, in $\FullFockR$ case, the eigenvalue pair must be $(\frac{1}{16},\frac{1}{16})$, but the joint eigenspace with this eigenvalue pair is spanned by $\sigma$ and $\mu$. We conclude that the compements of $\mathcal M$ and $\mathcal M'$ are zero.
	
	It is clear that a unitary $\TwoVir$-module that it highest weight for both actions of $\UvirHalf$ is simply a tensor product of the corresponding $\UvirHalf$ modules, which concludes the proof.
\end{proof}

\iffalse
\begin{lem}\label{lem:full_fock_virasoro_structure}
As a $\TwoVir$-module, we have
\begin{align*}
\FullFockNS \; \isom \;
    \big( \IrrepSym(\hI) \tens \overline{\IrrepSym}(\hI) \big) \oplus
    \big( \IrrepSym(\hI) \tens \overline{\IrrepSym}(\hE) \big) \oplus
    \big( \IrrepSym(\hE) \tens \overline{\IrrepSym}(\hI) \big) \oplus
    \big( \IrrepSym(\hE) \tens \overline{\IrrepSym}(\hE) \big) .
\end{align*}
\end{lem}

By contrast, the relevant module in the Ramond sector cannot be constructed
simply by tensoring modules in each chirality.
Specifically, while there exists a natural
``twisted'' Fock space $\CliffR$-module (and this module appears as a $\CliffR$-submodule
of our space of local fields), the zero-modes of the two chiralities
interact in such a way that the local fields do not contain the tensor product of
two copies of this for~$\TwoCliffR$. We next describe the construction of the
relevant $\TwoCliffR$-module directly.
\fi

\iffalse
\emph{[TODO: Construct the very-twisted-full-fermionic-Fock-space-in-the-Ramond-sector
and get its universal property, irreducibility, and structure as a $\TwoVir$-module.]}
\fi

\section{Ising local fields, fermionic current action, and Virasoro action}

\subsection{Local fields and local fields with a fermion insertion.}

\label{subsec:Local-fields} All the notions in this section will
depend on a rooted rhombi tiling $(\isG,\root)$ of the plane of mesh
size $\delta$. Recall the definitions of the spaces $\LF(\isG),\LFPe(\isG),$
and $\LFPo(\isG)$ of local spin and Grassmann polynomials on $\isG,$
and of $R_{\Op}$ and $\Op_{\delta,z},$ given in Section \ref{subsec: local_lattice_fields_intro}.

Given $\isG^{(1)},\dots,\Gamma^{(k)}$ of mesh size $1$ and an isoradial
graph $\Od$ with an Ising model on it (with some boundary conditions),
let us recall how the critical Ising correlations 
\begin{equation}
	\E_{\Od}\left(\Op^{(1)}_{\delta,z_{1}}\dots\Op^{(k)}_{\delta,z_{k}}\right)\label{eq: def_cor_function}
\end{equation}
are defined, where $\Op^{(i)}\in\LF(\isG^{(i)})\oplus\LFPe(\isG^{(i)})\oplus\LFPo(\isG^{(i)}),$
and $\delta R_{\Op^{(i)}}$-balls of $\Od$ at $z_{i}$ agree with
$\isG^{(i)}$ and don't intersect each other or the boundary of $\Od.$
First, using linearity, we assume that each $\Op^{(i)}$ is a monomial.
If the total Grassmann degree (i.e., the total number of $\psi$)
of these monomials is odd, we define the correlation to be zero. Else,
we define it to be the \emph{spin-fermion correlation} as in \cite[Section 2]{chelkak2021correlations},
by substituting for the formal variables the spins and the fermion variables in the Ising model with
(Baxter's) Z-invariant critical weights \cite{baxter1985exactly, ChelkakSmirnov2} with the 
chosen boundary conditions.  Recall that the Ising Hamiltonian with Baxter's weights is \[\mathcal{H}(\sigma)=-\sum_{x\sim y}\beta_{(xy)}\sigma_x\sigma_y,\quad \beta_{(xy)}=-\frac12\log\tan\frac{\theta_{(xy)}}{2},\] where $\theta_{(xy)}$ is half the rhombus angle against the edge $(xy)$. 

More concretely, to define the correlation
$\E[\psi_{w_{1}}\dots\psi_{w_{2N}}\sigma_{v_{1}}\dots\sigma_{v_{n}}]$,
we choose an arbitrary subset $\gamma$ of the edges of the dual graph
$\isG^{\bullet}$ such that, when $\gamma$ is viewed as $1$-chain
over $\Z/(2\Z),$ we have $\partial\gamma=w^{\bullet}_{1}+\dots+w^{\bullet}_{2N}\mod2$.
We put 
\[
\mu_{\gamma}=e^{-2\beta\sum_{(vv')\cap\gamma\neq\emptyset}\sigma_{v}\sigma_{v'}},
\]
and 
\[
\E\left[\psi_{w_{1}}\dots\psi_{w_{2N}}\sigma_{v_{1}}\dots\sigma_{v_{n}}\right]:=\eta_{w_{1}}\dots\eta_{w_{2N}}\E\left[\sigma_{w^{\circ}_{1}}\dots\sigma_{w^{\circ}_{2N}}\sigma_{v_{1}}\dots\sigma_{v_{n}}\mu_{\gamma}\right].
\]
It is well known, and explained in \cite[Section 2]{chelkak2021correlations},
that only the sign of the above correlation depends on $\gamma$, and that
only on the homology class of $\gamma$ in $\Od\setminus\{v_{1},\dots,v_{n}\}.$
Moreover, to fix the sign of $\E\left[\psi_{w_{1}}\dots\psi_{w_{2N}}\sigma_{v_{1}}\dots\sigma_{v_{n}}\right]$
unambiguously, it is enough to choose lifts of $w_{1},\dots,w_{2N}$
onto the double cover $\Od_{[v_{1},\dots,v_{n}]}$ of $\Od$ ramified
at $v_{1},\dots,v_{n};$ changing the sheet of one lift flips the
sign, see the convention at \cite[Definition 2.23]{chelkak2021correlations} or discussion around Definition \ref{def: insertion} below.
Since originally, the fermion variables are defined on corners of
$\isG^{(i)}$ or $\isG^{(i)}_{[\root^{\circ}]}$, for each $i$, in
order to unambiguously define the correlation (\ref{eq: def_cor_function}),
it is enough to choose one of the two embeddings of the $R_{\Op^{(i)}}$-neighborhood
of either $\isG^{(i)}$ or $\isG^{(i)}_{[\root^{\circ}]}$ into the
double cover of $\Od$ ramified at $z^{\circ}_{j}$ for all $j$ such
that $\Op^{(j)}\in\LFo\cup\LFPo$.

We follow \cite{hongler2022conformal} to define the concept of a
\emph{null field, }extending it naturally also to Grassmann polynomials: 
\begin{defn}
	A local field $\Op\in(\LF\oplus\LFPe\oplus\LFPo)(\isG)$ is called
	a \emph{null field }if there exists an $R>0$, such that, whenever
	$\Od$ is a finite isoradial graphs with any boundary conditions,
	$\Op^{(i)}\in(\LF\oplus\LFPe\oplus\LFPo)(\isG^{(i)}),$ $\delta R_{\Op^{(i)}}$-balls
	of $\Od$ at $z_{i}$ agree with $\isG^{(i)}$ and don't intersect
	each other or the boundary of $\Od$ or $\delta R$-ball at $z,$
	and the latter agrees with $\isG$, we have 
	\[
	\E_{\Od}\left[\Op_{\delta,z}\Op^{(1)}_{\delta,z_{1}}\dots\Op^{(k)}_{\delta,z_{k}}\right]=0.
	\]
	The space of all null fields is denoted by $\Null.$ We will use the
	notation $\Op_{1}\equiv\Op_{2}$ if $\Op_{1}-\Op_{2}\in\Null.$ The class of a field $\Op$ modulo null fields will be denoted by $[\Op],$ so that $\Op_{1}\equiv\Op_{2}$ if and only if $[\Op_{1}]=[\Op_2]$.
\end{defn}

\begin{rem}
	We believe that one could impose stronger requirements, e.g., take
	$R=R_{\Op}$ in the above definition, and that will lead to the same
	notion of null fields. We will not pursue this direction here. 
\end{rem}

We proceed to explaining the isomorphisms (\ref{eq: spin-ferm-isomorphism}).
Consider a monomial $\Op=\psi_{z_{1}}\dots\psi_{z_{2n}}\in\LFPee$
or $\sigma_{\root^{\circ}}\Op=\sigma_{\root^{\circ}}\psi_{z_{1}}\dots\psi_{z_{2n}}\in\LFPoe$.
We may fix a subset $\gamma_{1}$ of edges of the dual graph $\isG^{\bullet}$
such that $\pa\gamma_{1}=z^{\bullet}_{1}+\dots+z^{\bullet}_{2n}$
as chains over $\Z/(2\Z)$. Denote $\hat{\Op}=\eta_{z_{1}}\dots\eta_{z_{2n}}\sigma_{z^{\circ}_{1}}\dots\sigma_{z^{\circ}_{2n}}\mu_{\gamma_{1}}.$
Then, by definition, $\hat{\Op}\in\LFe$ (respectively, $\sigma_{\root^{\circ}}\hat{\Op}\in\LFo$)
has the same correlations with any distant fields as $\Op$ (respectively,
$\sigma_{\root^{\circ}}\Op$), i.e., $\hat{\Op}-\Op\in\Null$ (respectively,
$\sigma_{\root^{\circ}}\hat{\Op}-\sigma_{\root^{\circ}}\Op\in\Null$).
Conversely, given a spin monomial $\sigma_{v_{1}}\dots\sigma_{v_{n}}$,
we can always choose $N$ and $z_{1},\dots,z_{2N}$ in such a way
that $z^{\bullet}_{2i-1}=z^{\bullet}_{2i}$ for all $i=1,\dots,N$,
and $\prod^{2N}_{i=1}\sigma_{z^{\circ}_{i}}=\prod^{n}_{i=1}\sigma_{v_{i}}$
(if $n$ is even) or $\sigma_{\root^{\circ}}\prod^{2N}_{i=1}\sigma_{z^{\circ}_{i}}=\prod^{n}_{i=1}\sigma_{v_{i}}$
(if $n$ is odd). In this case, when inserting $\Op=\psi_{z_{1}}\dots\psi_{z_{2N}}$
(respectively, $\sigma_{\root^{\circ}}\Op=\sigma_{\root^{\circ}}\psi_{z_{1}}\dots\psi_{z_{2N}}$)
into a correlation, we may choose $\gamma_{1}$ as above to be empty,
which shows that $\Op$ (respectively, $\sigma_{\root^{\circ}}\Op$)
has the same correlation as $\eta_{z_{1}}\dots\eta_{z_{2N}}\prod^{n}_{i=1}\sigma_{v_{i}}$
when inserted into correlations with distant fields, for a suitable
choice square roots in the definition of $\eta_{z_{i}}'s$. Together,
these observations can be summarized as (\ref{eq: spin-ferm-isomorphism}).

Next, we define the \emph{fermion insertion maps} that add one fermion
to a Grassmann polynomial $\Op$ (possibly with spin insertion at
$\sigma_{\root^{\circ}}$) at a given corner $z$. These maps are
unambiguously defined when $\psi_{z}$ does not appear in any monomial;
else, one needs to choose the ``side'' of $z$ on the modified graph,
$z_{+}$ or $z_{-}$, where one performs the insertion. 
\begin{defn}
	\label{def: insertion}
	\label{def: ferm_insertion}For a corner $z\in\Cgr(\isG),$ the \emph{fermion insertion map} is a linear map $\LFPe(\isG)\to\LFPe(\isG)$ defined on monomials
	by
	\[
	\psi_{z_{1}}\dots\psi_{z_{n}}\mapsto\psi_{z_{\pm}}\psi_{z_{1}}\dots\psi_{z_{n}}=\begin{cases}
		\psi_{z}\psi_{z_{1}}\dots\psi_{z_{n}}, & z\notin\{z_{i}\}^{n}_{i=1};\\
		\pm(-1)^{i-1}\ds{z_i}^2\psi_{z_{1}}\dots\psi_{z_{i-1}}\psi_{z_{i+1}}\dots\psi_{z_{n}}, & z=z_{i}.
	\end{cases}
	\]
	Similarly, for a corner $z\in\Cgr(\isG_{[\root^{\circ}]}),$
	the fermion insertion map $\LFPo(\isG)\to\LFPo(\isG)$ is defined on monomials by 
	\[
	\psi_{z_{1}}\dots\psi_{z_{n}}\sigma_{\root^\circ}\mapsto\psi_{z_{\pm}}\psi_{z_{1}}\dots\psi_{z_{n}}\sigma_{\root^\circ}=\begin{cases}
		\psi_{z}\psi_{z_{1}}\dots\psi_{z_{n}}\sigma_{\root^{\circ}}, & z\notin\{z_{i},z^{\star}_{i}\}^{n}_{i=1};\\
		\pm(-1)^{i-1}\ds{z_i}^2\psi_{z_{1}}\dots\psi_{z_{i-1}}\psi_{z_{i+1}}\dots\psi_{z_{n}}\sigma_{\root^{\circ}}, & z=z_{i},\\
		\mp(-1)^{i-1}\ds{z_i}^2\psi_{z_{1}}\dots\psi_{z_{i-1}}\psi_{z_{i+1}}\dots\psi_{z_{n}}\sigma_{\root^{\circ}}, & z=z^{\star}_{i}.
	\end{cases}
	\]
\end{defn}

It is clear that these maps exchange $\LFPee$ with $\LFPeo$ (respectively,
$\LFPoe$ with $\LFPoo$)
\begin{defn}
	A field $\Op\in(\LFPe\oplus\LFPo)(\isG)$ is called \emph{real }if
	all coefficients in front of any monomial $\psi_{z_{1}}\dots\psi_{z_{n}}$
	or $\sigma_{\root^{\circ}}\psi_{z_{1}}\dots\psi_{z_{n}}$ in the expansion
	of $\Op$ belongs to $\bar{\eta}_{z_{1}}\dots\bar{\eta}_{z_{n}}\R$
	
	The reader will readily check that all correlations of real fields
	are real, and $\Op\in\LFPee\oplus\LFPoe$ is real if and only if the
	corresponding class in $\LF/\mathcal{N}$ can be represented by a
	spin polynomial with real coefficients.
\end{defn}

Let us give two related, important examples of a null field. First,
 $\psi_{z}$ is \emph{s-holomorphic}, i.e., for
any three corners $z_{1},z_{2},z_{3}$ adjacent to the same edge (so
that they correspond to consecutive sides of a rhombus following in counterclockwise direction) and any $\Op$, the field 
\[
\left(\cos\theta e^{\i\theta}\psi_{z_{1}}-\i \sin\theta e^{\i\theta}\psi_{z_{3}}-\psi_{z_{2}}\right)\Op
\]
is a null filed, where $\theta$ is half the angle between the sides of the rhombus corresponding to $z_1,z_2$. We stress that this is true even when the field $\Op$ contains fermionic insertion
	at some of $z_{1},z_{2},z_{3}$, provided that one performs the insertions according to Definition \ref{def: ferm_insertion}
	``on the side of the rhombus'' in question, i.e., at $z_i^{+}$ if the rhombus is on the left of $[z_i^\bullet,z_i^{\circ}]$ and at $z_i^{-}$ else. We call this property the s-holomorphicity of $\psi_{z}$ since it implies the s-holomorphicity, as a function of $z$, of $\E[\psi_{z}\Op]$ for any \emph{real} field $\Op$. 

Second, if $\Op$ is a real field, $G$ is an s-holomoprhic function
or spinor, and $\gamma_{1}$ and $\gamma_{2}$ are two contours on
$\isG^{\bullet}\cup\isG^{\circ}$ that are homotopic to each other
in the complement of the support of $\Op$ and ramification points
of $G$, then 
\[
\im\int_{\gamma_{1}}\psi_{z}G(z)\Op\,dz-\im\int_{\gamma_{2}}\psi_{z}G(z)\Op\,dz
\]
is a null field. This is a consequence of the fact that, when inserted
into a correlation with real fields, $\psi_{z}\Op$ becomes an s-holomorphic
function. 

\iffalse
\begin{rem}
	In the next subsection, we don't really need to care about the lifts
	since $z$ and $w$ are far from $\Op$ and we can just identify the
	double cover with $\Cgr$ itself in that region (in other words, we
	let $\gamma$ stay far away from $\Op$).
\end{rem}

\fi

\subsection{Fermionic current action}

\label{subsec:Fermionic-current-action} For the next three subsections, we fix an underlying isoradial graph $(\Gamma,\root)$ of mesh size $1$ tiling the plane; we will drop it from notation.

For a \emph{real} field $\Op\in\LFPe$ and $n\in\Z+\frac{1}{2}$, respectively, $\Op\in\LFPo$ and
$n\in\Z$, and $\alpha\in\C,$ we define 
\begin{equation}
	\RCurr n{\alpha}\Op=\im\oint_{\Op}\psi_{z}[\alpha z^{n-\frac{1}{2}}]\Op\,dz,\label{eq: def_Psi_alpha}
\end{equation}
where the integral is over a counterclockwise closed loop encircling
the $R_{\Op}$ ball around the origin. As noted at the end of the
previous section, any choice of contour leads to the same result up
to a null field; ditto for adding a null field to $\Op$. Thus, $\Psi^{[\alpha]}_{n}$
are well defined as linear maps 
\[
\Psi^{[\alpha]}_{n}:\LFPe/\Null\to\LFPe/\Null,\quad n\in\frac{1}{2}+\Z;\quad\Psi^{[\alpha]}_{n}:\LFPo/\Null\to\LFPo/\Null,\quad n\in\Z.
\]
For a complexification, given a \emph{real} field $\Op$, we simply define $\Psi^{[\alpha]}_{n}\i\Op=\Psi^{[\i\alpha]}_{n}\Op;$
note that this identity is then true for all fields $\Op.$ We then
put
\begin{equation}
	\Curr n\Op:=\frac{1}{2\sqrt{2\pi}}\left(\RCurr n1-\i\RCurr n{\i}\right)\Op;\quad\Currbar n\Op:=\frac{1}{2\sqrt{2\pi}}\left(\RCurr n1+\i\RCurr n{\i}\right)\bar{\Op}.\label{eq: def_Psi_Psibar}
\end{equation}
Note that $\Curr n's$, thus defined, are \emph{complex linear}: indeed, we have 
\begin{gather*}
	\Curr n(\i\Op)=\frac{1}{2\sqrt{2\pi}}\left(\RCurr n{\i}+\i\RCurr n1\right)\Op=\i\Curr n\Op,\\
	\Currbar n(\i\Op)=\frac{1}{2\sqrt{2\pi}}\left(\RCurr n{\i}-\i\RCurr n1\right)(-\bar{\Op})=\i\Currbar n\Op.
\end{gather*}
Let us also record the inverse to \eqref{eq: def_Psi_Psibar} given by 
\begin{equation}
	\RCurr{n}{\alpha}\Op=\sqrt{2\pi}(\alpha\Curr{n}\Op+\bar\alpha \Currbar{n}\bar{\Op}),\quad \alpha \in \C.
	\label{eq: Curr_to_Rcurr}
\end{equation}

\begin{prop}
	\label{prop: anticomm_discrete}
	The operators $\Curr{n},\Currbar{n}$ satisfy the anti-commutation relations \eqref{eq: anticomm}.
\end{prop}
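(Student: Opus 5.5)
The plan is to first prove a real-linear anticommutation relation for the operators $\RCurr{n}{\alpha}$, and then translate it into \eqref{eq: anticomm} using \eqref{eq: def_Psi_Psibar} and \eqref{eq: Curr_to_Rcurr}. The target real relation is: for a real field $\Op$, $\alpha,\beta\in\C$ and admissible $n,m$,
\begin{equation*}
\RCurr{n}{\alpha}\RCurr{m}{\beta}\Op+\RCurr{m}{\beta}\RCurr{n}{\alpha}\Op\equiv 4\,\im(2\pi\i\alpha\beta)\,\delta_{n+m,0}\,\Op ,
\end{equation*}
where the constant is to be fixed by the computation below.

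To prove it, write $\RCurr{n}{\alpha}\RCurr{m}{\beta}\Op$ as a double sum $\im\oint_{\gamma_1}\im\oint_{\gamma_2}\psi_z\psi_w[\alpha z^{n-\frac12}][\beta w^{m-\frac12}]\Op\,dw\,dz$, with $\gamma_2$ inside $\gamma_1$. The reversed product has the nesting swapped. By Grassmann anticommutation $\psi_z\psi_w=-\psi_w\psi_z$ for $z\neq w$, the sum of the two terms is the difference of two nested double integrals of the same integrand. Here I use that $\Op$ is real and that the monomials are real combinations, so the partial integrals remain real fields, as needed for the contour deformations justified at the end of Section \ref{subsec:Local-fields}.

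Fix $w$ on the inner contour and deform the $z$-contour from outside $\gamma_2$ to inside $\gamma_2$. Modulo null fields, the only obstruction is the discrete pole of $z\mapsto\psi_z\psi_w\Op$ at $z=w$. By the residue formula \eqref{eq: Discrete residue formula} together with the insertion rule of Definition \ref{def: ferm_insertion}, this contributes $2\bar\eta_w^2\cdot(\pm\eta_w^2)$ times $[\alpha w^{n-\frac12}]$ times $\psi_w$ removed. The result is a single contour integral
\begin{equation*}
c\,\im\oint[\alpha w^{n-\frac12}][\beta w^{m-\frac12}]\,dw\;\Op .
\end{equation*}
This is the one step needing care: the sides $w^\pm$ must be handled correctly, and a corner must not contribute twice when the two contours share it. I would choose the contours to lie on disjoint parts of $\isG^\circ\cup\isG^\bullet$ so that the pole is crossed cleanly.

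By the orthogonality \eqref{eq: orthogonality} with indices $(n-\frac12)+(m-\frac12)=-1$, the single contour integral equals $\im(2\pi\i\alpha\beta)\delta_{n+m,0}$. In the Ramond sector, both monomials are spinors on $\Cgr(\isGoc)$, so their product is single-valued and the same argument applies.

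Finally, insert \eqref{eq: Curr_to_Rcurr} and expand $\Curr{n}\Curr{m}+\Curr{m}\Curr{n}$ by taking the combinations $\alpha,\beta\in\{1,\i\}$. Using $\im(2\pi\i\alpha\beta)=2\pi\re(\alpha\beta)$, the mixed terms $\alpha\beta$ and $\bar\alpha\bar\beta$ separate out: the $\Psi\Psi$ coefficient is obtained from $\RCurr{}{1}$ and $\RCurr{}{\i}$ via the definition, and the cross term $\Psi\bar\Psi$ cancels because $\re((1)(\i))$-type contributions combine to zero after including the $\mp\i$ factors. Tracking the conjugation $\bar\Op$ in the definition of $\Currbar{n}$, which preserves realness, the normalization $\frac{1}{2\sqrt{2\pi}}$ should produce exactly $2\delta_{n+m,0}$. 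I expect the main obstacle to be the discrete residue bookkeeping in the pole-crossing step, including the signs $\pm$ and $\eta_w^2$, rather than this algebraic check.
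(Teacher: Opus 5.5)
Your proposal is correct and follows essentially the same route as the paper: Grassmann antisymmetry turns the anticommutator of the real operators $\RCurr{n}{\alpha}$ into a difference of $z$-contours outside and inside a fixed $w$-contour. The discrete residue at $z=w$, combined with the insertion rule, gives $4\im\oint[\alpha w^{n-\frac12}][\beta w^{m-\frac12}]\,dw$ (so $c=4$), orthogonality \eqref{eq: orthogonality} turns this into $4\im(2\pi\i\alpha\beta)\delta_{n+m,0}$, and the complex relations follow by algebra.

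The only imprecision is in that last step. It is cleaner to expand \eqref{eq: def_Psi_Psibar} directly rather than \eqref{eq: Curr_to_Rcurr}. Also, $\{\Curr{n},\Currbar{m}\}$ vanishes because $\{\RCurr{n}{1},\RCurr{m}{1}\}=8\pi\delta_{n+m,0}$ cancels $\{\RCurr{n}{\i},\RCurr{m}{\i}\}=-8\pi\delta_{n+m,0}$; the mixed anticommutators vanish on their own. In $\{\Curr{n},\Curr{m}\}$ the same two terms instead add up to $16\pi\delta_{n+m,0}$.
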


\begin{proof}
	Let us first compute the anti-commutation relations for $\RCurr{n}{\alpha}$: fix three large contours
	$\gamma_{1}\text{ inside }\gamma_{2}\text{ inside }\gamma_{3}$, so that $\gamma_1$ encircles $R_\Op$, then we have
	\begin{multline*}
		\left(\RCurr n{\alpha}\RCurr m{\beta}+\RCurr m{\beta}\RCurr n{\alpha}\right)\Op\stackrel{(1)}{\equiv}\im\oint_{\gamma_{3}}\psi_{z}[\alpha z^{n-\frac{1}{2}}]\im\oint_{\gamma_{2}}\psi_{w}[\beta w^{m-\frac{1}{2}}]\Op\,dw\,dz\\
		-\im\oint_{\gamma_{1}}\psi_{z}[\alpha z^{n-\frac{1}{2}}]\im\oint_{\gamma_{2}}\psi_{w}[\beta w^{m-\frac{1}{2}}]\Op\,dw\,dz\\
		\stackrel{(2)}{\equiv}\sum_{w\in\gamma_{2}}\left(\im\oint_{\gamma_{3}}-\im\oint_{\gamma_{1}}\right)\left(\psi_{z}[\alpha z^{n-\frac{1}{2}}]s_{\gamma_{2}}(w)2\bar{\eta}^{2}_{w}[\beta w^{m-\frac{1}{2}}]\psi_{w}\Op\right)\,dz\\
		\stackrel{(3)}{\equiv}\sum_{w\in\gamma_{2}}4s_{\gamma_{2}}(w)\bar{\eta}^{3}_{w}[\alpha w^{n-\frac{1}{2}}][\beta w^{m-\frac{1}{2}}](\psi_{w^{+}}\bar{\eta}_{w}\psi_{w}\Op-\psi_{w^{-}}\bar{\eta}_{w}\psi_{w}\Op)\\
		\stackrel{(4)}{\equiv}4\sum_{w\in\gamma_{2}}2s_{\gamma_{2}}(w)\bar{\eta}^{2}_{w}[\alpha w^{n-\frac{1}{2}}][\beta w^{m-\frac{1}{2}}]\Op\stackrel{(5)}{\equiv} 4\im\oint_{\gamma_{2}}[\alpha w^{n-\frac{1}{2}}][\beta w^{m-\frac{1}{2}}]\Op\,dw\\
		\stackrel{(6)}{\equiv}4\im(2\pi\i\alpha\beta)\delta_{n+m,0}\Op
		%=\delta_{n+m,0}\Op\cdot\begin{cases}
			%	8\pi, & \alpha=\beta=1;\\
			%	-8\pi, & \alpha=\beta=\i;\\
			%	0, & \alpha\beta\notin\R.
			%\end{cases}
		\end{multline*}
		Here, in (1),  we exchanged the order of summation and used the anti-symmetry of fermionic correlations; the (2) and (5) are by the definition \eqref{eq: def-Im-Int} of the discrete integration, (3) follows from \eqref{eq: Discrete residue formula}, (4) follows from Definition \ref{def: insertion}, and (6) is by \eqref{eq: orthogonality}. 
		
		The anti-commutation relations for $\Curr{n}$ and $\Currbar{m}$ are now straightforward to check using~\eqref{eq: def_Psi_Psibar}: writing $\{A;B\}:=AB+BA,$ we have 
		\begin{multline*}
			8\pi\{\Curr n;\Curr m\}=\left(\RCurr n1-\i\RCurr n{\i}\right)\left(\RCurr m1-\i\RCurr m{\i}\right)+\left(\RCurr m1-\i\RCurr m{\i}\right)\left(\RCurr n1-\i\RCurr n{\i}\right)\\
			=\{\RCurr n1;\RCurr m1\}-\{\RCurr n{\i};\RCurr m{\i}\}-\i\{\RCurr n{\i};\RCurr m1\}-\i\{\RCurr n1;\RCurr m{\i}\}=16\pi\delta_{n+m,0}.
		\end{multline*}
		\begin{multline*}
			8\pi\{\Currbar n;\Currbar m\}=(\RCurr n1+\i\RCurr n{\i})\bar{(\RCurr m1+\i\RCurr m{\i})}+(\RCurr m1+\i\RCurr m{\i})\bar{(\RCurr n1+\i\RCurr n{\i})}\\
			=\{\RCurr n1;\RCurr m1\}-\{\RCurr n{\i};\RCurr m{\i}\}+\i\{\RCurr n{\i};\RCurr m1\}+\i\{\RCurr n1;\RCurr m{\i}\}=16\pi\delta_{n+m,0}.
		\end{multline*}
		
		\begin{multline*}
			8\pi\{\Curr n;\Currbar m\}=(\RCurr n1-\i\RCurr n{\i})(\RCurr m1+\i\RCurr m{\i})+(\RCurr m1+\i\RCurr m{\i})\bar{(\RCurr n1-\i\RCurr n{\i})}\\
			=\{\RCurr n1;\RCurr m1\}+\{\RCurr n{\i};\RCurr m{\i}\}+\i\{\RCurr n{\i};\RCurr m1\}+\i\{\RCurr n1;\RCurr m{\i}\}=0.
		\end{multline*}
\end{proof}

In other words, \eqref{eq: def_Psi_alpha}, \eqref{eq: def_Psi_Psibar} define the action of $\CliffNS$ and $\CliffR$ on the spaces $\LFPe/\Null$ and $\LFPo/\Null$, respectively.
\subsection{Structure of the Clifford modules generated by $1$ and
	$\sigma$.}

\label{sec: lattice_Fock_structure}
We consider the Clifford modules  $\ClInd$ and $\Clspin$ generated under the above action by the equivalence classes $\LFone$  and $\LFspin$ of the local fields $1\in\LFPe$ and $\sigma:=\sigma_{o^{\circ}}\in\LFPo$ respectively. We have the following
simple observations:
\begin{lem}
	\label{lem: lattice_heighest_weight}
	We have \[\Curr n\LFone=0,\,\Currbar n\LFone=0\] for $n\in\Z+\frac12$, $n>0$ and \[\Curr n\LFspin=0,\,\Currbar n\LFspin=0\]
	for $n\in\Z$, $n >0$. Moreover, \[\Curr 0\LFspin=\lambb[\mu],\quad \Currbar 0\LFspin=\lamb[\mu],\]\[\Curr 0[\mu]=\lamb\LFspin,\quad \Currbar 0[\mu]=\lambb\LFspin,\]
	where we define $\mu:=\bar{\eta}_{o}\psi_{o}\sigma_{o^{\circ}}.$
\end{lem}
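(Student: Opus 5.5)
We work with the real operators $\RCurr n{\alpha}$ and recover $\Curr n,\Currbar n$ from \eqref{eq: def_Psi_Psibar}. The fields $1$ and $\sigma_{\root^\circ}$ are real, so \eqref{eq: def_Psi_alpha} applies directly.

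\emph{Vanishing for $n>0$.} Let $\Op\in\{1,\sigma_{\root^\circ}\}$. In a correlation with distant real fields, the function $z\mapsto\E[\psi_z\Op\cdots]$ is s-holomorphic near the origin. For $\Op=1$ it is a genuine function with no pole there. For $\Op=\sigma_{\root^\circ}$ it is a spinor on $\Cgr(\isGoc)$, s-holomorphic in a neighbourhood of $\root^\circ$ on the double cover. For $n>0$ the monomial $[\alpha z^{n-\frac12}]$ has nonnegative exponent: it is an integer power in the NS case and a half-integer power $\geq\frac12$ in the Ramond case. By property (1) of a good set of monomials it is therefore s-holomorphic, a function or a $\Cgr(\isGoc)$-spinor respectively.

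The product of the two is then a product of two s-holomorphic objects on the same cover, with no discrete poles inside the contour. We shrink the contour $\oint_\Op$ to a trivial loop, using path-independence of $\im\int FG\,dz$ (equivalently \eqref{eq: Discrete residue formula} with $\mathcal A=\emptyset$). In the spinor case the product $FG$ is single-valued, so the ramification at $\root^\circ$ does no harm: $I$ is a well-defined function on $\isG^\circ\cup\isG^\bullet$, and the integral over any closed loop vanishes, including one around $\root^\circ$. Hence $\RCurr n{\alpha}\Op\equiv 0$ for both $\alpha=1,\i$, and so $\Curr n\Op\equiv\Currbar n\Op\equiv0$.

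\emph{Zero modes.} For $n=0$ in the Ramond case the monomial is $[\alpha z^{-\frac12}]$. We take this to be the multiple of $[\lambb z^{-\frac12}]=\Gu z{\root^\circ}$, together with the companion spinor $\Gv z{\root^\bullet}$ (which is ramified at $\root^\bullet$ rather than $\root^\circ$). Concretely, by Lemma~\ref{lem: no_degree_jumps}, $[\alpha z^{-\frac12}]\in\setQ{E_0}$ is spanned by these two with discrete poles only at $\root$. Since $\Gv{\cdot}{\root^\bullet}$ viewed on $\Cgr(\isGoc)$ has a jump across $\root$, we cut at $\root$ and apply \eqref{eq: Discrete residue formula}. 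The contour then collapses onto the two sides $\root^\pm$, giving
\[
\RCurr 0{\alpha}\sigma\equiv 2\bar\eta_\root^2\bigl([\alpha\root^{-\frac12}]^{+}\psi_{\root^+}-[\alpha\root^{-\frac12}]^{-}\psi_{\root^-}\bigr)\sigma .
\]
Because of the ramification mismatch this reduces to $2\bar\eta_\root^{2}c_\alpha\,\psi_\root\sigma_{\root^\circ}$ for an explicit real-linear functional $c_\alpha$ of $\alpha$. Its value comes from the normalisation $\Gv{\root^\pm}{\root^\bullet}\Gu{\root^\pm}{\root^\circ}$, which was fixed in the proof of Lemma~\ref{lem: SCK_exists} through $\im\oint G_{[\root^\bullet]}G_{[\root^\circ]}=2\pi$. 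This produces a multiple of $\mu=\bar\eta_\root\psi_\root\sigma_{\root^\circ}$.

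Assembling $\RCurr01$ and $\RCurr0\i$ via \eqref{eq: def_Psi_Psibar}, the prefactor $\frac1{2\sqrt{2\pi}}$ should combine with the residue $4\pi$ and the $\sqrt{2/\pi}$ normalisation of the kernels. The target is $\Curr0[\sigma]=\lambb[\mu]$ and $\Currbar0[\sigma]=\lamb[\mu]$, where the phases $\lamb,\lambb$ come from the asymptotics \eqref{eq: asymp_zhalf_lamb}--\eqref{eq: asymp_zhalf_lambb}.

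For $\mu$ we use the anticommutation relations of Proposition~\ref{prop: anticomm_discrete}: since $\Curr0^2=1$, we get $\Curr0[\mu]=\lamb\,\Curr0^2[\sigma]=\lamb[\sigma]$. The $\Currbar0$ case is identical, using $\lambb\lamb=1$. This avoids a second residue computation.

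\emph{Main obstacle.} The delicate step is the zero-mode residue computation: keeping track of signs and sheets of $\eta_\root$ on the cut double cover, and of which of $\Gu{}{},\Gv{}{}$ enters $[\alpha z^{-\frac12}]$, so that the constants come out exactly as $\lambb$ and $\lamb$ rather than some other unit. I would fix conventions by matching against the continuum OPE \eqref{eq: ferm_spin_OPE}. As a consistency check, the two outcomes $\lambb\mu$ and $\lamb\mu$ must be compatible with $\{\Curr0,\Currbar0\}=0$.
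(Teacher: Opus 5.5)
\paragraph{Verdict.} There is a genuine gap, and it sits in the zero-mode constant; everything else matches the paper.

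\paragraph{What matches.} Your treatment of the modes $n>0$, and your derivation of $\Curr0[\mu]$, $\Currbar0[\mu]$ from $\Curr0^2=\Currbar0^2=1$, are exactly the paper's. So is the skeleton of the zero-mode computation. You write $[\alpha z^{-\frac12}]=\im(\alpha\lambb)\,[\lambb z^{-\frac12}]+\re(\alpha\lambb)\,[\lamb z^{-\frac12}]$ with $[\lambb z^{-\frac12}]=\Gu{z}{\root^\circ}$ and $[\lamb z^{-\frac12}]=\Gv{z}{\root^\bullet}$. The first term pairs to zero with $\psi_z\sigma_{\root^\circ}$, since both are s-holomorphic $\Cgr(\isGoc)$-spinors. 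The second, viewed on the cover cut at $\root$, produces a residue at $\root$.

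\paragraph{The gap.} You stop exactly where the content of the zero-mode identities lies, namely the constant, and the input you name would not produce it. The relation $\im\oint\Gv{z}{\root^\bullet}\Gu{z}{\root^\circ}\,dz=2\pi$ from the proof of Lemma~\ref{lem: SCK_exists} determines only the \emph{product} $\Gv{\root}{\root^\bullet}\Gu{\root}{\root^\circ}$. Once the $\Gu{\cdot}{\root^\circ}$-component has dropped out, however, only $\Gv{\root}{\root^\bullet}$ enters the residue. Your two fallbacks do not close this either. Compatibility with $\{\Curr0,\Currbar0\}=0$ constrains only the relative phase of the two answers (it forces $c'=\pm\i c$), not their size. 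Matching with the continuum OPE \eqref{eq: ferm_spin_OPE} is not a proof of an exact lattice identity; at best it is an indirect argument through the scaling limits of both sides, far heavier than needed.

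\paragraph{The missing ingredient.} You need the individual value $[\lamb\root^{-\frac12}]=\Gv{\root}{\root^\bullet}=\CCpsi\,\eta_\root$. It can be read off from the second expression in \eqref{eq: Gv_integral_dexp} at $z=\root$:
\begin{itemize}
\item there $\dexp{2\zeta}{\root}{\root}=1$;
\item the ray of integration is $e^{-\i\alpha_\root}\R_+$, with $\alpha_\root=\arg(\root^\bullet-\root^\circ)$;
\item the substitution $\zeta=te^{-\i\alpha_\root}$ turns the integral into $e^{-\i\alpha_\root/2}\int_0^\infty t^{-\frac12}(1+t)^{-1}\,dt=\pi e^{-\i\alpha_\root/2}$.
\end{itemize}
Hence $\Gv{\root}{\root^\bullet}=\frac{\lambb\,\pi}{\sqrt{2\pi}}e^{-\i\alpha_\root/2}=\CCpsi\,\eta_\root$.

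On the cover cut at $\root$ (with the paper's sheet conventions) one has $\Gv{\root^\pm}{\root^\bullet}=\pm\CCpsi\,\eta_\root$, whereas $\psi_{\root^\pm}\sigma_{\root^\circ}=\psi_\root\sigma_{\root^\circ}$. So \eqref{eq: Discrete residue formula} gives
\[
\RCurr0{\alpha}\sigma_{\root^\circ}\equiv\re(\alpha\lambb)\,\im\oint\psi_z[\lamb z^{-\frac12}]\sigma_{\root^\circ}\,dz\equiv\re(\alpha\lambb)\cdot 4\CCpsi\,\bar\eta_\root\psi_\root\sigma_{\root^\circ}=2\sqrt{2\pi}\,\re(\alpha\lambb)\,\mu .
\]
Then \eqref{eq: def_Psi_Psibar} yields
\[
\Curr0\LFspin=\bigl(\re\lambb-\i\re(\i\lambb)\bigr)[\mu]=\lambb[\mu],\qquad
\Currbar0\LFspin=\bigl(\re\lambb+\i\re(\i\lambb)\bigr)[\mu]=\lamb[\mu].
\]
This is the paper's computation; with it inserted, your proof is complete.
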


\begin{rem}
	It is straightforward to check that inside correlations, $\mu$ acts as the disorder insertion $\mu_{\root^\bullet}$, as defined in \cite[Section 2]{chelkak2021correlations}.
	
\end{rem}

\begin{proof}
	The fact that $\Curr n1=0,\,\Currbar n 1=0$ for $n>0$ follows
	simply from s-holomorphicity of $[\alpha z^{n}]$ for $n>0$ (shrink
	the contour). To see that $\Curr n\sigma=0,\,\Currbar n\sigma=0$
	for $n\geq1$: note that both $\psi_{z}\sigma_{o^{\circ}}$ and $[\alpha z^{n-\frac{1}{2}}]$ are s-holomorphic
	everywhere and ramified at $o^{\circ}$, so the
	contour can be shrunk to nothing.
	
	We can compute, using \eqref{eq: Curr_to_Rcurr},
	\begin{multline*}
		\Curr 0\sigma_{\root^\circ}\equiv\frac{1}{2\sqrt{2\pi}}\left(\RCurr 01-\i\RCurr 0{\i}\right)\sigma_{o^{\circ}}\\
		\equiv\frac{1}{4\sqrt{\pi}}\im\oint\left([\lambb z^{-\frac{1}{2}}]+[\lamb z^{-\frac{1}{2}}]\right)\psi_{z}\sigma_{o^{\circ}}\,dz-\frac{\i}{4\sqrt{\pi}}\im\oint\left([\lambb z^{-\frac{1}{2}}]-[\lamb z^{-\frac{1}{2}}]\right)\psi_{z}\sigma_{o^{\circ}}\:dz\\
		\equiv\frac{\lambb}{2\sqrt{2\pi}}\im\oint\left([\lamb z^{-\frac{1}{2}}]\right)\psi_{z}\sigma_{o^{\circ}}dz\equiv\frac{\lambb}{\sqrt{2\pi}}\bar{\eta}^{2}_{o}(\psi_{o^{+}}\sigma_{o}[\lamb\left(o^{+}\right)^{-\frac{1}{2}}]-\psi_{o^{-}}\sigma_{o}[\lamb\left(o^{-}\right)^{-\frac{1}{2}}])\\
		\equiv\lambb\mu_{o^{\bullet}}
	\end{multline*}
	using the fact that $[\lambb z^{-\frac{1}{2}}]$ and $\psi_{z}\sigma_{o^{\circ}}$
	are s-holomorphic as spinors ramified at $\root^\circ$, hence the contour can be shrunk to nothing,
	and that $[\lambda o^{-\frac{1}{2}}]=\left(\frac{\pi}{2}\right)^{\frac{1}{2}}\eta_{o}$. The case of $\Currbar{0}$ is similar, and the other two identities follow from the anti-commutation relation $\Curr{0}^2=\Currbar{0}^2=\Igen.$
\end{proof}

\begin{cor}
	\label{cor: Virasoro_structure}The modules $\ClInd$ and $\Clspin$
	are isomorphic to the fermionic Fock spaces $\FullFockNS$ and $\FullFockR$ respectively.
\end{cor}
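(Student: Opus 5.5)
The plan is to use the presentation of the Fock spaces as quotients of the Clifford algebras and then the irreducibility statement of Lemma \ref{lem: Fock_basis}.

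\textbf{Step 1: a surjection from the Fock space.} By Proposition \ref{prop: anticomm_discrete}, $\LFPe/\Null$ and $\LFPo/\Null$ are left modules over $\TwoCliffNS$ and $\TwoCliffR$, respectively. By Lemma \ref{lem: lattice_heighest_weight}, the class $\LFone$ is annihilated by all $\Fgen{k},\Fbar{k}$ with $k>0$, so it is annihilated by the left submodule $\mathcal{J}^{\NeveuSchwarz}$. The same lemma shows that $\LFspin$ is annihilated by $\Fgen{k},\Fbar{k}$ for $k\geq 1$, and that
\[
(\lamb\Fgen{0}-\lambb\Fbar{0})\LFspin=\lamb\lambb[\mu]-\lambb\lamb[\mu]=0 .
\]
Hence $\LFspin$ is annihilated by $\mathcal{J}^{\Ramond}$. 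The maps $X\mapsto X\LFone$ and $X\mapsto X\LFspin$ from $\TwoCliffNS$ and $\TwoCliffR$ therefore factor through the quotients. This yields module morphisms
\[
\FullFockNS\to\ClInd,\quad \ind\mapsto\LFone,\qquad \FullFockR\to\Clspin,\quad \bspin\mapsto\LFspin .
\]
Both morphisms are surjective by the definition of $\ClInd$ and $\Clspin$ as the modules generated by these classes.

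\textbf{Step 2: injectivity.} By Lemma \ref{lem: Fock_basis}, $\FullFockNS$ and $\FullFockR$ are irreducible. Each kernel is therefore either zero or the whole space. It suffices to check that the images of the generators are nonzero, i.e.\ that $1\notin\Null$ and $\sigma_{\root^\circ}\notin\Null$.

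For $1$, insert it into any correlation with no other fields. This gives $\E_{\Od}[1]=1\neq 0$ in any domain.

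For $\sigma_{\root^\circ}$, I would take a finite isoradial domain $\Od$ agreeing with $\isG$ on a large ball, with plus boundary conditions. Then $\E_{\Od}[\sigma_{\root^\circ}]>0$, for instance by the FKG inequality or the Edwards--Sokal coupling: the expectation equals the probability that $\root^\circ$ is connected to the wired boundary, which is positive for a finite graph with positive weights. Alternatively, one can pair $\sigma_{\root^\circ}$ with a distant spin $\sigma_v$ under free boundary conditions. Then $\E[\sigma_{\root^\circ}\sigma_v]>0$ by Griffiths' inequality, which works at any required separation $R$ from the distant field.

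If one wants to avoid relying on boundary conditions, the pairing with a distant spin $\sigma_v$ in a domain of any size suffices. This is compatible with the definition of $\Null$ because the distant field can be placed outside any prescribed $R$-ball.

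\textbf{Step 3: conclusion and main obstacle.} Irreducibility forces both kernels to vanish, so the maps of Step 1 are isomorphisms of Clifford modules. The only delicate point is the well-definedness of Step 1, namely checking the Ramond relation $\lamb\Fgen{0}\bspin=\lambb\Fbar{0}\bspin$ on the lattice side. This relation is exactly the content of the computation $\Curr{0}\LFspin=\lambb[\mu]$ and $\Currbar{0}\LFspin=\lamb[\mu]$ in Lemma \ref{lem: lattice_heighest_weight}, which we may take as given. The remaining ingredient is irreducibility, supplied by Lemma \ref{lem: Fock_basis}. As a by-product, this argument also settles the claim in that lemma's proof that $\ind$ and $\bspin$ are nonzero.
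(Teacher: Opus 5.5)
Your proposal is correct and follows the paper's proof: the map $\Igen\mapsto\LFone$ (resp.\ $\Igen\mapsto\LFspin$) factors through $\FullFockNS$ (resp.\ $\FullFockR$) by Lemma~\ref{lem: lattice_heighest_weight}, and injectivity follows from irreducibility (Lemma~\ref{lem: Fock_basis}). Your explicit check that $1$ and $\sigma_{o^\circ}$ are not null fields, via plus boundary conditions and the Edwards--Sokal coupling, fills in the point the paper treats as obvious in its footnote to Lemma~\ref{lem: Fock_basis}.
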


\begin{proof}
	The commutation relations proved in the previous section show that the mapping $\Igen\mapsto \LFone$ (respectively, $\Igen\mapsto \LFspin$)  induces a surjective morphism $\CliffNS\to\ClInd$ as left $\CliffNS$ modules (respectively, $\CliffR\to\Clspin$ as left $\CliffR$ modules). Lemma \ref{lem: lattice_heighest_weight} ensures that $\mathcal{J}^{\NeveuSchwarz}$ (respectively, $\mathcal{J}^{\Ramond}$) is in the kernel of this morphism, thus yielding a surjective morphism $\FullFockNS\to \ClInd$ (respectively, $\FullFockR\to\Clspin$). Since by Lemma~\ref{lem: Fock_basis}, $\FullFockNS$ and $\FullFockR$ are irreducible, these morphisms have trivial kernels.
\end{proof}

Abusing the notation slightly, we will denote by $\HVM{\cdot}$ also the images in $\ClInd$ and $\Clspin$ of the $\TwoVir$ modules featuring in Proposition \ref{prop: Virasoro_structure} under the isomorphism of Corollary \ref{cor: Virasoro_structure}. We now find particular lattice fields corresponding to the vectors $\ind,\bpsi,\bpsibar,\ben,\bspin,\bmu$. For $\ind,\bspin$, this is clear, and for $\bmu$, it is done in Lemma \ref{lem: lattice_heighest_weight} above. We assume, without loss of generality, that $[z^{-1}]=\frac{\pi}{2}\CK{z}{c_1}$
and $[\i z^{-1}]=\frac{\pi}{2}\CK{z}{c_2},$ where the corners
$c_{1}$ and $c_{2}$ are midpoints to two opposite edges of a rhombus,
and that $\bar{\eta}_{c_{1}}=1$ and $\bar{\eta}_{c_{2}}=\i.$ We compute, using \eqref{eq: Disc_Cauchy_rec}

\[
\RCurr{-\frac{1}{2}}{1} 1\equiv\frac{\pi}{2}\im\oint\psi_{z}\CK z{c_{1}}\,dz\equiv2\pi\bar{\eta}_{c_{1}}\psi_{c_{1}}=2\pi\psi_{c_{1}}
\]

\[
\RCurr{-\frac{1}{2}}{\i}1\equiv\frac{\pi}{2}\im\oint\psi_{z}\CK z{c_{2}}\,dz\equiv2\pi\i\psi_{c_{2}},
\]
and using \eqref{eq: def_Psi_Psibar}, we get 
\[
\psi\isom\Curr{-\half}1\equiv\frac{\sqrt{\pi}}{\sqrt{2}}(\psi_{c_1}+\psi_{c_2});\quad \psi^\star\isom\Currbar{-\half}1\equiv\frac{\sqrt{\pi}}{\sqrt{2}}(\psi_{c_1}-\psi_{c_2}).
\]
For the energy $\ben$, we have 
\begin{multline*}
	\Curr{-\frac{1}{2}}\Currbar{-\frac{1}{2}}[1]= \frac{1}{8\pi}\left(\RCurr{-\frac{1}{2}}1-\i\RCurr{-\frac{1}{2}}{\i}\right)\left(\RCurr{-\frac{1}{2}}1+\i\RCurr{-\frac{1}{2}}{\i}\right)[1]\\
	=\frac{1}{8\pi}\left(\RCurr{-\frac{1}{2}}1\RCurr{-\frac{1}{2}}1+\RCurr{-\frac{1}{2}}{\i}\RCurr{-\frac{1}{2}}{\i}\right)[1]+\frac{1}{8\pi}\left(\i\RCurr{-\frac{1}{2}}1\RCurr{-\frac{1}{2}}{\i}-\i\RCurr{-\frac{1}{2}}{\i}\RCurr{-\frac{1}{2}}1\right)[1]=\frac{\i}{4\pi}\RCurr{-\frac{1}{2}}1\RCurr{-\frac{1}{2}}{\i}[1]
\end{multline*}
and, thanks to \eqref{eq: G vanishing for energy}, 
\begin{multline*}
	\RCurr{-\frac{1}{2}}1\RCurr{-\frac{1}{2}}{\i}1\equiv\frac{\pi}{2}\im\oint\psi_{z}\CK{c_{2}}{c_{1}}(2\pi\bar{\eta}_{c_{2}}\psi_{c_{2}})\,dz\equiv4\pi^{2}(\bar{\eta}_{c_{1}}\bar{\eta}_{c_{2}}\psi_{c_{1}}\psi_{c_{2}})\equiv4\pi^{2}\i\psi_{c_{1}}\psi_{c_{2}}\\
	\equiv-4\pi^{2}\sigma_{c^{\circ}_{1}}\sigma_{c^{\circ}_{2}}e^{-2\beta\sigma_{c^{\circ}_{1}}\sigma_{c^{\circ}_{2}}}=-4\pi^{2}(\sigma_{c^{\circ}_{1}}\sigma_{c^{\circ}_{2}}(\sin\theta)^{-1}-\cot\theta),
\end{multline*}
where $\theta$ is half the angle at $c_1^\bullet$ of the rhombus containing $c_1,c_2$, and we used the relationships $e^{-2\beta}=\tan{\half\theta}$, see \cite{ChelkakSmirnov2}, so that we finally have 
\[
\en \isom \frac{\i}{2}\Curr{-\frac{1}{2}}\Currbar{-\frac{1}{2}}1\equiv\frac{\pi}{2}(\sigma_{c^{\circ}_{1}}\sigma_{c^{\circ}_{2}}(\sin\theta)^{-1}-\cot\theta).
\]

\iffalse
(Note that here we used that $\psi_{c_{1}}\psi_{c_{2}}=\eta_{c_{1}}\eta_{c_{2}}\sigma_{c^{\circ}_{1}}\sigma_{c^{\circ}_{2}}e^{-2\beta\sigma_{c^{\circ}_{1}}\sigma_{c^{\circ}_{2}}}$
is consistent with the sign convention of Section \ref{subsec:Local-fields}).
\fi
%But the signs and the constant has to be checked carefully).

\subsection{All lattice local fields are in $\protect\ClInd \oplus\protect\Clspin$ }

\label{sec: surjectivity}
\begin{thm}
	\label{lem: surjectivity} We have 
	\[
	\LFPe/\Null = \ClInd,\quad \LFPo/\Null = \Clspin, 
	\]
	and moreover:
	\[
	\LFPee/\Null= \HVM{\ind}\oplus\HVM{\en},\quad \LFPoe/\Null= \HVM{\sigma},
	\]
	\[
	\LFPeo/\Null=\HVM{\psi}\oplus\HVM{\psi^{\star}},\quad \LFPoo/\Null= \HVM{\mu}
	\]	
\end{thm}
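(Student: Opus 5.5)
The core claim is surjectivity: every class $[\Op]$ with $\Op\in\LFPe$ (respectively $\LFPo$) lies in $\ClInd$ (respectively $\Clspin$). The finer decompositions then follow by parity bookkeeping. I would argue by induction on the Grassmann degree of monomials. The base cases are $[1]\in\ClInd$ and $[\sigma_{\root^\circ}]\in\Clspin$, which hold by definition. For the induction step, take a monomial $\psi_a\Op'$ with $\Op'$ of lower degree, chosen so that $a$ does not occur in $\Op'$; by the induction hypothesis $[\Op']$ is a Clifford descendant of $[1]$ or $[\sigma]$. It then suffices to show that the fermion insertion map $\Op'\mapsto\psi_a\Op'$, modulo null fields, is a finite linear combination of the operators $\RCurr{n}{\alpha}$, applied to Clifford descendants. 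We may also reduce to real fields $\Op'$ by complexification.

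The tool for this is the discrete Cauchy formula combined with the discrete Laurent expansion. In the even sector, the argument is as follows. Apply \eqref{eq: Disc_Cauchy_rec} to the s-holomorphic function $z\mapsto\E[\psi_z\Op'\cdots]$, integrating over a contour that encloses $a$ but not the support of $\Op'$. Because the only discrete pole near $a$ comes from $\CK{z}{a}$, this gives $4\bar\eta_a\psi_a\Op'\equiv\im\oint_{(a)}\psi_z\CK{z}{a}\Op'\,dz$. Next deform the contour. Write the contour around $a$ alone as a large contour around both $a$ and $\operatorname{supp}\Op'$, minus a contour around $\operatorname{supp}\Op'$ only. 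Here care is needed if $a$ lies in the support region, but for monomials one can always pick the factor $\psi_a$ so that a separating contour exists; otherwise one uses the modified-graph picture with $a^\pm$ and Definition \ref{def: insertion}.

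On the large contour I would substitute the finite Laurent expansion \eqref{eq: abstract_Laurent}, $\CK{z}{a}=\sum_{n\le N}\alpha_n[z^{-n}]+\beta_n[\i z^{-n}]$. This turns that integral into $\sum(\alpha_n\RCurr{-n+\frac12}{1}+\beta_n\RCurr{-n+\frac12}{\i})\Op'$, which lies in $\ClInd$ by the induction hypothesis and Proposition \ref{prop: anticomm_discrete}.

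The main obstacle is the remaining small contour around $\operatorname{supp}\Op'$ with kernel $\CK{z}{a}$. This kernel is s-holomorphic near $\operatorname{supp}\Op'$ but is not a polynomial. To handle it, I would expand $\CK{z}{a}$ near $\operatorname{supp}\Op'$ in positive monomials. Using the duality of the good monomials together with the interpolation-basis remark, and the fact that $[\alpha w^m]$ vanishes on $\Cgr(E_{2k})$ for $m\ge k$, the function $\CK{\cdot}{a}$ agrees on a finite region containing the contour and $\operatorname{supp}\Op'$ with a finite combination $\sum_m c_m[\gamma_m z^m]$. The contour integral only sees values on the finite region inside, so this combination can be substituted. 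The result is $\sum c_m\RCurr{m+\frac12}{\gamma_m}\Op'$, which is again in $\ClInd$.

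The existence of such a local interpolation is exactly what the remark on non-degenerate pairing and interpolation bases provides, applied with a sequence $E_n$ that exhausts a neighbourhood of the contour. In the odd sector the same argument works with $\SCkx{a}{\lambb}$, \eqref{eq: Disc_Cauchy_rec_spinor} and \eqref{eq: abstract_Laurent_spinor}. The extra $\mu[\lamb z^{-1/2}]$ term is harmless, since it produces $\RCurr{0}{\cdot}$. I expect the delicate points to be exactly this local-interpolation step and the contour/sign issues on the double cover.

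Finally, the graded statements follow from parity. Each $\Curr{n}$ and $\Currbar{n}$ flips the Grassmann parity, so $\LFPee/\Null$ is the even part of $\ClInd$. Under Corollary \ref{cor: Virasoro_structure} and Proposition \ref{prop: Virasoro_structure}, that even part is $\HVM{\ind}\oplus\HVM{\en}$, and the odd part is $\HVM{\psi}\oplus\HVM{\psi^\star}$. Similarly, $\LFPoe/\Null$ is the even part $\HVM{\sigma}$ and $\LFPoo/\Null$ is $\HVM{\mu}$, using that $\mu=\lamb\Curr0\sigma$ is odd. The inclusions $\supseteq$ are clear, since both sides are Clifford descendants of even or odd fields, and the direct sums involved match.
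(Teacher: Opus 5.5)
\textbf{Gap in the interpolation step.} Your overall scheme matches the paper's: induction on the Grassmann degree, the discrete Cauchy formula at the added fermion, the exact finite Laurent expansion \eqref{eq: abstract_Laurent} (resp.\ \eqref{eq: abstract_Laurent_spinor}) on a large contour, and parity bookkeeping for the graded claims. The problem is the step you call the main obstacle, which is not justified as written.

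You claim that $\CK{\cdot}{a}$ coincides with a finite combination of positive monomials on a finite region containing $\operatorname{supp}\Op'$ and the contour, and you cite the interpolation remark for this. That remark only lets you prescribe values freely at the specific corners $a_0,\dots,a_{2n-1}$ of a closed exhaustion. Values on the rest of $\Cgr(E_{2n-1})$ are then pinned down only for functions s-holomorphic on all of $E_{2n-1}$.

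Closed sets are convex in the train-track sense; for instance, connected closed sets of squares on $\Z^2$ are rectangles. So an $E_n$ containing a neighbourhood of your contour will often have $a\in\Cgr(E_n)$, where $\CK{\cdot}{a}$ is singular. Concretely, take four corners of the same orientation at the four vertices of a large square: whichever one you call $a$, the rectangle hull of the other three contains it.

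Worse, suppose the corners of your region contain a closed path of rhombus sides around $a$, as happens for the keyhole contour you would need when $a$ is surrounded. Then agreement is impossible. Pairing with the constant s-holomorphic functions $\proj{\eta_z}{1}$ and $\proj{\eta_z}{\i}$ along that path gives $0$ for any polynomial. By \eqref{eq: Disc_Cauchy_rec} it gives $4\re\eta_a$ and $4\im\eta_a$ for $\CK{\cdot}{a}$, which cannot both vanish. Your remedy of choosing $a$ suitably is only gestured at.

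\textbf{The fix: no interpolation is needed.} On the small contour, $\CK{\cdot}{a}$ is s-holomorphic at the poles $z_2,\dots,z_k$ of $z\mapsto\psi_z\Op'$. So \eqref{eq: Discrete residue formula} gives directly
\[
\im\oint\psi_z\CK{z}{a}\Op'\,dz\equiv\sum_{i\ge 2}2\bar\eta_{z_i}^2\,\CK{z_i}{a}\,\bigl(\psi_{z_i^{+}}-\psi_{z_i^{-}}\bigr)\Op' .
\]
By Definition \ref{def: insertion}, each summand is a real constant times $\Op'$ with $\psi_{z_i}$ deleted, a monomial of degree $k-2$. These lie in $\ClInd$ (resp.\ $\Clspin$) by strong induction on the degree.

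This is what the paper does, more economically. It integrates $\CK{z}{z_1}\Op(z)$ over one large contour around all marked corners. It reads off the target monomial as the residue at $z_1$, plus lower-degree terms from $z_2,\dots,z_k$, and uses the Laurent expansion to write the left-hand side via the operators $\RCurr{n}{\alpha}$. With this replacement, the rest of your argument goes through, including the spinor case and the fermion-parity grading.
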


\begin{proof}
	For the first identity, it is sufficient to prove the inclusion
	$[\Op]\in\ClInd$ for any field of the form $\Op=\bar{\eta}_{z_{1}}\dots\bar{\eta}_{z_{k}}\psi_{z_{1}}\dots\psi_{z_{k}}.$ We proceed by induction in $k.$ The base $k=0$ and $\Op=1$ is clear.
	
	Note that $\Op(z):=\bar{\eta}_{z_{2}}\dots\bar{\eta}_{z_{k}}\psi_{z}\psi_{z_{2}}\dots\psi_{z_{k}}$
	is, inside a correlation with any real field far away, an s-meromorphic
	function with discrete poles at $z_{2},\dots,z_{2k}$, with values
	$\Op(z^{\pm}_{i})=\mp(-1)^{i}\eta_{z_{i}}\prod_{2<j\neq i\leq k}(\bar{\eta}_{j}\psi_{z_{j}}),$
 see Definition \ref{def: ferm_insertion} and the discussion thereafter. Therefore, integrating
	over a simple counterclockwise contour enclosing all the marked points,
	we can write by (\ref{eq: Discrete residue formula})
	\begin{multline*}
		\im\oint\CK z{z_{1}}\Op(z)\,dz\equiv2\prod^{k}_{j=1}(\bar{\eta}_{j}\psi_{z_{j}})+\sum^{k}_{i=2}4\bar{\eta}^{2}_{z_{i}}\CK{z_{i}}{z_{1}}(-1)^{i+1}\eta_{z_{i}}\prod_{2<j\neq i\leq k}(\bar{\eta}_{j}\psi_{z_{j}})
	\end{multline*}
	The sum in the right-hand side is in $\ClInd$ by induction
	hypothesis. For the left-hand side, we have by (\ref{eq: abstract_Laurent})
	and (\ref{eq: def_Psi_alpha}) 
	\begin{equation}
		\im\oint\CK z{z_{1}}\Op(z)\,dz\equiv\left(\sum^{N}_{i=1}\left(\alpha_{i}\RCurr{-i+\frac{1}{2}}1+\beta_{i}\RCurr{-i+\frac{1}{2}}{\i}\right)\right)\left(\bar{\eta}_{z_{2}}\dots\bar{\eta}_{z_{k}}\psi_{z_{2}}\dots\psi_{z_{k}}\right).\label{eq: LHS_induction}
	\end{equation}
	By induction hypothesis, the monomial in the right-hand side of (\ref{eq: LHS_induction}) belongs to $\ClInd$, and since it is also real, $\eqref{eq: Curr_to_Rcurr}$ implies that $\left[\im\oint\CK z{z_{1}}\Op(z)\,dz\right]$ also belongs to $\ClInd$.
	It follows that $\left[2\prod^{k}_{j=1}(\bar{\eta}_{j}\psi_{z_{j}})\right]\in\ClInd,$
	completing the induction step.
	
	Similarly, to prove  $\LFPo/\Null = \Clspin$, it suffices to prove the inclusion $[\Op]\in\Clspin$
	for any field of the form $\Op=\bar{\eta}_{z_{1}}\dots\bar{\eta}_{z_{k}}\psi_{z_{1}}\dots\psi_{z_{k}}\sigma_{\root^{\circ}}.$
	The base $k=0$ and $\Op=\sigma_{\root^{\circ}}$ is clear. For the
	induction step, we consider 
	\[
	\Op(z)=\bar{\eta}_{z_{2}}\dots\bar{\eta}_{z_{k}}\psi_{z}\psi_{z_{2}}\dots\psi_{z_{k}}\sigma_{\root^{\circ}},
	\]
	which is, inside a correlation with any real field far away, an s-meromorphic
	$\Cgr(\isGoc)$-spinor with poles at $z_{2},\dots,z_{k},$ and with
	values $\Op(z^{\pm}_{i})=\mp(-1)^{i}\eta_{z_{i}}\prod_{2<j\neq i\leq k}(\bar{\eta}_{j}\psi_{z_{j}})\sigma_{\root^{\circ}}.$
	Therefore, we have by (\ref{eq: Discrete residue formula})
	\[
	\im\oint\SCkx{z_{1}}{\lamb}(z)\Op(z)\,dz\equiv2\prod^{k}_{j=1}(\bar{\eta}_{j}\psi_{z_{j}})\sigma_{\root^{\circ}}+\sum^{k}_{i=2}4\bar{\eta}^{2}_{z_{i}}\SCkx{z_{1}}{\lamb}(z)(-1)^{i+1}\eta_{z_{i}}\prod_{2<j\neq i\leq k}(\bar{\eta}_{j}\psi_{z_{j}})\sigma_{\root^{\circ}}.
	\]
	As above, the sum in the right-hand side belongs to $\Clspin$
	by the induction hypothesis, and by combining (\ref{eq: abstract_Laurent_spinor}) and (\ref{eq: def_Psi_alpha}), we get 
	\[
	\im\oint\SCkx{z_{1}}{\lamb}\Op(z)\,dz\equiv\left(\sum^{N}_{i=\frac{1}{2}}\left(\alpha_{i}\RCurr{-i+\frac{1}{2}}1+\beta_{i}\RCurr{-i+\frac{1}{2}}{\i}\right)\right)\left(\bar{\eta}_{z_{2}}\dots\bar{\eta}_{z_{k}}\psi_{z_{2}}\dots\psi_{z_{k}}\sigma_{\root^\circ}\right),
	\]
	and applying the induction hypothesis yields the result.
	
	The ``moreover'' claims follow easily from Proposition \ref{prop: Virasoro_structure}
	by keeping track of the $\Z/(2\Z)$ grading by fermion parity. 
\end{proof}

\subsection{Concluding the proof}

We are now in the position to conclude the proof of the main theorem.
\begin{proof}[Proof of Theorem \ref{thm: main2}]
	Theorem \ref{lem: surjectivity} together with Corollary \ref{cor: Virasoro_structure} established the isomorphisms of $\LFPe(\isG)/\Null$ and  $\LFPo(\isG)/\Null$ to $\FullFockNS$ and $\FullFockR$ respectively. The map $\MainBij_{\isG}$
	is defined to be the direct sum of these isomorphisms,
	pre-composed with the projections $\Op\mapsto[\Op]$ and scaled to satisfy the second item in Theorem \ref{thm: main2}. The first two
	properties of the map $\MainBij_{\isG}$ claimed in Theorem \ref{thm: main2}
	are immediate from the construction. The ``moreover'' claim of Theorem \ref{lem: surjectivity}
	readily gives the third property. 
	
	We turn to the convergence claim. By linearity, it is enough to establish
	the case when for each $i$, we have $\Op^{(i)}\in\LFPe(\isG^{(i)})\cup\LFPo(\isG^{(i)})$.
	By Theorem \ref{lem: surjectivity} and Lemma \ref{lem: Fock_basis}, we can represent $\Op^{(i)},$ up to a null field, in the form 
	\[
	\Op^{(i)}\equiv\sum\beta^{(i)}_{\{n_{j}\},\{m_{j}\}}\Psi_{n_{1}}\dots\Psi_{n_{N}}\bar{\Psi}_{m_{1}}\dots\bar{\Psi}_{m_{N'}}\tilde{\Op}^{(i)},
	\]
	where $\tilde{\Op}^{(i)}\in\{1,\sigma_{\root^{\circ}}\},$ and the
	sum runs over all tuples of $n_{i}$
	and $m_{i}$ such that $n_{1}<\dots<n_{N}\leq0,$ $m_{1}<\dots<m_{N'}<0$,
	and almost all coefficients vanish. The scaling dimension $\Delta_{\Op^{(i)}}$
	can be calculated as 
	\[
	\Delta_{\Op^{(i)}}=\min\left\{ \tilde{\Delta}_{i}-\sum^{N}_{j=1}n_{j}-\sum^{N'}_{j=1}m_{j}:\beta^{(i)}_{\{n_{j}\},\{m_{j}\}}\neq0\right\} ,\quad\tilde{\Delta}_{i}=\begin{cases}
		0, & \tilde{\Op}^{(i)}=1\\
		\frac{1}{8}, & \tilde{\Op}^{(i)}=\sigma_{\root^{\circ}}
	\end{cases}.
	\]
	Using (\ref{eq: def_Psi_Psibar}), we can now write 
	\[
	\Op^{(i)}\equiv\sum\hat{\beta}^{(i)}_{\vectn}\Psi^{[\alpha_{1}]}_{n_{1}}\dots\Psi^{[\alpha_{N(\vectn)}]}_{n_{N(\vectn)}}\tilde{\Op}^{(i)},
	\]
	where the sum is over tuples $\vectn=(n_{1},\dots,n_{N(\vectn)},\alpha_{1},\dots,\alpha_{N(\vectn)}):$
	$n_{1}<\dots<n_{N(\vectn)}\leq0,$ $\alpha_{1},\dots,\alpha_{N(\vectn)}\in\{1,\i\},$
	almost all coefficients $\hat{\beta}^{(i)}_{\vectn}$ vanish, and
	all coefficients $\hat{\beta}^{(i)}_{\vectn}$ with $\tilde{\Delta}_{i}-\sum^{N(\vectn)}_{j=1}n_{j}<\Delta_{\Op^{(i)}}$
	vanish. In order to compute the scaling limits of the correlation
	$\delta^{-\sum^{k}_{i=1}\Delta_{\Op^{(i)}}}\E\left[\Op^{(1)}_{\delta,\z_{1}}\dots\Op^{(k)}_{\delta,\z_{k}}\right],$
	we plug in the above sum and expand it by linearity, 
	\begin{equation}
		\E\left[\Op^{(1)}_{\delta,\z_{1}}\dots\Op^{(k)}_{\delta,\z_{k}}\right]=\sum_{\vectn^{(1)},\dots,\vectn^{(k)}}\left(\prod^{k}_{i=1}\hat{\beta}^{(i)}_{\vectn^{(i)}}\right)\E\left[\Op^{(1,\vectn^{(1)})}_{\delta,\z_{1}}\dots\Op^{(1,\vectn^{(k)})}_{\delta,\z_{k}}\right],\label{eq: corr_expansion_for_sl}
	\end{equation}
	where $\Op^{(i,\vectn^{(i)})}:=\Psi^{[\alpha_{1}]}_{n_{1}}\dots\Psi^{[\alpha_{N}]}_{n_{N}}\tilde{\Op}^{(i)}$.
	The coefficients $\hat{\beta}^{(i)}_{\vectn^{(i)}}$ in (\ref{eq: corr_expansion_for_sl})
	do not depend on $\delta$, so we only need to take care of convergence
	of the expectations. Denote $N^{(i)}:=N(\vectn^{(i)})$ and $\Delta'_{i}:=\tilde{\Delta}_{i}-\sum^{N(\vectn^{(i)})}_{j=1}n^{(i)}_{j}.$
	We will prove a more general result, namely, that for any $m$, we
	have the convergence 
	\begin{multline*}
		\delta^{-\Delta'_{1}-\dots-\Delta'_{k}-\frac{m}{2}}\E_{\Od}\left[\Op^{(1,\vectn^{(1)})}_{\delta,\z_{1}}\dots\Op^{(1,\vectn^{(k)})}_{\delta,\z_{k}}\psi_{w_{1}}\dots\psi_{w_{m}}\right]\\
		=\Cpsi^{m}\prod^{m}_{i=1}\eta_{i}\left\langle \MainBij\left(\Op^{(1,\vectn^{(1)})}_{\delta,\z_{1}}\right)\dots\MainBij\left(\Op^{(1,\vectn^{(k)})}_{\delta,\z_{k}}\right)\psi^{[\eta_{w_{1}}]}_{w_{1}}\dots\psi^{[\eta_{w_{m}}]}_{w_{m}}\right\rangle _{\Omega}+o(1),
	\end{multline*}
	uniformly as $z_{i}$ and $w_{i}$ stay away from each other and the
	boundary, by induction on $N^{(1)}+\dots+N^{(k)}.$ In the base case,
	we have $\Op^{(i,\vectn^{(i)})}=\ind$ or $\Op^{(i,\vectn^{(i)})}=\sigma_{\root^{\circ}}$
	for each $i$, and the result follows from the convergence of spin-fermion
	correlation \cite[Theorem 1.2]{chelkak2021correlations}, \cite{chelkak2023universality}.
	For the induction step, pick $1\leq j\leq k$, $\alpha\in\{1,\i\}$,
	and $n\in\Z+\frac{1}{2}$ if $\tilde{\Op}_{j}=1,$ respectively, $n\in\Z$
	if $\tilde{\Op}_{j}=\sigma_{\root^{\circ}}.$ Let $\cvr$ be the double
	cover of $\Od$ ramified at those $i$ for which $\tilde{\Op}_{i}=\sigma_{\root^{\circ}}$
	and for $w\in\Cgr(\Od_{\cvr}),$ write 
	\[
	F_{\delta}(w)=\delta^{-\Delta'_{1}-\dots-\Delta'_{k}-\frac{m}{2}-\frac{1}{2}}\prod^{m}_{i=1}\bar{\eta}_{w_{i}}\E_{\Od}\left[\Op^{(1,\vectn^{(1)})}_{\delta,\z_{1}}\dots\psi_{w}\Op^{(1,\vectn^{(j)})}_{\delta,\z_{j}}\dots\Op^{(1,\vectn^{(k)})}_{\delta,\z_{k}}\psi_{w_{1}}\dots\psi_{w_{m}}\right],
	\]
	which is an $s$-holomorphic function of $w$: note that $\Op^{(i,n^{(i)})},$
	as well as $\bar{\eta}_{w_{i}}\psi_{w_{i}}$ are real fields. In a
	vicinity of $z_{j},$ write also 
	\[
	G_{\delta}(w)=\delta^{n-\frac{1}{2}}[\alpha\left(\delta^{-1}(w-z_{j})\right)^{n-\frac{1}{2}}]
	\]
and note that $G_{\delta}(w)=\proj{\eta_{w}}{g(w)}+o(1),$ where 
\[
g(w)=\alpha(w-z_{j})^{n-\frac{1}{2}}.
\]	
Then we have, by definition, 
	\begin{multline*}
		\left(\Psi^{[\alpha]}_{n}\Op^{(j,\vectn^{(j)})}\right)_{\delta,z_{j}}=\sum_{w\in\gamma}2s_{\gamma}(w)\bar{\eta}^{2}_{w}\psi_{w}[\alpha\left(\delta^{-1}(w-z_{j})\right)^{n-\frac{1}{2}}]\Op^{(j,\vectn^{(j)})}\\
		=\delta^{-1}\delta^{\frac{1}{2}-n}\im\oint_{\gamma}\psi_{w}G_{\delta}(w)\Op\,dw.
	\end{multline*}
	When inserted into correlations, we thus get 
	\begin{multline*}
		\delta^{-\Delta'_{1}-\dots-\Delta'_{k}+n-\frac{m}{2}}\E_{\Od}\left[\Op^{(1,\vectn^{(1)})}_{\delta,\z_{1}}\dots\left(\Psi^{[\alpha]}_{n}\Op^{(j,\vectn^{(j)})}\right)_{\delta,z_{j}}\dots\Op^{(1,\vectn^{(k)})}_{\delta,\z_{k}}\psi_{w_{1}}\dots\psi_{w_{m}}\right]\\
		=\prod^{m}_{i=1}\eta_{w_{i}}\im\oint_{z_{j}}F_{\delta}(w)G_{\delta}(w)\,dw.
	\end{multline*}
	We pull the contour of integration so that $w$ it stays at a small
	fixed distance from $z_{j}.$ We have by induction hypothesis $F_{\delta}(w)=\proj{\eta_{w}}{f(w)}+o(1)$, where 
	\[
	f(w)=\Cpsi^{m+1}\left\langle \MainBij(\Op^{(1,\vectn^{(1)})}_{\delta,\z_{1}})\dots\psi_{w}\MainBij\left(\Op^{(1,\vectn^{(j)})}_{\delta,\z_{j}}\right)\dots\MainBij\left(\Op^{(1,\vectn^{(k)})}_{\delta,\z_{k}}\right)\psi^{[\eta_{w_{1}}]}_{w_{1}}\dots\psi^{[\eta_{w_{m}}]}_{w_{m}}\right\rangle.
	\]
	
	Therefore, applying $(\ref{eq: Im_int_approximation}),$we conclude
	that 
	\begin{multline*}
		\delta^{-\Delta'_{1}-\dots-\Delta'_{k}+n-\frac{m}{2}}\prod^{m}_{i=1}\bar{\eta}_{w_{i}}\E_{\Od}\left[\Op^{(1,\vectn^{(1)})}_{\delta,\z_{1}}\dots\left(\Psi^{[\alpha]}_{n}\Op^{(j,\vectn^{(j)})}\right)_{\delta,z_{j}}\dots\Op^{(1,\vectn^{(k)})}_{\delta,\z_{k}}\psi_{w_{1}}\dots\psi_{w_{m}}\right]\\
		=\im\oint f(w)g(w)\,dw+o(1)\\
		=\frac{1}{2\i}\left(\alpha\oint f(w)(w-z_{j})^{n-\frac{1}{2}}dw-\bar{\alpha}\oint\bar{f(w)}(\bar w-\bar z_{j})^{n-\frac{1}{2}}d\bar w\right)+o(1)\\
		=\Cpsi^{m}\left\langle \MainBij\left(\Op^{(1,\vectn^{(1)})}\right)_{z_{1}}\dots\left(\Psi^{[\alpha]}_{n}\MainBij\left(\Op^{(j,\vectn^{(j)})}\right)\right)_{z_{j}}\dots\MainBij\left(\Op^{(1,\vectn^{(k)})}\right)_{z_{k}}\psi^{[\eta_{w_{1}}]}_{w_{1}}\dots\psi^{[\eta_{w_{m}}]}_{w_{m}}\right\rangle _{\Omega}+o(1),
	\end{multline*}
	where $\Psi^{[\alpha]}_{n}=\sqrt{2\pi}\left(\alpha\Psi_{n}+\bar{\alpha}\bar{\Psi}_{n}\right)$,
	and we recall that $\Cpsi=\frac{\sqrt{2}}{\sqrt{\pi}}.$ In the third
	equality, we used \cite[Proposition 5.18]{chelkak2021correlations}.
	By construction of the correspondence $\MainBij,$ we have inside the correlations
	\[
	\Psi_{n}\MainBij(\Op)=\MainBij(\Psi_{n}(\Op)),\quad\bar{\Psi}_{n}\MainBij(\Op)=\MainBij(\bar{\Psi}_{n}(\Op)),
	\]
	which concludes the proof of the induction step. 
	
	The terms in (\ref{eq: corr_expansion_for_sl}) fall into two groups:
	those satisfying $\Delta_{i}'=\Delta_{\Op^{(i)}}$ for all $i=1,\dots,k$,
	and those for which $\Delta_{i}'\geq\Delta_{\Op^{(i)}}$ for all $i$, with at least one inequality strict. When rescaled by $\delta^{-\sum^{k}_{i=1}\Delta_{\Op^{(i)}}},$
	the former terms have scaling limit as above, and the latter ones
	converge to zero. Since keeping only the monomials satisfying $\Delta_{i}'=\Delta_{\Op^{(i)}}$
	is tantamount to applying the projection $P_{\Delta_{\Op^{(i)}}},$
	we arrive at the desired result. 
\end{proof}

\begin{proof}[Proof of Corollary \ref{thm: main1}]
	We define $\hat{\MainBij}(\hat{\Op})=\MainBij(\Op)$, where $\Op\mapsto\hat{\Op}$
	denotes the isomorphism (\ref{eq: spin-ferm-isomorphism}). The corollary
	is then a direct consequence of Theorem \ref{thm: main2}, in particular,
	its third point, and the considerations at the end of Section \ref{sec: lattice_Fock_structure}.
\end{proof}
\section{Examples of explicit computations}

In order to demonstrate the concrete flavor of the correspondences $\MainBij$ and $\MainBijSpins$ and their inverses, we compute, on the square lattice, an explicit lattice field that converges in the scaling limit  to the stress-energy tensor, as well as the Virasoro counterpart of a product of three adjacent spins. To this end, we will need formulae for the good monomials with \emph{negative indices}, more precisely: 
\begin{enumerate}
	\item the coefficients expressing bases of $\setG{E_{2k-1}}$ and $\setQ{E_{2k}}$ discussed in Lemmas \ref{lem: negative_monomial_existence}, \ref{lem: no_degree_jumps}, and Remark \ref{rem: various_bases}, with respect to each other;
	\item explicit values for $\CK{z}{a}$ and $\Gv{z}{a}$ for $z$ close to $a$.
\end{enumerate}
Both are easy to obtain from the representation using the integral expressions via discrete exponentials \eqref{eq: Cauchy_kernel_integral}, \eqref{eq: Gu_integral_dexp}, \eqref{eq: Gv_integral_dexp}. The resulting correspondence depends on the choice of the sequence $E_n$ of sets of rhombi, and the choices that lead to the most compact expressions are slightly different in the two cases.

We stress that given the above analysis, all computations of this sort are entirely straightforward and, in principle,  can be automated.

\subsection{Discretization of the stress-energy tensor} \label{sec: tensor} It is convenient to consider the rotated and \emph{shifted} square lattice of mesh size $\delta=1$,  $\isG=\SqLat=\sqrt 2 e^{\frac{\i \pi}{4}}\Z+1$, with the root $\root$ localted at $\frac12$; in this case, $\root^\bullet$ is at the origin and $\root^\circ$ at $1$. We define $E_n$ by the sequence of corners $a_n:=\frac12\i^n$, $n=0,1,2,3,$ so that $E_1=\{e_1\},$  $E_2=\{e_1,e_2\}$ and $E_3=\{e_1,\dots,e_4\}$, where the rhombi $e_1,\dots,e_4$ are the four unit squares incident to $\root^\bullet$, see Fig. \ref{fig: tensor}. The choice of $E_n$ for $n\geq 4$ will not affect our computations. In the computation below, we choose the signs by $\ds{a_n}=\lamb e^{-\frac{i\pi n}{4}}$.

\begin{figure}
	\centering{\includegraphics[width=0.4\textwidth]{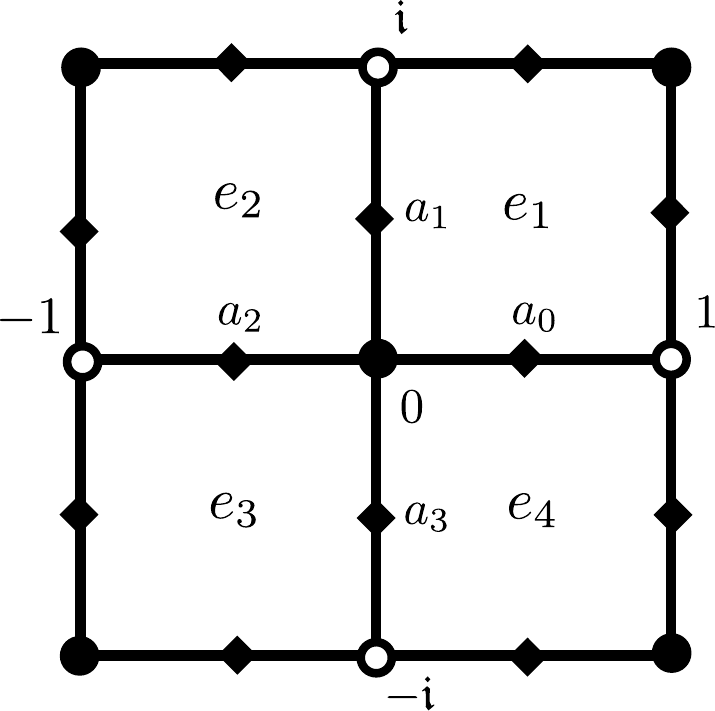}}
	\label{fig: tensor}
	\caption{The notation for the computation in Section \ref{sec: tensor}.} 
\end{figure}

\begin{lem}
	\label{lem: explicit_good_monomials}
	On the square lattice $\SqLat$ with $a_i$ chosen as above, we have 	
	\[
	[z^{-1}]= \Cpsimone\left(\sqrt{2}\CK{z}{a_0}-\CK{z}{a_1}\right);\quad [\i z^{-1}]=\Cpsimone \CK{z}{a_1};
	\]
	\[
	[z^{-2}]=-\frac{1}{\sqrt2}\Cpsimone\left(\sqrt{2}\CK{z}{a_1}-\CK{z}{a_2}-\CK{z}{a_0}\right);\quad  
	\]
	\[
	[\i z^{-2}]=\frac{1}{\sqrt{2}}\Cpsimone\left(\sqrt{2}\CK{z}{a_3}-\CK{z}{a_2}+\CK{z}{a_0}\right);\quad  
	\]

\end{lem}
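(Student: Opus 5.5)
By Lemma \ref{lem: no_degree_jumps}, $[\alpha z^{-1}]$ is the unique element of $\setG{E_1}$, and $[\alpha z^{-2}]$ the unique element of $\setG{E_3}$, with the prescribed leading asymptotics. The bases are $\{\CK{\cdot}{a_0},\CK{\cdot}{a_1}\}$ and $\{\CK{\cdot}{a_i}\}_{i=0}^{3}$ respectively, with no further corners added by closure in $E_3$. So it suffices to verify that each proposed real combination $F$ satisfies
\[
F(z)=\proj{\eta_z}{\alpha z^{-k}}+O(|z|^{-k-1}).
\]
Since $F$ and the $\CK{\cdot}{a_i}$ are s-meromorphic, I will compare the complex quantities before projecting.

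\textbf{Step 1.} I expand each kernel at infinity to second order. From \eqref{eq: G_a_asymp}, rederived with one more term from the Laplace computation (or equivalently from the Taylor expansion of $(z-a)^{-1}$, since the error is $O(|z-a|^{-3})$), we get
\[
\CK{z}{a}=\frac{2}{\pi}\proj{\eta_z}{\bar\eta_a\bigl(z^{-1}+a z^{-2}\bigr)}+O(|z|^{-3}).
\]
This is legitimate because $(z-a)^{-1}=z^{-1}+az^{-2}+O(z^{-3})$ and $\proj{\eta_z}{\cdot}$ is real-linear. Hence for a real combination $\sum c_i\CK{z}{a_i}$ the leading coefficients are
\[
A_{-1}=\frac2\pi\sum c_i\bar\eta_{a_i},\qquad A_{-2}=\frac2\pi\sum c_i\bar\eta_{a_i}a_i .
\]

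\textbf{Step 2.} I plug in the data: $a_n=\tfrac12\i^n$ and $\bar\eta_{a_n}=\overline{\lamb}\,e^{\i\pi n/4}=e^{\i\pi(n+1)/4}$. Thus
\[
\bar\eta_{a_0}=e^{\i\pi/4},\quad \bar\eta_{a_1}=\i,\quad \bar\eta_{a_2}=e^{3\i\pi/4},\quad \bar\eta_{a_3}=-1 .
\]
\emph{Case $[z^{-1}]$:} we have $\sqrt2 e^{\i\pi/4}-\i=1$, so $A_{-1}=\frac2\pi\cdot\frac\pi2\cdot 1=1$. \emph{Case $[\i z^{-1}]$:} $A_{-1}=\frac2\pi\cdot\frac\pi2\cdot\i=\i$. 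No check at order $z^{-2}$ is needed here, since the error term is only required to be $O(z^{-2})$.

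For $[z^{-2}]$ I must check both that $A_{-1}=0$ and that $A_{-2}=1$. Vanishing at order $z^{-1}$ is
\[
\sqrt2\,\i-e^{3\i\pi/4}-e^{\i\pi/4}=\sqrt2\,\i-\sqrt2\,\i=0 .
\]
Order $z^{-2}$ is
\[
\sqrt2\,\i\cdot\tfrac{\i}{2}-e^{3\i\pi/4}\cdot(-\tfrac12)-\tfrac12 e^{\i\pi/4}=-\tfrac{\sqrt2}{2}+\tfrac12\bigl(e^{3\i\pi/4}-e^{\i\pi/4}\bigr)=-\sqrt2 ,
\]
and multiplying by $-\frac{1}{\sqrt2}\cdot\frac\pi2\cdot\frac2\pi$ gives $1$.

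For $[\i z^{-2}]$, order $z^{-1}$ gives
\[
-\sqrt2-e^{3\i\pi/4}+e^{\i\pi/4}=-\sqrt2+\sqrt2=0 ,
\]
and order $z^{-2}$ gives
\[
-\sqrt2\cdot(-\tfrac{\i}{2})+\tfrac12 e^{3\i\pi/4}+\tfrac12 e^{\i\pi/4}=\tfrac{\i}{\sqrt2}+\tfrac{\i}{\sqrt2}=\sqrt2\,\i ,
\]
so after the prefactor we obtain $\i$. If a sign or branch slip makes one of these fail, I would recheck the sign conventions for $\eta_{a_n}$ relative to the rotation rule in \eqref{eq: three-term-real}.

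\textbf{Step 3.} I conclude by uniqueness in Lemma \ref{lem: no_degree_jumps}. The only delicate point, and the main obstacle, is justifying the second-order expansion in Step 1: the paper states \eqref{eq: G_a_asymp} with an error $O(|z-a|^{-3})$ about $a$, and I will re-expand around the origin as described above. Everything else is arithmetic with eighth roots of unity.
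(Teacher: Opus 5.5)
Your proposal is correct and is essentially the paper's argument: you check membership of each combination in $\setG{E_1}$ or $\setG{E_3}$, verify the asymptotics via \eqref{eq: G_a_asymp} (the alternative route the paper explicitly mentions), and conclude by the uniqueness in Lemma~\ref{lem: no_degree_jumps}; your arithmetic with the $\bar\eta_{a_n}$ is right in all four cases. One harmless inaccuracy: closure \emph{does} add the fourth square $e_4$, so $E_3=\clos{\{e_1,e_2,e_3\}}=\{e_1,\dots,e_4\}$, but this does not matter because you only use membership, which is immediate since $a_0,\dots,a_3\in\Cgr(E_3)$.
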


\begin{proof}
	In general, proving such identities boils down to checking that the RHS belongs to the corresponding space $\setG{E_{2k+1}}$ (in our case, $k=0$ and $k=1$) and the correct asymtotics at infinity, which translates, as in Lemma \ref{lem: Cauchy_expansion}, to the expansion of $R$ at the origin. In this case, one can alternatively use \eqref{eq: G_a_asymp} to check the asymptotics.
\end{proof}

The holomoprhic (respectively, anti-holomoprhic) component of the \emph{stress-energy tensor} in CFT is defined as $T=\Lgen{-2}\ind$ (respectively, $\bar{T}=\Lbar{-2}\ind$). We now find its explicit discretization, first as a field in $\LFPee$ and then in $\LF(\isG)$. We focus on the holomorphic part $T$, as $\bar{T}$ is similar. Since $\Curr{n}\ind=0$ for $n>0$, the only $k$ in the Sugawara expression \eqref{eq:fermionic_sugawara_ns} for which $\nord{\Foper{-2-k} \Foper{k}} \, \ind\neq 0$ are $k=-\frac12$ and $k=-\frac32,$ hence we have, recalling \eqref{eq: def_Psi_Psibar}, 
\[
\Lgen{-2}\ind=\frac14 \Curr{-\frac32}\Curr{-\frac12}\ind = \frac{1}{32\pi}\left(\RCurr{-\frac32}{1}\RCurr{-\frac12}{1}-\RCurr{-\frac32}{\i}\RCurr{-\frac12}{\i}-\i \RCurr{-\frac32}{\i}\RCurr{-\frac12}{1}-\i\RCurr{-\frac32}{1}\RCurr{-\frac12}{\i}\right)\ind.
\]   
We have by the discrete residue formula \eqref{eq: Discrete residue formula}, 
\[
\im\oint \CK{z}{a}\psi_z\,dz\equiv 4\dsbar{a}\psi_a;
\] 
\[
\im\oint \CK{z}{a}\psi_z\dsbar{b}\psi_b\,dz\equiv\begin{cases}
4\dsbar{a}\dsbar{b}\psi_{a}\psi_b+4\CK{b}{a}\dsbar{b}& a\neq b, \\
0& a=b.
\end{cases}
\]
By combining the above expression with Lemma \ref{lem: explicit_good_monomials} and \eqref{eq: def_Psi_alpha}, we find
\begin{multline*}
\Lgen{-2}1\equiv \frac{1}{16}\left(\RCurr{-\frac32}{1}\left(\sqrt{2}\dsbar{a_0}\psi_{a_0}-\dsbar{a_1}\psi_{a_1}\right)-\RCurr{-\frac32}{\i}\dsbar{a_1}\psi_{a_1}\right)
\\
-\frac{\i}{16}\left( \RCurr{-\frac32}{1}\dsbar{a_1}\psi_{a_1}+\RCurr{-\frac32}{\i}\left(\sqrt{2}\dsbar{a_0}\psi_{a_0}-\dsbar{a_1}\psi_{a_1}\right)\right)
\\
\equiv\frac{\pi}{8}\left(\hihi{a_2}{a_0}-\hihi{a_3}{a_1}\right)
\\
-\frac{\i\pi}{8}\left(\sqrt2(\hihi{a_2}{a_1}+\hihi{a_3}{a_0})\right.
\\
\left.-\hihi{a_3}{a_1}-\hihi{a_2}{a_0}+\frac{4}{\pi}-2\right)
\end{multline*}
where the constants come from calculating, using \eqref{eq: Cauchy_kernel_integral}, 
\[
\Gab{a_2}{a_0}=\Gab{a_3}{a_1}=-\frac2\pi;\quad 
\]
\[
\Gab{a_2}{a_1}=\Gab{a_1}{a_0}=\Gab{a_3}{a_0}=-\frac{\sqrt{2}}{2}.
\]
Our sign conventions imply $\hihi{a_i}{a_j}\equiv\sigma_{a^\circ_i}\sigma_{a^\circ_j}$ for $0\leq j<i\leq 3,$ therefore, we finally arrive at 
\begin{equation*}
	T\equiv
	\frac{\pi}{8}\left(\sigma_{1}\sigma_{-1}-\sigma_{\i}\sigma_{-\i}\right)
	-\frac{\i\pi}{8}\left(\sqrt2(\sigma_{-1}\sigma_{\i}+\sigma_{-\i}\sigma_{1})-\sigma_{1}\sigma_{-1}-\sigma_{\i}\sigma_{-\i}+\frac{4}{\pi}-2\right).
\end{equation*}

\subsection{Product of three neighboring spins}
\label{sec: three_spins}
We now give an example of going in the opposite direction: on the lattice $\isG=\SqLat=\sqrt 2 e^{\frac{\i \pi}{4}}\Z$, with $\root^\circ=0$ and $\root^\bullet =\i$, we consider the field 
\[
\hat{\Op}:=\sigma_0\sigma_{1+\i}\sigma_{-1+\i}\equiv \dsbar{b_2}\dsbar{b_1}\psi_{b_2}\psi_{b_1}\sigma_{\root^\circ}=:\Op.
\] Here $b_1$ and $b_2$ are lifts of $\frac12+
\i$ and $-\frac12+\i$, respectively, to same sheet of the double cover $\Cgr(\isG_{[\root^\circ]})$ over the upper half-plane, see Fig. \ref{fig: three_spins} The signs will be chosen  as follows: we take $\ds{b_1}=\lamb$ and $\ds{b_2}=-\lambb$, so that $\dsbar{b_2}\dsbar{b_1}=-1$; this way, the above equivalence is checked by starting with $\dsbar{b_1}\dsbar{b_1}\psi_{b_1^+}\psi_{b_1}\sigma_0=\sigma_{0}$ and rotating $b_1^+$ around $\i$ counteclockwise to $b_2$. We calculate the corresponding vectors $\MainBij(\Op)
\in\FullFockNS\oplus\FullFockR$ and $\MainBijSpins(\hat\Op)\in M(4,3)$; in fact, since $\Op$ is an \emph{odd} spin polynomial, we know that $\MainBij(\Op)\in \FullFockR$ and $\MainBijSpins(\hat\Op)\in \HVM{\bspin}.$

\begin{figure}
	\centering{\includegraphics[width=0.4\textwidth]{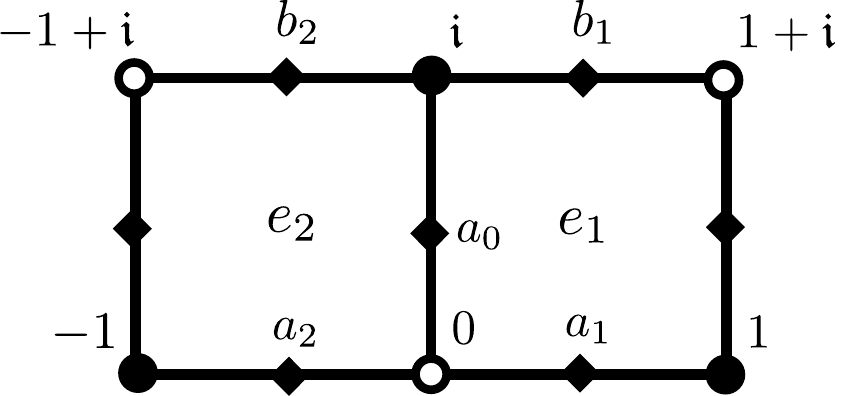}}
	\label{fig: three_spins}
	\caption{The notation for the computation in Section \ref{sec: three_spins}.} 
\end{figure}

It is convenient to choose $a_0=\root =\frac{\i}{2},$ $a_1=\frac12$ and $a_2=-\frac12,$ so that $e_1$ and $e_2$ are the two (upper) unit squares incident to the origin, $E_1=\{e_1\}$ and $E_2=\{e_1,e_2\}$. The choice of $E_k$ for $k\geq 3$ will not affect the result. 

Below, we will consider linear combinations of $\Gv{\cdot}{w}$ for $u
\in
\{0,1,-1,\i\},$ which are defined on different double covers. To this end, we use a sheet $\Sheet$ over the upper half-plane slit along the segment $(0,\i),$ and on that sheet, we fix the sign by  $\Gv{u+\frac{\i}{2}}{u}=\CCpsi\i$, $u\in\{-1,1,\i\}$ and $\Gv{\frac{\i}{2}}{0}=\CCpsi$; then \eqref{eq: asymp_zhalf_lamb}--\eqref{eq: asymp_zhalf_lambb}  are satisfied on $\Sheet$ with the same choice of the branch of the square root. 

\begin{lem}
	\label{lem: explicit_good_monomials_Ramond}
	With the above choices, we have 
	\[
	[\lambb z^{-\frac12}]=\Gv{z}{0};\quad [\lamb z^{-\frac12}]=\Gv{z}{\i}
	\]
	\[
	[z^{-\frac32}]=\sqrt{2}\left(\Gv{z}{\i}-\Gv{z}{-1}\right);\quad [\i z^{-\frac32}]=\sqrt{2}\left(\Gv{z}{\i}-\Gv{z}{1}\right)
	\]
\end{lem}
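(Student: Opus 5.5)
Following the proof of Lemma \ref{lem: explicit_good_monomials}, I will check, for each right-hand side, two things. First, that it lies in the relevant span, $\setQ{E_1}$ for the $-\frac12$ monomials and $\setQ{E_2}$ for the $-\frac32$ ones. Second, that it has the asymptotics required in Lemma \ref{lem: no_degree_jumps}; uniqueness in that lemma then identifies it with the good monomial. Membership is immediate, since $\Gv{\cdot}{w}$ with $w\in\{0,\i,1,-1\}$ are exactly the spinors attached to vertices of $e_1\cup e_2$. Note $\VertSet(E_1)=\{0,\i,1,1+\i\}$, and $e_2$ adds $-1,-1+\i$. It is also consistent with Lemma \ref{lem: negative_monomial_existence}: the basis there is $\Gv{\cdot}{\root^\circ},\Gv{\cdot}{\root^\bullet},\Gv{\cdot}{a_1^\circ},\Gv{\cdot}{a_2^\circ}$, i.e. $w=0,\i,1,-1$.

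For the first line, the claim is just \eqref{eq: asymp_zhalf_lambb} for $\root^\circ=0$ and \eqref{eq: asymp_zhalf_lamb} for $\root^\bullet=\i$. These give $\proj{\eta_z}{\lambb z^{-\frac12}}+O(|z|^{-\frac52})$ and $\proj{\eta_z}{\lamb (z-\i)^{-\frac12}}+O(|z|^{-\frac52})$. Here $(z-\i)^{-\frac12}=z^{-\frac12}+O(|z|^{-\frac32})$, which suffices for the $O(|z|^{-\frac32})$ error demanded. The only care needed is that the branch and sign normalization on the sheet $\Sheet$ (the values at $u+\frac{\i}{2}$) agree with the one used in these asymptotics. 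I will verify this by matching the stated values $\CCpsi$ and $\CCpsi\i$ against $[\lamb o^{-\frac12}]=(\pi/2)^{\frac12}\eta_o$ from the proof of Lemma \ref{lem: lattice_heighest_weight}.

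For the second line, I would expand each $\Gv{z}{w}$ at infinity to next order. By the integral representation \eqref{eq: Gv_integral_dexp}/\eqref{eq: Gu_integral_dexp} and the Laplace method with \eqref{eq: exp_close}, $\Gv{z}{w}=\proj{\eta_z}{c_w (z-w)^{-\frac12}}+O(|z|^{-\frac52})$, with $c_w=\lamb$ for dual vertices and $c_w=\lambb$ for primal ones. The $O(|\zeta|^2)$ correction in \eqref{eq: exp_close} contributes only at relative order $|z|^{-2}$. Hence $\Gv{z}{w}=\proj{\eta_z}{c_w z^{-\frac12}+\tfrac12 c_w w z^{-\frac32}}+O(|z|^{-\frac52})$.

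Here the sign convention on $\Sheet$ matters. Since $\pm1$ are primal vertices like $0$, I expect $c_{\pm1}=\lambb$ on this sheet, but the fixed values $\CCpsi\i$ at $u+\frac{\i}{2}$ for $u=\pm1$ versus $\CCpsi$ at $\frac{\i}{2}$ indicate that the effective leading constant of $\Gv{\cdot}{\pm1}$ on $\Sheet$ equals that of $\Gv{\cdot}{\i}$, namely $\lamb$. Resolving this is the main obstacle. Once the constants are pinned down, the $z^{-\frac12}$ terms cancel in $\Gv{z}{\i}-\Gv{z}{\mp1}$. The remainder is $\tfrac12\lamb(\i\pm1)z^{-\frac32}$, and $\sqrt2\cdot\tfrac12\lamb(\i+1)=1$, $\sqrt2\cdot\tfrac12\lamb(\i-1)=\i$, giving exactly $[z^{-\frac32}]$ and $[\i z^{-\frac32}]$.

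As a fallback I would use the rational-function criterion from the proof of Lemma \ref{lem: Cauchy_expansion}. Writing the combination as in \eqref{eq: G_integral_formula}, I would check that $\hat R(\zeta)=\sum\gamma_w\dexp{2\zeta}{0}{w}$, a product of explicit factors $(1+2\zeta s\Delta)^{s}$ on $\SqLat$, vanishes to first order at $\zeta=0$ with the right leading coefficient. This is a short explicit expansion.
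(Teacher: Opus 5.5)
Your route is the paper's own: membership in $\setQ{E_{2k}}$ for $k=0,1$, the asymptotics \eqref{eq: asymp_zhalf_lamb}--\eqref{eq: asymp_zhalf_lambb} at each vertex, and uniqueness from Lemma \ref{lem: no_degree_jumps}. Your expansion $(z-u)^{-\frac12}=z^{-\frac12}+\frac u2 z^{-\frac32}+O(|z|^{-\frac52})$ and the constants $\frac{\sqrt2}{2}\lamb(\i+1)=1$, $\frac{\sqrt2}{2}\lamb(\i-1)=\i$ are correct.

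However, the step you call the main obstacle rests on a false claim: $\pm1$ are \emph{dual} vertices, not primal ones. On $\SqLat=(1+\i)\Z[\i]$ the primal vertices are the points $x+\i y$ with $x+y$ even. Every rhombus side joins a primal vertex to a dual one, and $1,-1,\i$ are each joined to $\root^\circ=0$ by a side of $e_1$ or $e_2$. So \eqref{eq: asymp_zhalf_lamb}, applied at the dual vertex $u$, gives the leading constant $\lamb$ directly for every $u\in\{\i,1,-1\}$. There is no conflict to resolve, and nothing needs to be inferred from the normalization.

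The normalization only fixes the branch. The three vertices $u$ differ by translations of the primal lattice, and the normalizing corners $u+\frac{\i}{2}$ are the corresponding translates. The common value $\CCpsi\i$ at those corners therefore gives the same branch of $(z-u)^{-\frac12}$ along the upward rays from $u$, hence a common branch at infinity on $\Sheet$. This is what the paper asserts when fixing the signs. With this in place, the $z^{-\frac12}$ terms cancel and your computation finishes the proof.

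There are also two minor slips:
\begin{itemize}
\item The $-\frac12$ monomials belong to the $k=0$ space $\spanned_\R\{\Gv{\cdot}{\root^\circ},\Gv{\cdot}{\root^\bullet}\}$, not to $\setQ{E_1}$. The latter is three-dimensional, so uniqueness modulo $O(|z|^{-\frac32})$ fails there. In any case, the first line is just the definition $[\lamb z^{-\frac12}]:=\Gv{z}{\root^\bullet}$, $[\lambb z^{-\frac12}]:=\Gv{z}{\root^\circ}$.
\item Since $a_1=\frac12$ has $a_1^\circ=0$ and $a_1^\bullet=1$, the vertex $1$ is not $a_1^\circ$. Membership of $\Gv{\cdot}{\pm1}$ in $\setQ{E_2}$ holds simply because $\pm1\in\VertSet(E_2)$.
\end{itemize}
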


\begin{proof}
	 Simialrly to Lemma \ref{lem: explicit_good_monomials}, one readily checks that the RHS belongs to $\setQ{E_{2k}}$ ($k=0$ and $k=1$) and satisfies the required  asymptotics at infinity, using \eqref{eq: asymp_zhalf_lamb}, \eqref{eq: asymp_zhalf_lambb}.
\end{proof}

Following the proof of Theorem \ref{lem: surjectivity}, we start by observing that 
\[
4\dsbar{b_1}\psi_{b_1}\sigma_{\root^\circ}\equiv\im\oint \SCkx{b_1}{\lambb}(z)\psi_z\sigma_{\root^\circ}\,dz
\]
and
\begin{multline}
	\label{eq: three_spins}
	\dsbar{b_2}\dsbar{b_1}\psi_{b_2}\psi_{b_1}\sigma_0
	\equiv
	\frac14\im\oint \SCkx{b_2}{\lambb}(z)\psi_z\left(\dsbar{b_1}\psi_{b_1}\sigma_{\root^\circ}\right)\,dz
	-\dsbar{b_1}\SCkx{b_2}{\lambb}(b_1)\sigma_{\root^\circ}
	\\ 
	\equiv \frac{1}{16}\im\oint \SCkx{b_2}{\lambb}(z)\psi_z\im\oint \SCkx{b_1}{\lambb}(w)\psi_w\sigma_{\root^\circ}\,dw\,dz
	-\dsbar{b_1}\SCkx{b_2}{\lambb}(b_1)\sigma_{\root^\circ}.
\end{multline}

Next, we need expressions of $\SCkx{b_1}{\lambb}(z)$ and $\SCkx{b_2}{\lambb}(z)$ in terms of $[\alpha z^n]$. We first observe, in the vein of the proof of Lemma \ref{lem: SCK_exists} that, on $\Sheet$, we have 
\[
\SCkx{a_1}{\lambb}(z)=\Cpsimoroot\Gv{z}{1};\quad \SCkx{a_0}{\lambb}(z)=\Cpsimoroot\Gv{z}{\i}\quad \SCkx{a_2}{\lambb}(z)=\Cpsimoroot\Gv{z}{-1};
\]
with the following choice of signs:  $\ds{a_0}=1,\ds{a_1}=-\lambb,\ds{a_2}=-\lamb$; this guarantees the first property in the definition of $\SCkx{v}{\cdot}(\cdot)$.  
From \eqref{eq: CK_three_terms}, we find 
\begin{multline*}
	\SCkx{b_1}{\lambb}(z)=\sqrt{2}\SCkx{a_0}{\lambb}(z)+\SCkx{a_1}{\lambb}(z)=\Cpsimoroot\left(\sqrt{2}\Gv{z}{\i}+\Gv{z}{1}\right)\\
	=\Cpsimoroot\left((\sqrt{2}+1)[\lamb z^{-\frac12}]-\frac{\sqrt{2}}{2}[\i z^{-\frac32}]\right),
\end{multline*}
and similarly 
\begin{multline*}
\SCkx{b_2}{\lambb}(z)=-\sqrt{2}\SCkx{a_0}{\lambb}(z)-\SCkx{a_2}{\lambb}(z)=\Cpsimoroot\left(-\sqrt{2}\Gv{z}{\i}-\Gv{z}{-1}\right)
\\=\Cpsimoroot\left(-(\sqrt{2}+1)[\lamb z^{-\frac12}]+\frac{\sqrt{2}}{2}[ z^{-\frac32}]\right).
%=\Cpsimoroot\left((1-\frac{\sqrt{2}}{2})([z^{-\frac12}]-[\iz^{-\frac12}])+\frac{\sqrt{2}}{2}[z^{-\frac32}]\right).
\end{multline*}
From the last formula and \eqref{eq: Gv_integral_dexp}, we in particular obtain
\begin{multline*}
\SCkx{b_2}{\lambb}(b_1)=\Cpsimoroot\left(-\sqrt{2}\Gv{b_1}{\i}-\Gv{b_1}{-1}\right)
\\=-\frac{ \lambb\sqrt{2}}{\pi}\int_{-\infty}^0\zeta^{-\frac12}(1-\zeta)^{-1}\,d\zeta-\frac{ \lambb}{\pi}\int_{-\infty}^0\zeta^{-\frac12}\frac{(1+\zeta)(1+\i\zeta)}{(1-\i\zeta)(1-\zeta)^{2}}\,d\zeta=-2\lamb;
\end{multline*}
the branch of $\zeta^{-\frac12}$ is chosen so that $\Gv{\cdot}{v}$, $v\in\{\i,-1\}$ as given by \eqref{eq: Gv_integral_dexp} has the correct sign at $v+\frac{\i}{2}\in\Sheet$. Plugging this into \eqref{eq: three_spins} and recalling the definitions of $\RCurr{n}{\alpha}$, we obtain
\begin{multline*}
	\dsbar{b_2}\dsbar{b_1}\psi_{b_2}\psi_{b_1}\sigma_0\equiv 2\sigma_{0}
	+\frac{1}{8\pi}\left(\left(1+\frac{\sqrt{2}}{2}\right)(\RCurr{0}{\lamb}\RCurr{-1}{\i}+\RCurr{-1}{1}\RCurr{0}{\lamb})-\frac{1}{2}\RCurr{-1}{1}\RCurr{-1}{\i}\right)\sigma_0
	\\
\equiv 2\sigma_{0}+\frac{1}{8\pi}(\sqrt{2}+1)\RCurr{-1}{\lamb}\RCurr{0}{\lamb}\sigma_{0}-\frac{1}{16\pi}\RCurr{-1}{1}\RCurr{-1}{\i}\sigma_0
\end{multline*}
where we have also used the anti-commutation relations, in particular, $\RCurr{0}{\lamb}\RCurr{0}{\lamb}=0$. Taking into account \eqref{eq: Curr_to_Rcurr} and also $\Curr{0}\sigma_{\root^\circ}=\i\Currbar{0}\sigma_{\root^\circ},$ we arrive at 
\begin{multline*}
\sigma_0\sigma_{1+\i}\sigma_{-1+\i}\equiv 2\sigma_{0}+\frac{\sqrt{2}+1}{4} \left(-2\i \Curr{-1}\Curr{0}+2\i \Currbar{-1}\Currbar{0}\right)\sigma_0-
\frac{\i}{4}\Currbar{-1}\Curr{-1}\sigma_0\\
\equiv 2\sigma_0-2\i(\sqrt{2}+1)(\Lgen{-1}-\Lbar{-1})\sigma_0+4\Lbar{-1}\Lgen{-1}\sigma_0;
\end{multline*}
in the last line, we used that by \eqref{eq:fermionic_sugawara_r} $\Lgen{-1}\sigma_0=\frac14 \Curr{-1}\Curr{0}\sigma_0$, and we also substituted $\i\sigma_0=\Currbar{0}\Curr{0}\sigma_0$. 

The interpretation of this result in terms of scaling limits is as follows. First, $\delta^{-\frac18}\sigma_0\sigma_{1+\i}\sigma_{-1+\i}$ has the same scaling limit as $2\delta^{-\frac18}\sigma_0,$ that is,  
\begin{multline*}
	\delta^{-\frac18+\Delta_1+\dots+\Delta_n}\E[\sigma_z\sigma_{z+\delta+\i\delta}\sigma_{z-\delta+\i\delta}\Op_{\delta,z_1}^{(1)}\dots\Op_{\delta,z_n}^{(n)}]\\
= 2\Csigma \ccor{\sigma_{z}P_{\Delta_1}(\MainBij(\Op^{(1)})_{z_1}\dots P_{\Delta_n}(\MainBij(\Op^{(n)})_{z_n}}+o(1),
\end{multline*}
for any $\Op^{(i)}\in \LFP(\Gamma)$ and $\Delta_i=\Delta_{\Op^{(i)}}.$ Furthermore, the rescaled difference 
\[
\delta^{-\frac98}\sigma_0(\sigma_{1+\i}\sigma_{-1+\i}-2)
\] 
has a scaling limit given by $-2\i(\sqrt{2}+1)(\Lgen{-1}-\Lbar{-1})\sigma$. Since, as checked in Appendix, the action of $\Lgen{-1}$ (respectively, $\Lbar{-1}$) in the continuum amounts to applying $\pa$ (respectively, $\bar{\pa}$) to the correlation, we conclude that 

\begin{multline*}
	\delta^{-\frac98+\Delta_1+\dots+\Delta_n}\E[\sigma_z(\sigma_{z+\delta+\i\delta}\sigma_{z-\delta+\i\delta}-2)\Op_{\delta,z_1}^{(1)}\dots\Op_{\delta,z_n}^{(n)}]\\
	= -\Csigma 2(1+\sqrt{2})\pa_y\ccor{\sigma_{x+
		\i y}P_{\Delta_1}(\MainBij(\Op^{(1)})_{z_1}\dots P_{\Delta_n}(\MainBij(\Op^{(n)})_{z_n}}+o(1), \quad z=x+\i y,
\end{multline*}
for any $\Op^{(i)}\in \LFP(\Gamma)$ and $\Delta_i=\Delta_{\Op^{(i)}}$. Finally, the reader may check that 
\[
\RCurr{-1}{\lamb}\RCurr{0}{\lamb}\sigma_{0}\equiv 8\pi (2\sigma_0-\sigma_{1+\i}-\sigma_{-1+\i}).
\]
This means that the field $\Op:=\sigma_{0}\sigma_{1+\i}\sigma_{-1+\i}-2\sigma_0-(\sqrt{2}+1)(2\sigma_0-\sigma_{1+\i}-\sigma_{-1+\i})$ has the scaling dimension $\Delta_\Op=\frac{17}{8}$, and 
\begin{multline*}
	\delta^{-\frac{17}{8}+\Delta_1+\dots+\Delta_n}\E[\Op_{\delta,z}\Op_{\delta,z_1}^{(1)}\dots\Op_{\delta,z_n}^{(n)}]\\
	= \Csigma (\pa^2_x+\pa^2_y)\ccor{\sigma_{x+
			\i y}P_{\Delta_1}(\MainBij(\Op^{(1)})_{z_1}\dots P_{\Delta_n}(\MainBij(\Op^{(n)})_{z_n}}+o(1), \quad z=x+\i y.
\end{multline*}

\begin{rem}
	We stress that the expansions as above \emph{are not} asymptotic expansions of correlations in powers of $\delta$. Indeed, obtaining even the first correction, say, for one-point spin expectation with $+$ boundary conditions would require stronger assumptions on the approximations $\Od\to\Omega.$
\end{rem}
\section*{Appendix: Ward identities and $\Lgen{n},\Lbar{n}$ as differential operators}
In this Appendix, we check the properties of the stress-energy tensor $T_z=(\Lgen{-2}\ind)_z$ (and its anti-holomorphic counterpart $\SETbar_z=(\Lbar{-2}\ind)_z$) in the continuum, in particular, the Ward identities and the interpretation of $\Lgen{-1}$ (respectively, $\Lbar{-1}$) as the holomorphic derivative $\pa_z$ (respectively, $\bar\pa_z$). Since $\SET$ and $\SETbar$ are constructed in terms of the fermionic current action and not in the more usual way as variation of the underlying metric, these properties require checking. The equivalence of the two constructions is well known to physicists \cite{Yellow_book, mussardo2010statistical} and parts of the material in this appendix also appeared in \cite{ChelkakGlazmanSmirnov, fonseca2003ward}.

The most important consequence of this is that by a celebrated argument of Belavin, Polyakov, and Zamolodchikov \cite{BPZ}, the computation of general correlations $\ccor{\Op^{(1)}_{z_1}\dots,\Op^{(n)}_{z_n}},$ where each $\Op^{(i)}$ is in one of the Virasoro submodules $\HVM{\ind}$, $\HVM{\en}$, $\HVM{\psi}$, $\HVM{\psi^\star}$, $\HVM{\sigma}$, $\HVM{\mu}$ can be reduced to an application of a differential operator on the correlation function of the corresponding primaries. We stress that this does not involve any fermionic insertions, as would the definitions of Section \ref{subsec: intro:cont_cor}. 
%For simplicity, we showcase this expression in the case of full plane. 

We focus on the holomorphic case, the anti-holomorphic one being completely parallel. As noted in the previous section, the Sugawara expression reads 
\[
\SET_w:=(\Lgen{-2}\ind)_w=\frac14\left(\Curr{-\frac32}\Curr{-\frac12}\ind\right)_w=\frac{1}{8\pi \i}\oint (z-w)^{-2}\psi_z\psi_w\,dz,
\]
we integral being over a small counterclockwise contour surrounding $w$. When we write identites or asymptotic expansions between \emph{fields}, such as the above one or the OPE in the style of \cite[Section 6]{chelkak2021correlations}, those should be understood as valid inside correlations with any other (distant) fields. In particular, $\partial_z \Op_z$ is understood to be a ``field" whose correlation with any other fields $\Op^{(1)},
\dots,\Op^{(n)}$ is given by 
\[
\ccor{\pa_z\Op_z\Op^{(1)}_{z_1}\dots\Op^{(n)}_{z_n}}_\Omega=\pa_z\ccor{\Op_z\Op^{(1)}_{z_1}\dots\Op^{(n)}_{z_n}}_\Omega.
\]

\begin{lem}
The field $\SET_z$ thus defined is holomorphic in $z$ inside correlations, more precisely,   \[\ccor{\SET_z\Op^{(1)}_{z_1}\dots\Op^{(n)}_{z_n}}_\Omega.\] is a holomorphic function of $z$ in $\Omega\setminus \{z_1,\dots,z_n\}$ for all $ \Op^{(1)},\dots,\Op^{(n)}\in\FullFockNS\oplus\FullFockR$ Moreover, when $\Op\in\FullFockNS$ or $\Op\in\FullFockR$, the Virasoro modes defined by 
\[
(\Lgen{n}\Op)_\zeta := \frac{1}{2\pi \i}\oint (z-\zeta)^{n+1}\SET_z\Op_\zeta\,dz 
\]
conicide with the ones given by Sugawara construction \eqref{eq:fermionic_sugawara_ns}, \eqref{eq:fermionic_sugawara_r} applied to $\Curr{n}$ as in Definition \ref{def: correlations-functions}. In particular, $\Lgen{n}$ indeed obey the Virasoro commutation relations.
\end{lem}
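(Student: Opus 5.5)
To prove holomorphicity, I would write $\SET_z$ through its contour-integral definition. By the fermion OPE \eqref{eq: ferm_OPE_continuous}, $\psi_u\psi_z = \frac{2}{u-z} + \mathrm{reg}$, so the correlation $\ccor{\psi_u\psi_z\Op^{(1)}_{z_1}\dots}_\Omega$ is meromorphic in $u$ near $z$ with a simple pole. Concretely, the pole part equals $2(u-z)^{-1}$ times the correlation without fermions. Hence
\[
\frac{1}{8\pi\i}\oint(u-z)^{-2}\psi_u\psi_z\,du
\]
picks out $\frac14$ of the coefficient of $(u-z)^1$ in the expansion of the correlation $H(u,z)$. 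This gives $\SET_z=\frac14\lim_{u\to z}\partial_u\big(H(u,z)-\frac{2}{u-z}\ccor{\dots}\big)$, or equivalently one half of the symmetric second-order part. Since $H$ is jointly holomorphic in $(u,z)$ off the diagonal (by (anti)holomorphicity of spin-fermion correlations in fermion positions, which persists for spectators in $\FullFockNS\oplus\FullFockR$ by Fubini), and the subtracted term is holomorphic, $\ccor{\SET_z\dots}$ is holomorphic in $z$. Single-valuedness in $z$ follows because both fermions travel together around any spin, so the two sign changes cancel.

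For the modes, I fix $\Op$ as a basis vector and set $F(u,z)=\ccor{\psi_u\psi_z\Op_\zeta\dots}$. I then compute $\frac{1}{2\pi\i}\oint_{(\zeta)}(z-\zeta)^{n+1}\SET_z\,dz$ as an iterated integral: the $u$-contour is small around $z$, and the $z$-contour runs around $\zeta$. Next I deform the $u$-contour to the difference of a contour $|u-\zeta|=r_2$ outside and one $|u-\zeta|=r_1$ inside the $z$-circle. The remaining step is to expand $(u-z)^{-2}$ in geometric series in the two regimes $|u-\zeta|\gtrless|z-\zeta|$. Writing $(u-\zeta)^{k-\frac12}$-type monomials, appropriately half-integer in the Ramond case because the fermions are ramified at $\zeta$, each resulting term becomes, by \eqref{eq: defcor1}, a product $\Curr{a}\Curr{b}\Op$ with $a+b=n$ and weight coefficient $k$. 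The outer/inner split is what produces the normal ordering: on the outer contour the operator with the larger index is applied first. Thus I would obtain $\frac14\sum_k k\,\Curr{n-k}\Curr{k}\Op$ with finitely many nonzero terms, matching \eqref{eq:fermionic_sugawara_ns}; the sums truncate because $\Curr{m}\Op=0$ for large $m$. For $n=0$ I would additionally use the anticommutation relations to reorder, recovering the $\frac12\sum_{k>0}$ form.

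I expect the main obstacle to be the Ramond sector. There $(u-z)^{-2}$ multiplies spinors with branch $(u-\zeta)^{\pm\frac12}$, so the expansion of the OPE regular part near $\zeta$ is in half-integer powers, and the reordering produces the constant $\frac1{16}$. I would handle this by first computing $\SET_z$ in correlations with $\sigma_\zeta$ via \eqref{eq: ferm_spin_OPE}–\eqref{eq: ferm_mu_OPE}: writing $\psi_u\psi_z\sigma_\zeta$ near $\zeta$ as the regular combination $\frac{(u-\zeta)^{1/2}(z-\zeta)^{-1/2}+(z-\zeta)^{1/2}(u-\zeta)^{-1/2}}{u-z}$ times $\sigma$ plus higher terms, whose expansion yields exactly $\frac{1}{16}(z-\zeta)^{-2}$. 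That constant matches the zero-mode contribution $\frac14\cdot\frac14\Curr{0}^2$ bookkeeping. With the mode identification established, the Virasoro relations follow from Lemma \ref{lem:fermionic_sugawara_ns}, since the $\Curr{n}$ satisfy \eqref{eq: anticomm} inside correlations.
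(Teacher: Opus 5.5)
\textbf{Verdict.} Your holomorphicity argument and your Neveu--Schwarz computation are essentially the paper's: deform the $u$-contour into an outer minus an inner circle around $\zeta$, expand $(u-z)^{-2}$ in the two regimes, and track the sign from $\psi_u\psi_z=-\psi_z\psi_u$ on the inner circle. There is, however, a genuine gap in the Ramond sector.

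\textbf{The gap.} For $\Op\in\FullFockR$ the integrand $(u-z)^{-2}\ccor{\psi_u\psi_z\Op_\zeta\dots}$ changes sign when $u$ goes around $\zeta$. Consequently:
\begin{itemize}
\item the small circle around $z$ is \emph{not} the difference of two circles around $\zeta$;
\item the geometric expansion of $(u-z)^{-2}$ produces integer powers of $u-\zeta$, which cannot be fed into \eqref{eq: defcor1} for a ramified fermion.
\end{itemize}
Saying the monomials are ``appropriately half-integer'' skips exactly this point. To get half-integer powers you must change the kernel, and that changes the residue at $u=z$.

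Your proposed remedy does not repair this, for four reasons.
\begin{itemize}
\item The iterated OPEs \eqref{eq: ferm_spin_OPE}--\eqref{eq: ferm_mu_OPE} only control the nested regime $|z-\zeta|\ll|u-\zeta|$, not a neighbourhood of $u=z$.
\item The term $C(u,z)\sigma_\zeta$ only yields the most singular coefficient $\frac1{16}(z-\zeta)^{-2}$ of $\SET_z\sigma_\zeta$. The discarded ``higher terms'' are precisely what determines $(\Lgen{n}\sigma)_\zeta$ for $n<0$.
\item Nothing is said about general $\Op\in\FullFockR$.
\item The ``$\frac14\cdot\frac14\Curr{0}^2$ bookkeeping'' is not an argument, since the $k=0$ term of \eqref{eq:fermionic_sugawara_r} has coefficient $0$.
\end{itemize}

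\textbf{The missing idea.} Change the kernel before deforming, using only the fermion--fermion OPE. One has
\[
(u-\zeta)^{\frac12}(z-\zeta)^{-\frac12}(u-z)^{-2}=(u-z)^{-2}+\tfrac12(z-\zeta)^{-1}(u-z)^{-1}-\tfrac18(z-\zeta)^{-2}+O(u-z),
\]
and $\psi_u\psi_z=\frac{2}{u-z}+O(u-z)$, with no constant term by antisymmetry. Hence, in correlations with any $\Op_\zeta$, $\Op\in\FullFockR$,
\[
\SET_z=\frac{1}{8\pi\i}\oint(u-\zeta)^{\frac12}(z-\zeta)^{-\frac12}(u-z)^{-2}\psi_u\psi_z\,du+\frac{1}{16}(z-\zeta)^{-2}.
\]
Now the $u$-integrand is single-valued around $\zeta$, so the contour deformation is legitimate. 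After multiplying by $(z-\zeta)^{n+1}$, the two expansions give $(u-\zeta)^{-m-\frac12}(z-\zeta)^{n+m-\frac12}$ and $(u-\zeta)^{m-\frac12}(z-\zeta)^{n-m-\frac12}$. This yields
\[
\tfrac14\Big(\sum_{m\geq1}m\,\Curr{-m}\Curr{n+m}+\sum_{m\geq1}m\,\Curr{n-m}\Curr{m}\Big)\Op,
\]
and the constant contributes $\frac1{16}\delta_{n,0}\Op$. Together these give \eqref{eq:fermionic_sugawara_r}.

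Your $C(u,z)$ is consistent with this. Indeed $(u-\zeta)^{\frac12}(z-\zeta)^{-\frac12}C(u,z)=\frac{2}{u-z}+\frac{1}{z-\zeta}$ exactly, so it has no $(u-z)^1$ term; this is why the $\frac1{16}$ you found is universal. But it must be derived this way, for every $\Op\in\FullFockR$ and every $n$, not from the spin--fermion OPEs.
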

\begin{proof}
	The holomorphicity is obvious. For the second claim, to calculate $(\Lgen{n}\Op)_\zeta$ for $\Op\in\FullFockNS$, we assume for notational simplicity that $\zeta=0$ and write, for small counterclockeise contours $\gamma_3$ enclosing $\gamma_2$ enclosing $\gamma_1$ enclosing $0$, 
	\begin{multline*}
	(\Lgen{n}\Op)_0=\frac{1}{2\pi\i}\oint_{w \in \gamma_2}w^{n+1}\frac{1}{8\pi \i}\oint_{|z-
		w|=
		\frac12 |w|} (z-w)^{-2}\psi_z\psi_w\Op_0\,dzdw\\
		=-\frac{1}{16\pi^2}
		\left(\oint_{w \in \gamma_2}\oint_{z \in \gamma_3}w^{n+1}(z-w)^{-2}\psi_z\psi_w \Op_0\,dz dw\right.\\
		\left.+\oint_{w \in \gamma_2}\oint_{z \in \gamma_1}w^{n+1}(z-w)^{-2}\psi_w\psi_z \Op_0\,dz dw\right).
	\end{multline*}
Substituting into the two intergrals the expansions  
\begin{equation}
	\label{eq: z-w minus square expansions}
(z-w)^{-2}=z^{-2}
\sum_{m=1}^{
	\infty}m\left(\frac{w}{z}\right)^{m-1}\quad \text{and}\quad (z-w)^{-2}=w^{-2}
	\sum_{m=1}^{
	\infty}m\left(\frac{z}{w}\right)^{m-1}
\end{equation}
respectively, we arrive at the desired expression
\begin{multline*}
(\Lgen{n}\Op)_0=\frac14\left(\sum_{m=0}^\infty m \Curr{-m-\frac12}\Curr{n+\frac12+m}+\sum_{m=1}^\infty m \Curr{n-m+\frac12}\Curr{m-\frac12}\right)\Op_0\\=\begin{cases}\frac14\sum_{k\in \Z+\frac12} k\Curr{n-k}\Curr{k}\Op_0,& n\neq 0,\\ 
\frac12 \sum_{k
\in \Z_{\geq 0}+
\frac12}k\Curr{-k}\Curr{k}\Op_0,&n=0.\end{cases}
\end{multline*}
In the last identity, we changed the summation index to $k=-m-\frac12$ in the first sum and $k=m-\frac12$ in the second one,  and used the anti-commutation $\Curr{n-k}\Curr{k}=-
\Curr{k}\Curr{n-k}$ for $n
\neq 0$, as well as \eqref{eq: Sugawara_shift}. 

When $\Op\in \FullFockR$, we need to take into account that the fermionic correlations are ramified at the location of $\Op$. We observe that due to the expansion 
\[
z^{\frac12}w^{-\frac12}(z-w)^{-2}=(z-w)^{-2}+\frac12w^{-1}(z-w)^{-1}-\frac18w^{-2}+O(z-w)
\] and the fermionic OPE $\psi_z\psi_w=\frac{2}{z-w}+O(z-w),$ we have an alternative expression 
\[
\SET_w=\frac{1}{8\pi \i} \oint z^{\frac12}w^{-\frac12}(z-w)^{-2}\psi_z\psi_w\,dz+\frac{1}{16}w^{-2}.
\]

Hence, for $\Op\in\FullFockR$, we can write
\begin{multline*}
	(\Lgen{n}\Op)_0=\frac{1}{2\pi\i}\oint_{w \in \gamma_2}w^{n+1}\frac{1}{8\pi \i}\oint_{|z-
		w|=
		\frac12 |w|} z^{\frac12}w^{-\frac12}(z-w)^{-2}\psi_z\psi_w\Op_0\,dzdw\\+\frac{1}{32\pi\i}\oint_{w\in\gamma_2} w^{n-1}\Op_0\,dz\\
	=-\frac{1}{16\pi}
	\left(\oint_{w \in \gamma_2}\oint_{z \in \gamma_3}w^{n+1}z^{\frac12}w^{-\frac12}(z-w)^{-2}\psi_z\psi_w \Op_0\,dz dw\right.\\
	\left.+\oint_{w \in \gamma_2}\oint_{z \in \gamma_1}w^{n+1}z^{\frac12}w^{-\frac12}(z-w)^{-2}\psi_w\psi_z \Op_0\,dz dw\right)+\frac{\delta_{n,0}}{16}\Op_0.
\end{multline*}
Plugging in the expansions \eqref{eq: z-w minus square expansions} again, we arrive at 
\begin{multline*}
	(\Lgen{n}\Op)_0=\frac14\left(\sum_{m=0}^\infty m \Curr{-m}\Curr{n+m}+\sum_{m=1}^\infty m \Curr{n-m}\Curr{m}\right)\Op_0\\=\begin{cases}\frac14\sum_{k\in \Z} k\Curr{n-k}\Curr{k}\Op_0,& n\neq 0,\\ 
		\frac{1}{16}\Op_0+\frac12 \sum_{k
			\in \Z_{\geq 0}}k\Curr{-k}\Curr{k}\Op_0,&n=0.\end{cases}
\end{multline*}
\end{proof}

\begin{lem}
	Inside the correlations, we have the Ward identity \begin{align*}
		\SET_z \Op_w=\frac{\SDhol{\Op}\Op_w}{(z-w)^2}+\frac{\pa_w\Op_w}{z-w}+O(1),
	\end{align*}
	for $\Op\in\{\ind,\sigma,\mu,\psi,\psi^\star,\en\},$ with $\SDhol{\Op}$ being the corresponding $\Lgen{0}$ eigenvalue, i.e., $0,\frac{1}{16},\frac{1}{16},\frac12,0,\frac12$ respectively.
\end{lem}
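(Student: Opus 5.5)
The plan is to read off the singular part of $\SET_z\Op_w$ from the mode expansion just established in the previous lemma. That lemma identifies $(\Lgen{n}\Op)_w=\frac{1}{2\pi\i}\oint (z-w)^{n+1}\SET_z\Op_w\,dz$ with the Sugawara modes. Since $\SET_z\Op_w$ is holomorphic in $z$ in a punctured neighbourhood of $w$ and single-valued there (it is built from an even number of fermions, so no monodromy around spins), it has a Laurent expansion
\[
\SET_z\Op_w=\sum_{n\in\Z}(z-w)^{-n-2}(\Lgen{n}\Op)_w .
\]
It therefore suffices to check three things. First, $\Lgen{n}\Op=0$ for $n\geq1$. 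Second, $\Lgen{0}\Op=\SDhol{\Op}\Op$. Third, $(\Lgen{-1}\Op)_w=\pa_w\Op_w$ inside correlations.

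The first two facts are algebraic and already available. Each $\Op\in\{\ind,\sigma,\mu,\psi,\psi^\star,\en\}$ is a highest weight vector with the stated $\Lgen{0}$ eigenvalue, by Proposition \ref{prop: Virasoro_structure} and Lemma \ref{lem: grading}. This transfers to correlations because Definition \ref{def: correlations-functions} respects the Clifford module structure, so the Sugawara modes act through the abstract Fock module.

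The main step, and the expected obstacle, is $(\Lgen{-1}\Op)_w=\pa_w\Op_w$, which has to be checked primary by primary using the fermion OPEs.

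For $\ind$: $\Lgen{-1}\ind=\frac14\sum_k k\Curr{-1-k}\Curr{k}\ind$, and every term vanishes because $\Curr{k}\ind=0$ for $k>0$ and $\Curr{-\frac12}\Curr{-\frac12}\ind=0$. On the other side, correlations do not depend on the position of $\ind$, so $\pa_w\ind_w=0$ as well.

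For $\psi$: $\Lgen{-1}\psi=\frac14\cdot(-\tfrac32+\ldots)$ reduces to $\frac12\Curr{-\frac32}\ind$. By \eqref{eq: defcor1} this is $\frac{1}{4\pi\i}\oint(z-w)^{-2}\psi_z\,dz$, which equals $\pa_w\psi_w$ by the Cauchy formula, since $\psi_z$ is holomorphic near $w$ in any correlation not containing another fermion at $w$.

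For $\psi^\star$: $\Lgen{-1}\psi^\star=\frac14\cdot 2\cdot\frac12\Curr{-\frac12}\Curr{-\frac12}\bar\Psi_{-\frac12}\ind$, which vanishes. Correspondingly $\pa_w\psi^\star_w=0$ by antiholomorphicity.

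For $\en$: $\Lgen{-1}\en=\frac{\i}{2}\Curr{-\frac32}\Currbar{-\frac12}\ind$. Differentiating $\en_w=\frac{\i}{2}\psi\psi^\star$, realized as the regular part of $\psi_z\psi^\star_w$ at $z=w$ (no singular part, by \eqref{eq: ferm_OPE_continuous}), in $w$ gives exactly the $(z-w)$ Taylor coefficient, which is $\Curr{-\frac32}$.

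For $\sigma$: $\Lgen{-1}\sigma=\frac14\Curr{-1}\Curr{0}\sigma=\frac{\lamb}{4}\Curr{-1}\mu$. By \eqref{eq: defcor1}, $\Curr{-1}\mu$ is $\frac{1}{2\pi\i}\oint\psi_z\mu_w(z-w)^{-3/2}dz$, that is, the coefficient of $(z-w)^{1/2}$ in the $\psi\mu$ OPE. By \eqref{eq: ferm_mu_OPE} this coefficient is $4\lamb\pa_w\sigma_w$. Hence $\Lgen{-1}\sigma=\frac{\lamb^2}{1}\cdot\lamb\ldots$; tracking the constants gives $\pa_w\sigma_w$, using $\lamb\bar\lamb=1$ and the normalization in \eqref{eq: mu_psi_psibar}. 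I expect the constant bookkeeping here to be the delicate point.

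For $\mu$: the analogous identity needs the next OPE coefficient of $\psi_z\sigma_w$, at order $(z-w)^{1/2}$, which is not listed in the paper. I would obtain it by the Kramers–Wannier symmetry $\sigma\leftrightarrow\mu$ noted after \eqref{eq: mu_psi_psibar}. Alternatively, I would use $\mu=\lamb\Curr{0}\sigma$ together with the commutator $[\Lgen{-1},\Curr{0}]=\frac12\Curr{-1}$, which follows from the Sugawara formula, combined with the analytic identity $\pa_w(\Curr{0}\Op)_w=(\Curr{0}\pa\Op)_w+\frac12(\Curr{-1}\Op)_w$. That identity follows by differentiating $(z-w)^{-1/2}$ under the integral in \eqref{eq: defcor1}. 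This transfers $\Lgen{-1}\sigma=\pa\sigma$ to $\Lgen{-1}\mu=\pa\mu$, and I would follow this route in preference to the symmetry argument.
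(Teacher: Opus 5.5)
Your argument is essentially the paper's proof. The reformulation as conditions on $(\Lgen{n}\Op)_w$ for $n\ge -1$ is the same, as are the algebraic treatment of $n\ge 0$ and the $\psi\mu$-OPE computation for $\sigma$. For $\mu$ you use exactly the identity $[\pa_w,\Curr{n}]=(-n+\frac12)\Curr{n-1}=[\Lgen{-1},\Curr{n}]$ (at $n=0$) that the paper proves in Lemma~\ref{lem: Lmone=pa}. Your hand checks for $\psi$, for $\en$ (which uses $\pa_w\psi^\star_w=0$) and for $\psi^\star$ are just its $n=-\frac12$ and $\Currbar{n}$ instances.

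Two constants are wrong as written, although both conclusions are correct:
\begin{itemize}
\item For $\psi$: in fact $\Lgen{-1}\psi=\Curr{-\frac32}\ind$, since the $k=\frac12$ and $k=-\frac32$ terms contribute $\frac14+\frac34$. Also $\frac{1}{2\pi\i}\oint(z-w)^{-2}\psi_z\,dz=\pa_w\psi_w$. Your two factor-$\frac12$ slips therefore cancel.
\item For $\sigma$: $\mu=\lamb\Curr{0}\sigma$ gives $\Curr{0}\sigma=\lambb\mu$, so the correct prefactor is $\Lgen{-1}\sigma=\frac{\lambb}{4}\Curr{-1}\mu$. With your $\frac{\lamb}{4}$ you would get $\lamb^2\pa_w\sigma_w=-\i\,\pa_w\sigma_w$. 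With the correct one, $\frac{\lambb}{4}\cdot 4\lamb=1$ gives exactly $\pa_w\sigma_w$.
\end{itemize}
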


\begin{proof}
	The statement of the lemma can be reformulated as \[(\Lgen{n}\Op)_w=\begin{cases}
		0&n>0;\\
		\SDhol{\Op}\Op_w& n=0\\
		\pa_w\Op_w.& n=-1;\\
	\end{cases}\] The $n\geq 0$ assertions follow algebraically from $\Psi_n\ind=0$ and $\Psi_n\sigma=0,$ $n>0$, and the Sugawara formulae. The $n=-1$ case is obvious for $\Op=\ind$. For $\Op=\sigma,$ we write by \eqref{eq:fermionic_sugawara_r}
	\[
	(\Lgen{-1}\sigma)_w=\frac14(\Curr{-1}\Curr{0}\sigma)_w=\frac{\lambb}{4}(\Curr{-1}\mu)_w=\frac{\lambb}{8\pi \i}\oint(z-w)^{-\frac32}\psi_z\mu_w\,dz,
	\]
	and conclude using the OPE \cite[Section 6]{chelkak2021correlations} \[\psi_z\mu_w=\lamb(z-w)^{-\frac12}(\sigma_w+4(z-w)\pa_w\sigma_w+o(z-w)).\] 
	
	For $\Op\in \{\mu,\psi,\psi^\star,\en\},$ the identity $(L_{-1}\Op)_w=\pa_w\Op_w$ follows from the next lemma; there is no circular reasoning as we do not need these cases for the base of induction there.   
\end{proof}
\begin{lem}
	\label{lem: Lmone=pa}
	We have, for every $\Op\in\FullFockNS\oplus\FullFockR,$ $(\Lgen{-1}\Op)_z=(\partial_z \Op_z)$ and $(\Lbar{-1}\Op)_z=(\bar\partial_z \Op_z)$, in the sense that the identities
	\[
	\ccor{(\Lgen{-1}\Op)_z\Op^{(1)}_{z_1}\dots,\Op^{(n)}_{z_n}}=\pa_z\ccor{\Op_z\Op^{(1)}_{z_1}\dots,\Op^{(n)}_{z_n}}
	\]
	\[
	\ccor{(\Lbar{-1}\Op)_z\Op^{(1)}_{z_1}\dots,\Op^{(n)}_{z_n}}=\bar{\pa}_z\ccor{\Op_z\Op^{(1)}_{z_1}\dots,\Op^{(n)}_{z_n}}
	\]  
hold true for every $\Op^{(1)},\dots,\Op^{(n)}$ and any $z,z_1,\dots,z_n$.
\end{lem}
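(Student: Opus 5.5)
We prove the identity by induction on the Fock-space structure, using the basis \eqref{eq: basis_module_NS}, \eqref{eq: basis_module_R}. Since both sides are linear in $\Op$, it suffices to treat basis vectors $\Op=\Curr{n}\Op'$ or $\Op=\Currbar{n}\Op'$ with $\Op'$ of lower fermionic degree. The base cases are $\Op=\ind$, where $\Lgen{-1}\ind=0$ and the correlation does not depend on $z$, and $\Op=\sigma$, which was settled in the previous lemma via the $\psi\mu$ OPE. The anti-holomorphic statement is entirely parallel, so we treat only $\Lgen{-1}$.

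For the induction step we use the Virasoro–Clifford commutation relation coming from the Sugawara construction, namely $[\Lgen{-1},\Curr{n}]=-(n+\tfrac12)\Curr{n-1}$ and $[\Lgen{-1},\Currbar{n}]=0$. This is a routine algebraic check from \eqref{eq:fermionic_sugawara_ns}, \eqref{eq:fermionic_sugawara_r} and the anticommutators \eqref{eq: anticomm}, and it holds in both sectors. Since Definition \ref{def: correlations-functions} respects the module structure, $\Lgen{-1}$ acts in correlations through Sugawara. Hence
\[
(\Lgen{-1}\Curr{n}\Op')_z=(\Curr{n}\Lgen{-1}\Op')_z-(n+\tfrac12)(\Curr{n-1}\Op')_z .
\]
By the induction hypothesis, $(\Lgen{-1}\Op')_z$ may be replaced by $\pa_z\Op'_z$ inside the contour integral \eqref{eq: defcor1}.

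It remains to differentiate \eqref{eq: defcor1} with respect to $z$:
\[
\pa_z\frac{1}{2\pi\i}\oint_{(z)}\ccor{\dots\psi_w\Op'_z\dots}(w-z)^{n-\frac12}\,dw .
\]
We fix the contour, which is allowed for $z$ in a small neighbourhood, and differentiate under the integral sign. The derivative falling on the correlation gives $(\Curr{n}\pa_z\Op')_z$. The derivative falling on the monomial gives $-(n-\tfrac12)(w-z)^{n-\frac32}$, i.e.\ $-(n-\tfrac12)(\Curr{n-1}\Op')_z$.

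This produces the coefficient $n-\tfrac12$, whereas the commutator produced $n+\tfrac12$. The main obstacle is therefore to pin down the correct commutator, $[\Lgen{-1},\Curr{n}]=-(n-\tfrac12)\Curr{n-1}$, which is what a weight-$\tfrac12$ field with modes shifted by $\tfrac12$ as in \eqref{eq: defcor1} should give. I expect the Sugawara computation with the paper's indexing to yield exactly this once done carefully, and I would settle the sign and shift first, on the test case $\Op=\ind$, $n=-\tfrac12$: both sides must give $\pa_z\psi_z=\Curr{-\frac32}\ind$.

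Two further points need care. First, in the Ramond sector the correlation is ramified at $z$, so differentiating under a fixed contour must be justified on the universal cover $\ConfUC$. Second, the Ramond zero mode needs separate treatment: for $n=0$ and $\Op'=\sigma$ the relation $\mu=\lamb\Curr{0}\sigma$ must be compatible with the induction. This is consistent because the $\sigma$ base case is already handled, and the $\mu$ case follows by applying the step to $\Op=\Curr{0}\sigma$.
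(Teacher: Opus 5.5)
Your route is the paper's: induct on the number of modes, and match the identity
\[
\pa_z\Curr{n}-\Curr{n}\pa_z=-(n-\tfrac12)\Curr{n-1},
\]
obtained by differentiating \eqref{eq: defcor1} under a fixed contour, against the Sugawara commutator $[\Lgen{-1},\Curr{n}]$. The one step you leave open is that commutator. The version asserted in your second paragraph, $-(n+\tfrac12)\Curr{n-1}$, is wrong and should be discarded. The correct value takes one line. Using $[AB,C]=A\{B,C\}-\{A,C\}B$ and \eqref{eq: anticomm},
\[
[\Lgen{-1},\Curr{n}]=\tfrac14\sum_k k\bigl(2\delta_{k,-n}\Curr{-1-k}-2\delta_{k,n-1}\Curr{k}\bigr)=\tfrac14\bigl(-2n-2(n-1)\bigr)\Curr{n-1}=-(n-\tfrac12)\Curr{n-1}.
\]
This holds in both sectors, since the $n\neq 0$ Sugawara formula for $\Lgen{-1}$ is the same in both. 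It is exactly the coefficient produced by differentiation, i.e. the identity $\pa_w\Curr{n}-\Curr{n}\pa_w=(-n+\tfrac12)\Curr{n-1}=\Lgen{-1}\Curr{n}-\Curr{n}\Lgen{-1}$ used in the paper. It also confirms your test case $\Lgen{-1}\Curr{-\frac12}\ind=\Curr{-\frac32}\ind$.

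For basis vectors containing $\Currbar{n}$ factors, you should also state that $\pa_z$ commutes with $\Currbar{n}$ inside correlations. This holds because the weight $(\bar w-\bar z)^{n-\frac12}$ in \eqref{eq: defcorbar} has vanishing $\pa_z$-derivative, and it matches $[\Lgen{-1},\Currbar{n}]=0$.

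The Ramond zero mode needs no separate treatment. The case $n=0$ is covered by the same two identities, both giving $\tfrac12\Curr{-1}$. Compatibility with $\lamb\Curr{0}\sigma=\lambb\Currbar{0}\sigma$ is automatic, because the continuum correlations respect the module structure.

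With these points filled in, your proof is complete.
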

\begin{proof}
	For $\Op\in\{\ind,\sigma\},$ this is the previous lemma. In general, we check that for any $n$, we have inside the correlations \[\pa_w \Curr{n}-\Curr{n}\pa_w =\left(-n+\frac12\right)\Curr{n-1}=\Lgen{-1}\Curr{n}-\Curr{n}\Lgen{-1},\] which allows us to prove the results for $\Op=\Curr{n_1}\dots\Curr{n_k}\hat\Op,$ where $\hat\Op\in \{\ind,\sigma\}$, by induction in $k$. We then extend it to any $\Op$ by linearity.
	
	Of the above identities, the first one is a simple computation
	\begin{multline*}
	\pa_w\ccor{(\Curr{n}\Op)_w\Op^{(1)}_{z_1}\dots,\Op^{(n)}_{z_n}}=\partial_w\left(\frac{1}{2\pi\i}\oint_w(z-w)^{n-\frac12}\ccor{\Op_w\Op^{(1)}_{z_1}\dots,\Op^{(n)}_{z_n}}\,dz\right)\\
	-\frac{n-\frac12}{2\pi \i}\oint_w(z-w)^{n-\frac32}\ccor{\Op_w\Op^{(1)}_{z_1}\dots,\Op^{(n)}_{z_n}}\,dz+\frac{1}{2\pi\i}\oint_w(z-w)^{n-\frac12}\pa_w\ccor{\Op_w\Op^{(1)}_{z_1}\dots,\Op^{(n)}_{z_n}}\,dz\\
	=\left(-n+\frac12\right)\ccor{(\Curr{n-1}\Op)_w\Op^{(1)}_{z_1}\dots,\Op^{(n)}_{z_n}}+\ccor{(\Curr{n}\pa_w\Op)_w\Op^{(1)}_{z_1}\dots,\Op^{(n)}_{z_n}},
	\end{multline*}
	and the second one follows from the Sugawara formulae \eqref{eq:fermionic_sugawara_ns}, \eqref{eq:fermionic_sugawara_r}.
\end{proof}
\begin{rem}
	The induction step in the above proof involves Leibnitz formula, which fails to hold exactly in the discrete. In particular, in the discrete, it is not clear to how to realize $\Loper{-1},\Loperbar{-1}$ as simple finite difference operators (independent on what they are acting on), even in the case of highly symmetric lattices.  
\end{rem}

\begin{cor}
	We have the OPE
	\[
	\SET_z\SET_w = \frac{1/4}{(z-w)^4}+\frac{2\SET_w}{(z-w)^2}+\frac{\pa_w \SET_w}{(z-w)}+O(1).
	\]
\end{cor}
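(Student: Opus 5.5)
The plan is to deduce the $TT$ OPE from the mode description of $\SET$ provided by the preceding lemmas, applied with $\Op=\SET=\Lgen{-2}\ind\in\FullFockNS$. By the lemma identifying $(\Lgen{n}\Op)_\zeta=\frac{1}{2\pi\i}\oint(z-\zeta)^{n+1}\SET_z\Op_\zeta\,dz$ with the Sugawara modes, together with holomorphicity of $\ccor{\SET_z\SET_w\cdots}$ in $z$ on a punctured neighbourhood of $w$, the function $z\mapsto\ccor{\SET_z\SET_w\Op^{(1)}_{z_1}\cdots}$ has a Laurent expansion around $w$. Its coefficients are
\[
\ccor{\SET_z\SET_w\cdots}=\sum_{n\in\Z}(z-w)^{-n-2}\ccor{(\Lgen{n}\SET)_w\cdots}.
\]
To justify this, I would check that the correlation is single-valued in $z$ near $w$. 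This holds because $\SET$ is quadratic in $\psi$, so the sign changes of $\psi$ around spin insertions cancel; moreover $w$ itself carries an NS field, so there is no branching at $w$.

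Since $\SET$ has $\Lgen{0}$-grade $2$ by Lemma \ref{lem: grading}, we have $\Lgen{n}\SET=0$ for $n>2$. The singular part therefore comes only from $n=2,1,0,-1$, and the $O(1)$ remainder collects the terms with $n\le-2$. It remains to compute these four vectors in $\FullFockNS$ purely algebraically, using the Virasoro relations with $c=\frac12$, which hold by Lemma \ref{lem:fermionic_sugawara_ns}. First, $\Lgen{2}\Lgen{-2}\ind=[\Lgen{2},\Lgen{-2}]\ind=4\Lgen{0}\ind+\frac{8-2}{24}\ind=\frac14\ind$. Second, $\Lgen{1}\Lgen{-2}\ind=3\Lgen{-1}\ind=0$, since $\Lgen{-1}\ind=\frac14\sum_k k\Curr{-1-k}\Curr{k}\ind$ vanishes: the only candidate $k=-\frac12$ gives $\Curr{-\frac12}\Curr{-\frac12}\ind=0$. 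Third, $\Lgen{0}\SET=2\SET$ by Lemma \ref{lem: grading}. Fourth, $(\Lgen{-1}\SET)_w=\pa_w\SET_w$ by Lemma \ref{lem: Lmone=pa}.

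Substituting these into the Laurent expansion gives exactly $\frac{1/4}{(z-w)^4}+\frac{2\SET_w}{(z-w)^2}+\frac{\pa_w\SET_w}{z-w}+O(1)$. The $O(1)$ is uniform in the spectator positions, because the remainder is a convergent power series in $(z-w)$ whose coefficients are correlations depending continuously on the points.

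The only real obstacle is justifying the Laurent/mode identification, i.e.\ that the definition of $\Lgen{n}$ through contour integrals of $\SET_z$ against $\SET_w$ agrees with the Sugawara $\Lgen{n}$ acting on $\SET$. The preceding lemma asserts this for any $\Op\in\FullFockNS$, so I would invoke it directly. The one point to verify is that the contour in $z$ may be taken small around $w$ while enclosing the contours used to define $\SET_w$ itself, which is allowed by the footnote convention in Definition \ref{def: correlations-functions}.
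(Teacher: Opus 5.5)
Your proposal is correct and follows the paper's argument. In both, the OPE is read off from the Laurent coefficients $(\Lgen{n}\SET)_w$ using the contour-integral definition of the modes and Lemma \ref{lem: Lmone=pa}, and the cases $n\geq 0$ are settled by the Virasoro relations and $\Lgen{n}\ind=0$ for $n\geq -1$. You give more detail than the paper (single-valuedness in $z$, grading for $n>2$), and your value $\Lgen{0}\SET=2\SET$ is correct, whereas the paper's displayed case list has a typo at $n=0$.
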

\begin{proof}
	By definition and Lemma \ref{lem: Lmone=pa} applied to $\Op=T$, the assertion is equivalent to 
 \[
(\Lgen{n}\SET)_w = (\Lgen{n}\Lgen{-2}\ind)_w=
 \begin{cases}
 0&n>2\text{ or } n=1;\\
 1/4\ind_w& n=2\\
 (\Lgen{-2}\ind)_w& n=0;\\
 (\Lgen{-1}\Lgen{-2}\ind)_w& n=-1,
 \end{cases}
 \]
 which follow readily from the Virasoro commutation relations and the fact that $\Lgen{n}\ind=0,$ $n\geq -1$.
\end{proof}

\iffalse

\begin{lem}
	In the full plane we have, for any fields $\Op^{(1)},\dots, \Op^{(n)}$ and $\Op$, and any $n$, 
	\[
	\ccor{(\Lgen{n}\Op)_z \Op^{(1)}_{z_1} \dots \Op^{(n)}_{z_n}}=\sum_{i=1}^n	(z_i-z)^{n+1}\pa_{z_i}\ccor{\Op_z \Op^{(1)}_{z_1} \dots \Op^{(n)}_{z_n}}
	\] 
\end{lem}

\begin{proof}
	We have \begin{multline*}
		\ccor{(\Lgen{n}\Op)_z \Op^{(1)}_{z_1} \dots \Op^{(n)}_{z_n}}=\frac{1}{2\pi \i}\oint (w-z)^{n+1}\ccor{T_w\Op_z \Op^{(1)}_{z_1} \dots \Op^{(n)}_{z_n}}\,dw\\
		=-\sum_{i=1}^n\res_{z_i}(w-z)^{n+1}\ccor{T_w\Op_z \Op^{(1)}_{z_1} \dots \Op^{(n)}_{z_n}}\,dw
	\end{multline*}
\end{proof}

\fi
\subsection*{Acknowledgements} The work was supported by Research Counsil of Finland project 363549 ``Lattice models and conformal field theory" and project 346309 ``CoE FIRST". We thank Mikhail Basok, Cl\'ement Hongler, and Christian Webb for useful discussions. Generative AI was not used in either the research or the writing of this paper.

\bibliographystyle{plain}
\bibliography{Mixedcorr}

\end{document}